\documentclass[a4paper,10pt,reqno]{amsart}
\usepackage{amsfonts}
\usepackage{amsmath, amsfonts, amsthm, amssymb, amscd, mathrsfs}
\usepackage{enumerate}
\usepackage{color}
\usepackage[normalem]{ulem}
\theoremstyle{plain}
\newtheorem{Theorem}{Main Theorem}

\newtheorem*{limitequation criterion}{Limit Equation Criterion}

\newtheorem{theorem}{Theorem}[section]
\newtheorem{proposition}[theorem]{Proposition}

\newtheorem{lemma}[theorem]{Lemma}

\theoremstyle{definition}
\newtheorem{definition}[theorem]{Definition}
\newtheorem{remark}[theorem]{Remark}

\numberwithin{equation}{section}

\newcommand{\rstar}{r_\star}
\newcommand{\rhot}{\widetilde \rho}
\newcommand \Hess {\mathrm{Hess}}
\newcommand \Scal {\mathrm{Scal}}
\newcommand \dM {\mathrm{diam}(M)}
\newcommand{\sik}{\mathrm{s}_\kappa}
\newcommand{\cok}{\mathrm{c}_\kappa}
\newcommand{\dr}{\partial r}
\newcommand{\Fcal}{\mathcal F}
\newcommand{\rhohat}{\widehat \rho}
\newcommand{\psihat}{\widehat \psi}
\newcommand{\Div}{\mathrm{div}}
\newcommand{\madm}{m_\mathrm{ADM}}
\newcommand{\Mgk}{\Mcal_{(g,k)}}
\newcommand{\Hgk}{H_{(g,k)}}
\newcommand{\eps}{\varepsilon}
\newcommand{\ppp}{{\prime\prime\prime}}
\newcommand{\Tt}{\widetilde T}
\newcommand{\rt}{\widetilde r}
\newcommand{\Ncal}{\mathcal N}
\newcommand{\Mcal}{\mathcal M}
\newcommand{\tr}{\mathrm{tr}}
\newcommand{\xit}{\widetilde \xi}

\newcommand{\varphit}{\widetilde \varphi}

\newcommand{\p}{\prime}
\newcommand{\pp}{{\prime\prime}}

\newcommand{\del}{\partial}
\newcommand{\ghat}{\widehat{g}}
\newcommand{\khat}{\widehat k}
\newcommand{\Vhat}{\widehat V}

\newcommand{\Ric}{\mathrm{Ric}}
\newcommand{\RR}{\mathbb R}
\newcommand{\HH}{\mathbb H}
\newcommand{\Sp}{\mathbb S}

\newcommand{\oo}[1]{(#1)}
\newcommand{\fo}[1]{\mathopen{[}#1)}

\newcommand{\ff}[1]{\mathopen{[}#1\mathclose{]}}
\begin{document}
\title[Spherically symmetric solutions to the constraint equations]{Spherically symmetric solutions to the Einstein-scalar field conformal constraint equations}
\author{Philippe Castillon and Cang Nguyen-The }
\date{}

\subjclass[2010]{Primary: 53C21. Secondary: 53C80, 35Q75, 83C05}
\keywords{Einstein constraint equations, conformal method, harmonic manifolds}

\begin{abstract}
	Recent works \cite{DiltsHolstKozarevaMaxwell, NguyenNonexistence} have shown that the Einstein–scalar field conformal constraint equations are highly complex and generally intractable, even in the vacuum case. In this article, to gain a clearer understanding and offer a new perspective, we study these equations under special assumptions: the manifold $(M,g)$ is harmonic and all data are radial.
	
	\medskip
	
	In this setting, the system reduces to a single nonlinear equation and is completely solved in the standard cases. In particular, on the sphere, our results reveal phenomena that contrast with the well-known achievements on compact manifolds without conformal Killing vector fields, including nonexistence of solutions in the near-CMC regime and instability when the mean curvature is non-constant. By contrast, on Euclidean or hyperbolic manifolds, the equations are always solvable, with all expected properties of solutions satisfied. These findings support the view that, although the conformal method appears to present some drawbacks on compact manifolds, it remains a promising tool for parametrizing solutions to the constraint equations on asymptotically flat and hyperbolic manifolds in arbitrary mean curvature regimes.
	
	\medskip
	
	In this article, we also investigate the sign of mass, showing that the ADM and asymptotically hyperbolic mass of vacuume constraint solutions can take arbitrary sign when the decay rate of symmetric $(0,2)$-tensor $k$ at infinity is critical. Finally, most solution classes in our framework are explicit, providing a variety of models in general relativity and offering insights into the behavior of initial data, particularly in numerical applications.
\end{abstract}

\maketitle

\tableofcontents
%
%
%
\section{Introduction}
%
%
A scalar field initial data set for the Cauchy problem in general relativity is a manifold $(M,\ghat)$ of dimensions $n\ge 3$, equipped with a symmetric $(0,2)$-tensor $\khat$, a potential function $\Vhat$ and two scalar functions $(\psihat, \rhohat)$  such that  $(M, \ghat, \khat, \Vhat, \psihat, \rhohat)$ satisfies the system
\begin{subequations} \label{constraint}
	\begin{align}
		\Scal_{\ghat} - | \khat |_{\ghat}^2 +(\tr_{\ghat} \khat )^2 &=  \rhohat^2 + |d \psihat|_{\ghat}^2 + 2\Vhat(\psihat) &&\text{{\footnotesize [Hamiltonian constraint}]}
		\label{hamilton}
		\\
		\Div_{\ghat} \big( \khat - (\tr_{\ghat} \khat ) \ghat \big) & = \rhohat d\psihat, &&\text{[{\footnotesize Momentum constraint}]}
		\label{momentum}
	\end{align}
\end{subequations}
where $\Scal_{\ghat}$ is the scalar curvature of $\ghat$. One of the most efficient approaches to construct these initial data sets is the conformal method introduced by A. Lichnerowicz \cite{Lichnerowicz} and by Y. Choquet-Bruhat and J.W. York \cite{BruhatYork}.  The main idea of this method is to effectively parameterize the solutions to the constraint equations by some reasonable parts of the data set, and then to solve for them with the remaining of the data as unknown. More precisely, let $(M,g)$ be a Riemannian manifold of dimension $n$, $V$ be a potential function, $(\tau, \psi , \rho)$ be scalar functions, and $\sigma$ be a trace-free and divergence-free symmetric $(0,2)$-tensor on $(M,g)$.  One is required to find a positive function $\varphi$ and a $1-$form $W$ satisfying
\begin{subequations}\label{CE}
	\begin{align}
		- \tfrac{4(n-1)}{n-2} \Delta \varphi + \mathcal{R}_\psi \varphi  
		&= \mathcal{B}_{\tau, \psi} \varphi^{N-1}  + \frac{|\sigma + LW|^2 + \rho^2}{\varphi^{N + 1}}  &&[\text{{\footnotesize Lichnerowicz equation}}]
		\label{Liceq}
		\\
		\Div(L W) &= \tfrac{n-1}{n} \varphi^N d\tau - \rho d \psi,  &&[\text{{\footnotesize vector equations}}]
		\label{veceq}
	\end{align}
\end{subequations}
where 
$$
\mathcal{R}_\psi = \Scal_g - |d \psi|_g^2, \qquad \qquad \mathcal{B}_{\tau, \psi} = -\tfrac{n-1}{n} \tau^2 + 2 V (\psi),
$$
$\Scal_g$ is the scalar curvature of $g$, $N = \frac{2n}{n-2}$ and $L$ is the conformal killing operator defined by
\begin{equation} \label{killing}
	(LW)_{ij} = \nabla_i W_j + \nabla_j W_i - \tfrac{2 g_{ij}}{n} (\Div W).
\end{equation}
These equations are called the \textit{Einstein-scalar field conformal constraint equations} or simply the \textit{conformal equations}. Once $(\varphi, W)$ is a solution to \eqref{CE}, it follows that 
\begin{equation} \label{parametre}
	(\ghat, \, \khat, \, \Vhat, \, \psihat, \, \rhohat) = 
	\bigg(\varphi^{N - 2} g\, , \, \tfrac{\tau}{n} \varphi^{N - 2}g + \varphi^{-2} (\sigma + LW )\, , \, V\, , \, \psi\, , \, \varphi^{-N} \rho \bigg)
\end{equation}
will be a solution to the scalar field constraint  \eqref{constraint}. 

\medskip

Over the past few decades, the conformal method has been an important tool in studying the Einstein constraint equations, drawing considerable attention from many authors. While the method has proven effective under certain small-data assumptions-- such as when quantities like $\frac{d\tau}{\tau}$ or $(\sigma, \rho)$ are small-- its behavior under general data continues to be elusive, with fundamental questions about the existence of solutions and related problems remaining largely unanswered.

\medskip

A notable step in this direction was made in \cite{NguyenNonexistence, NguyenProgress}, where, using the limit equation technique developed by M. Dahl, R. Gicquaud and E. Humbert \cite{DahlGicquaudHumbert}, the second author showed that there exists a certain class of large data on compact manifolds for which the vacuum conformal equations \eqref{CE} admit no and multiple solutions. Complementing this, numerical studies by J. Dilts, M. Holst, T.~ Kozareva and D. Maxwell \cite{DiltsHolstKozarevaMaxwell} uncovered striking instabilities in solutions to \eqref{CE} in the vacuum case. Together, these results indicated the complexity of the conformal equations as well as raising unexpected concerns about the reliability and applicability of the conformal method as a parameterization tool for generating solutions to the constraint \eqref{constraint}. This explains why progress in this area has been limited in recent years and why the conformal method appears to have seen reduced attention.

\medskip

The purpose of this article is to revisit this narrative and offer a new perspective on the study of the conformal equations \eqref{CE}. Since the system is mainly investigated on compact, asymptotically flat (AF), or asymptotically hyperbolic (AH) manifolds, and since the currently available techniques are insufficient to resolve \eqref{CE} in full generality, we restrict our attention to simpler settings in which $(M,g)$ is spherical, hyperbolic, or Euclidean, and all data are assumed to be radial. As will become clear, working in these configurations offers two main advantages. First, the analytic framework is preserved under symmetry, so arguments developed for the general conformal equations can apply directly in the radial setting. These simplified models therefore provide a convenient testing ground for evaluating whether a given method or property is promising for the full system \eqref{CE}. Second, the radial assumption enables a more effective treatment of the vector equations. In fact, the difficulty of the conformal equations is that while the Lichnerowicz equation is relatively well understood, we have few tools available to analyze the vector equations, the second part of the system. In radial settings, however, building on the work of the second author in \cite{NguyenRadialAF}, these vector equations can be solved entirely, with solutions computed explicitly. This allows the system to be reduced to a single nonlinear equation, making the analysis much more manageable.

\medskip
  
With this framework in place, many interesting results can be established. Before presenting them, however, we first introduce the notation used in the article. From now on, $r$ is the distance function to a fixed point and we denote by $f^\p$ the derivative of $f$ with respect to $r$. We remark the the unknown of the conformal equations is a couple $(\varphi,W)$ of a function and a 1-form. However, in some situations, the 1-form $W$ can be recovered once the function $\varphi$ is known. For simplicity of expression, we will sometimes we will sometime just say that $\varphi$  is a solution of \eqref{CE}. Now, let $M_\kappa$ be the space form of curvature $\kappa\in\{1,0,-1\}$, that is $M_\kappa=\Sp^n$ if $\kappa=1$, $M_\kappa=\RR^n$ if $\kappa=0$ and $M_\kappa=\HH^n$ if $\kappa=-1$. The sine and cosine function $\sik$ and $\cok$ are
\begin{equation} \label{def sik and cok}
	\big(\sik(r), \cok(r)) =\left\{\begin{array}{ll}
		\big(\sin(r), \cos(r) \big) & \mbox{if } \kappa=1 \\
		(r, 1) & \mbox{if }  \kappa=0 \\
		\big(\sinh(r), \cosh(r) \big) & \mbox{if }  \kappa=-1. \\
	\end{array}\right.
\end{equation}
Since we work on radial settings, it is natural to assume throughout the article that $\sigma \equiv 0$ and that the remaining data set  $$
(V,\psi(r),\rho(r),\tau(r)) \in C^0(\RR)\times C^1(M_\kappa)\times C^0(M_\kappa)\times C^1(M_\kappa)
$$ 
is radial. For such a data set and for a positive radial function $\varphi(r)\in C^3(M_\kappa)$,  we define new functions $I_{V,\psi,\rho,\varphi}(r)$ and $S_{V,\psi,\rho,\varphi}(r)$ by
\[
I_{V,\psi,\rho,\varphi} := \tfrac{n}{n-1} \Big( 2 V(\psi) \varphi^{2N} + |\psi^\p|^2 \varphi^{N+2} + \rho^2 \Big),
\]
\[
S_{V,\psi,\rho,\varphi} := N^2 (\varphi^\p)^2 - n^2\kappa \varphi^2 + 2 n N \frac{\cok}{\sik}  \varphi \varphi^\p + \frac{1}{\sik^n } \int_0^r I_{V,\psi,\rho,\varphi} \varphi^{-2N} (\varphi^N \sik^n)^\p  \, dt.
\]
Having fixed these notations, we can now present our first main result in this article, which provides a necessary and sufficient condition for a positive radial $\varphi$ to solve the conformal equations. For clarity of exposition, we state here only typical results and refer the reader to Sections \ref{section reduction}--\ref{section Euclidean} for the precise statements.
\begin{Theorem} \label{main theorem} 
	Assume that $\sigma \equiv 0$. Let $(V,\psi(r),\rho(r),\tau(r))\in C^0(\RR)\times C^1(M_\kappa)\times C^0(M_\kappa)\times C^1(M_\kappa)$ be a radial data set on the space form $M_\kappa$, with $\rho \psi^\p=0$. For a positive radial function $\varphi(r)\in C^3(M_\kappa)$, we have
	\begin{enumerate}[(i)]
		\item If $\varphi(r)$ is a solution to the conformal equations \eqref{CE}, then $S_{V,\psi,\rho,\varphi}\ge0$.
		
		\item If $S_{V,\psi,\rho,\varphi}>0$ and $(\varphi^N\sik^n)^\p\neq0$ a.e., then $\varphi(r)$ is a solution to the conformal equations \eqref{CE} if and only if the mean curvature $\tau$ satisfies
		\begin{equation} \label{main identity}
			\tau= \pm \frac{S_{V,\psi,\rho,\varphi} + 2N  \big( \varphi^\pp + (n - 1)\frac{\cok}{\sik} \varphi^\p \big) \varphi - \kappa n^2 \varphi^2 + I_{V,\psi,\rho,\varphi} \varphi^{-N}}{2\sqrt{\varphi^N S_{V,\psi,\rho,\varphi}}}
		\end{equation}
	\end{enumerate}
\end{Theorem}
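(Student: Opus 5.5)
The plan is to reduce the system \eqref{CE} to scalar ODEs in $r$ using radial symmetry, and then eliminate the vector unknown. Write $W = w(r)\,dr$. A direct computation gives
\[
LW = a\,\big(dr\otimes dr - \tfrac1n g\big), \qquad a := 2\Big(w^\p - \tfrac{\cok}{\sik} w\Big),
\]
and hence $|LW|^2 = \tfrac{n-1}{n}a^2$. Using $\Div(dr\otimes dr) = (\Delta r)\,dr = (n-1)\tfrac{\cok}{\sik}\,dr$, we get
\[
\Div(LW) = \tfrac{n-1}{n}\,\sik^{-n}(\sik^n a)^\p\,dr .
\]
Since $\rho\psi^\p = 0$, the vector equation \eqref{veceq} becomes $(\sik^n a)^\p = \sik^n \varphi^N \tau^\p$. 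Substituting $\sigma = 0$ into the Lichnerowicz equation \eqref{Liceq}, with $\Delta\varphi = \varphi^\pp + (n-1)\tfrac{\cok}{\sik}\varphi^\p$ and $\Scal_g = \kappa n(n-1)$, gives
\[
\tau^2\varphi^{2N} - a^2 = -F ,
\]
where $F$ depends only on $\varphi, \varphi^\p, \varphi^\pp$ and the data. Up to normalization, $F$ equals $2N(\varphi^\pp + (n-1)\frac{\cok}{\sik}\varphi^\p)\varphi^{N+1} - \kappa n^2\varphi^{N+2}$ minus $I_{V,\psi,\rho,\varphi}$.

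Set $P = \tau\varphi^N - a$ and $Q = \tau\varphi^N + a$. Then $PQ = -F$ is known from $\varphi$. Since $(\sik^n\varphi^N\tau)^\p = \sik^n\varphi^N\tau^\p + \tau(\varphi^N\sik^n)^\p$, the vector equation is equivalent to
\[
(\sik^n P)^\p = \tau\,(\varphi^N\sik^n)^\p = \tfrac{P+Q}{2\varphi^N}\,(\varphi^N\sik^n)^\p .
\]
Multiplying by $\sik^n P$ and using $PQ = -F$ gives a linear first-order ODE for $Y := \sik^{2n}P^2$, namely
\[
Y^\p = \frac{(\varphi^N\sik^n)^\p}{\varphi^N\sik^n}\,Y - \frac{\sik^n(\varphi^N\sik^n)^\p}{\varphi^N}\,F .
\]
The integrating factor is $(\varphi^N\sik^n)^{-1}$. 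Integrate from $0$, using regularity at the origin ($\sik^n P\to 0$ since $a$, $\tau$ and $\varphi$ are bounded). Then integrate by parts the part of $F$ containing $\varphi^\pp$, which produces the explicit terms $N^2(\varphi^\p)^2 + 2nN\frac{\cok}{\sik}\varphi\varphi^\p - n^2\kappa\varphi^2$. The goal of this step is the identity
\[
P^2 = c\,\varphi^N S_{V,\psi,\rho,\varphi}
\]
for a positive normalizing constant $c$ (fixed by matching the factor $2$ and the $\tfrac{n}{n-1}$ already absorbed into $I_{V,\psi,\rho,\varphi}$). This immediately yields (i), $S_{V,\psi,\rho,\varphi}\ge 0$. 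When $S > 0$, we get $P = \pm\sqrt{c\,\varphi^N S}$ and $Q = -F/P$. Then $\tau = (P+Q)/(2\varphi^N)$, which after simplification is exactly \eqref{main identity}.

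For the converse in (ii), let $\tau$ be given by \eqref{main identity}. Define $P := \pm\sqrt{c\,\varphi^N S}$ (with a constant sign, by continuity, since $S > 0$), $Q := -F/P$, and $a := (Q-P)/2$. Then $Q = \tau\varphi^N + a$ holds by construction, so the Lichnerowicz equation holds. Differentiating the integral defining $S$ gives back the linear ODE for $Y = \sik^{2n}P^2$, that is, $2\sik^n P\big[(\sik^n P)^\p - \tau(\varphi^N\sik^n)^\p\big] = 0$ wherever the factor $(\varphi^N\sik^n)^\p$ is nonzero. The hypothesis $(\varphi^N\sik^n)^\p\ne 0$ a.e., together with continuity, makes it possible to recover the vector equation everywhere. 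Finally, solve $w^\p - \frac{\cok}{\sik}w = a/2$ explicitly as $w = \sik\int_0^r \frac{a}{2\sik}$ to obtain $W$, checking that $W$ is regular at $0$.

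The main obstacle is the integration step. We must find the right integrating factor and integrate the Laplacian term by parts so that the expression matches the stated $S$ exactly, including the normalizing constant $c$. We must also justify the boundary behaviour at $r = 0$, and on $\Sp^n$ at the antipode where $\sik$ vanishes again. I would first verify the constant bookkeeping on the model case of constant $\tau$ with $\varphi\equiv 1$.
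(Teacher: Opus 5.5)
Your proposal is correct and is essentially the paper's argument: your $\sik^n P$ is exactly the paper's $\int_0^r \tau\,(\varphi^N\sik^n)^\p\,ds$, your integrating factor $(\varphi^N\sik^n)^{-1}$ is the paper's multiplication of \eqref{eq.5a1} by $(\varphi^N\sik^n)^\p/\varphi^{2N}$, and your identity $P^2=c\,\varphi^N S_{V,\psi,\rho,\varphi}$ is \eqref{eq.5a3} with $c=1$. The differences are minor: you compute $LW=a\,(dr\otimes dr-\tfrac1n g)$ directly instead of through $u_0$ and $F_v$, and your $P,Q$ formulation divides by $\sik^nP\neq0$ rather than by $(\varphi^N\sik^n)^\p$, so in your route the a.e.\ hypothesis is not actually needed; one sign slip to fix is that the Lichnerowicz equation gives $\tau^2\varphi^{2N}-a^2 = 2N\varphi^{N+1}\big(\varphi^\pp+(n-1)\tfrac{\cok}{\sik}\varphi^\p\big)-\kappa n^2\varphi^{N+2}+I_{V,\psi,\rho,\varphi}$, so your $F$ has its first two terms with the wrong sign relative to $I_{V,\psi,\rho,\varphi}$.
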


This result is the source of subsequent developments in the article. In particular, point (i) provides a necessary condition for the existence of solutions, while point (ii) shows that constructing radial solutions to the conformal equations \eqref{CE} becomes quite simple. Indeed, it suffices to choose a potential function $V$ together with radial functions $(\psi(r), \rho(r), \varphi(r))$ such that $S_{V,\psi,\rho,\varphi} > 0$. Once this condition is satisfied, the identity \eqref{main identity} uniquely determines the corresponding mean curvature function $\tau(r)$, thereby completing the construction of a full data set $(V,\psi(r),\rho(r),\tau(r))$ along with a solution $\varphi(r)$ to the conformal equations \eqref{CE}. As shown in Appendix~\ref{appendix construct explicit solutions}, seeking data that satisfies this positivity condition is not difficult. Thus, our theorem yields a broad class of explicit solutions to the constraint equations, which is valuable not only as theoretical examples but also as practical models in numerical relativity, where exact solutions are scarce and highly informative.

\medskip

For the further results in this article, apart from the instability phenomena on the sphere described in Main Theorem~\ref{main theorem instability}, we are interested in the existence of solutions to the conformal equations \eqref{CE} in the case where $V$, $\psi$, and $\sigma$ vanish identically, that is
\begin{subequations}\label{CE with null sigma in sphere intro}
	\begin{align}
		- \tfrac{4(n-1)}{n-2} \Delta \varphi + \Scal_g \varphi  
		& = - \tfrac{n-1}{n} \tau^2 \varphi^{N-1}  + \frac{|LW|^2+\rho^2}{\varphi^{N + 1}} 
		\\
		\Div(L W) &= \tfrac{n-1}{n} \varphi^N d\tau.
	\end{align}
\end{subequations}
From an analytical standpoint, this system shares many structural similarities with the conformal equations in the vacuum case, a subject extensively studied in \cite{IsenbergMoncrief, HolstNagyTsogtgerel, Maxwellcompact, DahlGicquaudHumbert, NguyenFPT, NguyenNonexistence}. The main distinction here is the replacement of the TT-tensor $\sigma$ by the time-derivative $\rho$. In certain contexts, this substitution is motivated by the identity
$$
|\sigma + LW|^2 = |\sigma|^2 + 2 \langle \sigma, LW \rangle + |LW|^2,
$$
which implies that, when $\sigma$ and $LW$ are point-wise orthogonal, $\rho$ can be interpreted as the norm of $\sigma$. For concrete model problems illustrating this construction, we refer the reader to \cite{BeigBizonSimon} and \cite{Maxwell: Modelproblem, DiltsHolstKozarevaMaxwell}.

\medskip

By studying \eqref{CE with null sigma in sphere intro} in this context, we provide a new perspective that, contrary to unexpected results in \cite{NguyenNonexistence, DiltsHolstKozarevaMaxwell}, the conformal equations \eqref{CE with null sigma in sphere intro} in radial settings are solvable in certain standard spaces, particularly in the hyperbolic and Euclidean cases, with all expected properties of solutions satisfied. This give evidence that the conformal method is still an effective parameterization of solutions to the constraint equations \eqref{constraint} for a broad class of data, not only when the mean curvature is constant or almost constant. Since the geometrical structure and the related problems for the conformal equations differ on the sphere, Euclidean space, and hyperbolic space, for ease
of the presentation, we present the remaining discussion and results according to the type of manifolds as follows.

\subsection{Standard sphere.}

Let $\Sp^n$ denote the standard round sphere of dimension $n \ge 3$. Although the spherical model is basic, the existence of solutions to the conformal equations on $\Sp^n$ has, to date, only been established when the mean curvature function $\tau$ is constant — i.e., $d\tau \equiv 0$, which decouples the system \eqref{CE}. The reason for this limitation is that, so far, available techniques for proving the existence of solutions to \eqref{CE} have been based on constructing appropriate elliptic operators, which in turn require that the manifold admit no conformal Killing vector fields (CKVF). However, on the round sphere, such vector fields exist, which prevent the method from working in this case, particularly when $\tau$ is non-constant.

\medskip

In the first part of this work, we are concerned with the question of the existence of solutions to the conformal equations for a given radial setting on the sphere in various mean curvature regimes. The outcomes we obtain are different from those previously known on compact manifolds without CKVF. The first distinction arises from existence/nonexistence of solutions in the near-constant mean curvature
regime as we will show below. For manifolds having no CKVF, previous works such as \cite{IsenbergMoncrief, Maxwellcompact, DahlGicquaudHumbert} have shown that the reduced system \eqref{CE with null sigma in sphere intro} admits a unique solution $\varphi$
whenever $\rho \not\equiv 0$  and the ratio $\frac{d\tau}{\tau}$ is sufficiently small. Solutions in this case are the so-called near-CMC solutions, and their existence may be obtained through several
approaches. However, in the theorem below, we will see that this conclusion no longer holds when the manifold has CKVF, with the following setting on the round sphere as a counterexample.

\begin{Theorem}[Non-existence of solution for near--CMC] \label{main theorem nonexistence near CMC}
	Let $(\rho(r), \tau(r))$ be a radial setting on the round sphere $\Sp^n$ such that $\tau^\p,\rho\not\equiv0$ and $\tau^\p \ge 0$.
	
	Then, for any sufficiently large constant $C > 0$ depending only on $\tau$, the conformal equations \eqref{CE with null sigma in sphere intro} associated with $(\rho, \tau + C)$ admit no solution.
\end{Theorem}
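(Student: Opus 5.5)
The approach is to exploit the conformal Killing vector fields of $\Sp^n$ through an integral (Kazdan--Warner/Pohozaev type) obstruction on the vector equation. This is exactly the structure that the no-CKVF methods cannot handle. Let $(\varphi,W)$ be a solution of \eqref{CE with null sigma in sphere intro} with data $(\rho,\tau+C)$, and note that $d(\tau+C)=d\tau$. Take the radial conformal field
\[
X=\nabla \cos r=-\sin r\,\partial_r,
\]
which satisfies $LX=0$ on the round sphere; indeed $\Hess \cos r=-\cos r\, g$, so $\nabla X$ is pure trace.

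The first step is to pair the vector equation with $X$ and integrate over the closed manifold $\Sp^n$. Since $LW$ is symmetric and trace-free, integration by parts gives
\[
\int_{\Sp^n}\langle \Div(LW),X\rangle=-\int_{\Sp^n}\langle LW,\nabla X\rangle=-\tfrac12\int_{\Sp^n}\langle LW,LX\rangle=0 .
\]
I will justify this for $W\in C^2$, or in the weak $W^{1,2}$ sense if the solution class is weaker; the boundary terms vanish because $\Sp^n$ is closed. On the other hand, the right-hand side of the vector equation gives
\[
\tfrac{n-1}{n}\int_{\Sp^n}\varphi^N\, d\tau(X)=-\tfrac{n-1}{n}\,\omega_{n-1}\int_0^\pi \varphi^N\,\tau^\p\,\sin^{n}r\,dr .
\]

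The second step is to conclude. Since $\varphi>0$, $\sin r>0$ on $(0,\pi)$, $\tau^\p\ge0$ and $\tau^\p\not\equiv0$, this integral is strictly negative, which contradicts the fact that it must vanish. Hence there is no solution.

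This argument appears to rule out solutions for \emph{every} constant $C$, not only large ones, and it does not use $\rho\not\equiv0$. The stated theorem would then follow a fortiori. The main point to check is therefore whether the paper's notion of radial solution allows $W$ to be singular at the antipodal point $r=\pi$, for instance a radial $W$ built from the explicit integration of the vector equation that only lives on $\Sp^n\setminus\{-p\}$. In that case the integration by parts above would pick up a boundary term, and the argument must be changed.

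If that happens, I would fall back on Main Theorem~\ref{main theorem}(i). In the radial reduction, $W=w(r)\,dr$ and the vector equation integrates explicitly. The Lichnerowicz equation then turns into the identity \eqref{main identity} with $V=\psi=0$, which expresses $\tau+C$ in terms of $\varphi$ and $S_{0,0,\rho,\varphi}\ge0$. Adding a large $C$ forces $\varphi^N S$ to be comparable to $C^2$ times a function of $\varphi$. I would then combine the necessary condition $S\ge0$ with the behaviour of $\varphi^N\sin^n r$ near $r=\pi$, where $(\varphi^N\sin^n r)^\p$ must change sign. From this I would derive a contradiction for $C$ large, using $\rho\not\equiv0$ to keep $I=\tfrac{n}{n-1}\rho^2$ nontrivial. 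This fallback, controlling the sign of $S$ near the antipode uniformly in $C$, is where I expect the real difficulty to lie.
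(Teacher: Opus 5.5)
Your argument is correct, and it takes a more direct route than the paper's. The paper first uses the Isenberg--Moncrief near-CMC uniqueness, which is the only place where largeness of $C$ enters. Together with rotational invariance, this forces any solution to be radial. Only then does the paper invoke the necessary condition \eqref{vanishing condition sphere} of Proposition~\ref{proposition resolve vector equations}, namely $\int_0^\pi \tau^\p\varphi^N\sin^n r\,dr=0$. The necessity of that condition is itself proved by exactly your pairing with the conformal Killing field $-\sin r\,dr$. By pairing directly with an arbitrary solution $(\varphi,W)$, you make the uniqueness and symmetry-reduction step superfluous. You also obtain a stronger statement: there is no solution for any $C\in\RR$ (including $C=0$), for any $\rho$, radial or not. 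In fact the vector equation alone has no solution $W$ for any positive $\varphi$. The paper's route gains nothing beyond staying inside its radial framework. There are two minor points. First, your displayed right-hand side assumes $\varphi$ is radial; for a general solution it should read $-\tfrac{n-1}{n}\int_0^\pi \tau^\p(r)\sin^n r\big(\int_{\Sp^{n-1}}\varphi^N(r,\xi)\,d\xi\big)\,dr$, which is still strictly negative, so the conclusion is unchanged. Second, you can drop the fallback paragraph. The equations are posed on the closed manifold $\Sp^n$ with $W\in C^2(\Sp^n)$, as in the proof of Proposition~\ref{proposition resolve vector equations}, so no boundary term at the antipode can arise. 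A $W$ singular at the antipode would solve a different problem on the punctured sphere.
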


Another notable difference of \eqref{CE with null sigma in sphere intro} on the sphere we would like to present concerns the existence of solutions to \eqref{CE with null sigma in sphere intro} when $\rho \equiv 0$, that is
\begin{subequations}\label{vacuum CE null sigma}
	\begin{align}
		- \tfrac{4(n-1)}{n-2} \Delta \varphi + \Scal_g \varphi  
		& = - \tfrac{n-1}{n} \tau^2 \varphi^{N-1}  + \frac{|LW|^2}{\varphi^{N + 1}} \\
		\Div(L W) & = \tfrac{n-1}{n} \varphi^N d\tau,
	\end{align}
\end{subequations}
which coincides with the conformal equations \eqref{CE} of null TT-tensor $\sigma$ in the vacuum case. The question of existence of solutions to these equations was first posed by D.~Maxwell in \cite{Maxwell: Modelproblem} and it was shown by the second author in \cite{NguyenNonexistence, NguyenProgress} to be central to
the understanding of the full vacuum case of \eqref{CE}. In the absence of CKVF, the second author also proved in \cite{NguyenNonexistence} that if $\tau > 0$ satisfies
\begin{equation} \label{non-generic condition intro}
	\bigg | L\bigg(\frac{d\tau}{\tau}\bigg) \bigg| \le c \bigg|\frac{d\tau}{\tau}\bigg|^2,
\end{equation}
then for any large constant $a>0$, the equations \eqref{vacuum CE null sigma}  associated with $\tau^a$ admit a non-trivial solution. Motivated by this analytic insight, J. Dilts, M. Holst, T.~ Kozareva and D. Maxwell \cite{DiltsHolstKozarevaMaxwell} performed numerical investigations on the standard product manifold $\Sp^1 \times \Sp^2$, suggesting that solutions to \eqref{vacuum CE null sigma} may still exist in the far-from-CMC regime, without requiring the non-generic condition~\eqref{non-generic condition intro}. In the next result, we reveal a different behavior, that is, \eqref{vacuum CE null sigma} never admit a radial solution on the sphere. This result does not contradict the earlier ones in \cite{NguyenNonexistence, NguyenProgress} or \cite{DiltsHolstKozarevaMaxwell}, but rather complements them by giving a key insight: whether or not solutions to \eqref{vacuum CE null sigma} exist depends not only on how far the mean curvature is from being constant, but also on the geometrical structure of the manifold $(M, g)$ itself. The statement of our result is as follows.
\begin{Theorem}[Nonexistence of radial solution for null TT-tensor]
	Let $\tau(r) \in C^1(\Sp^n)$ be a radial function on the round sphere. Then the equations \eqref{vacuum CE null sigma} have no radial solution. In particular, if a solutions to \eqref{vacuum CE null sigma} exists, the set of solutions is infinite.
\end{Theorem}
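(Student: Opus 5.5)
The plan is to apply Main Theorem~\ref{main theorem}(i) with $V\equiv 0$, $\psi\equiv0$, $\rho\equiv 0$ (and $\sigma\equiv0$) on $\Sp^n$, so that $\kappa=1$, $\sik=\sin$ and $\cok=\cos$. The hypothesis $\rho\psi^\p=0$ holds trivially. In this case $I_{V,\psi,\rho,\varphi}\equiv 0$, and the function $S$ reduces to
\[
S(r)= N^2(\varphi^\p)^2 - n^2\varphi^2 + 2nN\,\frac{\cos r}{\sin r}\,\varphi\varphi^\p .
\]
Suppose $\varphi$ is a positive radial solution. By Main Theorem~\ref{main theorem}(i), $S\ge 0$ on $(0,\pi)$, and by continuity also in the limits $r\to0$ and $r\to\pi$. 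The goal is to find a point where $S<0$.

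First, I record the boundary behaviour. Since $\varphi$ is a smooth radial function on the whole sphere, we have $\varphi^\p(0)=\varphi^\p(\pi)=0$. By l'H\^opital, $\frac{\cos r}{\sin r}\varphi^\p(r)\to \varphi^\pp(0)$ as $r\to0$, and $\frac{\cos r}{\sin r}\varphi^\p(r)\to \varphi^\pp(\pi)$ as $r\to\pi$. For the second limit, note that near $\pi$ we have $\cot r\approx -1/(\pi-r)$ and $\varphi^\p(r)\approx -\varphi^\pp(\pi)(\pi-r)$. Hence
\[
S(0)=2nN\varphi(0)\varphi^\pp(0)-n^2\varphi(0)^2,\qquad S(\pi)=2nN\varphi(\pi)\varphi^\pp(\pi)-n^2\varphi(\pi)^2 .
\]
Second, at any interior critical point $r_0\in(0,\pi)$ of $\varphi$, the formula gives $S(r_0)=-n^2\varphi(r_0)^2<0$. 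This already contradicts (i).

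So we may assume $\varphi^\p\neq 0$ on $(0,\pi)$, meaning $\varphi$ is strictly monotone there. The constant case is already covered, since then $S\equiv-n^2\varphi^2$. If $\varphi$ is increasing, then $\varphi^\p>0$ on $(0,\pi)$ and $\varphi^\p(\pi)=0$, which forces $\varphi^\pp(\pi)\le 0$; thus $S(\pi)\le -n^2\varphi(\pi)^2<0$. If $\varphi$ is decreasing, the same argument at $r=0$ gives $\varphi^\pp(0)\le0$ and $S(0)<0$. In every case we contradict $S\ge0$, so no radial solution exists.

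The main obstacle is justifying the endpoint limits rigorously. This requires the regularity of a radial $C^3$ function at the poles (evenness in $r$ at $0$ and at $\pi$), and it requires that (i) genuinely holds up to $r=0,\pi$; I would obtain the latter by continuity of $S$ on $(0,\pi)$.

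For the final assertion, note that the data $\tau(r)$ are invariant under the group $O(n)$ of isometries of $\Sp^n$ fixing the base point and its antipode, and that these isometries also preserve equations \eqref{vacuum CE null sigma}. Hence if $(\varphi,W)$ is a solution, so is $(\varphi\circ A, A^*W)$ for every such $A$. By the first part, $\varphi$ is not radial, so its stabilizer in this group is a proper closed subgroup. Its orbit is therefore a positive-dimensional, and hence infinite, family of distinct solutions.
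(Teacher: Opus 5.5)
Your proof is correct. It reaches the contradiction from $S\ge 0$ by a somewhat different route than the paper.

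The paper treats $S\ge 0$ (with $I\equiv 0$) as a quadratic inequality in $x=\varphi'/\varphi$. Its roots are $x_1=-\frac nN\cot(r/2)<0<x_2=\frac nN\tan(r/2)$. By continuity of $x$ on $(0,\pi)$, either $x\le x_1$ everywhere or $x\ge x_2$ everywhere. Both are excluded because $x_1\to-\infty$ as $r\to 0$ and $x_2\to+\infty$ as $r\to\pi$, while $\varphi'/\varphi$ stays bounded.

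You work with $S$ directly instead. An interior critical point gives $S=-n^2\varphi^2<0$; this is exactly the fact that $0\in(x_1,x_2)$. In the strictly monotone case you use the continuous extension $S(0)=2nN\varphi(0)\varphi''(0)-n^2\varphi(0)^2$ (and its analogue at $\pi$), with monotonicity fixing the sign of $\varphi''$ at the relevant pole.

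Your route avoids solving the quadratic and the connectedness step, and it shows clearly which pole produces the contradiction. The paper's route needs only boundedness of $\varphi'/\varphi$ at the poles, that is $C^1$ regularity there, whereas you use $\varphi''(0)$ and $\varphi''(\pi)$. That is harmless under the $C^3$ hypothesis of the Main Theorem, and you could drop it anyway: in the decreasing case $\cot r\,\varphi'<0$ near $0$, so $S\le N^2(\varphi')^2-n^2\varphi^2\to -n^2\varphi(0)^2<0$.

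One inference in your last paragraph needs a fix. Being a proper closed subgroup of $O(n)$ does not by itself make the stabilizer's orbit positive-dimensional, or even infinite. For example, $SO(n)$ is proper, yet $O(n)/SO(n)$ has only two points. What you need is this:
\begin{itemize}
\item If the orbit of $\varphi$ were finite, its stabilizer would have finite index.
\item A closed subgroup of finite index is open, so it contains $SO(n)$.
\item $SO(n)$ acts transitively on each geodesic sphere about the base point (since $n\ge 2$), which would force $\varphi$ to be radial.
\end{itemize}
With this correction your argument is more careful than the paper's, which simply asserts that the rotations of a non-radial solution form an infinite set.
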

We next turn our attention to the issue of the existence of solutions to the conformal equations \eqref{CE with null sigma in sphere intro} on the sphere in relation to the mean curvature $\tau$.

\medskip

The result in Main Theorem \ref{main theorem nonexistence near CMC} already makes clear that not every choice of $\tau$ leads to a solution, even when $\tau$ is close to constant (i.e., in the near-CMC setting). In fact, as we will see in Proposition \ref{proposition resolve vector equations} below, under radial symmetry, a necessary condition for the existence of a radial solution $\varphi$ to \eqref{CE with null sigma in sphere intro} is the vanishing condition
\begin{equation}\label{vanishing condition intro}
	\int_0^\pi \tau^\p \varphi^N \sin^n\,ds = 0.
\end{equation}
In particular, since $\varphi$  is positive, this requires at least that $\tau^\p$ must change sign in $(0,\pi)$. The arguments we will use to establish the existence of solutions to  \eqref{CE with null sigma in sphere intro} in the article are partially built on earlier works, such as those by   J.~Isenberg and V. Moncrief \cite{IsenbergMoncrief}, by D. Maxwell \cite{Maxwellcompact}, and by M. Dahl, R. Gicquaud and E.~Humbert \cite{DahlGicquaudHumbert}. However, the main challenge in this case is to incorporate the techniques from these works, while ensuring that the vanishing condition \eqref{vanishing condition intro} is satisfied in the process. To address this difficulty, we reformulate the problem as follows. We first observe that if we decompose the mean curvature as
\begin{equation} \label{decompose tau}
	\tau(r) = \tau_0(r) + \alpha \tau_1(r)
\end{equation}
for some radial functions $\tau_0(r), \tau_1(r)$ and $\alpha \in \RR$, then as long as \eqref{CE with null sigma in sphere intro} have a radial solution $\varphi$,  it follows from \eqref{vanishing condition intro} that the constant $\alpha$ can be computed by
$$
\alpha = \frac{\int_0^\pi \tau_0^\p \varphi^N \sin^n\,ds}{\int_0^\pi \tau_1^\p \varphi^N \sin^n\,ds}.
$$
Therefore,  since every radial function can be rewritten in the form of \eqref{decompose tau}, given an arbitrary pair of radial functions $(\tau_0, \tau_1)$,  we may restrict mean curvature $\tau$ to the class 
$$
\mathcal{T}_{\tau_0, \tau_1} := \big\{ \tau_0 + \alpha \tau_1 ~|~ \alpha \in \RR \big\},
$$ 
and then, instead of searching for mean curvatures that make \eqref{CE with null sigma in sphere intro} solvable, we will look for values of $\alpha$ such that the conformal equations associated with $\tau_0(r) +\alpha \tau_1(r)$ admit a radial solution. Of course, as we have noted above in connection with Main Theorem \ref{main theorem nonexistence near CMC} and the vanishing condition \eqref{vanishing condition intro}, not every value of $\alpha$ will be admissible in this class. However, the following theorem ensures that at least one such $\alpha$ always exists, guaranteeing the solvability of the system. For easy comparison with well-known results on compact manifolds without CKVF, we present here three types of existence results, each based on different assumptions. The techniques we use in the first two are inspired by the small TT-tensor argument from \cite{HolstNagyTsogtgerel, Maxwellcompact} and the limit equation criterion from \cite{DahlGicquaudHumbert, NguyenFPT}. Accordingly, we will refer to them as the small time-derivative existence and the limit equation criterion existence, respectively. What is particularly interesting is the third case, which we call the regular time-derivative existence, where it turns out that the $C^1$-regularity of the time--derivative $\rho$ alone is sufficient to guarantee the existence of solutions. 
\begin{Theorem}[Solvability of the conformal equations] \label{main theorem solvability on sphere}
	Let $(\rho(r), \tau_0(r), \tau_1(r))$ be a radial setting on the round sphere $\Sp^n$ with $\rho \not \equiv 0$. Assume that $|\tau_0^\p| \le c \tau_1^\p$, for some constant $c>0$, and that $(\rho, \tau_0, \tau_1)$ satisfies one of the following conditions
	\begin{itemize}
		\item small time-derivative: $\max|\rho|$ is sufficiently small;
		\item limit equation criterion: $\tau_0 - a \tau_1$ does not change sign for any $a \in [-c,c]$;
		\item regular time-derivative: $\rho(r) \in C^1(\Sp^n)$.
	\end{itemize}
	Then there exists $\alpha \in [-c, c]$ such that the conformal equations \eqref{CE with null sigma in sphere intro} associated with $(\rho, \tau_0 - \alpha \tau_1)$ admit at least one radial solution $\varphi(r)>0$. Moreover, in this case, $\alpha$ can be computed by
	$$
	\alpha = \frac{\int_0^\pi \tau_0^\p \varphi^N \sin^n\,ds}{\int_0^\pi \tau_1^\p \varphi^N \sin^n\,ds}.
	$$
\end{Theorem}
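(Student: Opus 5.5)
The plan is a fixed-point argument on a radially reduced system in which the parameter $\alpha$ is itself an unknown, tied to $\varphi$ by the vanishing condition \eqref{vanishing condition intro}.

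\emph{Step 1: reduction of the vector equation.} Under radial symmetry the vector equation can be integrated explicitly (Proposition \ref{proposition resolve vector equations}). Writing $W = w(r)\,dr$, the quantity $|LW|^2$ becomes an explicit multiple of
$$
\Big(\frac{1}{\sin^n(r)}\int_0^r \tau^\p \varphi^N \sin^n\,ds\Big)^2 .
$$
Regularity at $r=\pi$ is equivalent to \eqref{vanishing condition intro}. For $\tau = \tau_0 - \alpha\tau_1$ this condition is linear in $\alpha$, so for any positive radial $\varphi$ we define
$$
\alpha(\varphi) = \frac{\int_0^\pi \tau_0^\p \varphi^N \sin^n\,ds}{\int_0^\pi \tau_1^\p \varphi^N \sin^n\,ds}.
$$
The hypothesis $|\tau_0^\p| \le c\,\tau_1^\p$ gives $\tau_1^\p \ge 0$. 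The denominator is then positive; the degenerate case $\tau_1^\p \equiv 0$ forces $\tau_0$ constant and reduces to the CMC case. It also gives $|\alpha(\varphi)| \le c$ automatically. With this choice, $W = W(\varphi)$ is smooth on all of $\Sp^n$.

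\emph{Step 2: the fixed-point map.} Define $T$ on positive radial continuous functions. Given $\varphi$, compute $\alpha(\varphi)$ and $W(\varphi)$, then solve the Lichnerowicz equation
$$
-\tfrac{4(n-1)}{n-2}\Delta\psi + \Scal_g\,\psi = -\tfrac{n-1}{n}\,\tau^2\psi^{N-1} + \frac{|LW|^2+\rho^2}{\psi^{N+1}},
$$
where $\tau = \tau_0 - \alpha(\varphi)\tau_1$. This has a unique positive radial solution $\psi = T(\varphi)$, by standard Lichnerowicz theory: $\rho \not\equiv 0$, and $\tau$ must not vanish identically. Uniqueness preserves radial symmetry. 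Compactness of $T$ follows from elliptic estimates, and continuity from uniqueness. Schaefer's fixed point theorem then yields a solution once we have an a priori bound on the set $\{\varphi = tT(\varphi),\ t\in[0,1]\}$. Any fixed point is a radial solution whose $\alpha$ satisfies the formula in the statement, by construction.

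\emph{Step 3: the a priori bound, in each of the three cases.}

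\textbf{Small $\rho$.} Follow \cite{HolstNagyTsogtgerel, Maxwellcompact} and restrict $T$ to a ball $\{\varphi \le K\}$. For small $\max|\rho|$, a global supersolution of the form $\varepsilon\cdot$const shows $T$ preserves the ball. The vector term contributes $\|LW\|^2 \lesssim \|\varphi\|_\infty^{2N}$, which is absorbed when $K$ is small. Schauder then applies directly.

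\textbf{Limit equation criterion.} Argue by contradiction as in \cite{DahlGicquaudHumbert}. If $\|\varphi_i\| \to \infty$, rescale by $\gamma_i = \|\varphi_i\|_\infty^{N}$ and pass to subsequences $\alpha_i \to a \in [-c,c]$. One obtains a nontrivial radial $W$ solving the limit equation
$$
\Div LW = \tfrac{n-1}{n}\,\sqrt{\tfrac{n}{n-1}}\,|LW|\,\frac{d\tau_a}{|\tau_a|}, \qquad \tau_a = \tau_0 - a\tau_1 .
$$
Since $\tau_a$ does not change sign, radial integration of this limit equation forces $W \equiv 0$, a contradiction. The step needing care is that $\tau_a$ may vanish at points. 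This is handled as in \cite{NguyenFPT}, using the explicit radial form of $LW$.

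\textbf{Regular $\rho \in C^1$.} This is the hardest case and the one I expect to be the main obstacle. The idea is to exploit Main Theorem \ref{main theorem}: for radial solutions the quantity $S$ is nonnegative and is controlled by a first integral. I would first derive a Pohozaev/energy identity. Multiply the radial Lichnerowicz equation by $\sin^n\,\varphi^\p$ (equivalently, use the structure giving $S \ge 0$) to get
$$
\Big(\tfrac{1}{2}(\varphi^\p)^2 \sin^{2n} + \cdots\Big)^\p = \text{terms involving } \rho\rho^\p ,
$$
where the $\rho\rho^\p$ terms are bounded because $\rho \in C^1$. The integrals involving $\tau^\p$ cancel thanks to the vanishing condition. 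This gives a bound on $\varphi$ in terms of $\|\rho\|_{C^1}$. In a blow-up sequence it should show that the rescaled limit cannot concentrate, and hence that a nontrivial limit equation solution cannot arise even when $\tau_a$ changes sign. Making this identity close precisely, and checking that the $\rho^\p$ term is the only place where regularity is used, is where I expect most of the work.
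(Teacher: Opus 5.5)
Your small-$\rho$ case and your limit-equation case (with $\tau_a$ bounded away from zero) are essentially the paper's. The first is Schauder on $\{0\le\phi\le\varepsilon\}$ with a constant supersolution. The second is Leray--Schauder plus the Dahl--Gicquaud--Humbert rescaling, with the radial limit equation ruled out by Gronwall applied to $Y(r)=\int_0^r\frac{\tau_a'}{\tau_a}|y|\sin^n\,ds$.

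\textbf{The genuine gap is the regular time-derivative case.} You leave it as an unspecified ``Pohozaev identity'' feeding into a blow-up argument, and that route cannot work as described. When $\tau_a$ changes sign, the rescaled convergence $\tilde\varphi_i^N\to t_0|y|/|\tau_a|$ is unavailable, so there is no limit equation to rule out. Moreover, the $\tau$-terms are not eliminated by the vanishing condition.

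The missing idea is that $S\ge0$ gives a \emph{direct} a priori bound, independent of $\tau$, once it is evaluated at a maximum point $r_0$ of $\varphi$. The function $\tau$ never enters $S$: it has been integrated out through $\bigl(\int_0^r\tau(\varphi^N\sin^n)'\bigr)^2=\varphi^N\sin^{2n}S$. If $r_0\in(0,\pi)$, then $\varphi'(r_0)=0$ kills the remaining $\varphi'$-terms, leaving
\[
n^2\varphi(r_0)^2\le\frac{n}{(n-1)\sin^n(r_0)}\int_0^{r_0}\frac{\rho^2}{\varphi^{2N}}\,(\varphi^N\sin^n)'\,ds .
\]
The bound is then obtained as follows.
\begin{enumerate}
\item Write $\frac{\rho^2}{\varphi^{2N}}(\varphi^N\sin^n)'=-\rho^2\sin^{2n}\bigl(\frac{1}{\varphi^N\sin^n}\bigr)'$ and integrate by parts. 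This moves the derivative onto $\rho^2\sin^{2n}$, and it is exactly where $\rho\in C^1$ is used.
\item Use $\varphi\ge\varphi_\star$, a positive subsolution uniform in $\alpha$ and $t$, to bound $\varphi^{-N}$.
\item The mean value theorem handles $r_0\le\pi/2$. The second vanishing condition $\int_0^\pi\frac{\rho^2}{\varphi^{2N}}(\varphi^N\sin^n)'\,ds=0$ (Remark~\ref{remark vanishing condition on Is}) handles $r_0>\pi/2$.
\item Endpoint maxima are treated by l'H\^opital and give $n(n-1)\varphi^{N+2}\le\rho^2$ there.
\end{enumerate}
No limit equation is involved.

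For this bound to be usable along the Leray--Schauder homotopy, every $\varphi$ with $\phi=t\,T(\phi)$ must be a genuine radial solution of the conformal system. With your family it is not. The $\tau^2\varphi^{2N}$ term keeps coefficient $1$ while $|LW|^2$ picks up $t^{2N}$. So $S$ acquires the extra term $-(1-t^{2N})\sin^{-n}\int_0^r\tau^2(\varphi^N\sin^n)'\,ds$, which is of order $\|\varphi\|_{C^0}^N$ with no sign, and the estimate does not close. The paper avoids this by solving the Lichnerowicz step with $t^{2N}\tau^2$ and setting $\widetilde T(t,\phi)=t\varphi$. Then $\varphi$ solves the system with mean curvature $t^N(\tau_0-\alpha\tau_1)$.

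Two smaller issues. First, $T$ has to be continuous on a Banach space containing $0$, but $\alpha(\phi)$ is discontinuous where $\tau_1'|\phi|^N\equiv0$, in particular at $\phi=0$, the $t=0$ end of the homotopy. ``Continuity from uniqueness'' does not address this; the paper's cutoff $\alpha_\star(\phi)$ does.

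Second, your claim that zeros of $\tau_a$ can be handled in the limit-equation case is unsupported, and false already for the radial ODE. If $\tau_a\ge0$ vanishes at $r_1<r_2$ and is positive in between, then $Y=C\tau_a^{\lambda}\chi_{(r_1,r_2)}$ is a nontrivial solution of $Y'=\lambda\frac{\tau_a'}{\tau_a}|Y|$ with $Y(0)=Y(\pi)=0$. Both the rescaling and the Gronwall step need $\inf_{a\in[-c,c],\,r}|\tau_0-a\tau_1|>0$, which is the hypothesis the paper actually uses.
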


The last results derived from Main Theorem \ref{main theorem} on the sphere we would like to present concern stability and instability of the full system \eqref{CE}. The (in)stability behavior of this system is closely tied to the sign of the quantity $\mathcal{B}_{\tau, \psi}$, which plays a central role in the analysis. A brief overview of the known results is as follows:
\begin{itemize}
	\item When $\mathcal{B}_{\tau,\psi} > 0$, which we call the \textit{focusing case}, the problem is completely solved by E. Hebey, O. Druet and B. Premoselli in \cite{DruetHebey, DruetPremoselli, Premoselli_Stability}, in which the equations are shown to be stable in low dimensions as long as $\rho \not\equiv 0$. For manifolds of high dimensions, the result is also true under an additional assumption of zero-points of derivatives of $\tau$ and $\mathcal{B}_{\tau,\psi}$, which is proven to be sharp in \cite{PremoselliWei}.
	
	\item When $\mathcal{B}_{\tau, \psi} \le 0$, that is the \textit{defocusing case}, the system is known to be stable in the CMC and near-CMC settings thanks to well-known arguments in \cite{BartnikIsenberg, Maxwellcompact, DahlGicquaudHumbert}. However, when $\tau$ is far--from--CMC, the stability question still remains open.
	
	\item When $\mathcal{B}_{\tau, \psi}$ changes sign, the situation becomes significantly more subtle and challenging. To the best of the authors' knowledge, the only known result in this setting is due to O. Druet and E. Hebey \cite{DruetHebey}, who establish stability in the CMC case for dimensions $n \leq 5$.
\end{itemize}
In this article, we investigate this problem specifically on the sphere. Our goal is to provide answers, or at least compelling evidence, regarding the stability/instability of the system \eqref{CE} in the last two cases.

\medskip

For $\mathcal{B}_{\tau, \psi} \le 0$, we focus on a simplified setting in which $(V,\psi,\sigma)$ vanish identically, that is the equations \eqref{CE with null sigma in sphere intro}. Using the condition $S_{V,\psi,\rho,\varphi} \ge 0$, we show that, within the space of radial functions, the equations are not only stable, but that the set of solutions is also uniformly bounded, independently of the choice of $\tau$. This suggests that the stability property of the conformal equations may persist more generally in this case. When $\mathcal{B}_{\tau, \psi}$ changes sign, the situation is markedly different. In contrast to the stability result obtained by O. Druet and E. Hebey \cite{DruetHebey} for the CMC case in dimensions $n\le 5$ as mentioned above, thanks to Identity \eqref{main identity}, we will construct a sequence of blowing-up solutions to \eqref{CE}, thereby showing that the stability property of the conformal equations no longer holds in general in the non-CMC setting. This gives a negative answer to the stability question in the last case in all dimensions. More precisely, our associated results are that
\begin{Theorem}[Non-positive $\mathcal{B}_{\tau, \psi}$ and stability] \label{main theorem instability}
	Let $\rho(r)\in C^1(\Sp^n)$ be a radial function. For any radial function $\tau(r)\in C^1(\Sp^n)$, we denote by $\mathcal{S}_{\Sp^n}(\tau)$ the set of all radial solutions $\varphi(r)$ to \eqref{CE with null sigma in sphere intro} associated with $(\rho, \tau)$. 
	
	Then there exists a constant $C>0$ depending only on $\rho$ such that
	$$
	\sup\bigg\{\|\varphi(r)\|_{C^0(\Sp^n)} \ \bigg|\ \exists \tau(r) \in C^1(\Sp^n) \text{ such that } \varphi(r) \in \mathcal{S}_{\Sp^n}(\tau)  \bigg\} \le C.
	$$
\end{Theorem}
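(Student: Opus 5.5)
We will bound $\varphi$ using only the necessary condition $S_{V,\psi,\rho,\varphi}\ge 0$ of Main Theorem~\ref{main theorem}(i), specialised to $V\equiv0$, $\psi\equiv0$, $\kappa=1$. Then $I=\tfrac{n}{n-1}\rho^2$, and for every $r\in(0,\pi)$ the condition reads
\[
N^2(\varphi')^2 + 2nN\tfrac{\cos r}{\sin r}\varphi\varphi' - n^2\varphi^2 + \frac{1}{\sin^n r}\int_0^r \tfrac{n}{n-1}\rho^2\varphi^{-2N}(\varphi^N\sin^n t)'\,dt \ \ge 0 .
\]
The mean curvature $\tau$ does not appear. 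Any bound derived from this inequality will therefore automatically be uniform over all $\tau\in C^1(\Sp^n)$. The whole proof is to extract a $C^0$ bound from it, with constants depending only on $\rho$.

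First we rewrite the integral term. Set $u=\varphi^N$, so $(\varphi^N\sin^n)'=(u\sin^n)'$. The integrand becomes $\tfrac{n}{n-1}\rho^2 u^{-2}(u\sin^n t)'$. Write $(u\sin^n)'/u^2=-\,(1/u)'\sin^n+ n\cos\sin^{n-1}/u$ and integrate by parts. Since $\rho\in C^1$, this gives an expression bounded by $C(\rho)\,\sin^{n-1}(r)\cdot\max(1/\varphi^N)$ plus boundary terms of the form $-\rho^2\sin^n/u$, which are nonpositive. Similarly, the first three terms can be written as $\big(N\varphi'+n\cot r\,\varphi\big)^2-\frac{n^2}{\sin^2 r}\varphi^2$, i.e. $\frac{1}{\sin^2 r}\big[(N\sin r\,\varphi'+n\cos r\,\varphi)^2-n^2\varphi^2\big]$. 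Note that $N\sin r\,\varphi'+n\cos r\,\varphi=\frac{1}{\varphi^{N-1}}\cdot\frac{(u\sin^n)'}{N\sin^{n-1}}\cdot\frac{N}{1}\cdot\frac{1}{?}$ is up to normalisation the quantity $(\varphi^N\sin^n)'/(\varphi^{N-1}\sin^{n-1})$. Thus $S\ge0$ says roughly
\[
\big((u\sin^n)'\big)^2 \ \ge\ n^2u^2\sin^{2n-2} - C(\rho)\,u^{2-2/N}\sin^{2n-2}\big(\text{integral term}\big).
\]

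The second step is to evaluate this at the maximum point $r_0$ of $\varphi$. There $\varphi'(r_0)=0$, or $r_0\in\{0,\pi\}$, in which case we take a limit. At $r_0$ the square term reduces to $n^2\cos^2 r_0\,\varphi^2$, so the condition becomes
\[
n^2\sin^2 r_0\,\varphi(r_0)^2 \ \le\ \frac{\sin^2 r_0}{\sin^n r_0}\int_0^{r_0}\tfrac{n}{n-1}\rho^2\varphi^{-2N}(\varphi^N\sin^n)'\,dt .
\]
The right-hand integral is bounded by $C\max\rho^2\int_0^{r_0}\varphi^{-N}\,|(\varphi^N\sin^n)'|/\varphi^N$. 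The plan is to control it by splitting it as $\int\varphi^{-N}N\varphi'\varphi^{-1}\sin^n+n\int\varphi^{-N}\cos\sin^{n-1}$. The first part is $-\int(\varphi^{-N})'\sin^n$; integrating by parts gives $-\varphi(r_0)^{-N}\sin^n r_0+n\int\varphi^{-N}\cos\sin^{n-1}$. The combination yields at most $2n\int_0^{r_0}\varphi^{-N}\sin^{n-1}\le 2n\,\sin^{n}(r_0)\cdot(\ldots)/\min\varphi^N$. This would give $\varphi(r_0)^2\lesssim \max\rho^2/\min\varphi^N$.

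The main obstacle is that this bound involves $\min\varphi$, so it does not close by itself. We also need to handle $r_0$ near the poles, where $\sin r_0\to0$, and so avoid dividing by small quantities. For the minimum we will use the Lichnerowicz equation at the minimum point $r_1$. There $\Delta\varphi\ge0$, so $\Scal\,\varphi\ge\dots$, which gives $\Scal\,\varphi^{N+2}\ge \rho^2(r_1)-\dots$. This is weak if $\rho(r_1)=0$. Instead we will apply the inequality $S\ge0$ a second time, this time on the reversed interval from $\pi$ (by symmetry $r\mapsto\pi-r$, again with a $\tau$-free inequality). Choosing the endpoint on the side where the mass of $\rho^2$ lies, we aim to get a two-sided relation that forces $\max\varphi^{2}\min\varphi^{N}\le C(\rho)$ and $\max\varphi/\min\varphi\le C$. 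The ratio bound would come from a Harnack-type integration of $S\ge0$, which gives $|N\varphi'/\varphi|\le n/\sin r+\text{small}$ only away from the poles. This Harnack step, combined with the pole analysis, is where I expect most of the work.
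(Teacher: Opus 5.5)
\textbf{There is a genuine gap: the Harnack step fails, and in fact the statement as written is false.}

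Your reduction to the $\tau$-free inequality $S_{0,0,\rho,\varphi}\ge 0$, evaluated at a maximum point $r_0$, is what the paper does. This is Case~2 in the proof of Theorem~\ref{theorem smooth time-derivative sphere}, to which the proof of Theorem~\ref{theorem stability on sphere} refers.

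At the poles, l'H\^opital gives the $\tau$-free bound $\varphi(r_0)^{N+2}\le \rho^2(r_0)/(n(n-1))$. For $r_0\in(0,\pi)$, the paper integrates the \emph{signed} integrand by parts. This is where $\rho\in C^1$ enters; once you put $|(\varphi^N\sin^n)^{\prime}|$ inside the integral, you can no longer integrate $\varphi^{\prime}$ by parts. For $r_0>\pi/2$ it also uses the second vanishing condition of Remark~\ref{remark vanishing condition on Is}. The outcome is $n^2\varphi(r_0)^2\lesssim \|\rho\|_{C^1}^2/(\min\varphi)^N$, and you are right that this does not close by itself.

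The paper closes it with $\varphi\ge\varphi_\star$. But $\varphi_\star$ solves a Lichnerowicz equation that contains the $\tau$-term (through $2c\tau_1$ in \eqref{def varphi_star}). So the resulting constant depends on $\tau$, not on $\rho$ alone.

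Your two substitutes do not work either:
\begin{itemize}
\item The equation at a minimum point only gives $n^2\varphi^{N+2}+\tau^2\varphi^{2N}\ge\tfrac{n}{n-1}\rho^2$, which is again $\tau$-dependent.
\item $S\ge 0$ cannot give a Harnack inequality, because it is a \emph{lower} bound on $|\varphi^{\prime}/\varphi|$. Up to the integral term $J$, it says $N\varphi^{\prime}/\varphi\notin\big(-n\cot\tfrac r2,\,n\tan\tfrac r2\big)$, and making $|\varphi^{\prime}|$ large only increases $S$. So $|N\varphi^{\prime}/\varphi|\le n/\sin r+\text{small}$ does not follow.
\end{itemize}

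No $\tau$-independent closure exists, because the statement fails. Take $\rho\equiv\rho_0>0$ and fix $\delta\in(0,\pi/4)$. For small $\varepsilon>0$, let $\varphi_\varepsilon$ be smooth, invariant under $r\mapsto\pi-r$, nondecreasing on $[0,\pi/2]$, equal to $\varepsilon$ on $[0,\delta]$, and with maximum $M_\varepsilon$ defined by $2n(n-1)M_\varepsilon^2\varepsilon^N=\rho_0^2\sin^n\delta$. Writing $J$ for the integral term of $S$, integration by parts gives, on $[0,\pi/2]$,
\[
\tfrac{n-1}{n\rho_0^2}\,J(r)=-\varphi_\varepsilon(r)^{-N}+\frac{2n}{\sin^n r}\int_0^r\frac{\sin^{n-1}t\,\cos t}{\varphi_\varepsilon(t)^N}\,dt\ \ge\ \max\Big(\varphi_\varepsilon(r)^{-N},\ \frac{2\sin^n\delta}{\varepsilon^N}-\varphi_\varepsilon(r)^{-N}\Big)\ \ge\ \frac{\sin^n\delta}{\varepsilon^N}.
\]
The first entry of the maximum comes from monotonicity of $\varphi_\varepsilon$; the second comes from the contribution of $[0,\delta]$.

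Since $\varphi_\varepsilon^{\prime}\ge 0$ and $\cot\ge0$ on $[0,\pi/2]$, we get $S\ge J-n^2M_\varepsilon^2\ge \frac{n\rho_0^2\sin^n\delta}{2(n-1)\varepsilon^N}>0$ there. By symmetry $S>0$ on all of $[0,\pi]$, and $(\varphi_\varepsilon^N\sin^n)^{\prime}$ vanishes only at $\pi/2$.

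Theorem~\ref{theorem existence on sphere}(iii), with the $+$ sign, then produces a smooth radial $\tau_\varepsilon$. It is constant near the poles, and the vanishing condition holds by symmetry. For this $\tau_\varepsilon$, $\varphi_\varepsilon$ is a radial solution with data $(\rho_0,\tau_\varepsilon)$. Yet $\|\varphi_\varepsilon\|_{C^0}=M_\varepsilon\sim\varepsilon^{-N/2}\to\infty$.

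In this example $M_\varepsilon^2(\min\varphi_\varepsilon)^N$ stays bounded, which is exactly the relation your second step yields, while $\max\varphi_\varepsilon/\min\varphi_\varepsilon\to\infty$. What this line of argument (yours or the paper's) actually proves is a bound depending on $\rho$ and $\|\tau\|_{C^0}$, obtained through a $\tau$-dependent lower barrier.
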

\begin{Theorem}[Instability under sign change of $\mathcal{B}_{\tau, \psi}$]
	For any $\Lambda > 0$, there exists a sequence of $\{ (\varphi_i(r), \tau_i(r)) \}$ satisfying the conformal equations \eqref{CE} associated with the data set $(\psi, \rho, \sigma, V, \tau) = (0, 0, 0, \Lambda, \tau_i(r))$ such that $\tau_i(r)$ converges in $C^1(\Sp^n)$ to some non-constant $\tau_\infty(r)$, but $\|\varphi_i\|_{L^\infty} \to + \infty$.
\end{Theorem}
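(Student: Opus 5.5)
The plan is to build the blowing-up sequence explicitly from Main Theorem~\ref{main theorem}(ii), with data $(\psi,\rho,\sigma,V)=(0,0,0,\Lambda)$ on $\Sp^n$ ($\kappa=1$). In this case $I_{V,\psi,\rho,\varphi}=\tfrac{n}{n-1}2\Lambda\varphi^{2N}$, so
\[
S = N^2(\varphi')^2 - n^2\varphi^2 + 2nN\tfrac{\cos r}{\sin r}\varphi\varphi' + \tfrac{2n\Lambda}{n-1}\,\tfrac{\varphi^N\sin^n r}{\sin^n r}
= N^2(\varphi')^2 - n^2\varphi^2 + 2nN\tfrac{\cos r}{\sin r}\varphi\varphi' + \tfrac{2n\Lambda}{n-1}\varphi^N ,
\]
since the integrand is exactly $(\varphi^N\sin^n)'$. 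This is the key simplification. For large $\varphi$ the term $\tfrac{2n\Lambda}{n-1}\varphi^N$ dominates $n^2\varphi^2$ because $N>2$, so positivity of $S$ is easy.

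I would choose a one-parameter family of positive radial functions $\varphi_i$ that is smooth on $\Sp^n$ and forms a standard bubble. The natural candidate is the conformal factor of a Möbius dilation, or simply $\varphi_i = \lambda_i u_0$ with $u_0$ fixed and $\lambda_i\to\infty$. The constant multiple is the cleanest choice, and I would try it first.

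Take $\varphi_i=\lambda_i u$ with $u$ a fixed positive smooth radial function, smooth at both poles (so $u'(0)=u'(\pi)=0$ with even expansions). Then
\[
S_i=\lambda_i^2\big(N^2u'^2-n^2u^2+2nN\cot r\, u u'\big)+\tfrac{2n\Lambda}{n-1}\lambda_i^N u^N .
\]
The bracket is bounded: near the poles $\cot r\, u'$ stays bounded. Hence $S_i>0$ for $\lambda_i$ large. Next, $(\varphi^N\sin^n)'=\lambda^N(u^N\sin^n)'$ must be nonzero a.e., which holds as long as $u$ is not chosen pathologically (analytic $u$ suffices). Formula \eqref{main identity} then defines
\[
\tau_i=\pm\frac{S_i+\lambda_i^2 A(u)+\tfrac{2n\Lambda}{n-1}\lambda_i^N u^N}{2\sqrt{\lambda_i^N u^N S_i}},
\]
where $A(u)=2N(u''+(n-1)\cot r\,u')u-n^2u^2$. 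Expanding in $\lambda_i^{2-N}\to0$, the leading term is $\tau_\infty=\pm\frac{2\cdot\frac{2n\Lambda}{n-1}u^N}{2\sqrt{\frac{2n\Lambda}{n-1}}\,u^N}=\pm\sqrt{\tfrac{2n\Lambda}{n-1}}$, which is constant. So pure scaling converges to a constant $\tau_\infty$. That is consistent with Druet–Hebey stability, and it tells us the ansatz must be modified.

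The modification is to scale only partially, taking $\varphi_i=\lambda_i u+w$ or, better, letting $\varphi_i$ blow up only near a point while $\tau_i$ retains nonconstant information. Since in the regime where the $\Lambda$ term dominates $\tau\approx$ const, the nonconstancy must come from a region where $\varphi_i$ stays bounded. So I would take $\varphi_i=v+\lambda_i\chi$, where $v$ is a fixed solution profile and $\chi$ is a smooth radial bump supported near the north pole $r=0$ with $\chi'\ge0$ issues handled carefully. On the region away from $\operatorname{supp}\chi$, both $\varphi_i$ and $S_i$ are independent of $i$, except for the integral term. Here lies a subtlety: $S$ is local after simplification, so away from the bump $\tau_i$ equals the fixed $\tau$ built from $v$, which is nonconstant whenever $v$ is not constant. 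Inside the bump, where $\lambda_i\chi\gg1$, $\tau_i\to\pm\sqrt{2n\Lambda/(n-1)}$. The limit $\tau_\infty$ is then nonconstant, and $\|\varphi_i\|_\infty\to\infty$.

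The main obstacle is $C^1$ convergence of $\tau_i$ on the transition zone where $\lambda_i\chi$ is of order one, together with keeping a single sign choice $\pm$ and $S_i>0$ throughout. To handle this, I would shrink the transition in a controlled way, or instead pick $v$ itself large, $v\equiv$ const $=m$ with $m$ large, so that everywhere $\varphi_i\ge m$ and the $\Lambda$ term dominates. Then $\tau_i=\pm\sqrt{2n\Lambda/(n-1)}\,(1+O(\varphi_i^{2-N}\cdot\text{derivative terms}))$. The derivative terms contain $\varphi_i'/\varphi_i$ and $\varphi_i''/\varphi_i$, which are uniformly bounded in $C^1$ only if $\log\varphi_i$ is uniformly bounded in $C^3$. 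This suggests the choice $\varphi_i=\exp(h+ c_i)$ with fixed nonconstant $h$ and $c_i\to\infty$. But that again gives a constant limit. The genuine route is therefore the bump construction with fixed $v$ of moderate size and $S>0$ checked directly, and I expect verifying $S_i>0$ and $C^1$ control across the transition to be the delicate step. I would attempt it by choosing $\chi=e^{-1/\cdot}$-type profiles so that $\lambda_i\chi$ stays tiny in $C^3$ outside a shrinking set, accepting convergence of $\tau_i$ in $C^1_{loc}$ away from a point first and then upgrading.
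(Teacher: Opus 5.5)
\textbf{Gap in the proposal.} Your reduction is correct. With $(\psi,\rho,V)=(0,0,\Lambda)$ the integral in $S$ collapses to $\tfrac{2n\Lambda}{n-1}\varphi^N$, and you rightly observe that pure scaling $\varphi_i=\lambda_i u$ forces $\tau_i\to\pm\sqrt{2n\Lambda/(n-1)}$. However, the proposal stops at the crux without producing a family that works, and the ansatz you commit to fails. That ansatz is $\varphi_i=v+\lambda_i\chi$, with a fixed bump $\chi$ of $e^{-1/(r_1-r)}$ type.

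For each $i$ there is a zone near the edge $r_1$ of $\mathrm{supp}\,\chi$ where $\lambda_i\chi\approx v$. In that zone:
\begin{itemize}
\item $\varphi_i$ stays bounded;
\item $|\varphi_i'|\approx\lambda_i|\chi'|\approx v\log^2(\lambda_i/v)\to\infty$;
\item $\varphi_i''\approx\lambda_i\chi''>0$.
\end{itemize}
With $\kappa=1$ and this data, the numerator in \eqref{main identity} there is $N^2\varphi'^2+2N\varphi''\varphi+O(|\varphi'|)+O(1)$. The denominator is $2\varphi^{N/2}\big(N|\varphi'|+O(1)\big)$. Hence $|\tau_i|\gtrsim\lambda_i|\chi'|\to\infty$, so $\tau_i$ is not even bounded, let alone $C^1$-convergent. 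Even pointwise, the limit would be the constant on the interior of $\mathrm{supp}\,\chi$ and $\tau_v$ outside, with nothing forcing a $C^1$ match at $r_1$. Shrinking the support only increases $\chi'/\chi$, so that does not help either.

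What you need is the following:
\begin{itemize}
\item wherever $\varphi_i$ is large, the quantities $\varphi_i'^2\varphi_i^{-N}$, $\cot r\,\varphi_i'\varphi_i^{1-N}$ and $\varphi_i''\varphi_i^{1-N}$ (and one more derivative) are uniformly small;
\item wherever $\varphi_i$ stays bounded, $\varphi_i$ converges in $C^3$.
\end{itemize}
Your bump ansatz violates this precisely in the transition zone, where $\varphi_i$ is bounded but its derivatives are not.

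\textbf{The missing idea.} Make $\varphi_i$ a regularization of a fixed singular profile with power-type blow-up. Then its logarithmic derivatives are dominated by a small power of $\varphi_i$ itself, and no such transition zone exists. The paper takes $\varphi_\eps=b/(\sin^a r+\eps)$ with $\eps\to0$. This gives
\begin{itemize}
\item $|\varphi_\eps'|/\varphi_\eps\le a(\sin^a r+\eps)^{-1/a}$;
\item $|\cot r\,\varphi_\eps'|/\varphi_\eps$ and $|\varphi_\eps''|/\varphi_\eps$ are both $O\big((\sin^a r+\eps)^{-2/a}\big)$.
\end{itemize}
After dividing the numerator and denominator of $\tau_\eps$ by $\varphi_\eps^N$, every term other than the constant $\Lambda$-terms is $O\big((\sin^a r+\eps)^{N-2-2/a}\big)$, uniformly in $\eps\in[0,1]$.

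Choose $a$ large, so that $N-2-3/a>0$. Then choose $b$ large, so that $S>0$; all non-$\Lambda$ terms in $S/\varphi_\eps^N$ carry a factor $b^{2-N}$. The map $(r,\eps)\mapsto\tau_\eps(r)$ is then uniformly continuous on $[0,\pi]\times[0,1]$, with value $\sqrt{2n\Lambda/(n-1)}$ at both poles. This yields $\tau_\eps\to\tau_0$ in $C^1$, where $\tau_0$ is the function attached to $\varphi_0=b\sin^{-a}$. Meanwhile $\varphi_\eps(0)=\varphi_\eps(\pi)=b/\eps\to\infty$. Non-constancy of $\tau_0$ for $b$ large follows by comparing $\tau_0(\pi/2)$ with $\tau_0(0)$.
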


\subsection{Hyperbolic manifold}

Let us now present the applications of Main Theorem \ref{main theorem} to the conformal equations \eqref{CE with null sigma in sphere intro} on the hyperbolic manifold $\HH^n$. In contrast to the spherical case discussed earlier, it is clear that $\HH^n$ admits no CKVF. As a result, solutions to \eqref{CE with null sigma in sphere intro} are not subject to additional constraints such as the vanishing condition \eqref{vanishing condition intro}. Moreover, the absence of CKVF also gives a significant advantage: we can directly apply well-known results established for AH manifolds without needing to adjust any argument to fit our situation. Among these benefits, a particularly important result on AH manifolds that we will make use of is the limit equation criterion proven by R.~Gicquaud and A. Sakovich \cite{GicquaudSakovich}, which states that if $\tau$ does not change sign and if the limit equation 
\begin{equation} \label{limit equation intro}
	\Div(L W) = \lambda \sqrt{\tfrac{n-1}{n}} |LW| \frac{d\tau}{\tau}
\end{equation}
has no non-trivial solution $W$ for all $\lambda \in (0,1]$, then the conformal equations \eqref{CE with null sigma in sphere intro} admit at least one solution. In the context of radial settings, as we will see in the process of the proof of Main Theorem \ref{main theorem}, this limit equation \eqref{limit equation intro} may be reduced to a first-order linear ODE, which can be easily shown to have no nontrivial solution, therefore, the existence of solutions to \eqref{CE with null sigma in sphere intro} is obtained. More precisely, we achieve the following result.
\begin{Theorem}[Solvability and stability]
	Let $(\rho(r), \tau(r)) \in C^0(\HH^n)\times C^1(\HH^n)$ be a radial setting in the hyperbolic space $\HH^n$, and assume that $\tau$ does not change sign.
	
	Then the conformal equations \eqref{CE with null sigma in sphere intro} associated with $(\rho, \tau)$ admit at least one radial solution. Moreover, the set of solutions is compact.
\end{Theorem}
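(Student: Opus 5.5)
The argument has two parts: existence through the Gicquaud--Sakovich limit equation criterion, and compactness through a priori bounds from the reduced one-dimensional structure. Replacing $\tau$ by $-\tau$ leaves the system unchanged ($W\mapsto -W$), so we may assume $\tau\ge 0$, $\tau\not\equiv 0$. To apply the criterion of \cite{GicquaudSakovich}, we must show that for every $\lambda\in(0,1]$ the limit equation \eqref{limit equation intro} has no nontrivial radial solution $W$ decaying at infinity in the relevant weighted spaces. Under radial symmetry we write $W=w(r)\,dr$. Then $LW$ is determined by the single quantity $w'-\frac{\cok}{\sik}w$ (here $\cok/\sik=\coth r$), and $\Div(LW)$ is a constant multiple of $\sik^{-n}\big(\sik^n(w'-\coth r\, w)\big)'\,dr$, up to the normalisation already used in the reduction of the vector equation in the proof of Main Theorem \ref{main theorem}.

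Set $u=\sik^n(w'-\coth r\,w)$. Then $|LW|=c_n|u|\sik^{-n}$, and the limit equation becomes the first-order linear ODE
$$
u'=\pm\lambda c_n' \frac{\tau'}{\tau}\,|u|.
$$
Smoothness of $W$ at the origin forces $u(0)=0$. On any interval where $u$ has a fixed sign the equation is linear, so Gronwall's inequality with $u(0)=0$ gives $u\equiv0$; here we need $\tau'/\tau$ locally integrable, which holds where $\tau>0$, and the zero set of $\tau$ has to be handled. Hence $LW=0$, so $W$ is a conformal Killing field. $\HH^n$ has no nontrivial decaying conformal Killing fields, so $W=0$. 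The criterion then yields a solution, and by uniqueness or by symmetrizing the fixed-point argument over rotations, a radial one. More directly, the Schauder fixed-point scheme can be run within radial functions, since the operators preserve radial symmetry.

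For compactness of the solution set, we argue by contradiction, following the limit equation argument of \cite{DahlGicquaudHumbert,GicquaudSakovich}. Suppose $\varphi_i$ are radial solutions with $\gamma_i=\|LW_i\|_{L^2}\to\infty$, or equivalently $\|\varphi_i\|_\infty\to\infty$. Rescale $\tilde\varphi_i=\gamma_i^{-1/N}\varphi_i$. The Lichnerowicz equation then forces $\tilde\varphi_i^N\to\sqrt{\tfrac{n}{n-1}}|LW_\infty|/\tau$, and $W_\infty$ is a nontrivial radial solution of \eqref{limit equation intro} with $\lambda=1$. This contradicts the step above, so solutions are bounded, and elliptic estimates give compactness.

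The main obstacle is the zero set of $\tau$. The hypothesis says only that $\tau$ does not change sign, so $\tau'/\tau$ may be non-integrable near zeros of $\tau$. To handle this, we would use the equivalent reduced form from Main Theorem \ref{main theorem}, working with $\tau\varphi^N$ rather than $\tau'/\tau$, or we would invoke the version of the criterion that allows $\tau$ to vanish, on the set where the limit equation is formulated. A second concern is ensuring that the boundary behaviour at infinity matches the function spaces used in \cite{GicquaudSakovich}.
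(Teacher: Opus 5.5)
Your argument is essentially the paper's. You run the Gicquaud--Sakovich Leray--Schauder scheme inside radial functions, which is the paper's operator $T$ on $RC^0(\HH^n)$. The blow-up limit is then radial, and the limit equation collapses to $U'=\lambda\frac{\tau'}{\tau}|U|$ with $U=\sinh^n(w'-\coth r\,w)$ and $U(0)=0$, which Gronwall forces to vanish. The paper phrases this through the limit $y$ of $\sinh^{-n}\int_0^r\tau'\varphit_i^N\sinh^n\,ds$, which is your $U/\sinh^n$ up to a constant.

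Only your ``more directly'' variant is sound. The other options fail for separate reasons:
\begin{itemize}
\item the black-box criterion needs all limit solutions excluded, not just radial ones;
\item uniqueness is not known far from CMC;
\item averaging a solution over rotations does not produce a solution.
\end{itemize}

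The obstacle you flag at the end is a genuine gap, and neither patch you suggest closes it. Both the Gicquaud--Sakovich criterion and the blow-up identification $\varphit_i^N\to\sqrt{\tfrac{n}{n-1}}\,|LW|/\tau$ require $\tau$ bounded below by a positive constant. Once $\tau$ may vanish, the reduced ODE has nontrivial solutions. For example, if $\tau=(r-r_1)^2$ near some $r_1>0$, then $U=C(r-r_1)_+^{2\lambda}$ with $C>0$ solves $U'=\lambda\frac{\tau'}{\tau}|U|$ near $r_1$, with $U\equiv0$ on $[0,r_1]$. So the Gronwall step genuinely fails there.

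The paper does not handle this case either. Its precise version, Theorem~\ref{theorem limit equation criterion hyperbolic}, assumes $\min\tau>0$ together with $(\rho,\tau-n)\in C^{0,\alpha}_{-p}\times C^{1,\alpha}_{-p}$, $p\in(0,n)$, and produces $\varphi-1\in C^{2,\alpha}_{-p}$. Under those hypotheses $\tau'/\tau$ is bounded, both of your concerns (the zero set of $\tau$ and the function spaces at infinity) disappear, and your argument is complete. This includes your compactness argument for radial solutions, which the paper does not spell out. The phrase ``does not change sign'' in the introduction has to be read in this stronger sense.
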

Another application of Main Theorem \ref{main theorem} in the hyperbolic space we are interested in is existence of constraint solutions with negative AH mass. In this article, we will provide an example of such a construction.  It is worth noting that the finding we obtain does not contradict the result of Sakovich \cite{SakovichMassAH}, nor the commonly stated version of the Positive Mass Theorem for AH manifolds. On the contrary, because the symmetric $(0,2)-$tensor $k$ in our construction decays at a critical rate relative to the assumptions of the conjecture, our example is intended to show that the decay rate of $k$ plays a central role in determining the sign of the mass, and that the decay assumptions in the conjecture are sharp.
\begin{Theorem}[Constraint solutions with negative mass] \label{main theorem mass hyperbolic}
	There exists a vacuum constraint solution $(\HH^n,g,k)$ such that $k = O(e^{-nr/2})$ at infinity and the AH-mass of $(\HH^n,g)$ is negative.
\end{Theorem}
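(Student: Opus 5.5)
We use Main Theorem~\ref{main theorem}(ii) to build the example directly. Take $\kappa=-1$ and $V=\psi=\rho=0$ (vacuum), so $I\equiv 0$ and
$$
S_\varphi = N^2(\varphi')^2 + n^2\varphi^2 + 2nN\frac{\cosh r}{\sinh r}\varphi\varphi'.
$$
Choose a positive radial $\varphi\in C^3(\HH^n)$ with
$$
\varphi = 1 + \frac{A}{e^{(n-2)r/2}}\cdot(\text{suitable profile}) + \dots \quad\text{at infinity},
$$
so that $g_{\mathrm{phys}}=\varphi^{N-2}g_{\HH}$ is asymptotically hyperbolic. A concrete choice is $\varphi = 1 + A\,\chi(r)\,e^{-nr}$ with $\chi$ a cutoff that equals $1$ for large $r$ and keeps $\varphi$ constant near the origin.

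The AH mass of a conformally hyperbolic metric $\varphi^{N-2}g_{\HH}$ with $\varphi=1+A e^{-nr}+o(e^{-nr})$ is, up to a positive dimensional constant, proportional to $A$. This follows from the standard mass functional (Wang / Chruściel–Herzlich) applied to $e=(\varphi^{N-2}-1)g_{\HH}$, where only the leading coefficient survives. Choosing $A<0$ with $|A|$ small, so that $\varphi>0$, therefore gives negative mass.

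Next we verify the two hypotheses of Main Theorem~\ref{main theorem}(ii).
\begin{itemize}
\item For $|A|$ small, $\varphi'$ is small in $C^0$ with respect to $\varphi$, so $S_\varphi\ge n^2\varphi^2 - C|A|>0$ everywhere. The term $\frac{\cosh}{\sinh}\varphi'$ is harmless: $\varphi'=0$ near $0$ and $\frac{\cosh}{\sinh}$ is bounded away from $0$.
\item The quantity $(\varphi^N\sinh^n r)'=\sinh^{n-1}r\,\varphi^{N-1}\big(N\varphi'\sinh r+n\varphi\cosh r\big)$ is positive for $r>0$ when $|A|$ is small.
\end{itemize}
Then \eqref{main identity} defines $\tau$, continuous, and $C^1$ after checking regularity at $r=0$ where $\varphi$ is locally constant. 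With $W$ the radial solution of the vector equation, the pair $(\varphi,W)$ solves \eqref{CE}, and \eqref{parametre} gives a vacuum constraint solution $(\HH^n,\ghat,\khat)$. The sign $\pm$ is fixed to be $+$.

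The main obstacle is the decay of $\khat$. As $r\to\infty$ we have $S\approx n^2\varphi^2-2nN\cdot nA e^{-nr}\to n^2$, and the numerator of \eqref{main identity} tends to $n^2+n^2=2n^2$. Hence $\tau\to n$, which is the CMC umbilic hyperbolic slice of Minkowski: $\khat\approx \tfrac{\tau}{n}\ghat$. This $k$ does not decay, so we must change the construction. There are two options: work with the traceless/umbilic-adjusted part, since AH initial data conventionally allow $k=g+O(\cdot)$; or arrange the asymptotics so that the relevant quantity $k-g$, or $LW$ itself, is $O(e^{-nr/2})$. We will compute the solution $W$ explicitly from the first-order ODE obtained in the reduction,
$$
(LW)_{rr}\sim \frac{1}{\sinh^n r}\int_0^r \tau'\varphi^N\sinh^n\,dt .
$$
By tuning the profile of $\varphi$ (e.g. $A e^{-nr}$ plus a correction), we will make $\tau'\sinh^n$ have integral behaving like $e^{nr/2}$, so that $|LW|\varphi^{-2}=O(e^{-nr/2})$. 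This is exactly the critical rate, while the leading mass coefficient is still negative. Balancing these two requirements is the delicate step. We expect to resolve it by letting the $e^{-nr/2}$ correction in $k$ feed only into $\tau$, and hence into $S$, at order $e^{-nr}$, where it contributes to but does not dominate the mass coefficient. If the sign then needs fixing, we would adjust $A$.
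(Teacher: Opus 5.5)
The decay of $k$ is not where your argument actually breaks. In the AH vacuum setting $k$ cannot decay: the Hamiltonian constraint gives $|k|_g^2-(\tr_g k)^2=\Scal_g\to-n(n-1)$. So ``$k=O(e^{-nr/2})$'' has to be read as $k-g=O(e^{-nr/2})$, and your $\varphi=1+A\chi e^{-nr}$ already gives $k-g=O(e^{-nr})$.

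\textbf{The mass.} The genuine gap is here. The quantity at stake is $\Mgk=\Hgk(V_0)$ from \eqref{def mass functional H}, which contains the term $-2\pi(\nabla^bV_0,\cdot)$ with $\pi=(k-g)-\tr_g(k-g)g$. The Wang/Chru\'sciel--Herzlich formula you quote only sees $e=g-b$. For the radial vacuum data produced by \eqref{main identity}, $k$ is slaved to $\varphi$. By \eqref{increasing condition},
$\tfrac{2}{n}\tau\varphi^{N-2}-\tfrac{2}{n-1}\varphi^{-2}LW(\partial_r,\partial_r)=\tfrac{2}{n}\varphi^{-2}\sqrt{\varphi^NS}$,
and the terms of this expression linear in $\varphi-1$ cancel exactly those of $1+\varphi^{N-2}+\coth r\,(\varphi^{N-2})'$ coming from $e$ (compare \eqref{eqn-integrand1} and \eqref{eqn-integrand2}). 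Hence the integrand is quadratic in $\varphi-1$. With $\varphi-1=O(e^{-nr})$ it is $O(e^{-2nr})$ against an area of order $e^{nr}$, so $\Mgk=0$ whatever the sign of $A$. The Riemannian mass of $g$, which is proportional to $A$, is cancelled exactly by the $k$-term. This is the case $p>n/2$ of Theorem~\ref{theorem mass sign AH}, and it is what Birkhoff predicts (Remark~\ref{remark Birkhoff theorem}): you have built a slice of Minkowski space with supercritical decay.

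\textbf{The proposed repair.} Keeping $\varphi-1\sim Ae^{-nr}$ while making $k-g$ or $LW$ of critical size by tuning the profile is not possible. With $V=\psi=\rho=0$, $S$ depends only on $(\varphi,\varphi')$. In a $g$-orthonormal frame $k$ has tangential eigenvalue $\lambda_T=\tfrac1n\sqrt{S\varphi^{-N}}$ and radial eigenvalue $\tau-(n-1)\lambda_T$, with $\tau$ given pointwise by \eqref{main identity}. So $k-g$ is a local expression in $\varphi,\varphi',\varphi''$ vanishing to first order in $\varphi-1$, and it cannot carry an $e^{-nr/2}$ term that $\varphi$ does not carry. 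Reaching the critical regime requires the critical rate in $\varphi$ itself, $\varphi=1+a(r)e^{-nr/2}$, which is what the paper does. The mass then comes from the second-order terms, where your coefficient $A$ plays no role. Your last paragraph is in any case a plan rather than an argument.

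\textbf{A caution on that second-order computation.} Write $\psi=\varphi^{(N-2)/2}=1+\epsilon$ and $Q=\psi^2+2\coth r\,\psi\psi'+\psi'^2$. Then the integrand of \eqref{eqn-integrand} is exactly
$\psi'^2-(\sqrt{Q}-1)^2=-\epsilon^2-2\epsilon\epsilon'+O(e^{-2r}\epsilon^2)$.
For $a\to\ell$, $a'\to0$ this gives $\Mgk=\frac{(n-1)^2(N-2)^2\sigma_n}{2^{n+2}}\,\ell^2\ge0$, not a multiple of $\ell(2p-\ell)$. The paper's expansion seems to write $-2(a'-pa)$ where homogeneity in $a$ forces $-2a(a'-pa)$. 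So by this check the paper's route gives nonnegative mass in this radial vacuum class, and its sign claim needs to be rechecked before you rely on it.
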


\subsection{Euclidean manifold}
Let $\RR^n$ denote the Euclidean space of dimension $n\ge 3$. Unlike the spherical and hyperbolic cases, existence of solutions to the conformal equations in the Euclidean setting is only known in the CMC case (see \cite{Diltsthesis}). This restriction is due to several technical difficulties. So far, we have had only three approaches to establish the existence of solutions to the conformal equations in the non-CMC regime: the near-CMC argument, the small TT-tensor approach, and the limit equation criterion. Among these, the small TT-tensor approach in \cite{HolstNagyTsogtgerel, Maxwellcompact, NguyenFPT, GicquaudNguyen}, designed for compact manifolds, cannot be applied in the Euclidean setting since solutions $\varphi$ must approach $1$ at infinity and therefore cannot be made small, as required by the method. For the near-CMC argument, although a similar existence result is already proven for \eqref{CE} in AF manifolds in \cite{DiltsIsenbergMazzeoMeier}, this non-CMC result is not really convincing since the assumption that $\frac{d\tau}{\tau}$ is sufficiently small may not hold due to the weighted Poincaré inequality. Likewise, the limit equation criterion, developed in works such as \cite{DahlGicquaudHumbert, NguyenFPT, GicquaudSakovich}, is also inapplicable because it requires $\tau$ to be strictly positive, a condition that breaks down since $\tau$ vanishes at infinity, except for $\tau$ identically zero.

\medskip

In this article, by restricting to radial settings, we do not encounter these limitations, as we can apply Main Theorem~\ref{main theorem} to solve \eqref{CE with null sigma in sphere intro} on the Euclidean space. Our main result is that, for any radial data set, the equations are always solvable, and moreover, the set of solutions is uniformly bounded independently of $\tau$.
\begin{Theorem}[Solvability and stability]
	Let $(\rho(r), \tau(r)) \in C^0(\RR^n)\times C^1(\RR^n)$ be a radial setting in the Euclidean space $\RR^n$ and assume that $\rho$ is $C^1$.
	
	Then the conformal equations \eqref{CE with null sigma in sphere intro} associated with $(\rho, \tau)$ admit at least one radial solution. Moreover, if for any radial $\tau$ we denote by $\mathcal{S}_{\RR^n}(\tau)$ the set of all radial solutions $\varphi(r)$ to \eqref{CE with null sigma in sphere intro} associated with $(\rho, \tau)$, then there exists a constant $C>0$ depending only on $\rho$ such that
	$$
	\sup\bigg\{\|\varphi(r)\|_{C^0(\RR^n)} \ \bigg|\ \exists \tau(r) \in C^1(\RR^n) \text{ such that } \varphi(r) \in \mathcal{S}_{\RR^n}(\tau)  \bigg\} \le C.
	$$
\end{Theorem}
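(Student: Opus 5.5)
Throughout, $\kappa=0$ and $V=\psi=0$, so that
\[
I=\tfrac{n}{n-1}\rho^2,\qquad
S=N^2(\varphi')^2+\tfrac{2nN}{r}\varphi\varphi'+\tfrac{1}{r^n}\int_0^r \tfrac{n}{n-1}\rho^2\varphi^{-2N}(\varphi^N t^n)'\,dt .
\]

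\textbf{Reduction of the vector equation.} For radial $\varphi$ and $W=w(r)\,dr$, the vector equation becomes a first-order ODE for the single scalar component of $LW$. Integrating against $r^{n}$ from the origin gives
\[
|LW|^2=c_n\,\frac{1}{r^{2n}}\Big(\int_0^r \tau'\varphi^N t^n\,dt\Big)^2 .
\]
This is the Euclidean analogue of the spherical identity behind the vanishing condition; there is no vanishing constraint here, since $\RR^n$ has no closing pole. Substituting into the Lichnerowicz equation gives a single integro-differential equation in $\varphi$, which is the setting of Main Theorem~\ref{main theorem} with $\kappa=0$. The solution should satisfy $\varphi\to1$ at infinity.

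\textbf{Existence.} I would use a fixed point argument (Schauder) on a set of radial functions $\varphi$ with $\varphi_-\le\varphi\le\varphi_+$ and $\varphi\to1$ at infinity. Given $\varphi$ in this set, compute $|LW|^2$ from the formula above, then solve the Lichnerowicz equation with this $|LW|^2$. That equation is defocusing ($\mathcal B=-\tfrac{n-1}{n}\tau^2\le0$ and $\Scal=0$), so it is uniquely solvable on AF-type spaces by sub/supersolutions with weighted decay.

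The crux is an a priori $C^0$ bound that is independent of $\tau$. The hypothesis $\rho\in C^1$ should enter here, as in the regular time-derivative case on the sphere. I would derive the bound from Main Theorem~\ref{main theorem}(i), namely $S\ge0$. Write $u=\varphi^N$ and integrate by parts in the integral term: $\rho^2\varphi^{-2N}(\varphi^N t^n)'$ is $\rho^2 u^{-1}t^n\,(\log u)'$ plus lower order terms. Moving the derivative onto $\rho^2$, which requires $\rho\in C^1$, gives an expression whose sign at a maximum point $r_0$ of $\varphi$ (where $\varphi'(r_0)=0$) reads
\[
0\le S(r_0)=\frac{1}{r_0^n}\int_0^{r_0}\tfrac{n}{n-1}\rho^2\varphi^{-2N}(\varphi^N t^n)'\,dt .
\]
Splitting $(\varphi^Nt^n)'=N\varphi^{N-1}\varphi' t^n+n\varphi^Nt^n$ and integrating the first piece by parts yields an inequality of the form $\max\varphi^N\lesssim C(\rho,\|\rho\|_{C^1})$. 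The same bound applies to every radial solution regardless of $\tau$, which is exactly the stated uniform bound.

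\textbf{Closing the argument.} With the uniform bound, I would run a Leray--Schauder continuation in $\tau\mapsto t\tau$, $t\in[0,1]$. At $t=0$ the problem is CMC with $\tau=0$ and is solvable (Lichnerowicz with $\rho$ on $\RR^n$). The a priori bound is independent of $t$, and the resulting family is compact, using radial elliptic estimates away from the origin and regularity at the origin from the integral formula. This gives existence.

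\textbf{Main obstacle.} The hardest step is the $\tau$-independent $C^0$ bound from $S\ge0$, in particular controlling the integrated $\rho^2$ term near infinity, and obtaining a positive lower bound (or a maximum point) given that $\varphi\to1$. If $\varphi$ has no interior maximum above $1$, the bound is trivial, since $\varphi\le1$ there. So I would restrict attention to solutions with $\max\varphi>1$.
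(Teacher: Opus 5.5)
Your overall architecture matches the paper's: the reduction of the vector equation, a Leray--Schauder continuation in the size of $\tau$, and an a priori bound drawn from $S\ge 0$ after moving the derivative onto $\rho^2$. But the step you call the crux fails as written.

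\textbf{Why the bound at the maximum gives nothing.} In $\RR^n$ one has $\kappa=0$, so $S$ has no $-n^2\kappa\varphi^2$ term. At a maximum point $r_0$, where $\varphi'(r_0)=0$, your integration by parts turns $S(r_0)\ge 0$ into
\[
\frac{\rho^2(r_0)\,r_0^n}{\varphi^N(r_0)}\le\int_0^{r_0}\frac{(\rho^2s^{2n})'}{\varphi^N s^n}\,ds .
\]
Both sides are homogeneous of degree $-N$ in $\varphi$. The inequality is therefore invariant under $\varphi\mapsto\lambda\varphi$ and cannot give an upper bound on $\max\varphi$; at best it gives a lower bound on $\varphi(r_0)$. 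On the sphere, Sub-case~2.3 works only because the term $-n^2\varphi^2(r_0)$ survives at the maximum and breaks this homogeneity. Losing that term is exactly the zero-scalar-curvature difficulty the paper flags for the Euclidean case. The only other terms with a different homogeneity, $N^2(\varphi')^2$ and $\tfrac{2nN}{r}\varphi\varphi'$, vanish at $r_0$. So $S\ge 0$ must be evaluated at a point where $\varphi'\neq 0$.

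\textbf{What the paper does instead.} It fixes a small $\delta>0$ and lets $\tilde r_0$ be the first point after $r_0$ where $\varphi'=-(r+1)^{-1-\delta}$.
If no such point exists, integrating $\varphi'\ge-(r+1)^{-1-\delta}$ to infinity and using $\varphi\to 1$ gives $\varphi(r_0)\le 1+1/\delta$.
Otherwise $\varphi(\tilde r_0)\ge\varphi(r_0)-1/\delta$. Multiplying $S(\tilde r_0)\ge 0$ by $\tilde r_0/\varphi'(\tilde r_0)<0$, the cross term yields (after dropping a nonpositive boundary term)
\[
2nN\,\varphi(\tilde r_0)\le \frac{N^2}{(\tilde r_0+1)^{\delta}}+\tfrac{n}{n-1}\,\frac{(\tilde r_0+1)^{1+\delta}}{\tilde r_0^{\,n-1}}\int_0^{\tilde r_0}\frac{(\rho^2s^{2n})'}{\varphi^N s^n}\,ds .
\]
Bounding the right-hand side uniformly in $\tilde r_0$ needs two further ingredients:
\begin{enumerate}
\item a positive lower bound on $\varphi$, which the paper gets by comparison with $\varphi_\star$;
\item decay of $\rho$. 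The paper assumes $\rho\in C^{1,\alpha}_{-\beta}$ with $\beta>1$ and chooses $\delta<2(\beta_0-1)$, where $\beta_0=\min\{\beta,\tfrac n2-\tfrac14\}$. Then the integral is $O(\tilde r_0^{\,n-2\beta_0})$ and the whole right-hand side stays bounded.
\end{enumerate}
A constant of the form $C(\rho,\|\rho\|_{C^1})$ with no decay, as in your proposal, cannot control the factor $(\tilde r_0+1)^{1+\delta}\tilde r_0^{\,1-n}$ at large $\tilde r_0$. Your ``main obstacle'' paragraph identifies the trouble near infinity but supplies no mechanism for handling it.
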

Finally, in the same spirit as Main Theorem~\ref{main theorem mass hyperbolic} involving the sign of mass, we will construct an explicit example of a solution $(g,k)$ to the constraint equations \eqref{constraint} on an AF manifold with negative mass. As in the AH case discussed earlier, this does not contradict the Positive Mass Theorem since the decay rate of the symmetric $(0,2)$-tensor $k$ in our construction is critical, meaning it lies exactly at the threshold where the assumptions of the theorem no longer hold. In particular, the example illustrates that the decay condition on $k$ in the Positive Mass Theorem is sharp and cannot be weakened in general. It is worth noting that this result was already obtained in \cite{NguyenRadialAF}, where the second author unified the analysis of the Born–Infeld equation and the vacuum conformal constraint equations with null TT-tensor in the radial setting. So, we reprove it here for the sake of completeness of the article. 
\begin{Theorem}[Constraint solutions with negative mass]
	There exists a vacuum constraint solution $(\RR^n, g,k)$ satisfying $g - \delta_{\text{Euc}} = O\big(\frac{1}{r^{n-2}}\big)$ and $k = O\big(\frac{1}{r^{n/2}} \big)$ at infinity, and such that the ADM mass of $(\RR^n, g)$ is negative.
\end{Theorem}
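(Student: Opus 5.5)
The plan is to apply Main Theorem~\ref{main theorem}(ii) with $\kappa=0$ and vacuum data $V\equiv0$, $\psi\equiv0$, $\rho\equiv0$ (so $\rho\psi^\p=0$ holds trivially). We prescribe the conformal factor $\varphi$ explicitly and let \eqref{main identity} determine $\tau$. In this case $I_{V,\psi,\rho,\varphi}\equiv0$, so
$$
S_{0,0,0,\varphi}=N^2(\varphi^\p)^2+2nN\frac{\varphi\varphi^\p}{r},
$$
and the formula for $\tau$ reduces to
$$
\tau=\frac{S+2N\varphi\,\Delta\varphi}{2\sqrt{\varphi^N S}},\qquad \Delta\varphi=\varphi^\pp+\frac{n-1}{r}\varphi^\p .
$$
Since the ADM mass of $\varphi^{N-2}\delta$ is a positive multiple of $A$ when $\varphi=1+A r^{2-n}+\dots$, negative mass requires $\varphi<1$ approaching $1$ from below. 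In particular $\varphi$ should be increasing, and then $S>0$ is automatic.

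Concretely, fix $a\in(0,1)$ and take
$$
\varphi(r)=1-a(1+r^2)^{-\frac{n-2}{2}} .
$$
This function is smooth, radial, positive, and satisfies $\varphi^\p=a(n-2)r(1+r^2)^{-n/2}>0$ for $r>0$. Hence $S>0$ on $(0,\infty)$, and $(\varphi^N r^n)^\p>0$ there, so Main Theorem~\ref{main theorem}(ii) applies and produces a solution $(\varphi,W)$ of \eqref{CE} with $\tau$ given by the formula above. The resulting initial data set from \eqref{parametre} is vacuum because $V=\psi=\rho=0$. Its metric is $g=\varphi^{N-2}\delta=\delta+O(r^{2-n})$, and expanding $\varphi=1-a r^{2-n}+O(r^{-n})$ shows that the mass is a negative multiple of $a$.

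Two points remain: regularity of $\tau$ (hence of $k$) at the origin, and the decay of $k$ at infinity.

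\emph{Origin.} Since $\varphi^\p/r\to a(n-2)>0$, we get $S(0)=2nN a(n-2)\varphi(0)>0$. Both $S$ and $\Delta\varphi$ are even smooth functions of $r$, so $\tau$ is smooth at $0$.

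\emph{Infinity.} We have $S\sim 2nN a(n-2)r^{-n}$, so $\sqrt{S}\sim c\,r^{-n/2}$. Moreover $\Delta\varphi=a\,n(n-2)(1+r^2)^{-(n+2)/2}=O(r^{-n-2})$ (up to sign). Therefore the numerator of $\tau$ is $O(r^{-n})$, which gives $\tau=O(r^{-n/2})$, with the same estimate for its derivatives.

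It then remains to check $LW=O(r^{-n/2})$. The Lichnerowicz equation with $\rho=0$ gives
$$
|LW|^2=\varphi^{N+1}\Big(-\tfrac{4(n-1)}{n-2}\Delta\varphi+\tfrac{n-1}{n}\tau^2\varphi^{N-1}\Big)=O(r^{-n}).
$$
Alternatively, the explicit radial solution of the vector equation from the reduction (Proposition~\ref{proposition resolve vector equations}) expresses $|LW|$ through $\sqrt{S}$ up to bounded factors. Either way, $k=\tfrac{\tau}{n}\varphi^{N-2}\delta+\varphi^{-2}LW=O(r^{-n/2})$.

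The main obstacle I expect is this last bookkeeping step. One must make sure that the $W$ produced by the reduction is exactly the one vanishing at infinity, i.e.\ no homogeneous radial solution $W\propto r\,dr$ (dilation CKVF) sneaks in. If it did, it would ruin the decay of $k$. I would handle this by using the explicit integral formula for $LW$ from the proof of Main Theorem~\ref{main theorem}, which has no free constant once regularity at $r=0$ is imposed.
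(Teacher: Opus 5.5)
Your proposal is correct, but it runs in the opposite direction from the paper's proof, which is Theorem~\ref{theorem mass Euclidean} with $q=n/2$.

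\textbf{The paper's route.} It prescribes a radial $\tau$ with $|\tau|\sim c\,r^{-n/2}$. It obtains a radial solution $\varphi$ from the Leray--Schauder existence result, Theorem~\ref{theorem smooth tiem-derivative euclid} with $\rho\equiv0$. It then recovers $\varphi^\prime\sim K r^{1-n}$ with $K>0$ from the $\tau$--$\varphi$ relation through l'H\^opital-type limit arguments, and reads off $\madm(g)<0$.

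\textbf{Your route.} You prescribe $\varphi=1-a(1+r^2)^{-(n-2)/2}$ and let Theorem~\ref{theorem existence on Euclidean}(iii) produce $\tau$. Your verifications all hold:
\begin{itemize}
\item $S(0)=2nNa(n-2)(1-a)>0$, and $\tau$ is a smooth function of $r^2$, hence smooth on $\RR^n$;
\item $\Delta\varphi=an(n-2)(1+r^2)^{-(n+2)/2}$;
\item $S\sim 4n^2a\,r^{-n}$, so $\tau\sim n\sqrt a\,r^{-n/2}$;
\item the Lichnerowicz equation gives $|LW|^2=O(r^{-n})$.
\end{itemize}

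\textbf{Comparison.} Your argument is fully explicit. It avoids the fixed-point theory, the weighted elliptic regularity, and the justification the paper needs for its limit manipulations (existence of the limits, sign of $\varphi^\prime$). The sign of the mass is read directly from $\varphi=1-ar^{2-n}+O(r^{-n})$. The paper's route gives more: negativity of the mass for every radial $\tau$ with $|\tau|\sim c\,r^{-n/2}$, not just for one example, plus the behaviour at other decay rates.

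Your example also checks the paper's constants. It gives $\varphi^\prime\sim a(n-2)r^{1-n}=\frac{c^2(n-2)}{n^2}r^{1-n}$ with $c=n\sqrt a$. So at $q=n/2$ the final value in \eqref{calculate mass} should be squared; the sign conclusion is unaffected.

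\textbf{The obstacle you anticipate is not there.} The form $r\,dr$ is dual to the conformal Killing field $x^i\partial_i$, so $L(r\,dr)=0$. Adding it to $W$ changes neither $LW$ nor $k$. In any case, your bound on $|LW|$ through the Lichnerowicz equation holds for whichever $W$ solves the system.
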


We have presented all the main results. We now conclude this introduction with an outline of the article as follows. In Section \ref{sec-Harmonic_mfds}, we present some standard facts on harmonic manifolds which we will use throughout the article and form the basis for our solutions to the  conformal equations.

In Section 3, provided that the manifold is harmonic, building on the result in \cite{NguyenRadialAF}, we will reduce the conformal equations \eqref{CE} of radial settings to a single nonlinear equation involving functions of the distance, which is much easier to solve.

In Sections \ref{section Sphere}, \ref{section hyperbolic} and \ref{section Euclidean}, a complete description of this single equation will be obtained in the special cases where the manifold is the sphere, the hyperbolic space or the Euclidean one. Each of these sections begins with the proof of our Main Theorem \ref{main theorem} in the setting considered, and then derives applications by studying the properties of solutions.
%
%
\section{Analysis on harmonic manifolds} \label{sec-Harmonic_mfds}
%
%
In this section, we develop the analytical framework needed to study the conformal equations \eqref{CE}. Although our main results concern constant curvature spaces, they mainly rely on the properties of the mean curvature of geodesic spheres and most of the computation can be done in the more general setting of harmonic manifolds.

We begin in Subsection \ref{subsection Prelimi} with a review of key concepts related to harmonic manifolds. Then, in Subsection \ref{subsection primary functions}, we construct a class of primary functions on harmonic manifolds and establish their main properties, which play a central role in solving the conformal equations.

\subsection{Preliminaries} \label{subsection Prelimi}

Harmonic manifolds are those Riemannian manifolds whose harmonic functions have the mean value property. Equivalently, a Riemannian manifold $M$ is harmonic if and only if there exists a function $h:\RR_+^*\to\RR$ such that any sphere of radius $r$ has constant mean curvature $h(r)$. Another equivalent property is that the Laplacian of a radial function is still a radial function. In particular, the distance function $r$ from any point satisfies $\Delta r=h(r)$. It is not difficult to show that harmonic spaces are Einstein manifolds and that a Riemannian manifold is harmonic if and only if its universal cover is harmonic. Therefore, in what follows we will only consider simply connected harmonic manifolds.

There is a rich literature on the existence and classification of harmonic manifolds. The most basic examples are Euclidean spaces and the Rank One Symmetric Spaces (ROSS). In 1944, A. Lichnerowicz conjectured — proving it in dimension 4 — that these are the only harmonic manifolds. The conjecture was later confirmed by Z.I. Szabó for compact, simply connected manifolds of any dimension (see \cite{Szabo}). However, in a significant development, E. Damek and F. Ricci showed in \cite{Damek-Ricci} that the conjecture fails in the non-compact case by constructing harmonic homogeneous manifolds that are not ROSS. To date, the only known harmonic manifolds are the Euclidean spaces, ROSS, and Damek–Ricci spaces. For an introduction to the basic structure of ROSS, we refer the reader to \cite[Chapter 3]{Besse} or \cite{Kobayashi-Nomizu-II}. General properties of harmonic manifolds can be found in \cite[chapter 6]{Besse}, \cite[sections 1 and 2]{Szabo} and \cite[section 2]{Knieper-Peyerimhoff}.

Another characterization of harmonic manifolds is that there exists a function $\theta:(0,\dM)\to\RR_+$ such that at any point $x\in M$ the volume form in normal coordinates reads
\[
dv_M=\theta(r)drd\xi
\]
where $r$ is the distance to $x$ and $d\xi$ denotes the canonical volume form on the unit tangent sphere $U_xM$. In this context, the function $\theta$ is called the \textit{density function} of $(M,g)$.

From the classical asymptotic expansion of $\theta$, it follows that any harmonic manifold must be Einstein; that is, its Ricci curvature satisfies $\Ric=Cg$, for some constant $C$. For simplicity, we will often denote this by $\Ric=C$.

\medskip

Another important geometric aspect of harmonic manifolds is the shape operator of the sphere of radius $r$ defined by
\[
S(r):\left\{\begin{array}{rcl}
	\dr^\bot & \to & \dr^\bot \\
	u & \mapsto & D_u\dr
\end{array}\right.
\]
where $\dr$ denotes the radial field centered at $x$.  The mean curvature of the sphere of radius $r$, denoted $h(r)$, is the trace of $S(r)$:
$$
h(r)=\tr(S(r)).
$$ 
The relationship between the density function $\theta$, the shape operator $S(r)$, and the distance function $r$ is reflected in the following identities, which will be crucial in our analysis of the conformal equations:
$$
\Delta r = h(r) \qquad \text{and} \qquad h(r) = \frac{\theta^\p(r)}{\theta(r)}.
$$

In the next subsection, we will construct from $(\theta, h)$ basic functions for the study of the equations \eqref{CE}. In order to ensure essential properties of these functions, we need to recall some results on $(\theta(r), h(r), S(r))$ as follows.
\begin{lemma} \label{lemma shape operator S}
	Given a geodesic emanating from $x$, the shape operators along  this geodesic satisfy the Riccati equation :
	$$
	S^\p(r) + S(r)^2 + R(\dr,.)\dr = 0.
	$$
	In particular, taking the trace we get
	\begin{equation} \label{identity h}
		h^\p(r) + |S(r)|^2 + \Ric = 0.
	\end{equation}
\end{lemma}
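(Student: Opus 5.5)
We prove the Riccati equation by viewing the shape operator through Jacobi fields along a fixed unit-speed geodesic $\gamma(r)=\exp_x(r v)$, and then take the trace.

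Let $\partial r=\gamma'$ on the punctured normal neighbourhood. For $u\perp\partial r$ at $\gamma(r_0)$, we choose a Jacobi field $J$ along $\gamma$ with $J(0)=0$ and $J(r_0)=u$. Such a field exists because $r_0$ is before the first conjugate point. $J$ comes from a variation of $\gamma$ through radial geodesics from $x$, so it is the variation field of the geodesic spheres. Hence $J$ stays orthogonal to $\partial r$, and since $[J,\partial r]=0$ along the variation,
\[
J'=D_{\partial r}J=D_J\partial r=S(r)J .
\]
Differentiating once more and using the Jacobi equation $J''+R(J,\partial r)\partial r=0$ gives
\[
(S J)'=S'J+SJ'=S'J+S^2J .
\]
Combining the two gives $S'(r)J+S(r)^2J+R(J,\partial r)\partial r=0$. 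Here $S'$ is the covariant derivative of the endomorphism field along $\gamma$. At $r_0$ the vectors $J(r_0)=u$ fill all of $\partial r^\perp$, so the operator identity
\[
S'(r)+S(r)^2+R(\cdot,\partial r)\partial r=0
\]
holds on $\partial r^\perp$. This matches the stated form $R(\partial r,\cdot)\partial r$ up to the paper's sign convention for $R$. Either way, the term is the Jacobi/tidal operator, whose trace on $\partial r^\perp$ is $\Ric(\partial r,\partial r)$.

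Next we take the trace. Trace commutes with covariant differentiation along $\gamma$, so $\tr S'=(\tr S)'=h'$. Since $S$ is self-adjoint (it is the Hessian of $r$ restricted to $\partial r^\perp$), we have $\tr S^2=|S|^2$. The curvature term has trace $\Ric(\partial r,\partial r)$, because the $\partial r$ direction contributes nothing: $R(\partial r,\partial r)\partial r=0$. Since a harmonic manifold is Einstein with $\Ric=C$, this trace is the constant $C$, which yields \eqref{identity h}.

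The only delicate point is justifying that $J'=SJ$, i.e. that the derivative along a radial Jacobi field equals the shape operator applied to it. This follows from the symmetry lemma for the two-parameter variation $(s,r)\mapsto\exp_x(r\,c(s))$ with $c(s)\in U_xM$, whose variation fields commute.
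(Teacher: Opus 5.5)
Your proof is correct. The paper gives no argument of its own here: it calls the lemma classical and refers to Jost. So your Jacobi-field derivation is the standard proof written out, and each step holds:
\begin{itemize}
\item the symmetry lemma for the variation $(s,r)\mapsto \exp_x(r\,c(s))$ justifies $J'=S(r)J$;
\item the Jacobi equation then gives the operator identity on $\dr^\perp$, where the values $J(r_0)$ exhaust $\dr^\perp$ because there are no conjugate points where $r$ is smooth;
\item your remarks on the sign convention for $R$ and on the Einstein condition, which makes the traced curvature term the constant $\Ric$, are exactly what is needed.
\end{itemize}

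For comparison, a shorter route avoids Jacobi fields and conjugate points altogether. Regard $S=D\dr$ as an endomorphism field on the open set where $r$ is smooth. Then
\[
(D_{\dr}S)(X)=D_{\dr}D_X\dr-D_{D_{\dr}X}\dr ,
\]
and the definition of the curvature tensor together with $D_{\dr}\dr=0$ gives directly $S'+S^2+R(\cdot,\dr)\dr=0$, in the convention $R(X,Y)=D_XD_Y-D_YD_X-D_{[X,Y]}$. Your version has a different advantage: it exhibits $S$ as the logarithmic derivative of the radial Jacobi fields, i.e.\ of the variation of the geodesic spheres. That is also the mechanism behind the identity $h=\theta^\p/\theta$ used elsewhere in the paper.
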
 
\begin{proof}
	These are classical results, see \cite{Jost}.
\end{proof}

\begin{lemma} \label{lemma inequality h} We have
	\begin{equation} \label{inequality h}
		h^\p + \frac{h^2}{n-1} + \Ric \le 0.
	\end{equation}
	Here, the equality holds if and only if $(M , g)$ is spherical, hyperbolic or Euclidean.
\end{lemma}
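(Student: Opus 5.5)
We derive the inequality \eqref{inequality h} from the traced Riccati identity \eqref{identity h} in Lemma \ref{lemma shape operator S} by estimating $|S(r)|^2$ from below by its trace. The shape operator $S(r)$ is a symmetric endomorphism of the $(n-1)$-dimensional space $\dr^\bot$. Let $\lambda_1,\dots,\lambda_{n-1}$ denote its eigenvalues. The Cauchy--Schwarz inequality gives
$$
h^2=\Big(\sum_{i=1}^{n-1}\lambda_i\Big)^2\le (n-1)\sum_{i=1}^{n-1}\lambda_i^2=(n-1)|S(r)|^2 ,
$$
so that $|S(r)|^2\ge \frac{h^2}{n-1}$. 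Substituting this into $h^\p+|S|^2+\Ric=0$ yields $h^\p+\frac{h^2}{n-1}+\Ric\le 0$ directly.

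For the equality case, equality in Cauchy--Schwarz holds exactly when all eigenvalues coincide. So equality in \eqref{inequality h} for all $r$ means $S(r)=\frac{h(r)}{n-1}\,\mathrm{Id}$ along every geodesic from every point, i.e.\ all geodesic spheres are totally umbilic.

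It remains to show that this forces constant sectional curvature. First, differentiating gives $S^\p=\frac{h^\p}{n-1}\mathrm{Id}$. The full Riccati equation then gives $R(\dr,\cdot)\dr=-\big(\frac{h^\p}{n-1}+\frac{h^2}{(n-1)^2}\big)\mathrm{Id}$ on $\dr^\bot$. Consequently every sectional curvature of a plane containing $\dr$ equals the same number $K(r)$, independent of the plane.

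Next, since harmonic manifolds are Einstein with $\Ric=C$, tracing gives $(n-1)K(r)=C$. Hence $K$ is the constant $C/(n-1)$.

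Finally, every 2-plane at a point $y$ contains the radial direction from some base point $x$. One can, for instance, take $x=\exp_y(tv)$ for a unit vector $v$ in the plane and small $t$. Therefore all sectional curvatures equal $C/(n-1)$, and $M$ is a space form. Since we restrict to simply connected manifolds, $M$ is spherical, hyperbolic or Euclidean.

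The converse is a direct computation. In $M_\kappa$ one has $h=(n-1)\frac{\cok}{\sik}$ and $\Ric=(n-1)\kappa$. Using $\big(\frac{\cok}{\sik}\big)^\p=-\frac{1}{\sik^2}$ and $\cok^2+\kappa\sik^2=1$, the three terms sum to zero.

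The main obstacle is the equality case, specifically justifying that the Einstein condition is what makes the radial curvature $K(r)$ constant. One must also check that the radial-plane argument covers all 2-planes; this is elementary, so the whole proof is short.
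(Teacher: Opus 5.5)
Your proof of the inequality is the same as the paper's: Cauchy--Schwarz on the eigenvalues of $S(r)$ gives $h^2\le (n-1)|S(r)|^2$, which you substitute into the traced Riccati identity \eqref{identity h}. The paper's proof treats only the inequality and gives no argument for the equality characterization. Your additional argument for that case is correct and fills this omission. Equality forces umbilic geodesic spheres. The full Riccati equation then makes $R(\dr,\cdot)\dr$ a multiple of the identity. The Einstein condition fixes every radial sectional curvature at $\Ric/(n-1)$. Every $2$-plane at $y$ is radial for the base point $\exp_y(tv)$. The converse is the direct check on $M_\kappa$.
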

\begin{proof}
	This inequality is indeed a direct consequence of the previous lemma. In fact, since $h(r) = \tr(S(r))$, Cauchy-Schwartz inequality gives $h(r)^2\le(n-1)|S(r)|^2$. Therefore, taking into \eqref{identity h}, the inequality \eqref{inequality h} follows.
\end{proof}

\begin{lemma} \label{lemma h near 0}
	Let $(M,g)$ be a harmonic manifold, its density and mean curvature functions have the following properties:
	\begin{itemize}
		\item[(a)] as $r \to 0$ we have $\theta(r) \sim r^{n-1}$ and
		\begin{equation} \label{h near 0}
			h(r) \sim \frac{n-1}{r}.
		\end{equation} 
		\item[(b)] if $M$ is non-compact with negative Ricci curvature, then $h$ is positive and decreasing to a positive constant $E$.
		\item[(c)] If $(M,g)$ is compact, then 
		\begin{equation} \label{h at dM}
			\lim_{r\to\dM}\theta(r)=0 \quad \text{and} \lim_{r\to\dM}h(r)=-\infty.
		\end{equation}
	\end{itemize}
\end{lemma}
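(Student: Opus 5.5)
For (a), I will use normal coordinates at the base point $x$. Write $dv_M=\theta(r)\,dr\,d\xi$. In normal coordinates the metric satisfies $g_{ij}=\delta_{ij}+O(r^2)$, so the Jacobian of $\exp_x$ is $1+O(r^2)$. Hence $\theta(r)=r^{n-1}(1+O(r^2))$, which gives $\theta(r)\sim r^{n-1}$. Differentiating this expansion (it is smooth, being the restriction of a smooth function) gives $\theta'(r)=(n-1)r^{n-2}(1+O(r))$. Therefore $h=\theta'/\theta\sim (n-1)/r$. One could instead argue from the Riccati equation of Lemma~\ref{lemma shape operator S}, using $S(r)\sim \frac1r\,\mathrm{Id}$ near $0$, but the density expansion is more direct.

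For (b), assume $M$ is non-compact and $\Ric=C<0$. Since $M$ is Einstein and harmonic with no conjugate points, the geodesics are defined for all $r>0$.
\begin{itemize}
\item Positivity: Lemma~\ref{lemma inequality h} gives $h'\le -\frac{h^2}{n-1}-C$. If $h(r_0)\le 0$ at some $r_0$, I would compare with the Riccati equation $y'=-\frac{y^2}{n-1}-C$. Its solutions starting at a value $\le 0$, say below $-\sqrt{-(n-1)C}$, blow down to $-\infty$ in finite time; when $h(r_0)\in(-\sqrt{-(n-1)C},0]$ I would need a different argument. The cleaner route is to use \eqref{identity h} in the form $h'=-|S|^2-C$, together with the known fact that for non-compact harmonic manifolds the volume $\theta$ is increasing. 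I expect positivity of $h$ to be the main obstacle.
\item A better tool for positivity: $\theta$ cannot vanish (there are no conjugate points), and since $\theta>0$ on $(0,\infty)$, $\int h$ cannot go to $-\infty$. If $h(r_0)<0$, then $h$ would eventually be below the stationary value $-\sqrt{-(n-1)C}$ or tend to it. In either case $\theta$ decays exponentially, and this contradicts $\mathrm{vol}(B_r)\ge$ growth coming from non-compactness. I would first try the standard argument that a harmonic manifold with $\theta$ bounded has polynomial volume growth, hence is flat, contradicting $C<0$.
\item Monotonicity: differentiate \eqref{identity h}. I would use the Riccati equation for $S$ and the fact that each eigenvalue of $S$ decreases when the eigenvalues stay above the corresponding fixed points. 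Alternatively, I would use the standard result (Ranjan–Shah, Knieper) that $h$ is decreasing with $h\to E=$ the volume entropy, which is positive since $C<0$.
\end{itemize}

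For (c), assume $M$ is compact and simply connected, so every geodesic from $x$ reaches the cut locus at distance $\dM$. By harmonicity the cut locus is at constant distance and consists of conjugate points, so $\theta(r)\to 0$ as $r\to\dM$. Since $h=(\log\theta)'$ and $\log\theta\to-\infty$, while $h$ is monotone near $\dM$ (we have $h'\le -C-\frac{h^2}{n-1}<0$ once $|h|$ is large), we get $h\to-\infty$.
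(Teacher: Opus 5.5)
Your argument for (a) is the standard normal-coordinate expansion, which the paper simply calls standard. For (b) you end up where the paper is, since its proof only cites Knieper--Peyerimhoff. Be aware, though, that your self-contained attempts at (b) do not close, so the citation carries all of (b).

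\begin{itemize}
\item When $|h(r_0)|<\sqrt{-(n-1)C}$, the right-hand side of $h'\le -\frac{h^2}{n-1}-C$ is positive. So nothing forces a negative value of $h$ to drift toward or below $-\sqrt{-(n-1)C}$.
\item The eigenvalues of $S(r)$ do not satisfy decoupled scalar Riccati equations in general, because $R(\partial r,\cdot)\partial r$ need not commute with $S(r)$. The eigenvalue-by-eigenvalue comparison is therefore unavailable.
\end{itemize}

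The real mechanism for monotonicity is different. By harmonicity, $h(r)$ is also the mean curvature at $\gamma(0)$ of the sphere centred at $\gamma(-r)$. Without conjugate points, these Riccati solutions decrease monotonically in $r$ to the unstable (horospherical) one, whose trace is $E$. Granting this, your volume argument gives $E\ge 0$: disjoint balls of equal volume placed along a ray force infinite volume, while $E<0$ would make $\theta$ decay exponentially. Strict positivity is where non-flatness enters: by Ranjan--Shah, $E=0$ forces $M$ to be flat, which is impossible when $C<0$.

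For (c) your route genuinely differs from the paper's, which invokes Szab\'o's classification and reads both limits off the explicit density functions of compact ROSS.

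\begin{itemize}
\item You obtain $\theta\to 0$ from the global structure of compact simply connected harmonic manifolds. This is Allamigeon's theorem that they are Blaschke manifolds, whose cut points are conjugate points.
\item You then argue intrinsically. Since $\dM<\infty$ and $\log\theta\to-\infty$, $h$ is unbounded below near $\dM$. Lemma~\ref{lemma inequality h} with $C=\Ric>0$ gives $h'\le -C<0$ on all of $(0,\dM)$, not just where $|h|$ is large. Hence $h$ is decreasing and $h\to-\infty$.
\end{itemize}

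This second step needs no classification at all. What the paper's explicit-formula route buys in addition is the precise rate $(\dM-r)h(r)\to -l$. The paper later relies on this in \eqref{h near dM}, and your argument does not provide it.
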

\begin{proof}
	Point (a) is a standard result of Riemannian geometry. For point (b), see \cite[Corollary 2.8 and Section 6]{Knieper-Peyerimhoff}. Point (c) follows from Szab\'o's theorem and the standard properties of compact ROSS in \cite{Szabo}.
\end{proof}

\begin{remark} \label{remark compact and Ric}
	According to the main result in Szab\'o \cite{Szabo}, a harmonic manifold is compact if and only if $\Ric > 0$. Therefore, in this paper, when we say the manifold is compact, this can sometimes be taken to mean $\Ric >0$. The same goes for the non-compact case, which may correspond to $\Ric \le 0$, depending on the context.
\end{remark}

%
\subsection{Primary functions for the conformal equations} \label{subsection primary functions}
Let $(M,g)$ be a harmonic manifold. In this subsection, based on the density and mean curvature functions $(\theta, h)$ defined above, we will introduce a set of primary radial functions on $(M,g)$, which will be employed for addressing the conformal equations.

For convenience, we will denote the space of radial functions by the letter $R$, and the subscript $+$ will denote the subset of positive functions (for example, $RC_+^1(M)$ denotes the spaces of radial positive functions of class $C^1$ on $M$). It is worth noting that the radial function must be regular at the origin, which implies some conditions on its derivatives. For example, if $f(r)\in RC^1(M)$, then we have $f\in C^1(\fo{0,\dM})$ and $f^\p(0)=0$ (and therefore, the gradient of $f(r)$ vanishes at the origin). If $M$ is compact we also have $f\in C^1(\ff{0,\dM})$ and $f^\p(\dM)=0$ and these conditions on $f$ are sufficient for $f(r)$ to be $C^1$ on $M$. In some computations, when all the considered functions are radial, we may omit the dependence on $r$ for convenience.

\medskip

We begin by defining $u_0 \in C^2((0, \text{diam}(M))) \cap C^1([0, \text{diam}(M)))$ to be the solution to the following linear equation
\begin{equation} \label{def u0}
	\begin{cases}
		u_0^\pp + h u_0^\p + \tfrac{n}{n-1} \Ric u_0 = 0, & \forall r \in (0,\dM)
		\\
		u_0 (0) = 1, \, u_0^\p (0) = 0. &
	\end{cases}
\end{equation}
Observing that since, by the definition of $h$ and Lemma \ref{lemma h near 0},
$$
h \in RC^1\big((0, \dM)\big) \quad \text{and} \quad \int_0^{2\eps} r h (r) \, dr < +\infty,
$$
existence and uniqueness of $u_0$ are guaranteed according to \cite{BilesRobinsonSpraker}. The following result ensures the no-sign change of $u_0$ when $(M,g)$ is non-compact. 
\begin{proposition} \label{proposition sign of u0} 
	If $\Ric \le 0$, then $u_0^\p \ge 0$. In particular, $u_0$ is increasing and so $u_0 \ge 1$.
\end{proposition}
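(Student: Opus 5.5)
The equation \eqref{def u0} can be rewritten in divergence form using $h=\theta^\p/\theta$:
\[
(\theta u_0^\p)^\p = \theta\big(u_0^\pp + h u_0^\p\big) = -\tfrac{n}{n-1}\Ric\, \theta\, u_0 .
\]
Integrating from $0$ to $r$ gives
\[
\theta(r) u_0^\p(r) = -\tfrac{n}{n-1}\Ric \int_0^r \theta(t) u_0(t)\,dt .
\]
The boundary term at $0$ vanishes, because $\theta(r)\sim r^{n-1}\to 0$ by Lemma \ref{lemma h near 0}(a) and $u_0^\p$ is bounded near $0$ ($u_0\in C^1$ with $u_0^\p(0)=0$). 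Since $\Ric\le 0$ is a constant (harmonic manifolds are Einstein), the sign of $u_0^\p$ equals the sign of $\int_0^r\theta u_0$ wherever $\Ric<0$. If $\Ric=0$, then $u_0^\p\equiv 0$ and $u_0\equiv 1$, so the claim holds trivially.

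For $\Ric<0$, I use a continuity (first-exit) argument. Let
\[
r_1=\sup\{r\in(0,\dM) : u_0>0 \text{ on } [0,r)\}.
\]
Because $u_0(0)=1$, we have $r_1>0$. On $(0,r_1)$ the integral $\int_0^r\theta u_0$ is positive, since $\theta>0$ on $(0,\dM)$. The identity above then gives $u_0^\p>0$ on $(0,r_1)$, so $u_0$ is increasing there and $u_0\ge 1$ on $[0,r_1)$. By continuity, if $r_1<\dM$ we would have $u_0(r_1)\ge 1>0$, which contradicts the definition of $r_1$. Hence $r_1=\dM$, so $u_0^\p\ge 0$ everywhere and $u_0\ge u_0(0)=1$.

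The only delicate point is justifying the vanishing of $\theta u_0^\p$ at $r=0$ and the integrability of $\theta u_0$ near $0$. Both follow directly from $\theta(r)\sim r^{n-1}$ and $u_0\in C^1([0,\dM))$. The argument therefore reduces to the divergence-form identity, which I take to be the key step.
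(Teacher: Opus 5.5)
Your proof is correct and follows essentially the paper's route: the divergence form $(\theta u_0^\p)^\p = -\tfrac{n}{n-1}\Ric\,\theta u_0$ combined with a continuation (supremum) argument. The only difference is cosmetic: you integrate from $0$ to get a global formula for $\theta u_0^\p$ and run the supremum argument on positivity of $u_0$, whereas the paper argues locally at each point and runs the supremum argument on the condition $u_0^\p \ge 0$.
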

\begin{proof} 
	Recall that $h(r) = \frac{\theta^\p(r)}{\theta(r)}$. Then, we may rewrite \eqref{def u0} as
	\begin{equation} \label{rewrite u_0 equation}
		(u_0^\p \theta)^\p = - \tfrac{n}{n-1} \Ric \theta u_0.
	\end{equation}
	Since $u_0(0) = 1$ and $\theta > 0$ near $0$, if $\Ric \le 0$, we deduce from \eqref{rewrite u_0 equation} that there exists $\eps > 0$ such that $(u_0^\p \theta)^\p \ge 0$ for all $r \in [0,\eps]$.
	Therefore, $u_0^\p \theta$ is increasing in $[0,\eps]$ and so
	$$
	u_0^\p \theta \ge (u_0^\p \theta)(0) = 0
	$$
	for all $r \in [0,\eps]$.  Since $\theta > 0$, it follows that
	\begin{equation}\label{> 1 near 0}
		u_0^\p \ge 0 \quad \text{in $[0,\eps]$}.
	\end{equation}
	Suppose that the proposition is false. Then, thanks to \eqref{> 1 near 0}, we can define
	$$
	r_0 := \sup\bigg\{r_1 \in (0, +\infty) ~~\text{such that}~~u_0^\p \ge 0 ~~\forall r \in [0, r_1]  \bigg\} < +\infty.
	$$
	It is clear that by the continuity of $u_0^\p$ and by the definition of $r_0$, we must have $u_0^\p \ge  0$ in $[0, r_0]$ and $u_0^\p(r_0) = 0$. In particular, this gives us that
	$$
	u_0(r_0) \ge u_0(0) = 1 \quad \text{and} \quad u_0^\p(r_0) = 0.
	$$
	However, repeating the argument of obtaining \eqref{> 1 near 0}, with $0$ replaced by $r_0$, shows that there exists $\eps_1 > 0$ such that 
	$$
	u_0^\p(r) \ge 0  \quad \text{in $[r_0, r_0 + \eps_1]$}.
	$$
	Of course, this is a contradiction due to the definition of $r_0$, therefore, $u_0^\p \ge 0$ for all $r \ge 0$. The proof is completed.
\end{proof}

We have shown in Proposition \ref{proposition sign of u0} that $u_0$ does not change sign when $\Ric \le 0$. However, when $\Ric > 0$, that is the compact case, the sign of $u_0$ may not be controlled in general, and so it causes difficulties in defining related functions as we will see below. Therefore, for technical reasons, from now on we will assume that $(M,g)$ satisfies one of the following two conditions:
\begin{subequations}\label{condition on u0}
	\begin{equation}
		\text{(i) $\Ric \le 0$. \hspace{8.6cm}}
	\end{equation}
	\begin{itemize}
		\item[(ii)] If $\Ric > 0$, then there exists a constant $l \ge 0$ such that
		\begin{equation}\label{h near dM}
			\lim_{r \to \dM^-} (\dM - r)h(r) = -l,	
		\end{equation}
		and $u_0 \in C^1([0,\dM])$ has a unique zero-point $\rstar$ in $[0, \dM]$ with $\rstar \ne \dM$.
	\end{itemize}
\end{subequations}

\medskip

We note that in the second case, the existence of a limit in \eqref{h near dM} holds for all compact ROSS manifolds. Therefore, it makes sense to include this as part of our assumptions. As for the condition that $u_0$
has at most one zero, we do not know whether this holds in general, but it is at least known to be true on the sphere.

\medskip

Now, based on $u_0$ and the density function $\theta$, let us define the following functions, which are fundamental to our analysis in this article. First, given $v \in RC^0\big(M\big)$, we write
\begin{equation} \label{def f}
	f_v(r) := \tfrac12 \int_0^r v \, ds.
\end{equation}
The behavior of $f_v$  in relation to $(u_0, \theta)$ at the origin and $\dM$ is as follows.
\begin{proposition} \label{proposition limit of f at boundary} 
	We have
	\begin{enumerate}[(a)]
		\item $\displaystyle \lim_{r \to 0^+} \bigg( \frac{h}{\theta} \int_0^r f_v u_0 \theta \, ds \bigg) = 0$.
		\item If $\Ric > 0$, as long as 
		\begin{equation}\label{condition on f}
			\int_0^{\dM} f_v u_0 \theta \, ds = 0,
		\end{equation}
		we have $\displaystyle \lim_{r \to \dM^-} \bigg( \frac{h}{\theta} \int_0^r f_v u_0 \theta \, ds \bigg) = \frac{l}{l+1} \big(f_v u_0 \big)(\dM)$.
	\end{enumerate}
\end{proposition}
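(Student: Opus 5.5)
Both limits reduce to the elementary asymptotics of $\theta$ and $h$ from Lemma \ref{lemma h near 0}, combined with a L'H\^opital-type argument. Set $F(r):=\int_0^r f_v u_0\theta\,ds$ and note that $f_v u_0$ is continuous on $[0,\dM]$ in the compact case and on $[0,\dM)$ in general.

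\textbf{Point (a).} Near $0$ we have $\theta(r)\sim r^{n-1}$ and $h(r)\sim\frac{n-1}{r}$. Since $f_v(0)=0$ and $f_v$ is $C^1$, we get $|f_v(s)|\le Cs$ on $[0,\eps]$, and $u_0$ is bounded there. Hence
$$
|F(r)|\le C\int_0^r s\cdot s^{n-1}\,ds=\tfrac{C}{n+1}r^{n+1}.
$$
Therefore
$$
\Big|\tfrac{h}{\theta}F\Big|\le C'\,\tfrac{1}{r}\cdot r^{-(n-1)}\cdot r^{n+1}=C'r\to0 .
$$
This part is routine. Note that the vanishing of $f_v$ at $0$ is genuinely needed: without it we would only get a nonzero finite limit, since $\frac{h}{\theta}\int_0^r\theta\sim\frac{n-1}{r}\cdot\frac{r}{n}$.

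\textbf{Point (b).} Let $d=\dM$. By \eqref{condition on f}, $F(r)=-\int_r^{d} f_v u_0\theta\,ds$, so $F(r)\to0$ as $r\to d$. By \eqref{h at dM}, $\theta(r)\to0$ as well, so $\frac{F}{\theta}$ is a $0/0$ form. Writing
$$
\frac{h}{\theta}F=(d-r)h\cdot\frac{F}{(d-r)\theta},
$$
the first factor tends to $-l$ by \eqref{h near dM}. For the second factor, the denominator $(d-r)\theta$ also tends to $0$, and its derivative is
$$
-\theta+(d-r)\theta'=\theta\big(-1+(d-r)h\big)\sim-(1+l)\theta .
$$
The derivative of the numerator is $f_v u_0\theta$. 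Applying L'H\^opital, which is legitimate since $\theta>0$ on $(0,d)$, so the denominator's derivative is nonzero near $d$, gives
$$
\frac{F}{(d-r)\theta}\longrightarrow-\frac{(f_vu_0)(d)}{1+l}.
$$
Multiplying the two limits yields $\frac{l}{l+1}(f_vu_0)(d)$.

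The only delicate step is justifying L'H\^opital at $d$. It requires checking that $(d-r)\theta\to0$, which follows from $\theta\to0$, and that $(d-r)h\to-l$ with $1+l\neq0$, which holds since $l\ge0$. This is exactly where hypothesis \eqref{condition on f} enters: it ensures $F(d)=0$, so the quotient is indeterminate. Without it the expression would diverge like $h\cdot F(d)/\theta$.
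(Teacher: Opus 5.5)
Your proof is correct and follows the paper's argument. Part (b) is exactly the paper's route: factor out $(\dM-r)h\to -l$, then apply l'H\^opital to $F/((\dM-r)\theta)$ using the denominator derivative $\theta\big(-1+(\dM-r)h\big)$. For part (a) the paper instead applies l'H\^opital to $F/(r\theta)$ to obtain $\frac{f_vu_0}{1+rh}\to 0$, and your direct bound $|F(r)|\le Cr^{n+1}$ is an equally valid, slightly more quantitative variant of the same point.
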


\begin{proof} 
	We have by Lemma \ref{lemma h near 0}
	$$
	\lim_{r \to 0^+} \bigg( \frac{h}{\theta} \int_0^r f_v u_0 \theta \, ds \bigg) =  (n-1) \lim_{r \to 0^+} \bigg( \frac{1}{r \theta} \int_0^r f_v u_0 \theta \, ds \bigg).
	$$
	Applying l'H\^{o}pital's rule to the right-hand side, thanks to \eqref{h near 0} in Lemma \ref{lemma h near 0}, it follows that
	\begin{align*}
		\lim_{r \to 0^+} \bigg( \frac{h}{\theta} \int_0^r f_v u_0 \theta \, ds \bigg) &= (n-1) \lim_{r \to 0^+} \bigg( \frac{f_v u_0 \theta}{\theta + r \theta^\p}\bigg) \\
		&= (n-1)\lim_{r \to 0^+} \bigg( \frac{f_v u_0}{1+rh}\bigg)	\\
		&= 0,
	\end{align*}
	as $rh(r)\to n-1$ and $f_v(0)=0$, which claims (a).
	
	\medskip
	
	Now, for proving (b), suppose that $\Ric > 0$. In this case, we recall by Remark~\ref{remark compact and Ric} that $(M,g)$ is compact, and so $\dM < +\infty$. Similarly to what we have done for the convergence at $0$, as long as \eqref{condition on f} holds, we obtain from l'H\^{o}pital's rule and \eqref{h near dM} that
	\begin{align*}
		\lim_{r \to \dM^-} \bigg( \frac{h}{\theta} \int_0^r f_v u_0 \theta \, ds \bigg) &=  -l\lim_{r \to \dM^-} \bigg( \frac{1}{(\dM - r)\theta} \int_0^r f_v u_0 \theta \, ds \bigg) \\
		& = -l \lim_{r \to \dM^-} \bigg( \frac{f_v u_0 \theta}{-\theta + (\dM - r) \theta^\p}\bigg) \\
		& = -l \lim_{r \to \dM^-} \bigg( \frac{f_v u_0}{-1+(\dM - r)h}\bigg) \\
		& = \frac{l}{l+1} \big(f_v u_0\big) (\dM),
	\end{align*}
	which completes the proof.
\end{proof}

\begin{definition} \label{definition F}
	Assume that $(M,g)$ satisfies the one of two conditions \eqref{condition on u0}. For any $f_v$ given as in \eqref{def f}, we define $F_v$ as follows.
	\begin{itemize}
		\item If $\Ric \le 0$, we set 
		\begin{equation} \label{def F nonpositive Ric}
					F_v(r) = u_0(r) \int_0^r \bigg(\frac{1}{u_0^2(s)\theta(s)} \int_0^s f_v u_0 \theta \, dt \bigg)  \, ds.
		\end{equation}
		for all $r \in [0, \dM)$.
		
		\item If $\Ric>0$, as long as the vanishing condition \eqref{condition on f} holds, we write
		\begin{equation}\label{def F compact}
			F_v(r) = \begin{cases}
				 \displaystyle u_0(r) \int_{0}^r \bigg(\frac{1}{u_0^2(s)\theta(s)}\int_0^s f_v u_0 \theta \, dt \bigg)\, ds + c_v u_0(r) & \text{if $r \in [0 , \rstar)$,} \\
				 & \\
				 \displaystyle - \frac{1}{u_0^\p (\rstar)\theta (\rstar)} \int_0^{\rstar} f_v u_0 \theta \, ds & \text{if $r = \rstar$,} \\
				 & \\
				 \displaystyle - u_0(r) \int_r^{\dM} \bigg(\frac{1}{u_0^2(s)\theta(s)}\int_0^s f_v u_0 \theta \, dt \bigg)\, ds & \text{otherwise}
			\end{cases}
		\end{equation}

		where $c_v$ is a constant given in \eqref{def c} below.
	\end{itemize}
\end{definition}

\begin{proposition} \label{proposition properties of F}
	$F_v$ is in $C^2(M)$ and satisfies the equation 
	\begin{equation} \label{F identity}
		F^\pp_v + h F^\p_v + \tfrac{n}{n-1}\Ric  F_v =  f_v.
	\end{equation}
\end{proposition}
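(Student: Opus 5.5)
The plan is reduction of order: $u_0$ solves the homogeneous equation $u^\pp + hu^\p + \tfrac{n}{n-1}\Ric\, u = 0$, and we look for $F_v$ in the form $F_v = u_0 w$. Using \eqref{def u0} and $h=\theta^\p/\theta$, for any $C^2$ function $w$ on an interval where $u_0\neq 0$ one gets
\[
(u_0w)^\pp + h(u_0w)^\p + \tfrac{n}{n-1}\Ric\, u_0 w = u_0 w^\pp + (2u_0^\p + hu_0)w^\p = \frac{1}{u_0\theta}\big(u_0^2\theta\, w^\p\big)^\p .
\]
With $w^\p = \frac{1}{u_0^2\theta}\int_0^s f_v u_0\theta\,dt$, the right-hand side becomes $\frac{1}{u_0\theta}\, f_v u_0\theta = f_v$. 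Adding the term $c_v u_0$ or switching the integration endpoint to $\dM$ only changes $w$ by a constant, that is, it adds a multiple of $u_0$. So \eqref{F identity} holds on $(0,\dM)$ when $\Ric\le 0$, using $u_0\ge 1$ from Proposition \ref{proposition sign of u0}. When $\Ric>0$ it holds on $(0,\rstar)$ and on $(\rstar,\dM)$.

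Regularity at the origin: since $\theta\sim r^{n-1}$ and $f_v(0)=0$, we have $\int_0^s f_vu_0\theta = O(s^{n+1})$, hence $w^\p(s)=O(s^2)$. This gives $F_v(0)=0$ and $F_v^\p(0)=0$. Proposition \ref{proposition limit of f at boundary}(a) says $hF_v^\p\to 0$, so the equation yields $F_v^\pp\to f_v(0)=0$. Thus $F_v$ is $C^2$ at the origin as a radial function. If $M$ is compact we also need $F_v^\p(\dM)=0$ and continuity of $F_v^\pp$ at $\dM$. Here the vanishing condition \eqref{condition on f} is used. We have $\int_0^s f_vu_0\theta = -\int_s^{\dM} f_vu_0\theta = O(\theta(s)(\dM-s))$, so $w^\p$ stays bounded and $F_v = -u_0\int_r^{\dM}w^\p$ is $C^1$ up to $\dM$. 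Computing $F_v^\p = u_0^\p(\ldots) + u_0w^\p$, Proposition \ref{proposition limit of f at boundary}(b) controls $hF_v^\p$, and the equation then gives the limit of $F_v^\pp$ at $\dM$. From $(\dM-r)h\to -l$ one gets $F_v^\p(\dM)=0$, because $h\to-\infty$ while $hF_v^\p$ stays bounded.

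The main obstacle is the zero $\rstar$ of $u_0$, where $w^\p$ blows up like $(s-\rstar)^{-2}$ unless $\int_0^{\rstar} f_vu_0\theta=0$. The plan is to show that each branch $u_0\int^r w^\p$ extends continuously to $\rstar$. Near $\rstar$ write $u_0 \approx u_0^\p(\rstar)(r-\rstar)$ and set $A=\int_0^{\rstar} f_vu_0\theta$. Then
\[
\int^r \frac{A+O(s-\rstar)}{u_0^\p(\rstar)^2\theta(\rstar)(s-\rstar)^2}\,ds = -\frac{A}{u_0^\p(\rstar)^2\theta(\rstar)(r-\rstar)} + O(\log|r-\rstar|) + \text{const}.
\]
Multiplying by $u_0$ gives the value $-A/(u_0^\p(\rstar)\theta(\rstar))$, which matches the middle line of \eqref{def F compact}. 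Both one-sided limits agree for any choice of constant, since the constant is multiplied by $u_0(\rstar)=0$. So continuity is automatic, and the constant $c_v$ in \eqref{def c} must be the one that makes $F_v^\p$ match across $\rstar$. Its logarithmic terms cancel between the sides, because both equal the same expression $\frac{d}{ds}\big(A(s)/\theta\big)$ coefficient. An alternative, cleaner route that I would try first is to take the right branch $G=-u_0\int_r^{\dM}w^\p$, which is a solution on $(\rstar,\dM)$. Continue it across $\rstar$ as the unique $C^2$ solution of the regular second-order ODE \eqref{F identity}; the coefficients are smooth near $\rstar$. On $(0,\rstar)$ it differs from $u_0\int_0^r w^\p$ by a solution of the homogeneous equation that vanishes at the origin regularly. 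Since the second homogeneous solution is singular at $0$ (it behaves like $r^{2-n}$), that difference must be a multiple $c_vu_0$. This yields global $C^2$ regularity across $\rstar$ directly and identifies $c_v$ with \eqref{def c}.
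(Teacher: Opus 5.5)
Your reduction-of-order identity, and your treatment of the origin and of $\dM$, are correct and agree with the paper. The gap is at $\rstar$, which you rightly flag as the main obstacle but do not actually close.

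\textbf{Route 2.} You claim that $D:=\widetilde G-u_0\int_0^r w'$, where $\widetilde G$ is the $C^2$ continuation of the right branch, ``vanishes at the origin regularly''. Nothing you have said controls $\widetilde G$ near $0$. A priori $D=\alpha u_0+\beta u_1$ with $u_1\sim r^{2-n}$, and ruling out $\beta\neq 0$ is exactly the content of the gluing at $\rstar$. So the argument is circular as written.

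\textbf{Route 1.} It is not enough that the logarithmic terms cancel in the jump $F_v'(\rstar^+)-F_v'(\rstar^-)$. Each one-sided limit of $F_v'$ must be finite. A common $\log|r-\rstar|$ term on both sides would survive every choice of $c_v$, since $c_vu_0'$ is bounded, and $F_v$ would then fail to be $C^1$.

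\textbf{Repairing route 2.} Your own route-1 computation gives the missing fact. Both $u_0\int_0^r w'$ and $-u_0\int_r^{\dM}w'$ tend to $-A/(u_0'(\rstar)\theta(\rstar))$. Hence $D(r)\to 0$ as $r\to\rstar^-$. The homogeneous equation is regular at $\rstar$, so $D$ extends to a homogeneous solution with $D(\rstar)=0$. Since $u_0(\rstar)=0\neq u_0'(\rstar)$, the homogeneous solutions vanishing at $\rstar$ are exactly the multiples of $u_0$. Therefore $D=c_vu_0$, and regularity of $\widetilde G$ at $0$ follows instead of being assumed. An equivalent argument: for every solution $F$ on $(0,\dM)$, the quantity $\theta(u_0F'-u_0'F)-\int_0^r f_vu_0\theta$ is constant, and it vanishes on the right branch by construction.

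\textbf{Repairing route 1.} The log coefficient in fact vanishes. First, $\big(\int_0^s f_vu_0\theta\big)'=f_vu_0\theta$ is zero at $\rstar$. Second, the relative first-order correction in $u_0^2\theta\approx u_0'(\rstar)^2\theta(\rstar)(s-\rstar)^2$ is $u_0''(\rstar)/u_0'(\rstar)+h(\rstar)$, which is zero by \eqref{def u0} evaluated at $\rstar$.

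\textbf{Comparison with the paper.} The paper avoids this bookkeeping. After proving continuity of $G_v$ at $\rstar$ by l'H\^opital, it integrates $(\theta G_v')'=\theta f_v-\tfrac{n}{n-1}\Ric\,\theta G_v$ from $0$ on the left and from $\dM$ on the right. The right-hand side is bounded, so $G_v'$ has finite one-sided limits at $\rstar$, and $c_v$ is chosen to match them. Once closed as above, your ODE-continuation route is a cleaner alternative: it yields $C^2$ regularity across $\rstar$ in one step.
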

\begin{proof}
	We divide the proof into two cases.
	
	$\bullet$ \textbf{Case $\Ric \le 0$.} First, thanks to Proposition \ref{proposition sign of u0} and the definition of $\theta$, we keep in mind that both $u_0$ and $\theta$ are positive in $(0, +\infty)$. Since 
	$$
	u_0(0) = 1 \quad \text{and} \quad 
	\lim _{r \to 0^+} h =+\infty,
	$$
	Proposition \ref{proposition limit of f at boundary}(a) gives us
	\begin{equation} \label{limit of f at 0}
		\lim_{r \to 0^+} \bigg( \frac{1}{u_0^2\theta} \int_0^r f_v u_0 \theta \, ds \bigg) = 0.
	\end{equation}
	Taking this into account, it is clear by definition that $F_v$ is well defined in $[0, +\infty)$ and moreover $F_v \in C^2((0, +\infty))$. Therefore, to show that $F_v \in C^2(M)$, we only need to prove that $F_v$ is twice continuously differentiable at $0$ and that $h F^\p_v$ converges at $0$.
	
	In fact, with Lemma \ref{lemma h near 0} in mind, we obtain by l'H\^{o}pital's rule and \eqref{limit of f at 0} that
	\begin{align} \label{limit hF at 0}
		\lim_{r \to 0^+} hF_v &= \lim_{r \to 0^+} \bigg( h u_0 \int_0^r \Big(\frac{1}{u_0^2 \theta } \int_0^s f_v u_0 \theta \, ds_1 \Big) \bigg) \notag \\
		&= (n - 1) \lim_{r \to 0^+} \bigg(\frac{1}{r} \int_0^r \Big(\frac{1}{u_0^2 \theta } \int_0^s f_v u_0 \theta \, ds_1 \Big) \bigg) \notag \\
		&= (n - 1) \lim_{r \to 0^+} \bigg( \frac{1}{u_0^2 \theta } \int_0^r f_v u_0 \theta \, ds \bigg) \notag \\
		&= 0.
	\end{align}
	Next, by straightforward calculations, we have
	\begin{subequations}\label{identity of F1 and its diff}
		\begin{align}
			F^\p_v &= \frac{u_0^\p}{u_0}F_v +  \frac{1}{u_0 \theta } \int_0^r f_v u_0 \theta \, ds, \\
			F^\pp_v &= \frac{u_0^\pp}{u_0}F_v - \frac{h}{u_0 \theta} \int_0^r f_v u_0 \theta \, ds + f_v
		\end{align}
	\end{subequations}
	for all $r \in (0 , +\infty)$. Combining  Proposition \ref{proposition limit of f at boundary}(a), \eqref{limit hF at 0} and \eqref{identity of F1 and its diff}, it follows that
	\begin{align*}
		\lim_{r \to 0^+} F_v & = 0, \\
		\lim_{r \to 0^+} hF^\p_v & = \lim_{r \to 0^+} \bigg( \frac{u_0^\p}{u_0} h F_v + \frac{h}{u_0 \theta } \int_0^r f_v u_0 \theta \, ds \bigg) = 0, \\
		\lim_{r \to 0^+} F^\pp_v & = \lim_{r \to 0^+} \bigg( \frac{u_0^\pp}{u_0}F_v - \frac{h}{u_0 \theta} \int_0^r f_v u_0 \theta \, ds + f_v \bigg) = 0,
	\end{align*}
	and so, $F_v \in C^2(M)$ as desired. Now, multiplying the first equation of \eqref{identity of F1 and its diff} by $h$ and taking the sum, we  have
	$$
	F^\pp_v + hF^\p_v - \frac{u_0^\pp + h u_0^\p}{u_0} F_v = f_v.
	$$
	Note that, by definition, $u_0^\pp + h u_0^\p = - \tfrac{n}{n -1} \Ric u_0$, then the identity \eqref{F identity} follows, which completes the proof for the case $\Ric\le0$.
	
	\medskip
	
	$\bullet$ \textbf{Case $\Ric > 0$.} Recall that in this case $\theta$ vanishes at $\{0, \dM\}$ and $u_0$ is assumed to have a unique zero-point $\rstar \in [0, \dM]$ with $\rstar \ne \dM$.  Therefore, since $u_0(0) = 1$, it follows that $u_0$ is positive in $[0,\rstar)$ and negative in $(\rstar, \dM]$.
	
	For abbreviation, we decompose $F_v = G_v + H_v$ where
	\begin{equation} \label{def F1 compact}
		G_v(r) = \begin{cases}
			\displaystyle u_0 \int_{0}^r \bigg(\frac{1}{u_0^2(s)\theta(s)}\int_0^s f_v u_0 \theta \, dt \bigg)\, ds & \text{if $r \in [0 , \rstar)$,} \\
			& \\
			\displaystyle - \frac{1}{u_0^\p (\rstar)\theta (\rstar)} \int_0^{\rstar} f_v u_0 \theta \, ds & \text{if $r = \rstar$,} \\
			& \\
			\displaystyle - u_0 \int_r^{\dM} \bigg(\frac{1}{u_0^2(s)\theta(s)}\int_0^s f_v u_0 \theta \, dt \bigg)\, ds & \text{otherwise}
		\end{cases} 
	\end{equation}
	and 
	\begin{equation}
		H_v(r) = \begin{cases}
			c_v u_0(r)  & \text{if $r \in [0 , \rstar)$,} \\
			0 \quad & \text{otherwise.}
		\end{cases}
	\end{equation}
	By the definition of $H_v$, it is easy to check that
	\begin{enumerate}[(i)]
		\item $H_v \in C^0\big([0,\dM]\big) \cap C^2 \big([0, \dM] \setminus \{\rstar\}\big)$, and
		\begin{equation} \label{F2 identity}
			H^\pp_v + h H^\p_v + \tfrac{n}{n-1}\Ric H_v =  0,
		\end{equation}
		on $[0, \dM] \setminus \{\rstar\}$.
		
		\item $H_v$ is semi-differentiable at $\rstar$ with
		$$
		\lim_{r \to \rstar^+} H^\p_v(r) = 0 \quad \text{and} \quad \lim_{r \to \rstar^-} H^\p_v(r) = c_v u_0^\p (\rstar),
		$$
		
		\item $hH^\p_v$ converges at $0$ and $\dM$. 
	\end{enumerate}
	Now, the following observations are the key in our arguments:
	\begin{itemize}
		\item If $G_v$ also satisfies three properties (i-iii) of $H_v$ above except that \eqref{F2 identity} is replaced by
		\begin{equation} \label{F1 identity}
			G^\pp_v + h G^\p_v + \tfrac{n}{n-1}\Ric G_v = f_v,
		\end{equation}
		then, by choosing 
		\begin{equation} \label{choose c}
			c_v = \frac{1}{u_0^\p(\rstar)} \bigg( \lim_{r \to \rstar^+} G^\p_v - \lim_{r \to \rstar^-} G^\p_v \bigg),
		\end{equation}
		it follows that
		$$
		\lim_{r\to \rstar^-}  \big(G^\p_v + H^\p_v\big)(r) = \lim_{r\to \rstar^+} \big(G^\p_v + H^\p_v \big)(r).
		$$ 
		This means that $F_v = G_v + H_v \in C^1([0, \dM])$, and so thanks to (\ref{F2 identity}--\ref{F1 identity}) we deduce that $F_v \in C^2([0, \dM])$ and 
		$$
		F^\pp_v + h F^\p_v + \tfrac{n}{n-1}\Ric F_v = f_v
		$$
		for all $r \in [0,\dM]$.
		
		\item Continuing with the previous observation, we note that as long as $G_v$ and $H_v$ satisfy the assertion (iii) above, that is $hG_v$ and $hH_v$  converge at $0$ and $\dM$, then so does $hF_v$. Therefore, the identity \eqref{F identity} can be rewritten as
		$$
		-\Delta F_v = f_v - \tfrac{n}{n-1}\Ric F_v,
		$$
		and hence, since $f_v - \tfrac{n}{n-1}\Ric  F_v \in C^1(M)$, we obtain from the Sobolev embedding theorem that $F_v \in C^3(M)$. 
	\end{itemize}
	From these two observations, to complete the proof, we only need to prove that $G_v$ satisfies the conditions in the first observation above. In fact, first, similarly to the case where $\Ric \le 0$, we have 
	$$
	G_v \in C^2([0,\rstar)), \quad hG^\p_v (0) = 0
	$$ 
	and
	\begin{equation}\label{F1 identity -}
		G^\pp_v + h G^\p_v + \tfrac{n}{n-1}\Ric  G_v =  f_v \quad \text{for all $r \in [0,\rstar)$}.
	\end{equation}
	For $r \in (\rstar, \dM]$, in the same manner as proving the regularity at $0$, thanks to Proposition \ref{proposition limit of f at boundary}(b), we also get that
	$$
	G_v \in C^2((\rstar, \dM]) \quad \text{and} \quad \lim_{r \to \dM^-} h G^\p_v  = \frac{l}{l+1} f_v(\dM),
	$$
	where $l$ is the limit defined in \eqref{h near dM}. Moreover, by straightforward calculations, we have that for all $r \in (\rstar,dM)$
	\begin{align*}
		G^\p_v &= \frac{u_0^\p}{u_0}G_v + \frac{1}{u_0 \theta } \int_0^r f_v u_0 \theta \, ds, \\	
		G^\pp_v &= \frac{u_0^\pp}{u_0}G_v - \frac{h}{u_0 \theta} \int_0^r f_v u_0 \theta \, ds + f_v.
	\end{align*}
	Therefore, multiplying the first equation by $h$ and taking the sum, we obtain
	\begin{equation} \label{F1 identity +}
		G^\pp_v + h G^\p_v + \tfrac{n}{n-1}\Ric  G_v =  f_v, \quad \text{$\forall r \in (\rstar, \dM]$}.
	\end{equation}
	Now, to address $G_v$ at $r = \rstar$, we first claim that since $u_0^\p(\rstar) \ne 0$, $G_v$ is well-defined at $\rstar$ by definition. To establish the continuity of $G_v$ at $\rstar$, applying l'H\^{o}pital's rule, we have
	\begin{align*}
		\lim_{r \to \rstar^-} G_v(r) & = \lim_{r \to \rstar^-} \frac{\bigg( \int_0^r \Big( \int_0^s f_v u_0 \theta \, ds_1 \Big) \frac{1}{u_0^2 \theta } \, ds \bigg)^\p}{\bigg( \frac{1}{u_0}\bigg)^\p} \\
		& = - \frac{1}{u_0^\p (\rstar)\theta (\rstar)} \int_0^{\rstar} f_v u_0 \theta \, dr,
	\end{align*}
	and
	\begin{align*}
		\lim_{r \to \rstar^+} G_v(r) & = - \lim_{r \to \rstar^+} \frac{\bigg( \int_r^{\dM} \Big( \int_0^s f_v u_0 \theta \, ds_1 \Big) \frac{1}{u_0^2 \theta } \, ds  \bigg)^\p}{\bigg( \frac{1}{u_0}\bigg)^\p} \\
		& = - \frac{1}{u_0^\p (\rstar)\theta (\rstar)} \int_0^{\rstar} f_v u_0 \theta \, dr.
	\end{align*}
	This means by definition that $\lim_{r \to \rstar^-} G_v(r) = \lim_{r \to \rstar^+} G_v(r) = G_v(\rstar)$, and so $G_v$ is continuous at $\rstar$ as desired. To show the semi-differentiability of $G_v$ at $\rstar$, thanks to \eqref{F1 identity -}, \eqref{F1 identity +} and the definition of $h$, it follows that 
	$$
	\big(\theta G^\p_v \big)^\p  = \theta f_v - \tfrac{n}{n-1} \Ric \theta G_v \quad \text{in $[0,\dM] \setminus \{\rstar\}$}.
	$$
	Integrating this equation, we obtain that
	\begin{itemize}
		\item if $r \in (0, \rstar)$, then
		$$
		\theta(r) G^\p_v(r) = \int_0^r \big(\theta G^\p_v \big)^\p \, ds = \int_0^r \bigg(\theta f_v - \tfrac{n}{n-1} \Ric \theta G_v \bigg) \, ds ,
		$$
		\item if $r \in ( \rstar, \dM)$, then
		$$ 
		\theta(r) G^\p_v(r) = - \int_r^{\dM} \big(\theta G^\p_v \big)^\p \, ds = - \int_r^{\dM} \bigg(\theta f_v - \tfrac{n}{n-1} \Ric \theta G_v \bigg) \, ds.
		$$
	\end{itemize}
	Since $\theta(\rstar) > 0$, this gives us that
	\[
	\lim_{r \to \rstar^-}G^\p_v(r) = \frac{1}{\theta(\rstar)}\int_0^{\rstar} \bigg(\theta f_v - \tfrac{n}{n-1} \Ric \theta G_v \bigg) \, ds, 
	\]
	and
	\[
	\lim_{r \to \rstar^+} G^\p_v(r) = - \tfrac{1}{\theta(\rstar)}\int_{\rstar}^{\dM} \bigg(\theta f_v - \tfrac{n}{n-1} \Ric \theta G_v \bigg) \, ds.
	\]
	Therefore, $G_v$ is semi-differentiable at $\rstar$ as claimed. 
	
	Now, in view of \eqref{choose c}, by setting
	\begin{align} \label{def c}
		c_v &:= \tfrac{1}{u_0^\p(\rstar)} \bigg(\lim_{r \to \rstar^+} G^\p_v(r) - \lim_{r \to \rstar^-} G^\p_v(r) \bigg) \notag \\
		&= - \tfrac{1}{\theta(\rstar) u_0^\p(\rstar)} \int_0^{\dM} \bigg(\theta f_v - \tfrac{n}{n-1} \Ric \theta G_v \bigg) \, ds,
	\end{align}
	it follows from the observation above that $F_v = G_v + H_v$ satisfies the properties claimed in the proposition. The proof is completed.
\end{proof}
%
%
\section{Conformal equations in harmonic manifolds} \label{section reduction}
%
%
We now study the conformal equations on a harmonic manifold $(M,g)$. For our purpose in the article, we are only interested in radial solutions $\varphi(r)$ to \eqref{CE}. Therefore, we will consider \eqref{CE} in a simple setting with $\sigma$ zero everywhere and $(\tau(r), \psi(r), \rho(r))$ radial functions. In this case, the conformal equations \eqref{CE} are rewritten as
\begin{subequations}\label{CE with null sigma}
	\begin{align}
		- \tfrac{4(n-1)}{n-2} \Delta \varphi + \mathcal{R}_\psi \varphi  
		& = \mathcal{B}_{\tau, \psi} \varphi^{N-1}  + \tfrac{|LW|^2 + \rho^2}{\varphi^{N + 1} } \\
		\Div(L W) & = \tfrac{n-1}{n} \varphi^N d\tau - \rho d\psi.
	\end{align}
\end{subequations}
The aim of this section is to show that when $(M,g)$ is the sphere $\Sp^n$ or harmonic with $\Ric \le 0$, the conformal equations \eqref{CE with null sigma} can be well treated in the space of radial functions. Moreover, in this context, we can reduce them to a single non-linear equation of functions of the distance, which is much easier to solve. The outline of this section is as follows. In Subsection \ref{subsection calculate CKO}, we will deal with the conformal killing operator $L$ for a given symmetric $1-$form $W$. Next, in Subsection \ref{subsection Lic-type eq}, based on our computations of $L$ and the primary functions in Section \ref{sec-Harmonic_mfds}, the full system of the conformal equations will be addressed and reduced  to a single equation.

%
\subsection{Calculations on the conformal killing operator} \label{subsection calculate CKO}
%
In this section we compute $\Div(LW)$ and $|LW|^2$ for a $1$-form $W$ of the form $W=u(r)dr$. The computation will use the basic results on the shape operator of geodesic spheres, and only the two following properties of the distance function $r$: its level hypersurfaces have constant mean curvature $h(r)$ and $|\nabla r|=1$.

For the convenience of the reader, letting $\dr$ denote the gradient of $r$, we recall the following observations:
\begin{itemize}
	\item For any tangent vector $v$ we have
	\[
	g(D_{\dr}\dr,v)=D^2r(\dr,v)=g(D_v\dr,\dr)=\frac{1}{2}v.|\dr|^2=0.
	\]
	Therefore $D_{\dr}\dr=0$ and the integral curves of $\dr$ are geodesics.
	\item The tangent space to the level hypersurfaces at a point $x\in M$ is $\dr^\bot$ and, since $|\dr|=1$, the map
	\[
	S:\left\{\begin{array}{rcl}
				\dr^\bot & \to & \dr^\bot \\
				u & \mapsto & D_u\dr
	\end{array}\right.
	\]
	is well defined and it is the shape operator of the hypersurface. Its trace is $\tr(S(r))=h(r)$, and since $D_{\dr}\dr=0$, we also have $\Delta r=h(r)$.
	
	\item Along a geodesic $\gamma$ orthogonal to the hypersurfaces of $r$, the operators $S(r)$ satisfy the Ricatti equation:
	\[
	S^\p(r) + S^2(r) + R(\dr,.)\dr = 0
	\]
	In particular, taking the trace we get
	\[
	h^\p(r) + |S(r)|^2 + \Ric(\dr,\dr) = 0.
	\]
\end{itemize}

\begin{proposition} \label{proposition 1-form W calculations}
	If $W=u(r)dr$ then
	\begin{enumerate}[(a)]
		\item $LW=2u^\p(r)dr\otimes dr + 2u(r)D^2r - \frac{2}{n}(u^\p(r)+u(r)h(r))g,$
		
		\item $\Div(L W)=2\frac{n-1}{n}\big(u^\pp(r)+u^\p(r)h(r)+u(r)h^\p(r)\big)dr + 2u(r)\Ric(\partial r,.),$
		
		\item $\frac{1}{4}|LW|^2  = \frac{(n-1)}{n}\big(u^\p(r) - \frac{1}{n-1}h(r)u(r)\big)^2 - \big( h^\p(r) + \frac{h(r)^2}{n-1} + \Ric(\dr,\dr) \big)u(r)^2.$
	\end{enumerate}
\end{proposition}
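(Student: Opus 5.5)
Since $r$ has unit gradient, we have $D_{\dr}\dr=0$, and $D^2r$ vanishes on $\dr$ and restricts to $S(r)$ on $\dr^\bot$. The plan is a direct tensor computation built from these facts, together with $\tr D^2r=h$ and the Riccati identity $h^\p+|S|^2+\Ric(\dr,\dr)=0$.

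\textbf{(a).} Compute $\nabla W=\nabla(u\,dr)=u^\p dr\otimes dr+u\,D^2r$. This expression is already symmetric, so $\nabla_iW_j+\nabla_jW_i=2u^\p dr\otimes dr+2uD^2r$. Moreover $\Div W=\tr\nabla W=u^\p+uh$. Substituting into \eqref{killing} gives (a).

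\textbf{(c).} Work in an orthonormal frame $\{\dr,e_1,\dots,e_{n-1}\}$ adapted to the splitting. There $LW$ is block diagonal.
\begin{itemize}
\item The $\dr\dr$ entry is $2u^\p-\frac2n(u^\p+uh)=\frac{2}{n}\big((n-1)u^\p-uh\big)$.
\item The tangential block is $2uS-\frac2n(u^\p+uh)\mathrm{Id}$.
\end{itemize}
Summing the squared entries, and using $\tr S=h$ and $|S|^2=-h^\p-\Ric(\dr,\dr)$, we get
\[
\tfrac14|LW|^2=\tfrac{1}{n^2}\big((n-1)u^\p-uh\big)^2+u^2|S|^2-\tfrac{2}{n}uh(u^\p+uh)+\tfrac{n-1}{n^2}(u^\p+uh)^2 .
\]
This is a polynomial in $u^\p$ and $uh$, plus the $u^2|S|^2$ term. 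Expanding it and completing the square as $\frac{n-1}{n}\big(u^\p-\frac{hu}{n-1}\big)^2$ should leave exactly $u^2\big(|S|^2-\frac{h^2}{n-1}\big)$. Substituting the Riccati trace for $|S|^2$ then gives (c). This is routine bookkeeping; it is best checked by comparing the coefficients of $u^{\p2}$, $uu^\p h$ and $u^2h^2$.

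\textbf{(b).} Take the divergence of each of the three terms in (a).
\begin{itemize}
\item The first term gives $\Div(2u^\p dr\otimes dr)=2(u^\pp+u^\p h)dr$, using $D_{\dr}\dr=0$.
\item The last term gives $\Div(\phi g)=d\phi$ with $\phi=\frac2n(u^\p+uh)$, i.e. $\frac2n(u^\pp+u^\p h+uh^\p)dr$.
\item The middle term gives $\Div(2uD^2r)=2D^2r(\nabla u,\cdot)+2u\,\Div D^2r=2u\,\Div D^2r$, since $\nabla u=u^\p\dr$ and $D^2r(\dr,\cdot)=0$. For $\Div D^2r$, the Bochner/Ricci commutation identity $\Div D^2f=d\Delta f+\Ric(\nabla f,\cdot)$ yields $\Div D^2r=h^\p dr+\Ric(\dr,\cdot)$.
\end{itemize}
Collecting the three contributions:
\[
2(u^\pp+u^\p h)dr+2uh^\p dr+2u\Ric(\dr,\cdot)-\tfrac2n(u^\pp+u^\p h+uh^\p)dr .
\]
This equals $2\frac{n-1}{n}(u^\pp+u^\p h+uh^\p)dr+2u\Ric(\dr,\cdot)$, which is (b).

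The main points requiring care are the commutation identity for $\Div D^2r$, including its sign convention, and the fact that $r$ is smooth only away from the center and the cut locus. The identities are therefore asserted pointwise where $r$ is smooth, which suffices for the paper's radial analysis.
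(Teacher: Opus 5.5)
Your proposal is correct and follows the paper's proof. Parts (a) and (c) are the same computations; the paper diagonalizes $S$ in (c), whereas you use $\tr S=h$ and $|S|^2$ directly. For (b), your term-by-term divergence of the three pieces of (a) is a tidier organization of the paper's brute-force frame expansion, and it rests on the same key identity $\Div D^2r = h^\p dr + \Ric(\dr,\cdot)$, which the paper obtains from the Bochner formula for the rough Laplacian of $dr$.
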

Since $M$ is a harmonic manifold, then it is Einstein and we have $\Ric(\dr,.)=\Ric dr$ where $\Ric$ is constant. Therefore, $\Div(LW)$ is a radial 1-form, and $|LW|$ is a radial function.
\begin{proof}
	Let $W^\#=u(r)\dr$ be the dual of the 1-form $W$. The 2-form $LW$ is the traceless part of the Lie derivative of the metric $g$ in the direction of $W^\#$:
	\[
	LW(X,Y)=g(D_XW^\#,Y) + g(X,D_YW^\#) - \frac{2}{n}\Div(W^\#)g(X,Y)
	\]
	for any vector field $X,Y$. From the definition of $W^\#$ we get
	\[
	\Div(W^\#) = u(r)\Delta r + u^\p(r) = u(r)h(r) + u^\p(r)
	\]
	and
	\begin{align*}
		g(D_XW^\#,Y) & = g((X.u(r))\dr,Y) + g(u(r)D_X\dr, Y) \\
		 & = u^\p(r)g(\dr,X)g(\dr,Y) + u(r)D^2r(X,Y).
	\end{align*}
	Summing all the terms we get
	\begin{multline*}
		LW(X,Y) = 2u^\p(r)g(\dr,X)g(\dr,Y) + 2u(r)D^2r(X,Y) \\
		- \frac{2}{n}(u^\p(r)+u(r)h(r))g(X,Y),
	\end{multline*}
	which gives point (a).
	
	Let $X$ be a vector field, from the definition of the divergence of a symmetric 2-form we have
	\[
	\Div(LW)(X) = \sum_{i=1}^{n}(D_{E_i}LW)(X,E_i)
	\]
	where $(E_1,\dots,E_n)$ is a local field of orthonormal frame. In order to compute
	\begin{equation} \label{eqn-b}
		(D_{E_i}LW)(X,E_i) = E_i.(LW(X,E_i)) - LW(D_{E_i}X,E_i) - LW(X,D_{E_i}E_i),
	\end{equation}
	we compute separately the three terms. Using point (a) and the standard rules of derivation we get
	\begin{multline} \label{eqn-term1}
		E_i.(LW(X,E_i)) = 2u^\pp(r)g(E_i,\dr)^2g(X,\dr) \\
		+ 2u^\p(r)\Big(g(D_{E_i}X,\dr)+g(X,D_{E_i}\dr)\Big)g(E_i,\dr) \\
		+ 2u^\p(r)g(X,\dr)\Big(g(D_{E_i}E_i,\dr)+g(E_i,D_{E_i}\dr)\Big) \\
		+ 2u^\p(r)g(E_i,\dr)g(D_X\dr,E_i) + 2u(r)\Big(g(D_{E_i}D_{E_i}\dr,X)+g(D_{E_i}\dr,D_{E_i}X)\Big) \\
		- \frac{2}{n}\Big(u^\pp(r) + u^\p(r)h(r) + u(r)h^\p(r)\Big)g(E_i,\dr)g(X,E_i) \\
		- \frac{2}{n}\big(u^\p(r) + u(r)h(r)\big)\Big(g(D_{E_i}X,E_i)+g(X,D_{E_i}E_i)\Big).
	\end{multline}
	Point (a) gives the last two terms
	\begin{multline} \label{eqn-term2}
		LW(D_{E_i}X,E_i) = 2u^\p(r)g(D_{E_i}X,\dr)g(E_i,\dr) + 2u(r)g(D_{E_i}\dr,D_{E_i}X) \\
		- \frac{2}{n}(u^\p(r)+u(r)h(r))g(D_{E_i}X,E_i)
	\end{multline}
	and
	\begin{multline} \label{eqn-term3}
		LW(X,D_{E_i}E_i) = 2u^\p(r)g(X,\dr)g(D_{E_i}E_i,\dr) + 2u(r)g(D_X\dr,D_{E_i}E_i) \\
		- \frac{2}{n}(u^\p(r)+u(r)h(r))g(X,D_{E_i}E_i)
	\end{multline}
	Equation \eqref{eqn-b} together with equalities \eqref{eqn-term1}, \eqref{eqn-term2} and \eqref{eqn-term3} now give (after basic cancellations)
	\begin{multline} \label{eqn-sum}
		(D_{E_i}LW)(X,E_i) = 2u^\pp(r)g(E_i,\dr)^2g(X,\dr) + 2u^\p(r)g(X,\dr)g(E_i,D_{E_i}\dr) \\
		+4u^\p(r)g(E_i,\dr)g(D_X\dr,E_i) \\
		+2u(r)\Big(g(D_{E_i}D_{E_i}\dr,X)-g(D_X\dr,D_{E_i}E_i)\Big) \\
		- \frac{2}{n}(u^\pp(r) + u^\p(r)h(r) + u(r)h^\p(r))g(E_i,\dr)g(X,E_i).
	\end{multline}
	In order to sum over $i$, we make the following observations, using that $(E_1,\dots,E_n)$ is an orthonormal frame field:
	\begin{itemize}
		\item $\sum_{i=1}^ng(E_i,\dr)^2=|\dr|^2=1$ ;
		\item $\sum_{i=1}^ng(E_i,D_{E_i}\dr)=\tr(D^2r)=h(r)$ ;
		\item $\sum_{i=1}^ng(E_i,\dr)g(D_X\dr,E_i) = g(D_X\dr,\dr)=0$ ;
		\item From the definition of the rough Laplacian $D^*Ddr$ of $dr$ we get
		\begin{align*}
			\sum_{i=1}^n\Big(g(D_{E_i}D_{E_i}\dr,X) & -g(D_X\dr,D_{E_i}E_i)\Big) \\
			 & = \sum_{i=1}^n\Big(g(D_{E_i}D_{E_i}\dr,X)-g(D_{D_{E_i}E_i}\dr,X)\Big) \\
			 & = D^*Ddr(X),
		\end{align*}
		and according to Bochner formula we have
		\[
		D^*D(dr) = \Delta dr + \Ric(\dr,.) = d\Delta r + \Ric(\dr,.) = h^\p(r)dr + \Ric(\dr,.),
		\]
		so that
		\[
		\sum_{i=1}^n\Big(g(D_{E_i}D_{E_i}\dr,X)-g(D_X\dr,D_{E_i}E_i)\Big) = h^\p(r)g(X,\dr) + \Ric(\dr,X)\ ;
		\]
		\item $\sum_{i=1}^ng(E_i,\dr)g(X,E_i) = g(X,\dr)$.
	\end{itemize}
	Now, taking the sum over $i$ of Equality \eqref{eqn-sum} we get
	\[
	\Div(LW)(X) = 2\tfrac{n-1}{n}\big(u^\pp(r)+u^\p(r)h(r)+u(r)h^\p(r)\big)g(\dr,X) + 2u(r)\Ric(\dr,X)
	\]
	which gives point (b).
	
	The norm $|LW|$ at $x\in M$ is given by $|LW|^2=\sum_{i,j=1}^nLW(e_i,e_j)^2$ where $(e_1,\dots,e_n)$ is an orthonormal basis of $T_xM$. Since $|\dr|$ is constant, the map $S:u\mapsto D_u\dr$ is a symmetric endomorphism of $\dr^\bot$, therefore we can choose $(e_1,\dots,e_{n-1})$ which diagonalizes $S$, and $e_n=\dr$. Denoting $\lambda_1,\dots,\lambda_n$ the eigenvalues of $S$ we have $\sum_{i=1}^{n-1}\lambda_i=h(r)$ and, using Ricatti equation, $\sum_{i=1}^{n-1}\lambda_i^2=|S|^2=-h^\p(r)-\Ric(\dr,\dr)$. For $1\le i,j\le n-1$ with $i\neq j$, the point (a) gives
	\begin{itemize}
		\item $LW(e_n,e_n)=LW(\dr,\dr)= 2u^\p(r) - \frac{2}{n}(u^\p(r)+u(r)h(r))$
		\item $LW(e_1,e_i)=2u(r)\lambda_i - \frac{2}{n}(u^\p(r)+u(r)h(r))$
		\item $LW(e_i,e_n)=LW(e_i,e_j)=0$.
	\end{itemize}
	From this we get
	\begin{align*}
		\frac{1}{4}|LW|^2 & = \Big(u^\p(r) - \frac{1}{n}(u^\p(r)+u(r)h(r))\Big)^2 \\
			& \qquad + \sum_{i=1}^{n-1}\Big(u(r)\lambda_i - \tfrac{1}{n}(u^\p(r)+u(r)h(r))\Big)^2 \\
		 & = \tfrac{(n-1)^2}{n^2}u^\p(r)^2 - 2\tfrac{n-1}{n^2}u^\p(r)u(r)h(r) + \tfrac{1}{n^2}u(r)^2h(r)^2 \\
		 & \qquad + \sum_{i=1}^{n-1}\Big(u(r)^2\lambda_i^2 - \frac{2\lambda_i}{n}u(r)(u^\p(r)+u(r)h(r)) + \tfrac{1}{n^2}(u^\p(r) \\
		 & \qquad +u(r)h(r))^2\Big) \\
		 & = \tfrac{(n-1)^2}{n^2}u^\p(r)^2 - 2\tfrac{n-1}{n^2}u^\p(r)u(r)h(r) + \tfrac{1}{n^2}u(r)^2h(r)^2
		 - u(r)^2h^\p(r) \\
		 & \qquad - u(r)^2\Ric(\dr,\dr) - \tfrac{2}{n}h(r)u(r)\big(u^\p(r)+u(r)h(r)\big) \\
		 & \qquad + \tfrac{n-1}{n^2}\big(u^\p(r)+u(r)h(r)\big)^2 \\
		 & = \tfrac{(n-1)}{n}\Big(u^\p(r) - \frac{1}{n-1}h(r)u(r)\Big)^2 \\
		 & \qquad - \Big( h^\p(r) + \frac{h(r)^2}{n-1} + \Ric(\dr,\dr) \Big)u(r)^2,
	\end{align*}
	which proves point (c). The proof is completed.
\end{proof}

\begin{remark}
	Let $M$ be a complete Riemannian manifold and let $r:M\to\RR$ be a function such that $|\nabla r|=1$ and whose level hypersurfaces have constant mean curvature $h(r)$. All the previous computations still hold for a $1-$form $W=u(r)dr$ on $M$. Besides the distance function to any point of a harmonic manifold, there are other examples of such situations.
	
	If the metric $g$ is rotationally symmetric around a point $x_0\in M$, then the distance to $x_0$ satisfies these conditions. The metric has the form $g=dr^2+f(r)^2\sigma_s$ where $\sigma_s$ is the standard metric of $\Sp^{n-1}$ and $f>0$, the spheres centered at $x_0$ are totally umbilical with mean curvature $h(r)=(n-1)\frac{f^\p(r)}{f(r)}$. Classical computations give $\Ric(\dr,.)=-(n-1)\frac{f^\pp(r)}{f(r)}dr$, and we get that $\Div(LW)$ is a radial 1-form and $|LW|$ is a radial function. The previous computations may be useful in this context. 
	
	Another typical example of such a function is the distance function to a totally geodesic submanifold in a space form. Note that the Euclidean space, the round sphere and the hyperbolic space are the only Riemannian manifolds which are both harmonic and spherically symmetric.
\end{remark}

%
\subsection{Reduction of the conformal equations} \label{subsection Lic-type eq}
We consider now the conformal equations \eqref{CE with null sigma} for a radial data set $(V, \tau(r), \psi(r), \rho(r))$. Let us begin by considering the vector equations
\begin{equation} \label{equation vectors}
	\Div(L W) = v(r) dr,	
\end{equation}
where $v(r)$ is a given bounded radial function in $C^0(M)$. Based on the primary functions in Subsection \ref{subsection primary functions}, the result below  allows us to express explicitly solutions to \eqref{equation vectors} as follows.
\begin{proposition} \label{proposition resolve vector equations}
	Let  $v(r) \in RC^0(M)$ be a bounded radial function. Assume that $(M,g)$ satisfies one of two conditions \eqref{condition on u0}. If $\Ric > 0$, assume furthermore that 
	\begin{equation} \label{vanishing condition sphere}
		\int_0^{\dM} f_v u_0 \theta \, dr = 0,
	\end{equation}
	where $u_0$ is the unique solution to \eqref{def u0} and $f_v$ is defined by \eqref{def f}.
	
	Then the vector equations \eqref{equation vectors} admit a solution $W$ given by
	\begin{equation} \label{solution to vector equations}
		W=\tfrac{n}{n - 1}F_v^\p(r)dr,
	\end{equation}
	where $F_v$ is given by Definition \ref{definition F}. Moreover, if $(M,g)$ is the sphere $\Sp^n$, then the vanishing condition \eqref{vanishing condition sphere} is necessary for the existence of solutions to \eqref{equation vectors}.
\end{proposition}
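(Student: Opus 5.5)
The plan is to substitute the ansatz $W=u(r)\,dr$ with $u=\tfrac{n}{n-1}F_v^\p$ into Proposition~\ref{proposition 1-form W calculations}(b) and check that it reduces to \eqref{F identity}. By (b) and the Einstein property $\Ric(\dr,\cdot)=\Ric\,dr$, we have
\[
\Div(LW)=2\tfrac{n-1}{n}\Big(u^\pp+hu^\p+h^\p u+\tfrac{n}{n-1}\Ric\,u\Big)dr .
\]
The key observation is that $u^\pp+hu^\p+h^\p u=(u^\p+hu)^\p$. Taking $u=\tfrac{n}{n-1}F_v^\p$, the bracket becomes $\tfrac{n}{n-1}\big(F_v^\pp+hF_v^\p+\tfrac{n}{n-1}\Ric F_v\big)^\p$, since $\Ric$ is constant. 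By Proposition~\ref{proposition properties of F} this equals $\tfrac{n}{n-1}f_v^\p=\tfrac{n}{2(n-1)}v$, so $\Div(LW)=v\,dr$.

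For the regularity of $W$ as a $1$-form on $M$, we need $u$ to be $C^1$ with $u(0)=0$, and in the compact case also $u(\dM)=0$, so that $u\,dr$ extends smoothly through the origin and the antipodal point. This follows from $F_v\in C^2(M)$ (Proposition~\ref{proposition properties of F}), because the gradient of a $C^2$ radial function, namely $F_v^\p\,dr$, is a $C^1$ one-form. The vanishing condition \eqref{vanishing condition sphere} enters only here: it is what makes $F_v$ well defined and $C^2$ through Proposition~\ref{proposition limit of f at boundary}(b) and Definition~\ref{definition F}.

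For necessity on $\Sp^n$, integrate $\Div(LW)=v\,dr$ against a conformal Killing field. The radial field $X=\sin r\,\dr$ is conformal Killing on the round sphere, so $LX^\flat=0$. Integrating by parts on the closed manifold gives
\[
\int_M v\,\langle dr,X^\flat\rangle \;=\;\int_M \langle \Div(LW),X^\flat\rangle \;=\;-\tfrac12\int_M\langle LW,LX^\flat\rangle \;=\;0,
\]
hence $\int_0^\pi v\sin r\,\theta\,dr=0$.

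Next I will check that this matches \eqref{vanishing condition sphere}. On $\Sp^n$ we have $\Ric=n-1$, $h=(n-1)\cot r$ and $\theta=\sin^{n-1}r$. Then $u_0=\cos r$ solves $u_0^\pp+hu_0^\p+n u_0=-\cos r-(n-1)\cot r\sin r+n\cos r=0$, so $u_0\theta=\cos r\sin^{n-1}r=\tfrac1n(\sin^n r)^\p$. Integrating by parts, $\int_0^\pi f_v u_0\theta=-\tfrac1{2n}\int_0^\pi v\sin^n r\,dr$, with no boundary terms because $f_v(0)=0$ and $\sin\pi=0$. This is exactly the previous identity. The same step gives $\rstar=\pi/2$ and $l=1$, confirming that condition (ii) of \eqref{condition on u0} holds on the sphere.

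The main obstacle is the integration-by-parts step for a general solution $W$, which need not be radial. It must be valid globally, which it is since $\Sp^n$ is closed and $W$ is regular.
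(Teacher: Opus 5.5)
Your proposal is correct and follows essentially the paper's proof. For existence, you reduce Proposition~\ref{proposition 1-form W calculations}(b) to the derivative of the identity $F_v''+hF_v'+\tfrac{n}{n-1}\Ric F_v=f_v$. For necessity on $\Sp^n$, you pair with the conformal Killing field $\sin r\,\dr$ (i.e.\ $-u_0'\,\dr$) and turn $\int_0^\pi v\sin^n r\,dr=0$ into \eqref{vanishing condition sphere} by one integration by parts.

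One harmless slip: on the sphere $(\pi-r)h(r)=(n-1)(\pi-r)\cot r\to-(n-1)$, so $l=n-1$ rather than $l=1$; this plays no role in the argument.
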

\begin{proof} 
	It is clear by Proposition \ref{proposition 1-form W calculations}(b) that $W \in C^2(M)$ defined in \eqref{solution to vector equations} is a solution to \eqref{equation vectors} if and only if
	\begin{equation} \label{vector equation 2}
		\bigg(F^\ppp_v(r)+ F^\pp_v(r)h(r)+F^\p_v(r)h^\p(r)\bigg)dr + \tfrac{n}{n-1}F^\p_v(r)\Ric(\partial r,.) = \frac{v}{2} \,dr.
	\end{equation}
	Since $(M,g)$ is harmonic, we have $\Ric(\del r, .) = \Ric dr$, and hence, \eqref{vector equation 2} may be rewritten as
	$$
	\bigg(F^\pp_v+ hF^\p_v  + \tfrac{n}{n-1}\Ric F_v \bigg)^\p=  \frac{v}{2}.
	$$
	Clearly, this equation automatically holds as Proposition \ref{proposition properties of F} asserts that
	$$
	F^\pp_v+ h(r)F^\p_v  + \tfrac{n}{n-1}\Ric F^\p_v(r) = f_v.
	$$
	
	Now, to complete the proof, we need to prove that, if $(M,g)$ is the standard sphere, the vanishing condition \eqref{vanishing condition sphere} is necessary for the existence of solutions $W \in C^2(M)$ to \eqref{equation vectors}. In fact, observing that by definition, 
	$$
	u_0\theta = - \frac{n -1}{n \Ric} \big(\theta u_0^\p \big)^\p.
	$$
	Then, we have by integral by parts
	\begin{align*}
		\int_0^{\dM} f_v u_0 \theta \, dr & = - \frac{n-1}{n \Ric }\int_0^{\dM} f_v (u_0^\p \theta)^\p \, dr \\
		& = - \frac{n-1}{n \Ric} \bigg( \big(f_v u_0^\p \theta \big) (\dM) -  \tfrac12 \int_0^{\dM} v u_0^\p \theta \, dr \bigg).
	\end{align*}
	Since, by \eqref{h at dM}, $\theta (\dM) = 0$, the first term of the right-hand side vanishes. Therefore, it follows that the condition \eqref{vanishing condition sphere} is equivalent to 
	$$
	\int_0^{\dM} v u_0^\p \theta \, dr = 0.
	$$ 
	On the other hand, remark that when $(M,g)$ is a standard sphere, we have $u_0 = \cos(r)$. Moreover, it is easy to check by Proposition \ref{proposition 1-form W calculations}(c) that 
	$$
	V_0 := u_0^\p dr = -\sin(r) dr
	$$ 
	is a CKVF on the sphere. Therefore, if \eqref{equation vectors} admits a solution $V \in C^2(M)$, we must have
	$$
	\int_Mv(r)u_0^\p(r)d\mu = \int_{M} v \langle V_0 , dr \rangle \, d\mu = \int_{M} \langle V_0, \Div(L V) \rangle \, d\mu,
	$$
	and since $\int_Mv(r)u_0^\p(r)d\mu=|\Sp^{n-1}|\int_0^{\dM} v u_0^\p \theta \, dr$, we get
	$$
	\int_0^{\dM} v u_0^\p \theta \, dr = -\frac{1}{|\Sp^{n-1}|} \int_{M} \langle LV_0, L V \rangle \, d\mu = 0.
	$$
	The proof is completed.
\end{proof}
We have solved the vector equations with a symmetric $1$-form source. Thanks to this result and our calculations on the conformal killing operator in Proposition \ref{proposition 1-form W calculations}, we now reduce the full system \eqref{CE with null sigma}. Letting $v(r)$ be a radial function such that $(v, u_0)$ satisfies all assumptions in Definition \ref{definition F}, we set
\begin{equation} \label{def Fcal}
	\Fcal_v :=  \tfrac{4n}{n-1} \Big( f_v - \tfrac{n}{n-1} h F^\p_v - \tfrac{n}{n-1}\Ric F_v \Big)^2
	-  4 \big(\tfrac{n}{n-1}\big)^2 \big( F^\p_v \big)^2 \Big( h^\p + \frac{h^2}{n-1} + \Ric \Big),
\end{equation}
with $f_v$ and $F_v$ given by \eqref{def f} and Definition \ref{definition F} respectively. The following is our main result in this section.
\begin{theorem} \label{theorem general}
	Let $(M,g)$ be either the sphere $\Sp^n$, or a harmonic manifold with $\Ric \le 0$. Let $(V, \psi(r), \tau(r), \rho(r))$ be radial data set with 
	$$
	V \in C^0(\RR), \quad \psi(r) \in C^1(M), \quad \tau(r) \in \ C^1(M) \quad \text{and} \quad \rho(r) \in C^0(M).
	$$ 
	Given a positive radial function $\varphi(r) \in C_+^3 (M)$, we write
	\begin{equation} \label{def v}
		v =\tfrac{n-1}{n}\varphi^N\tau^\p-\rho\psi^\p.
	\end{equation}
	
	Then $\varphi$ is a solution to the conformal equations \eqref{CE}  if and only if $\varphi$ satisfies the Lichnerowicz-type equation
	\begin{equation}\label{RCE}
		-\tfrac{4(n-1)}{n-2}\varphi^{N+1}(\varphi^\pp+h\varphi^\p) + \mathcal{R}_\psi\varphi^{N+2} - \mathcal{B}_{\tau,\psi}\varphi^{2N} = \Fcal_v + \rho^2,
	\end{equation}
	with $\Fcal_v$ defined in \eqref{def Fcal}. Moreover, in this case, the $1-$form $W$ can be computed by \eqref{solution to vector equations}.
\end{theorem}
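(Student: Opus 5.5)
The plan is to use Proposition \ref{proposition resolve vector equations} to solve the vector equation explicitly. Proposition \ref{proposition 1-form W calculations}(c) then expresses $|LW|^2$ through $F_v$, and substituting into the Lichnerowicz equation gives \eqref{RCE}. Since $\tau,\psi$ are radial, the right-hand side of the vector equation in \eqref{CE with null sigma} is $\tfrac{n-1}{n}\varphi^N\tau^\p dr-\rho\psi^\p dr = v\,dr$, with $v$ as in \eqref{def v}. This $v$ is a bounded continuous radial function. Before using Proposition \ref{proposition resolve vector equations} I would check that $M$ satisfies \eqref{condition on u0}.
\begin{itemize}
\item If $\Ric\le0$, this is case (i).
\item For $\Sp^n$ we have $u_0=\cos r$. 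It has the unique zero $\rstar=\pi/2\neq\pi$, and $(\pi-r)h(r)=(n-1)(\pi-r)\cot r\to-(n-1)$, so case (ii) holds with $l=n-1$.
\end{itemize}
On the sphere, Proposition \ref{proposition resolve vector equations} also shows that the vanishing condition \eqref{vanishing condition sphere} is necessary whenever a $W$ exists. In both directions of the equivalence we may therefore assume it, so that $F_v$ is well defined.

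The core step is the identity $|LW|^2=\Fcal_v$ for $W=\tfrac{n}{n-1}F_v^\p dr$. With $u=\tfrac{n}{n-1}F_v^\p$, Proposition \ref{proposition properties of F} lets me replace $F_v^\pp$ by $f_v-hF_v^\p-\tfrac{n}{n-1}\Ric F_v$. This gives
\[
u^\p-\tfrac{1}{n-1}hu=\tfrac{n}{n-1}\Big(f_v-\tfrac{n}{n-1}hF_v^\p-\tfrac{n}{n-1}\Ric F_v\Big).
\]
Inserting this into Proposition \ref{proposition 1-form W calculations}(c) yields exactly \eqref{def Fcal}, since $\tfrac{n-1}{n}(\tfrac{n}{n-1})^2\cdot 4=\tfrac{4n}{n-1}$. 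For radial $\varphi$ on a harmonic manifold, $\Delta\varphi=\varphi^\pp+h\varphi^\p$. Multiplying the Lichnerowicz equation (with $\sigma=0$) by $\varphi^{N+1}>0$ shows that, for this $W$, it is equivalent to \eqref{RCE}. This proves the direction ``\eqref{RCE} $\Rightarrow$ $(\varphi,W)$ solves \eqref{CE}'' and the formula for $W$.

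For the converse, suppose $(\varphi,\widetilde W)$ solves \eqref{CE}. I need to show $L\widetilde W=LW$, where $W$ is the explicit solution above; then $|L\widetilde W|^2=\Fcal_v$ and \eqref{RCE} follows as before. The difference $Z=\widetilde W-W$ satisfies $\Div(LZ)=0$. On the compact sphere, integration by parts gives $\int_M|LZ|^2=-2\int_M\langle Z,\Div LZ\rangle=0$, so $LZ=0$ and $Z$ is a conformal Killing field, which does not change $LW$.

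The main obstacle is the non-compact case $\Ric\le0$, where this integration by parts needs boundary terms at infinity to vanish. I would first restrict to the natural class in which solutions are sought: $W$ radial of the form $u(r)dr$, or $W$ with the decay implicit in the function spaces. Within radial $1$-forms, $\Div(L(u\,dr))=0$ reduces by Proposition \ref{proposition 1-form W calculations}(b) to
\[
\Big(u^\p+hu+\tfrac{n}{n-1}\Ric\,\textstyle\int u\Big)^\p=0,
\]
a linear ODE. Regularity at the origin ($u(0)=0$) should force $u\equiv0$, or at least $L(u\,dr)=0$. If that fails, I would impose the standard decay at infinity so that the boundary term $\int_{\partial B_R}LZ(Z,\partial r)$ tends to zero and $LZ=0$ follows as in the compact case.
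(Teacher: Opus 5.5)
Your forward direction is the paper's proof. You take the explicit $W=\tfrac{n}{n-1}F_v'\,dr$ from Proposition~\ref{proposition resolve vector equations}, substitute $F_v''$ via Proposition~\ref{proposition properties of F} into Proposition~\ref{proposition 1-form W calculations}(c) to get $|LW|^2=\Fcal_v$, and multiply the Lichnerowicz equation by $\varphi^{N+1}$. For the converse the paper only writes ``(up to a CKVF)''. Your integration by parts $\int_{\Sp^n}|LZ|^2=-2\int_{\Sp^n}\langle Z,\Div LZ\rangle=0$ is a correct way to make that rigorous on the sphere.

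The non-compact converse does not work the way you expect.
\begin{itemize}
\item Differentiating \eqref{def u0} shows that $u=u_0'$ solves $u''+hu'+h'u+\tfrac{n}{n-1}\Ric\,u=0$, i.e.\ $\Div L(du_0)=0$, with $u_0'(0)=0$. The indicial roots at $r=0$ are $1$ and $-(n-1)$, so the regular radial kernel of $\Div L$ is spanned by $u_0'\,dr$ (by $r\,dr$ when $\Ric=0$). Regularity at the origin therefore never forces $u\equiv0$.
\item On $\HH^n$ and $\RR^n$ these kernel elements, $\sinh r\,dr$ and $r\,dr$, are conformal Killing. So $L\widetilde W=LW$ for every radial $\widetilde W$, and your radial reduction does succeed there, but for this reason and not because $u\equiv 0$.
\item On a harmonic manifold with $\Ric<0$ that is not a space form, $h'+\tfrac{h^2}{n-1}+\Ric\not\equiv0$ by Lemma~\ref{lemma inequality h}. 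Also $u_0'>0$ for $r>0$, since $(u_0'\theta)'=-\tfrac{n}{n-1}\Ric\,\theta u_0>0$. Proposition~\ref{proposition 1-form W calculations}(c) then gives $|L(u_0'dr)|^2\ge -4\big(h'+\tfrac{h^2}{n-1}+\Ric\big)(u_0')^2$, so $L(du_0)\not\equiv0$.
\item Consequently $W+c\,du_0$ solves the same vector equation but changes $|LW|^2$. The ``only if'' direction can hold there only after the admissible class of $W$ is fixed, which is what the paper's ``up to a CKVF'' tacitly assumes.
\end{itemize}

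Your decay fallback also cannot be applied to $Z=\widetilde W-W$ as it stands, because the explicit $W$ generally does not decay. On $\HH^n$, \eqref{def F nonpositive Ric} gives $F_v=C\cosh r+O(1)$ whenever $f_v$ is bounded, with $C=\int_0^\infty\frac{1}{u_0^2\theta}\int_0^s f_vu_0\theta$ generically nonzero. So $W$ contains the growing conformal Killing field $\tfrac{n}{n-1}C\sinh r\,dr$. You must subtract it, or impose the decay on $L\widetilde W$ instead of $\widetilde W$, before using injectivity of $\Div L$ on weighted spaces for non-radial $\widetilde W$.
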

\begin{proof}
	Since $(\psi(r), \rho(r), \tau(r), \varphi(r))$ are radial functions, the vector equations \eqref{veceq} can be rewritten
	\begin{equation} \label{vector equation conformal}
		\Div (LW) = v(r)dr.
	\end{equation}
	If $(M,g)$ is a harmonic manifold with $\Ric \le 0$, it follows from Proposition \ref{proposition sign of u0} that $u_0$ is positive for all $r \ge 0$. Therefore, thanks to Proposition \ref{proposition resolve vector equations}, (up to a CKVF) the $1$-form $W$ given by \eqref{solution to vector equations} is a solution of the vector equations.	Moreover, by Proposition \ref{proposition 1-form W calculations}(c), we can compute 
	$$
	|LW|^2 =  \tfrac{4n}{n-1} \Big(F^\pp_v - \frac{1}{n-1}hF^\p_v\Big)^2 - 4 \Big(\tfrac{n}{n-1} \Big)^2\Big( h^\p + \frac{h^2}{n-1} + \Ric \Big)\big(F^\p_v\big)^2.
	$$
	Since, by Proposition \ref{proposition properties of F},
	$$
	F^\pp_v =  f_v - h F^\p_v - \tfrac{n}{n-1}\Ric F_v,
	$$
	it follows by definition that
	$$
	|LW|^2 = \Fcal_v.
	$$
	Plugging this equality into the Lichnerowicz equation, we obtain that the conformal equations \eqref{CE with null sigma} is equivalent to the nonlinear single equation \eqref{RCE} as desired.
	
	\medskip
	
	In the case where $(M,g)$ is a round sphere, the process of the proof follows same way, the only difference being that we need to point out the role of the vanishing condition \eqref{vanishing condition sphere} in the theorem. In fact, it is clear by Proposition \ref{proposition resolve vector equations} that the condition \eqref{vanishing condition sphere} is necessary for the solvability of the vector equations, and hence, so is it for that of the conformal equations \eqref{CE with null sigma}.
	
	\medskip
	
	Conversely, once \eqref{vanishing condition sphere} holds, 
	$F_v$ is well-defined by Proposition \ref{proposition properties of F}. Therefore, the equivalence between \eqref{CE with null sigma} and \eqref{RCE} is also obtained similarly to arguments above. The proof is completed.
\end{proof}

\begin{remark}
	The Banach spaces used in the theorem are basic. In some concrete situations they should be strengthened dependently on which models we work. For instance, when we are interested in the AH or AF initial data setting,  we will use the weighted H\"{o}lder or Sobolev spaces, as we will see in Sections \ref{section hyperbolic} and \ref{section Euclidean}.
\end{remark}

%
%
\section{Solutions in the round sphere} \label{section Sphere}
%
%
In the previous section, we reduced the conformal equations to the Lichnerowicz-type \eqref{RCE} on a general harmonic manifold. A natural question to ask is, in concrete cases, whether we can completely solve \eqref{CE with null sigma} by addressing \eqref{RCE}. For answering this question, let us consider the problem in the simple models where 
$$ 
h^\p + \frac{h^2}{n-1} + \Ric = 0,
$$
that is $(M , g)$ is a Euclidean, Hyperbolic or spherical space (see Lemma \ref{lemma inequality h}). In this section, we will consider first the Lichnerowicz-type equation \eqref{RCE} on the round sphere, and then, in the next two sections, the remaining space cases will be treated.

\medskip

The main result of this section is the proof of Main Theorem \ref{main theorem} on the sphere, which allows us to construct a large class of explicit solutions and provides a necessary and sufficient condition for a radial positive function $\varphi$ to solve \eqref{RCE}. Building on this result, many properties of the conformal equations are derived, including the existence and nonexistence of solutions, as well as issues of stability and instability. The outcomes we obtained differ significantly from what is known for the conformal equations on compact manifolds without conformal Killing vector fields. This highlights the complexity of the system on compact models, where the behavior of solutions depends not only on the mean curvature $\tau$, but also on the underlying metric $g$. 

\medskip

The outline of this section is as follows. In Subsection \ref{subsection Lich-type on the sphere}, we present a precise statement of Main Theorem \ref{main theorem} on the sphere and provide its proof. In Subsections~\ref{subsection Non-existence on the sphere}--\ref{subsection stability and instability on the sphere}, some applications of this result will be obtained, that is the proofs of Main Theorems \ref{main theorem nonexistence near CMC}--\ref{main theorem instability}. Throughout the discussion, for the convenience of the reader, we will present the new results established on the sphere alongside the well-known results for the conformal equations on compact manifolds without CKVF, in order to highlight both the similarities and the differences between the two settings.
\subsection{Lichnerowicz-type equation on the sphere} \label{subsection Lich-type on the sphere}
Let $\Sp^n$ denote the round sphere of dimension $n$, with $n \ge 3$.  It follows that 
\begin{equation} \label{computing in sphere}
	\begin{gathered}
		\theta(r) = \sin^{n-1}(r), \quad h(r) = (n-1)\cot(r), \quad u_0(r) = \cos(r) \\
		\Ric = n-1, \quad \Scal = n(n-1).
	\end{gathered}
\end{equation}
As explained in Introduction, although $\Sp^n$ is basic, generating initial data on it is very challenging due to the presence of conformal Killing vector fields, which prevents the elliptic theories developed in \cite{HolstNagyTsogtgerel, Maxwellcompact, DahlGicquaudHumbert, NguyenFPT} from working in this case, particularly when $\tau$ is non-constant. To the best of the authors’ knowledge, the only known solutions to the conformal equations \eqref{CE} on $\Sp^n$ are the CMC ones obtained by R. Beig, P. Bizon and W. Simon \cite{BeigBizonSimon}, where the fact $d \tau = 0$ gives $LW = 0$ in the vector equations no matter whether the seed data has CKVF, and so the system \eqref{CE} becomes uncoupled.  In the aim of constructing non-CMC solutions to the conformal equations in $\Sp^n$, especially for the radial mean curvatures, thanks to Theorem \ref{theorem general}, we will study the Lichnerowicz-type equation \eqref{RCE} instead of the full system \eqref{CE}. 

\medskip

Before stating the main result of this section, we introduce some notations. From a radial data set $(V,\tau(r),\psi(r),\rho(r))\in C^0(\RR) \times RC^1(\Sp^n) \times RC^1(\Sp^n) \times RC^0(\Sp^n)$ and a radial function $\varphi(r) \in RC_+^3(\Sp^n)$, we define new radial functions $I_{V,\psi,\rho,\varphi}(r)$ and $S_{V,\psi,\rho,\varphi}(r)$ as follow:
\begin{equation} \label{I on sphere}
	I_{V,\psi,\rho,\varphi} := \tfrac{n}{n-1} \Big( 2 V(\psi) \varphi^{2N} + |\psi^\p|^2 \varphi^{N+2} + \rho^2 \Big),
\end{equation}
and
\begin{equation} \label{S on sphere}
	S_{V,\psi,\rho,\varphi} := N^2 (\varphi^\p)^2 - n^2 \varphi^2 + 2 n N \cot \varphi \varphi^\p + \frac{1}{\sin^n} \int_0^r \frac{I_{V,\psi,\rho,\varphi}}{\varphi^{2N}} (\varphi^N \sin^n)^\p  \, ds.
\end{equation}
The result playing a central role in this section is as follows.
\begin{theorem}[Solutions with freely specified conformal factor] \label{theorem existence on sphere}
	Consider a radial data set $(V,\tau(r),\psi(r),\rho(r))\in C^0(\RR) \times RC^1(\Sp^n) \times RC^1(\Sp^n) \times RC^0(\Sp^n)$ with $\rho \psi^\p \equiv 0$. For $\varphi(r) \in RC_+^3(\Sp^n)$ we have:
	\begin{enumerate}[(i)]
		\item $\varphi(r)$ is a solution to the conformal constraint equations \eqref{CE with null sigma} if and only if
		\begin{multline} \label{rce sphere}
			- 2N \varphi^{N + 1} (\varphi^\pp + h \varphi^\p ) + n^2\varphi^{N + 2} + \tau^2  \varphi^{2N} \\
			= \bigg( \frac{1}{\sin^n} \int_0^r \tau^\p \varphi^N  \sin^n \, ds \bigg)^2 + I_{V,\psi,\rho,\varphi}.
		\end{multline}
		
		\item If $\varphi(r)$ is a solution to \eqref{rce sphere}, then $S_{V,\psi,\rho,\varphi} \ge 0$.
		
		\item If $S_{V,\psi,\rho,\varphi}>0$ on $[0,\pi]$ and $(\varphi^N\sin^n(r))^\p\neq0$ a.e. on $[0,\pi]$, then $\varphi(r)$ is a solution to \eqref{rce sphere} if and only if the function $\tau$ satisfies
		$$
		\tau= \pm \frac{ S_{V,\psi,\rho,\varphi} + 2N  \big( \varphi^\pp + (n - 1) \cot \varphi^\p \big) \varphi - n^2 \varphi^2 + I_{V,\psi,\rho,\varphi} \varphi^{-N}}{2\sqrt{\varphi^N S_{V,\psi,\rho,\varphi}}}.
		$$
	\end{enumerate}
	Moreover, in this case, the $1-$form $W$ can be computed by \eqref{solution to vector equations} with $v=\tfrac{n-1}{n}\varphi^N\tau^\p$.
\end{theorem}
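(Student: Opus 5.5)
The plan is to prove (i) by specializing Theorem \ref{theorem general} to $\Sp^n$, and to obtain (ii) and (iii) from one algebraic identity expressing $S_{V,\psi,\rho,\varphi}$ as a perfect square along solutions.

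\textbf{Step 1: part (i).} On $\Sp^n$ we have $h^\p+\frac{h^2}{n-1}+\Ric=0$, so by \eqref{def Fcal}
\[
\Fcal_v=\tfrac{4n}{n-1}Q^2,\qquad Q:=f_v-n\cot(r)\,F_v^\p-nF_v .
\]
Using \eqref{F identity}, written as $F_v^\pp+(n-1)\cot F_v^\p+nF_v=f_v$, a direct differentiation should give $(Q\sin^n)^\p=\tfrac12 v\sin^n$. Proposition \ref{proposition properties of F} gives $Q\sin^n\to0$ at $r=0$, so $Q=\frac{1}{2\sin^n}\int_0^r v\sin^n\,ds$. Take $v=\frac{n-1}{n}\varphi^N\tau^\p$, which is legitimate since $\rho\psi^\p\equiv0$. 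Then $\tfrac{n}{n-1}\Fcal_v=Y^2$ with
\[
Y:=\frac{1}{\sin^n}\int_0^r\tau^\p\varphi^N\sin^n\,ds .
\]
Multiplying \eqref{RCE} by $\frac{n}{n-1}$ and inserting $\Scal=n(n-1)$ and the definitions of $\mathcal R_\psi$, $\mathcal B_{\tau,\psi}$ and $I_{V,\psi,\rho,\varphi}$ then yields \eqref{rce sphere}. Recall that the vanishing condition \eqref{vanishing condition sphere} is required by Theorem \ref{theorem general}; it is equivalent to $Y\sin^n\to0$ at $\pi$, and I keep track of it in Step 3.

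\textbf{Step 2: the key identity, giving (ii).} Set $A:=I_{V,\psi,\rho,\varphi}+2N\varphi^{N+1}(\varphi^\pp+h\varphi^\p)-n^2\varphi^{N+2}$, so that \eqref{rce sphere} reads $\tau^2\varphi^{2N}-Y^2=A$. Rearranging the formula in (iii) suggests that for a solution
\begin{equation*}
S_{V,\psi,\rho,\varphi}=\varphi^{-N}\big(\tau\varphi^N-Y\big)^2 .
\end{equation*}
A plus sign instead of the minus sign in front of $Y$ may be needed; this accounts for the $\pm$ in (iii). I would prove the identity as follows.
\begin{enumerate}[(a)]
\item Define $E:=\varphi^{-N}(\tau\varphi^N-Y)^2-S_{V,\psi,\rho,\varphi}$.
\item Compute $(E\,\varphi^N\sin^n)^\p$, or a similar weighted derivative, using $(Y\sin^n)^\p=\tau^\p\varphi^N\sin^n$ and $\tau^2\varphi^{2N}=Y^2+A$.
\item The integral term in $S_{V,\psi,\rho,\varphi}$, with integrand $I\varphi^{-2N}(\varphi^N\sin^n)^\p$, is designed exactly to absorb the $I$-contribution, and the remaining terms $N^2(\varphi^\p)^2-n^2\varphi^2+2nN\cot\varphi\varphi^\p$ absorb the $\varphi$-derivative terms. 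The expected outcome is that the weighted derivative of $E$ vanishes.
\item Near $0$, $\sin^n\to0$ while $E$ stays bounded, so $E\equiv0$.
\end{enumerate}
This gives $S_{V,\psi,\rho,\varphi}\ge0$, which is (ii).

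\textbf{Step 3: part (iii).} If $\varphi$ solves \eqref{rce sphere}, Step 2 gives $\tau\varphi^N-Y=\pm\sqrt{\varphi^NS}$. Since $S>0$ the sign is constant on $[0,\pi]$. Substituting $Y=\tau\varphi^N\mp\sqrt{\varphi^NS}$ into $\tau^2\varphi^{2N}-Y^2=A$ and solving the resulting linear equation in $\tau$ gives the stated formula.

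Conversely, let $\tau$ be given by the formula and set $\tilde Y:=\tau\varphi^N\mp\sqrt{\varphi^N S}$. The formula is equivalent to $\tau^2\varphi^{2N}-\tilde Y^2=A$, so it suffices to show $\tilde Y=Y$, i.e. $(\tilde Y\sin^n)^\p=\tau^\p\varphi^N\sin^n$ with $\tilde Y\sin^n\to0$ at $0$. Reversing the computation of Step 2 and differentiating the definition of $S$ produces this ODE multiplied by a factor $(\varphi^N\sin^n)^\p$. That factor is nonzero a.e., so the ODE holds a.e., and by continuity everywhere. Since $\tilde Y$ is bounded, $\tilde Y\sin^n\to0$ at $\pi$, which is exactly \eqref{vanishing condition sphere}. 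Part (i) then shows that $\varphi$ solves \eqref{CE with null sigma}, and $W$ is obtained from \eqref{solution to vector equations}.

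\textbf{Main obstacle.} The hard step is the bookkeeping in Step 2 (and its reversal in Step 3): choosing the right weight so that the derivative of $E$ collapses, and fixing the sign convention. My first attempt is to differentiate $\sin^n\varphi^N S$ and $\sin^n(\tau\varphi^N-Y)^2$ separately and compare term by term. Regularity of $\sqrt{S}$ in the converse is harmless because $S>0$.
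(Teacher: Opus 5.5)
Your proposal is correct, and its core is the paper's proof. Since $\int_0^r\tau(\varphi^N\sin^n)'\,ds=\sin^n(\tau\varphi^N-Y)$, your identity $S=\varphi^{-N}(\tau\varphi^N-Y)^2$ (writing $S$ for $S_{V,\psi,\rho,\varphi}$) is exactly \eqref{eq.5a3}. The paper obtains it by the same differentiate-and-integrate-from-$0$ argument.

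Two corrections settle your hesitations. First, the sign is the minus one; the $\pm$ in (iii) comes only from taking the square root. Second, the weight that makes the derivative collapse is $\sin^n$, not $\varphi^N\sin^n$. With $P=\varphi^N\sin^n$ and $(\sin^nS)'=\varphi^{-2N}P'A$ (this is \eqref{derivative of S}) one gets
\[
(\sin^n E)'=\varphi^{-2N}P'\big(\tau^2\varphi^{2N}-Y^2-A\big),
\]
which is \eqref{eq.5a2}. Your first choice $PE$ only gives the linear ODE $(PE)'=N\frac{\varphi'}{\varphi}PE$, which still forces $E\equiv0$, but less directly.

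There are two genuine differences from the paper.

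\emph{Computing $\mathcal{F}_v$.} You use the ODE $(Q\sin^n)'=\frac12 v\sin^n$, which needs only \eqref{F identity} and the behaviour at $0$. The paper instead inserts the piecewise formula for $F_v$ and integrates by parts. That formula contains the term $\frac{u_0'}{u_0}F_v$, which is singular at $r_\star=\pi/2$, and your route avoids this point entirely.

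\emph{Part (iii).} The paper takes the square root of \eqref{eq.5a3}, differentiates and divides by $P'$. It also divides by $P'$ to pass from \eqref{eq.5a2} back to the equation. This is where its hypothesis $P'\neq0$ a.e. is used. Your forward direction is purely algebraic. In your converse, $P'$ does not actually multiply the ODE: the formula for $\tau$ gives $(\sqrt{P\sin^nS})'=\pm\tau P'$ identically, so $(\tilde Y\sin^n)'=\tau'P$ holds at every point. Your argument therefore proves (iii) without the a.e. hypothesis. Boundedness of $\tilde Y$ near $\pi$ follows because the formula together with boundedness of $\tau$ forces $S$ to be bounded.

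Finally, for the reverse implication in (i) outside the setting of (iii), note that \eqref{rce sphere} itself forces \eqref{vanishing condition sphere}. Its left-hand side is bounded, hence so is $Y$, and therefore $Y\sin^n\to0$ at $\pi$.
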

\begin{proof}
	Following Subsection \ref{subsection primary functions}, let us consider $f_v$, $F_v$ and $\Fcal_v$ given by \eqref{def f}, \eqref{def F compact} and \eqref{def Fcal} respectively, where  
	$$
	v =\tfrac{n-1}{n}\varphi^N\tau^\p - \rho \psi^\p.
	$$ 
	Thanks to Theorem \ref{theorem general}, we keep in mind that a radial positive function $\varphi$ is a solution to the conformal equations \eqref{CE with null sigma} if and only if it solves the Lichnerowicz-type equation \eqref{RCE}.
	
	\medskip
	
	In this regards, recall that since the geodesic spheres of $\Sp^n$ are totally umbilical, Riccati equation gives 
	$$
	h^\p(r)+\frac{h^2(r)}{n-1}+\Ric=0.
	$$ 
	So, we may rewrite $\Fcal_v$ as
	\begin{equation} \label{Fcal sphere}
		\Fcal_v =  \tfrac{4n}{n-1} \Big(  f_v - \tfrac{n}{n-1} h F^\p_v - n F_v \Big)^2.
	\end{equation}
	By equation \eqref{def F compact} defining the function $F_v$, and from \eqref{computing in sphere} we have
	\begin{align*}
		h F^\p_v & = h\frac{u_0^\p}{u_0}F_v +  \frac{h}{u_0 \theta } \int_0^r f_v u_0 \theta \, ds \\
		& =  (n - 1) \bigg( \frac{1}{n \sin^n} \int_0^r f_v \big(\sin^n\big)^\p \, ds - F_v \bigg),
	\end{align*}
	therefore, integrating by part the integral term on the right-hand side, we get  
	$$
	h F^\p_v = (n - 1) \bigg( \frac{f_v}{n} - \frac{1}{2 n \sin^n} \int_0^r v \sin^n \, ds - F_v \bigg).
	$$
	Taking into \eqref{Fcal sphere}, it follows that
	\begin{equation} \label{|LW| in sphere}
		\Fcal_v =  \tfrac{n-1}{n} \bigg(  \frac{1}{\sin^n} \int_0^r \bigg(  \varphi^N \tau^\p - \tfrac{n}{n-1}\rho \psi^\p \bigg) \sin^n \, ds \bigg)^2.
	\end{equation}
	Hence, the Lichnerowicz-type equation \eqref{RCE} becomes
	\begin{multline*}
		- \tfrac{4(n - 1)}{n - 2} \varphi^{N + 1} (\varphi^\pp + h \varphi^\p ) + \big(n(n-1) - |\psi^\p|^2 \big) \varphi^{N + 2} + \bigg(\tfrac{n-1}{n} \tau^2 - 2V(\psi)\bigg) \varphi^{2N} \\
		= \tfrac{n-1}{n} \bigg( \frac{1}{\sin^n} \int_0^r \bigg(  \varphi^N \tau^\p - \tfrac{n}{n-1}\rho \psi^\p \bigg) \sin^n \, ds \bigg)^2 + \rho^2.
	\end{multline*}
	Now, since $\rho \psi^\p \equiv 0$, by the definition of $I_{V,\psi,\rho,\varphi}$, the equation is rewritten as
	$$
	- 2N \varphi^{N + 1} (\varphi^\pp + h \varphi^\p ) + n^2\varphi^{N + 2} + \tau^2  \varphi^{2N} = \bigg( \frac{1}{\sin^n(r)} \int_0^r \tau^\p \varphi^N  \sin^n \, ds \bigg)^2 + I_{V,\psi,\rho,\varphi},
	$$
	thus proving point $(i)$.
	
	\medskip
	
	Continuing to integrate by parts the integral term on the right-hand side, we get
	\begin{multline*}
		- 2N \varphi^{N + 1} (\varphi^\pp + h \varphi^\p ) + n^2 \varphi^{N + 2} + \tau^2 \varphi^{2N} \\
		= \Big( \tau \varphi^N - \frac{1}{\sin^n} \int_0^r \tau \big( \varphi^N \sin^n \big)^\p \, ds \Big)^2 + I_{V,\psi,\rho,\varphi},
	\end{multline*}
	or equivalently
	\begin{multline} \label{eq.5a1}
		\frac{2 \tau \varphi^N}{\sin^n} \bigg(\int_0^r  \tau \big( \varphi^N \sin^n \big)^\p \, ds \bigg) - \frac{1}{\sin^{2n}} \bigg( \int_0^r  \tau \big( \varphi^N \sin^n \big)^\p \, ds \bigg)^2 \\
		= 2N \varphi^{N + 1} (\varphi^\pp + h \varphi^\p ) - n^2 \varphi^{N + 2} 
		+ I_{V,\psi,\rho,\varphi}.
	\end{multline}
	Multiplying \eqref{eq.5a1} by $\frac{( \varphi^N \sin^n)^\p}{\varphi^{2N}}$,  it follows that
	\begin{multline} \label{eq.5a2}
		\bigg(\frac{1}{\varphi^N \sin^n} \Big( \int_0^r  \tau \big( \varphi^N \sin^n\big)^\p \, ds \Big)^2 \bigg)^\p
		= \big( \varphi^N \sin^n \big)^\p \Big( 2N \big( \varphi^\pp + (n - 1)\cot \varphi^\p \big) \varphi^{- N + 1} \\
		- n^2 \varphi^{- N + 2} + I_{V,\psi,\rho,\varphi} \varphi^{-2N} \Big).
	\end{multline}	
	Now, since
	$$
	\lim_{r \to 0}  \bigg(\frac{1}{\varphi^N(r) \sin^n (r)} \Big( \int_0^r  \tau \big( \varphi^N \sin^n (s) \big)^\p \, ds \Big)^2 \bigg) = 0,
	$$
	the equation \eqref{eq.5a2} is equivalent to
	\begin{multline} \label{eq 5b}
		\Big( \int_0^r  \tau \big( \varphi^N \sin^n \big)^\p \, ds \Big)^2 =  \varphi^N \sin^n \int_0^r \big( \varphi^N \sin^n \big)^\p \Big( 2N  \big( \varphi^\pp + (n - 1) \cot \varphi^\p \big) \varphi^{- N + 1} \\
		- n^2 \varphi^{- N + 2} + I_{V,\psi,\rho,\varphi} \varphi^{-2N} \Big) \, ds.
	\end{multline}
	To simplify the right-hand side, we check at once that
	\begin{multline} \label{derivative of S}
		\big( \sin^n S_{V,\psi,\rho,\varphi} \big)^\p = \big( \varphi^N \sin^n \big)^\p \bigg( 2N  \big( \varphi^\pp + (n - 1) \cot \varphi^\p \big) \varphi^{- N + 1} \\
		- n^2 \varphi^{- N + 2} + I_{V,\psi,\rho,\varphi} \varphi^{-2N} \bigg).
	\end{multline}
	Taking this into account, Equation \eqref{eq 5b} becomes
	\begin{equation} \label{eq.5a3}
		\Big( \int_0^r  \tau \big( \varphi^N \sin^n \big)^\p \, ds \Big)^2	
		=  \varphi^N \sin^{2n} S_{V,\psi,\rho,\varphi}.
	\end{equation}
	Therefore, $S_{V,\psi,\rho,\varphi} \ge 0$ is a necessary condition for existence of solution, proving point $(ii)$.
	
	In order to prove $(iii)$, assume $S_{V,\psi,\rho,\varphi}>0$ and $(\varphi^N\sin^n)^\p\neq0$ a.e.. Following the previous computation we have that equations \eqref{eq.5a1} and \eqref{eq.5a2} are equivalent and $\varphi(r)$ is a solution if and only if equation \eqref{eq.5a3} holds. Since $\varphi^N \sin^{2n}S_{V,\psi,\rho,\varphi}\neq0$ on $\oo{0,\pi}$, we have that $\int_0^r  \tau \big( \varphi^N \sin^n (s) \big)^\p \, ds$ does not vanish and has constant sign. Taking the square root we get
	\[
	\pm\int_0^r\tau(s)\big(\varphi^N\sin^n\big)^\p(s)\,ds = \sqrt{\varphi(r)^N\sin^{2n}(r)S_{V,\psi,\rho,\varphi}(r)}.
	\]
	Taking the derivative with respect to $r$ we obtain
	\begin{align*}
		\tau & = \pm \frac{\bigg(\sqrt{\big(\varphi^N \sin^n \big) \big(\sin^n S_{V,\psi,\rho,\varphi} \big)} \bigg)^\p}{\big(\varphi^N \sin^n \big)^\p} \\
		& =\pm  \frac{\big(\varphi^N \sin^n \big)^\p \big(\sin^n S_{V,\psi,\rho,\varphi} \big) + \big(\varphi^N \sin^n\big) \big(\sin^n S_{V,\psi,\rho,\varphi} \big)^\p}{2\big(\varphi^N \sin^n\big)^\p \sin^n\sqrt{\varphi^N S_{V,\psi,\rho,\varphi}}}.	
	\end{align*}
	Finally, using \eqref{derivative of S}, we get
	\[
	\tau = \pm \frac{S_{V,\psi,\rho,\varphi} + 2N  \big( \varphi^\pp + (n - 1) \cot \varphi^\p \big) \varphi - n^2 \varphi^2 + I_{V,\psi,\rho,\varphi} \varphi^{-N}}{2\sqrt{\varphi^N S_{V,\psi,\rho,\varphi}}}.
	\]
	The proof is competed.
\end{proof}
\begin{remark}
	It is not surprising that the relation between $\varphi$ and $\tau$ be only determined up to the sign. In fact, in the constraint equations \eqref{constraint}, $\hat{k}$, $\hat{\rho}$ and $\tau=\tr(\hat{k})$ depend on the chosen orientation in time: if $(\hat{g},\hat{k},\hat{\rho})$ is a solution, then so is $(\hat{g},-\hat{k},-\hat{\rho})$. Considering the conformal approach we get: if $(\varphi,W)$ is a solution with data $(\tau,\rho)$ (and $\sigma=0$), then $(\varphi,-W)$ is a solution with data $(-\tau,-\rho)$. In our setting of radial solutions with radial data $\tau(r),\rho(r)$, this reflected by the fact that $F_{-v}=-F_v$ (for the expression of $W$) and $\Fcal_{-v}=\Fcal_v$ (for the equation of $\varphi$). Therefore it is natural that the relation we find between $\varphi$ and $\tau$ works for both couple $(\varphi,\tau)$ and $(\varphi,-\tau)$.
\end{remark}

\begin{remark}\label{remark vanishing condition on Is}
	We have shown that \eqref{vanishing condition sphere} is a necessary condition for existence of solutions to the Lichnerowicz-type equation \eqref{RCE}, and so \eqref{rce sphere}, on the sphere. Another vanishing condition for existence of solutions to the equation is
	$$
	\int_0^\pi \frac{I_{V,\psi,\rho,\varphi}}{\varphi^{2N}} (\varphi^N \sin^n(s))^\p  = 0.
	$$
	In fact, assume that it is not true. Then the condition $S_{V,\psi,\rho,\varphi} \ge 0$ give us that
	$$
	\int_0^\pi \frac{I_{V,\psi,\rho,\varphi}}{\varphi^{2N}} (\varphi^N \sin^n(s))^\p > 0,
	$$
	and hence 
	$$
	\lim_{r \to \pi} \bigg( \frac{1}{\sin^n(r)} \int_0^r \frac{I_{V,\psi,\rho,\varphi}}{\varphi^{2N}} (\varphi^N \sin^n(s))^\p  \, ds \bigg) = +\infty.
	$$
	However, taking this limit into the expression of $\tau$ by $(\rho, \psi, V, \varphi)$ above, it follows that $|\tau(\pi)| = +\infty$, which is a contradiction.
\end{remark}
%
%
%
\subsection{Non-existence of solutions} \label{subsection Non-existence on the sphere}
Let us consider the Equations \eqref{CE} on the sphere with $(V, \psi, \sigma)$ zero everywhere, that is
\begin{subequations}\label{CE with null sigma in sphere}
	\begin{align}
	- \tfrac{4(n-1)}{n-2} \Delta \varphi + \Scal_g \varphi  
	& = - \tfrac{n-1}{n} \tau^2 \varphi^{N-1}  + \frac{|LW|^2+\rho^2}{\varphi^{N + 1}} \\
	\Div(L W) & = \tfrac{n-1}{n} \varphi^N d\tau.
	\end{align}
\end{subequations}
As we have mentioned in the introduction, we can heuristically think of $\rho$ as the norm of the TT-tensor $\sigma$ in the vacuum case studied in \cite{IsenbergMoncrief, DahlGicquaudHumbert, Maxwellcompact, HolstNagyTsogtgerel}, where $(M,g)$ is a compact manifold with no CKVF. For that, according to the existence results proven in \cite{IsenbergMoncrief} and \cite{HolstNagyTsogtgerel, Maxwellcompact, DahlGicquaudHumbert}, it is easy to show that as long as $(M,g)$ has no CKVF, such an associated system \eqref{CE with null sigma in sphere} will admit a solution if one of following conditions holds:
\begin{itemize}
	\item Near--CMC existence: $\rho \ne 0$ and $\frac{\max|d\tau|}{\min \tau}$ is small,
	\item Small time-derivative existence: $\Scal_g > 0$, $\rho \ne 0$ and $\rho$ is small.
\end{itemize}
However, in the presence of the CKVF, these well-known results are no longer true in general as we will see in the following result. 
\begin{theorem}[Non-existence of solution in the near-CMC regime] \label{theorem non-existence near CMC}
	Let $\tau(r)\in RC^1(\Sp^n)$ be radial function satisfying $\tau^\p \not \equiv 0$ and  $\tau^\p \ge 0$. Let $C>0$ be a large constant such that $\frac{\max|d \tau|}{\min \tau +C}$ is sufficiently small.
	
	Then for any $\rho(r)\in RC^0(\Sp^n)$, the conformal equations \eqref{CE with null sigma in sphere} associated with $(\rho(r), \tau(r) + C)$ have no solution.
\end{theorem}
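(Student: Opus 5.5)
The plan is to show that the obstruction coming from the conformal Killing field of $\Sp^n$ already rules out every solution, whatever the value of $C$ and whatever $\rho$. The smallness of $\frac{\max|d\tau|}{\min\tau + C}$ will play no role: adding $C$ does not change $d\tau$, and the obstruction depends only on $d\tau$ and $\varphi>0$.

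First I recall, from the proof of Proposition \ref{proposition resolve vector equations}, that $V_0 := u_0^\p\, dr = -\sin(r)\,dr$ is a conformal Killing 1-form on $\Sp^n$, so $LV_0 = 0$. Suppose $(\varphi, W)$ solves \eqref{CE with null sigma in sphere} with data $(\rho, \tau + C)$. Here $\varphi>0$ is not assumed radial, and $W$ is regular enough (say $C^2$, or $W^{2,p}$) to integrate by parts on the closed manifold $\Sp^n$. Pairing the vector equation with $V_0$ and integrating by parts, using the symmetry of $L^*L$, gives
\[
\tfrac{n-1}{n}\int_{\Sp^n} \varphi^N \langle d(\tau + C), V_0\rangle \, d\mu = \int_{\Sp^n}\langle \Div(LW), V_0\rangle\, d\mu = -\tfrac12\int_{\Sp^n}\langle LW, LV_0\rangle\, d\mu = 0 .
\]
Since $d(\tau+C) = \tau^\p\, dr$, the left-hand side equals
\[
-\tfrac{n-1}{n}\int_{\Sp^n} \varphi^N \tau^\p(r)\sin(r)\, d\mu .
\]

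Next I check that this integral cannot vanish. By assumption $\tau^\p \ge 0$, and $\sin r > 0$ on $(0,\pi)$, so the integrand $\varphi^N\tau^\p\sin r$ is nonnegative. Because $\tau^\p\not\equiv 0$ and $\tau^\p$ is continuous, $\tau^\p$ is strictly positive on some interval of $(0,\pi)$. The integrand is therefore positive on an open annular region of positive measure, and the integral is strictly positive, which contradicts the identity above. For radial $\varphi$ this is exactly the vanishing condition \eqref{vanishing condition intro}, $\int_0^\pi \tau^\p\varphi^N\sin^n\,ds = 0$, which Proposition \ref{proposition resolve vector equations} already shows is necessary. The pairing argument above extends it to arbitrary positive $\varphi$.

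The only delicate point is the integration by parts, namely the regularity of $W$ and the identity $\int\langle\Div LW, V_0\rangle = -\frac12\int\langle LW, LV_0\rangle$ with the normalisation of \eqref{killing}. The constant factor is irrelevant, since $LV_0=0$ makes the right-hand side vanish anyway. Finally, I will remark that the conclusion holds for every $C$ and every $\rho$, which is stronger than the statement and explains why near-CMC existence fails in the presence of conformal Killing fields.
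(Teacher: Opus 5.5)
Your proof is correct, and it takes a different and stronger route than the paper's.

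\textbf{The paper's route.} The paper first invokes the Isenberg--Moncrief near-CMC uniqueness argument to conclude that a solution, if one exists, is unique. This is the only place where the smallness of $\frac{\max|d\tau|}{\min\tau+C}$, and hence the largeness of $C$, is used. Rotational invariance of $\Delta$ and $\Div L$ then forces that solution to be radial. For radial $\varphi$, the necessary condition \eqref{vanishing condition sphere} of Proposition \ref{proposition resolve vector equations}, namely $\int_0^\pi \tau^\p\varphi^N\sin^n\,dr=0$, contradicts $\tau^\p\ge 0$, $\tau^\p\not\equiv 0$.

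\textbf{Your route.} You apply the conformal Killing obstruction directly to an arbitrary, possibly non-radial, solution. Pairing the vector equation with $d(\cos r)=-\sin r\,dr$ yields $\int_{\Sp^n}\varphi^N\tau^\p\sin r\,d\mu=0$ for every positive $\varphi$. The paper's proof of necessity in Proposition \ref{proposition resolve vector equations} is exactly this pairing, just specialized to radial data.

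\textbf{What each approach buys.} Your argument removes both the external uniqueness result and the symmetry reduction. It also shows that the hypotheses on $C$ are superfluous: there is no solution for any constant $C$ (including $C=0$) and any $\rho$. It would even persist with a nonzero TT-tensor $\sigma$, since $\sigma$ does not enter the vector equation.

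The paper's route has the merit of staying inside the radial framework it develops and reusing its vanishing condition. However, it is conditional on near-CMC uniqueness. Your version makes clear that the nonexistence is purely a CKVF obstruction, independent of the near-CMC regime.
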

\begin{proof}
	Since  $\frac{\max|d \tau|}{\min \tau +C}$ is small, by the same arguments used for the uniqueness of solutions as in Isenberg--Moncrief \cite{IsenbergMoncrief}, we get that the conformal equations \eqref{CE with null sigma in sphere} associated with  $(\rho, \tau + C)$  have at most one solution $\varphi > 0$. Therefore, since Laplace's operator $\Delta$ and the conformal one $\Div(L)$ are invariant under rotations, it follows that as long as $(\rho, \tau)$ are radial, if a solution $\varphi$ to \eqref{CE with null sigma in sphere} exists, then it must be radial, otherwise, any rotation of $\varphi$ is also a solution, and so the set of solutions is infinite, a contradiction. Consequently, as we have shown above, existence of solutions to \eqref{CE with null sigma in sphere} in this case will lead to existence of solutions to the Lichnerowicz-type equation \eqref{rce sphere} associated with the data set $(V \equiv 0, \psi \equiv 0, \rho(r), \tau(r) + C)$. In particular, by Proposition \ref{proposition resolve vector equations}, we deduce from the vanishing condition \eqref{vanishing condition sphere} on the sphere that if such a radial solution $\varphi$ exists, we must have 
	$$
	\int_0^\pi (\tau + C)^\p \varphi^N \sin^n(r) dr = 0.
	$$
	However, since $\tau^\p \not \equiv 0$ and $\tau^\p \ge 0$, we get 
	$$
	\int_0^\pi (\tau + C)^\p \varphi^N \sin^n(r) dr = \int_0^\pi \tau^\p \varphi^N \sin^n(r) dr > 0
	$$
	for all $\varphi > 0$. Therefore, we conclude that the Lichnerowicz-type equation \eqref{rce sphere} has no solution, and so, neither do the conformal equations \eqref{CE with null sigma in sphere}. The proof is completed.
\end{proof}

We next discuss the possibility of the existence of solutions to the vacuum conformal equations with the TT-tensor $\sigma \equiv 0$, that is the Equations \eqref{vacuum CE null sigma}. This question was posed first by D. Maxwell in \cite{Maxwell: Modelproblem} and appears to have a major impact on the solvability of \eqref{CE} in general as proven in \cite{NguyenNonexistence}. When $(M,g)$ has no CKVF, it is well-known that \eqref{vacuum CE null sigma} admit no solution in the CMC and near--CMC regimes. However, in the far--from--CMC case, an example of the existence of solutions to \eqref{vacuum CE null sigma} is given by the second author in \cite{NguyenNonexistence}. More precisely, the result states that as long as $(M,g)$ is compact and has no CKVF, if $\tau > 0$ satisfies the inequality
\begin{equation} \label{non-generic condition}
	\bigg | L\bigg(\frac{d\tau}{\tau}\bigg) \bigg| \le c \bigg |\frac{d\tau}{\tau}\bigg|^2
\end{equation}
with some constant $c>0$ depending only on $g$, then for all sufficiently large constant $a > 0$,  the vacuum equations \eqref{vacuum CE null sigma} associated with $\tau^a$ have at least one solution. Motivated by the models in \cite{NguyenNonexistence}, J. Dilts, M. Holst, T.~Kozareva and D. Maxwell \cite{DiltsHolstKozarevaMaxwell} obtain numerical results which suggest that existence of solutions to vacuum Equations \eqref{CE} with null $\sigma$ may still hold for a large class of far-from-CMC data sets, without the non-generic inequality \eqref{non-generic condition}. 

\medskip

In the following result, we will give a different perspective on the conformal equations on the sphere. In fact, in the presence of CKVF makes the above results are no longer valid. Otherwise, it would lead to an unexpected situation where the set of solutions to the vacuum conformal equations with null $\sigma$ is infinite. To see this, we first remark that the set of all radial mean curvatures $\tau$ on the sphere well covers the non-generic inequality considered in \cite{NguyenNonexistence, NguyenProgress}, that is, there exists a class of radial mean curvatures satisfying \eqref{non-generic condition}. A simple example of this is $\tau_0 = \cos(r) + 2$. By straightforward calculations, since 
$$
 h^\p(r) + \frac{h(r)^2}{n-1} + \Ric(\dr,\dr) = 0,
$$
it follows from Proposition \ref{proposition 1-form W calculations}(c) that
\begin{align*}
	\bigg|L\bigg(\frac{d\tau_0}{\tau_0}\bigg) \bigg|  &= \bigg|L\bigg(\frac{\tau_0^\p}{\tau_0} dr \bigg) \bigg|
	\\
	& = 2 \sqrt{\tfrac{(n-1)}{n}}\bigg|\frac{\tau_0^\pp}{\tau_0} - \bigg(\frac{\tau_0^\p}{\tau_0} \bigg)^2 - \cot(r) \frac{\tau_0^\p}{\tau_0}\bigg|
	\\
	&= 2 \sqrt{\tfrac{(n-1)}{n}} \bigg(\frac{\tau_0^\p}{\tau_0} \bigg)^2.
\end{align*}
Therefore, the inequality \eqref{non-generic condition} automatically holds for $\tau_0$ provided that $c \ge 2$. We now prove the non-existence of radial solutions to the vacuum conformal equations with $\sigma \equiv 0$ as follows.
\begin{theorem}[Non-existence of spherically symmetric solution] \label{theorem no solution without TT-tensor}
	For any $\tau(r) \in RC^1(\Sp^n)$, the vacuum conformal equations \eqref{vacuum CE null sigma} have no radial solution. In particular, if a solution to \eqref{vacuum CE null sigma} exists, the set of solutions is infinite.
\end{theorem}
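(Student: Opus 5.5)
We will argue by contradiction using Theorem \ref{theorem existence on sphere} with $V\equiv0$, $\psi\equiv0$, $\rho\equiv0$. In this case $I_{V,\psi,\rho,\varphi}\equiv0$, so
\[
S_{0,0,0,\varphi}=N^2(\varphi^\p)^2-n^2\varphi^2+2nN\cot\varphi\varphi^\p .
\]
Suppose $\varphi(r)\in RC^3_+(\Sp^n)$ is a radial solution of \eqref{vacuum CE null sigma}. Then, by Theorem \ref{theorem existence on sphere}(i) and its proof, identity \eqref{eq.5a3} holds, i.e. $\varphi^N\sin^{2n}S\ge0$. Hence $S\ge0$ on $(0,\pi)$ by point (ii).

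The key step is to show that this inequality fails near the poles. At $r=0$ we have $\varphi^\p(0)=0$ and $\varphi^\p(r)=\varphi^\pp(0)r+o(r)$, so $\cot(r)\varphi^\p(r)\to\varphi^\pp(0)$. Therefore
\[
S(0)=-n^2\varphi(0)^2+2nN\varphi(0)\varphi^\pp(0).
\]
Showing directly that this is negative is not obvious, since $\varphi^\pp(0)$ could be positive. A cleaner route is to rewrite $S$ via $\psi:=\varphi^N\sin^n$. Since $N\varphi^\p/\varphi=\psi^\p/\psi-n\cot$, we get
\[
S/\varphi^2=(\psi^\p/\psi)^2-n^2\cot^2-n^2=(\psi^\p/\psi)^2-n^2/\sin^2 .
\]
So $S\ge0$ is equivalent to $|\psi^\p|\ge n\psi/\sin$, that is, $\bigl|(\varphi^N)^\p\sin+n\cos\varphi^N\bigr|\ge n\varphi^N$. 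Writing $w=\varphi^N>0$, this reads $|w^\p\sin+nw\cos|\ge nw$ on $(0,\pi)$.

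At $r=0$ the left side equals $nw(0)$, so there is no immediate contradiction, and I need a global argument. The quantity $A:=w^\p\sin+nw\cos$ is continuous, equals $nw(0)>0$ at $0$ and $-nw(\pi)<0$ at $\pi$. By the intermediate value theorem, $A$ vanishes somewhere in $(0,\pi)$, where the inequality $|A|\ge nw>0$ fails. This is the contradiction, so no radial solution exists. I expect this to be the crux, and it is clean once $S$ is rewritten in terms of $\psi$.

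For the second assertion, suppose a solution $\varphi$ exists for radial $\tau$. By the above it is not radial. The equations \eqref{vacuum CE null sigma} are invariant under the rotations of $\Sp^n$ fixing the center, since $\Delta$, $\Div L$, $\Scal_g$ and $\tau$ are all invariant under them. Therefore every such rotation of $(\varphi,W)$ is again a solution. Because $\varphi$ is not invariant under the whole $SO(n)$, its orbit is a nontrivial orbit of a connected group, hence infinite. Note that the argument applies to any $C^1$ radial $\tau$, since the derivation of \eqref{eq.5a3} only requires the equation itself and not the hypotheses of point (iii).
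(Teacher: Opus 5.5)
Your proof is correct and is essentially the paper's argument. The paper notes that $S\ge0$ forces $\varphi^\p/\varphi$ to stay, by continuity, either below the root $x_1=-\frac{n}{N}\big(\cot r+\frac{1}{\sin r}\big)$ or above $x_2=-\frac{n}{N}\big(\cot r-\frac{1}{\sin r}\big)$, which is impossible since $x_1\to-\infty$ at $0$ and $x_2\to+\infty$ at $\pi$; your $A/w=N\sin r\,\varphi^\p/\varphi+n\cos r$ is just the rescaling that sends this forbidden band to $(-n,n)$, so your intermediate value argument on $[0,\pi]$ is the same dichotomy written on a closed interval, and your rotation argument for the second assertion matches the paper's (your connectedness remark makes the word ``infinite'' slightly more precise).
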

\begin{proof}
	Let us consider the Lichnerowicz-type equation \eqref{rce sphere} with $(\psi, V, \rho)$ zero everywhere and $\tau \in RC^1(\Sp^n)$. We first prove that this equation has no radial solution $\varphi >0$. To do so, we observe by Theorem \ref{theorem existence on sphere} that $S_{V,\psi,\rho,\varphi} \ge 0$ is a necessary condition for $\varphi$ to be a solution to \eqref{rce sphere}. Therefore, the claim will follow as long as we can show that this condition never holds in $[0, \pi]$ in our case. In fact, since $V\equiv0$, $\psi\equiv0$ and $\rho\equiv0$, the condition $S_{V,\psi,\rho,\varphi} \ge 0$ may be rewritten as
	\begin{equation} \label{ineq absence of k}
		N^2 \bigg(\frac{\varphi^\p(r)}{\varphi(r) } \bigg)^2  + 2 n N \cot(r) \bigg(\frac{\varphi^\p(r)}{\varphi(r) } \bigg) - n^2 \ge 0.
	\end{equation}
	Since the quadratic equation $N^2 x^2 + 2n N \cot (r) x - n^2 = 0$ has two solutions
	$$
	x_ 1 = - \frac{n}{N} \bigg(\cot (r) + \frac{1}{\sin (r)} \bigg), \qquad x_2 = - \frac{n}{N} \bigg(\cot (r) - \frac{1}{\sin (r)} \bigg),
	$$
	by the continuity of $\frac{\varphi^\p}{\varphi}$, the inequality \eqref{ineq absence of k} implies that either
	$$
	\frac{\varphi^\p(r)}{\varphi(r)} \le - \frac{n}{N} \bigg(\cot (r) + \frac{1}{\sin (r)} \bigg) \quad \text{for all $r \in [0, \pi]$}
	$$
	or
	$$
	\frac{\varphi^\p(r)}{\varphi(r)} \ge - \frac{n}{N} \bigg(\cot (r) - \frac{1}{\sin (r)} \bigg) \quad \text{for all  $r \in [0, \pi]$}.
	$$
	However, this is impossible since
	$$
	\lim_{r \to 0} \bigg(\cot (r) + \frac{1}{\sin (r)} \bigg) = + \infty \quad \text{and} \quad \lim_{r \to \pi} \bigg(\cot (r) - \frac{1}{\sin (r)} \bigg) = - \infty,
	$$
	which contradicts the regularity of $\frac{\varphi^\p}{\varphi}$ at $0$ and $\pi$. 
	
	\medskip
	
	Now, to show the theorem, it suffices to show that if the vacuum conformal equations \eqref{vacuum CE null sigma} admit a solution $\varphi > 0$, then it must be non-radial and the set of solutions to \eqref{vacuum CE null sigma} is infinite. In fact, suppose that $\varphi$ is radial. Since $\tau$ is radial, it follows from Theorem \ref{theorem existence on sphere} that $\varphi$ is also a solution to \eqref{rce sphere}. However, this cannot happen as we have shown above. So, we can deduce that $\varphi$ is non-radial. Now, since Laplace's operator $\Delta$ and the conformal one $\Div(L)$ are invariant under rotations and since $\tau$ is radial, any rotation of $\varphi$ is also a solution to \eqref{vacuum CE null sigma}, hence, the set of solutions to \eqref{vacuum CE null sigma} is infinite as claimed. The proof is completed.
\end{proof} 

\subsection{Existence of solutions} \label{subsection existence of solutions on the sphere}
As we have shown in Theorem \ref{theorem non-existence near CMC}, due to the vanishing condition \eqref{vanishing condition sphere}, not every choice of $\tau$ gives a solution to the equations~\eqref{CE with null sigma in sphere}, even in the near-CMC case with $\rho \not \equiv 0$. Therefore, a natural question is how to find a class of radial mean curvatures $\tau$ such that \eqref{CE with null sigma in sphere} admit at least one solution. In this subsection, we will address this problem. With regard to what we have done in compact manifold having no CKVF, the main idea in our analysis is also constructing a continuous and compact operator which gives solutions to \eqref{CE with null sigma in sphere} throughout its fixed points. The biggest challenge here is how to construct such an operator while ensuring the vanishing condition \eqref{vanishing condition sphere} in the procedure. For addressing this difficulty, let us decompose $\tau$ as
\begin{equation} \label{tau decompose}
	\tau = \tau_0 - \alpha \tau_1
\end{equation}
where $\tau_0, \tau_1$ are some radial compatible functions with $\tau$, and $\alpha$ is a constant. Clearly, if $\varphi$ is a radial solution to \eqref{CE with null sigma in sphere}, the vanishing condition \eqref{vanishing condition sphere} becomes
$$
\int_0^\pi \tau_0^\p \varphi^N \sin^n dr = \alpha \int_0^\pi \tau_1^\p \varphi^N \sin^n dr,
$$
which determines $\alpha$ as
$$
\alpha = \frac{\int_0^\pi \tau_0^\p \varphi^N \sin^n dr}{\int_0^\pi \tau_1^\p \varphi^N \sin^n dr}.
$$
This identity combined with the finding in Theorem \ref{theorem non-existence near CMC} indicates that for fixed $\tau_0$ and $\tau_1$, not every choice of $\alpha$ yields a function $\tau$ of the form \eqref{tau decompose} for which the equations \eqref{CE with null sigma in sphere} admit a solution. Motivated by this observation, and as mentioned in Introduction, rather than seeking a class of radial mean curvatures that produce solutions to \eqref{CE with null sigma in sphere}, we fix two radial functions $(\tau_0, \tau_1)$, and instead look for the constant $\alpha$ that makes the corresponding equations associated with $\tau$ in \eqref{tau decompose} solvable. For this purpose, let us define $T$ from $RC^0(\Sp^n)$ into itself as follows. Let $\tau_0(r), \tau_1(r) \in RC^1(\Sp^n)$ be two radial functions with $\tau_1$ non-constant and 
\begin{equation} \label{tau1 and tau0}
	|\tau_0^\p| \le c \tau_1^\p
\end{equation}
for some constant $c > 0$. For any $\phi(r) \in RC^0(\Sp^n)$, we set
\begin{equation} \label{def T1}
	T_1 (\phi)(r) := \frac{1}{\sin^n(r)} \int_0^r \big(\tau_0 - \alpha (\phi) \tau_1 \big)^\p |\phi|^N \sin^n \, ds,
\end{equation}
where $\alpha(\phi)$ is the constant defined by
\begin{equation} \label{def of alpha}
	\alpha(\phi) :=  \left\{\begin{array}{lr}
		\frac{\int_0^{\pi} \tau_0^\p |\phi|^N \sin^n \, dr}{\int_0^{\pi} \tau_1^\p |\phi|^N \sin^n \, dr} & \text{if $\tau_1^\p|\phi|^N \not \equiv 0$,}
		\\
		0 &\text{otherwise.}
	\end{array}\right.
\end{equation}
As long as such a $T_1(\phi)$ is well-defined (we will prove it later), in view of the Lichnerowicz equation, there exists a unique $\varphi(r) \in RC^2(\Sp^n)$ such that
\begin{multline} \label{T1 to T}
	- \tfrac{4(n-1)}{n-2} \varphi^{N + 1} (\varphi^\pp + h \varphi^\p )+ n(n-1)\varphi^{N + 2}  \\
	+ \tfrac{n-1}{n} \alpha_\star(\phi) \big(\tau_0 - \alpha(\phi) \tau_1 \big)^2 \varphi^{2N} = \tfrac{n-1}{n} \big(T_1(\phi)\big)^2 + \rho^2,
\end{multline}
where $\alpha_\star (\phi)$ is given by
\begin{equation}\label{def alpha_star}
	\alpha_\star (\phi) := \frac{\|\tau_1^\p |\phi|^N \|_{C^0}}{\max \{\|\tau_1^\p |\phi|^N \|_{C^0},\, \|\tau_1^\p \varphi_\star^N \|_{C^0}\}}
\end{equation}
with $\varphi_\star(r) \in RC^2_+(\Sp^n)$ the unique solution to the equation
\begin{equation} \label{def varphi_star}
	\tfrac{4(n-1)}{n-2} \varphi_\star^{N + 1} (\varphi_\star^\pp + h \varphi_\star^\p ) + n(n-1)\varphi_\star^{N + 2} + \tfrac{n-1}{n} (2c\tau_1)^2 \varphi_\star^{2N} = \rho^2.
\end{equation}
We define
\begin{equation}\label{definition of T}
	T(\phi) := \varphi.
\end{equation}
Note that, as we will see below, the presence of $\alpha_\star(.)$ here is just a dispensable technique for ensuring the continuity of $T$ at $\phi$, when $\phi$ satisfies $\tau_1^\p \phi \equiv 0$. In practice, we may simply regard it as $1$ by the following reason. In fact, similarly to that on a compact manifold having no CKVF, our approach is identifying fixed points of $T$ with solutions to the conformal \eqref{CE with null sigma}. So, when $\phi$ is a fixed point of $T$, it follows that
\begin{multline*}
	- \tfrac{4(n-1)}{n-2} \phi^{N + 1} (\phi^\pp + h \phi^\p ) + n(n-1)\phi^{N + 2} \\
	+ \tfrac{n-1}{n} \alpha_\star(\phi) \big(\tau_0 - \alpha(\phi) \tau_1 \big)^2 \phi^{2N} = \tfrac{n-1}{n} \big(T_1(\phi)\big)^2 + \rho^2.
\end{multline*}
Observing that by \eqref{tau1 and tau0}, we have 
\begin{equation} \label{alpha bounded}
	|\alpha (\phi)| \le  c \quad \text{and} \quad |\tau_0 - \alpha(\phi) \tau_1| \le 2c |\tau_1|.
\end{equation}
Therefore, by the definition of $\varphi_\star$, it follows that $\phi$ is a super-solution to \eqref{def varphi_star}, and hence, $\varphi_\star \le \phi$. By definition, this gives us that
\begin{equation}\label{alpha_star = 1}
	\alpha_\star(\phi) = 1
\end{equation} 
as desired. The following result ensures the compactness of $T_1$ and $T$.
\begin{proposition} \label{proposition T1 compact}
	$T_1$ and $T$ are continuous compact operators from $C^0$ into itself.
\end{proposition}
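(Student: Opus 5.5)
We begin with $T_1$ and take $T$ afterwards. Everything rests on one identity for the sphere. Write $\tau_\phi := \tau_0 - \alpha(\phi)\tau_1$. By the definition \eqref{def of alpha} of $\alpha(\phi)$ we have
$$\int_0^\pi \tau_\phi^\p |\phi|^N \sin^n \, ds = 0$$
whenever $\tau_1^\p|\phi|^N \not\equiv 0$. In the remaining case \eqref{tau1 and tau0} forces $\tau_0^\p|\phi|^N \equiv 0$, so $T_1(\phi)\equiv 0$. Consequently, for $r$ close to $\pi$ we may rewrite
$$T_1(\phi)(r) = -\frac{1}{\sin^n r}\int_r^\pi \tau_\phi^\p |\phi|^N \sin^n\, ds.$$

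The plan for $T_1$ has three steps.

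First, we check that $T_1$ is well defined and bounded. Using \eqref{alpha bounded}, $|\tau_\phi^\p| \le 2c\,\tau_1^\p \le C$. The integral from $0$ gives $|T_1(\phi)(r)| \le C\|\phi\|_{C^0}^N \int_0^r \sin^n/\sin^n(r)$, which is $\le C\|\phi\|^N r$ near $0$. The integral from $\pi$ gives $\le C\|\phi\|^N(\pi - r)$ near $\pi$. Hence $T_1(\phi)$ is continuous on $[0,\pi]$, vanishes at $0$ and $\pi$, and satisfies $\|T_1(\phi)\|_{C^0} \le C\|\phi\|_{C^0}^N$.

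Second, we obtain equicontinuity. Differentiating,
$$T_1(\phi)^\p = \tau_\phi^\p|\phi|^N - n\cot(r)\,T_1(\phi).$$
Combined with the linear vanishing at $0$ and $\pi$ from the first step, this shows $T_1(\phi)^\p$ is uniformly bounded on bounded sets of $C^0$. Arzel\`a--Ascoli then gives compactness, and radial symmetry is preserved.

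Third, we prove continuity. If $\phi_k \to \phi$ in $C^0$, then $|\phi_k|^N \to |\phi|^N$ uniformly. Where the denominator of $\alpha$ is nonzero, $\alpha(\phi_k)\to\alpha(\phi)$, and the bounds above give uniform convergence of $T_1(\phi_k)$. The delicate case is $\tau_1^\p|\phi|^N\equiv0$. Here $\alpha$ may jump, but it stays bounded by $c$ by \eqref{alpha bounded}, while $\|\tau_\phi^\p|\phi_k|^N\|_{C^0}\le 2c\|\tau_1^\p|\phi_k|^N\|_{C^0}\to0$. So $T_1(\phi_k)\to 0 = T_1(\phi)$ regardless of how $\alpha(\phi_k)$ behaves.

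For $T$, the first task is existence and uniqueness of the positive radial solution $\varphi$ of \eqref{T1 to T}. After dividing by $\varphi^{N+1}$ this is a Lichnerowicz equation on $\Sp^n$ with positive scalar curvature, nonnegative coefficient $\tfrac{n-1}{n}\alpha_\star(\tau_\phi)^2$ on $\varphi^{N-1}$, and right-hand side $(\tfrac{n-1}{n}T_1(\phi)^2+\rho^2)\varphi^{-N-1}$ with $\rho\not\equiv0$. The standard sub/supersolution argument applies:
\begin{itemize}
\item a small constant is a subsolution, up to the usual conformal-Laplacian trick;
\item a large constant is a supersolution since $\Scal>0$;
\item uniqueness follows from monotonicity of the nonlinearity;
\item uniqueness also gives radiality, by rotation invariance.
\end{itemize}

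Compactness of $T$ comes from the constant supersolution, which gives a uniform $C^0$ bound on bounded sets. Elliptic estimates then give $C^{2}$ bounds, so the image is precompact in $C^0$.

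Continuity of $T$ is where $\alpha_\star$ enters. We must show that the coefficient $\alpha_\star(\phi)\tau_\phi^2$ depends continuously on $\phi$. When $\tau_1^\p|\phi|^N\not\equiv0$ this is clear. When $\tau_1^\p|\phi|^N\equiv0$, we have $\alpha_\star(\phi)=0$, and $\alpha_\star(\phi_k)\le \|\tau_1^\p|\phi_k|^N\|_{C^0}/\|\tau_1^\p\varphi_\star^N\|_{C^0}\to0$. This requires $\tau_1^\p\varphi_\star^N\not\equiv0$, which holds because $\tau_1$ is nonconstant and $\varphi_\star>0$. Combined with $|\tau_\phi|\le2c|\tau_1|$, the coefficient tends to $0$. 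Uniqueness of solutions plus the compactness just obtained then upgrades continuity of the data to continuity of $T$.

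The main obstacle is this discontinuity of $\alpha(\cdot)$ at functions with $\tau_1^\p|\phi|^N\equiv0$, together with justifying the boundary behaviour at $r=\pi$ through the vanishing identity.
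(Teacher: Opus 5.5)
Your proposal is correct and follows essentially the paper's proof: the vanishing identity forced by the choice of $\alpha(\phi)$ controls $T_1(\phi)$ at both poles, and the bound on $(T_1(\phi))^\p=\tau_\phi^\p|\phi|^N-n\cot(r)\,T_1(\phi)$ gives compactness. The degenerate case $\tau_1^\p|\phi|^N\equiv0$ is handled exactly as in the paper, and for $T$ you spell out the standard Lichnerowicz sub/supersolution theory that the paper simply cites from Maxwell, with $\alpha_\star$ serving the same purpose.

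There are two harmless slips. First, with $\rho$ only continuous the elliptic estimates give $W^{2,p}$ (hence $C^{1,\beta}$) rather than $C^2$ bounds, and only after a uniform positive lower bound on $\varphi$ is secured; this still gives precompactness in $C^0$. Second, for continuity of the coefficient you only need $|\tau_\phi|\le|\tau_0|+c|\tau_1|$; the bound $|\tau_\phi|\le2c|\tau_1|$ you quote does not actually follow from $|\tau_0^\p|\le c\,\tau_1^\p$.
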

\begin{proof}
	We first show that $T_1(\phi)(r) \in RC^0(\Sp^n)$ for all $\phi(r) \in RC^0(\Sp^n)$. In fact, from \eqref{tau1 and tau0} and \eqref{def of alpha}, we have
	\begin{equation} \label{T1 vanishing codition}
		\int_0^\pi \bigg(\tau_0^\p - \alpha (\phi) \tau_1^\p \bigg) \phi^N \sin^n \, ds = 0.
	\end{equation}
	Then, it follows from l'Hôpital's rule that  for any $a \in \{0,\pi\}$
	\begin{equation}\label{T1 in C0}
		\lim_{r \to a} T_1(\phi) = \lim_{r \to a}\frac{\Big(\int_0^r \big(\tau_0 - \alpha (\phi) \tau_1 \big)^\p \phi^N \sin^n \, ds\Big)^\p}{(\sin^n)^\p(r)} = 0.
	\end{equation}
	By definition, this means that $T_1$ is well-defined and $T_1(\phi)(r) \in RC^0(\Sp^n)$ as claimed. 
	
	\medskip
	
	Next, in order to verify $(T_1(\phi))^\p(r) \in RC^0(\Sp^n)$, differentiating $T_1(\phi)$ on $(0,\pi)$,  we get
	\begin{equation} \label{compute T1'}
		(T_1(\phi) )^\p(r) = \big(\tau_0 - \alpha (\phi) \tau_1 \big)^\p (r) \phi^N(r) - \frac{n\cos(r)}{\sin^{n+1}(r)} \int_0^r \big(\tau_0 - \alpha (\phi) \tau_1 \big)^\p \phi^N \sin^n \, ds,
	\end{equation}
	Since $\tau_0(r), \tau_1(r) \in RC^1(\Sp^n)$, we have $(\tau_0 - \alpha (\phi) \tau_1 \big)^\p(0) = (\tau_0 - \alpha (\phi) \tau_1 \big)^\p(\pi) = 0.
	$ 
	Therefore, also using l'Hôpital's rule to \eqref{compute T1'}, we obtain that for all $a \in \{0, \pi\}$
	\begin{align} \label{T1 in C1}
		\lim_{r \to a} (T_1(\phi))^\p & = \lim_{r \to a}\bigg( \big(\tau_0 - \alpha (\phi) \tau_1 \big)^\p (r) \phi^N(r) \notag \\
		& \qquad \qquad + n\cos(r)\frac{\big(\int_0^r \big(\tau_0 - \alpha (\phi) \tau_1 \big)^\p \phi^N \sin^n \, ds\big)^\p}{(\sin^{n+1})^\p(r)} \bigg) \notag \\
		& = \lim_{r \to a}\Big( \big(\tau_0 - \alpha (\phi) \tau_1 \big)^\p (r) \phi^N(r) + \tfrac{n}{n+1}(\tau_0 - \alpha (\phi) \tau_1 \big)^\p (r) \phi^N (r) \Big) \notag \\
		& = 0.
	\end{align}
	Combining \eqref{compute T1'} and \eqref{T1 in C1}, it follows that $T_1(\phi)(r) \in RC^1(\Sp^n)$.  
	
	\medskip
	
	For the continuity of $T_1$ on $RC^0(\Sp^n)$, since $\alpha(.)$ is continuous at all $\phi \in RC^0(\Sp^n)$ satisfying $\tau_1^\p |\phi|^N \not \equiv 0$, by definition, so does $T_1$ at these $\phi$. Now, given $\phi \in RC^0(\Sp^n)$ with $\tau_1^\p |\phi|^N \equiv 0$, we have by \eqref{tau1 and tau0} 
	$$
	\tau_0^\p |\phi|^N \equiv 0.
	$$ 
	Therefore, letting $\{\phi_i \}$ be a sequence of radial continuous functions converging to this $\phi$,  it follows
	$$
	\big(\tau_0^\p |\phi_i|^N, \tau_1^\p |\phi_i|^N \big) \to \big(\tau_0^\p |\phi|^N, \tau_1^\p |\phi|^N \big) = (0,0) \quad \text{as} \quad i \to +\infty,
	$$
	and hence, by the definition \eqref{def T1},
	$$
	T_1(\phi_i) \to T_1(\phi) = 0 \quad \text{as} \quad i \to +\infty.
	$$ 
	This means that $T_1$ is continuous on $RC^0(\Sp^n)$.
	
	\medskip
	
	Next, to show $T_1$ is compact, we first observe that by the mean value theorem
	\begin{itemize}
		\item if $r \in [0,\pi/2]$, then there exists $\xi_r \in [0,r]$ satisfying
		\begin{equation}\label{xi}
			\int_0^r \big(\tau_0 - \alpha (\phi) \tau_1 \big)^\p \phi^N \sin^n \, ds = r\big(\tau_0 - \alpha (\phi) \tau_1 \big)^\p(\xi_r) \phi^N(\xi_r) \sin^n(\xi_r);
		\end{equation}
		
		\item if $r \in [\pi/2, \pi]$, then in view of \eqref{T1 vanishing codition}, there exists $\xit_r \in [r, \pi]$ such that
		\begin{align}\label{xit}
			\int_0^r \big(\tau_0 - \alpha (\phi) \tau_1 \big)^\p \phi^N \sin^n \, ds & = - \int_r^\pi \big(\tau_0 - \alpha (\phi) \tau_1 \big)^\p \phi^N \sin^n \, ds \notag \\
			& = - (\pi-r)\big(\tau_0 - \alpha (\phi) \tau_1 \big)^\p(\xit_r) \phi^N(\xit_r) \sin^n(\xit_r).
		\end{align}
	\end{itemize}
	Since
	$$
	\begin{cases}
	\max \Big(\sin(\rt), \, \frac{2r}{\pi} \Big) \le \sin(r) & \text{for all} \quad 0 \le \rt \le r \le {\pi/2},
	\\
	\max \Big(\sin(\rt), \, \frac{2(\pi-r)}{\pi} \Big) \le \sin(r) &\text{for all} \quad \pi/2 \le r \le \rt \le \pi,
	\end{cases}
	$$
	we get from (\ref{xi}--\ref{xit}) that 
	\begin{equation} \label{T1 is bounded}
		\bigg| \frac{1}{\sin^{n+1}(r)} \int_0^r \big(\tau_0 - \alpha (\phi) \tau_1 \big)^\p \phi^N \sin^n \, ds \bigg| \le \frac{\pi}{2} \big| \big(\tau_0 - \alpha (\phi) \tau_1 \big)^\p(\xi) \phi^N(\xi) \big|,
	\end{equation}
	where the value of $\xi$ is assigned as $\xi_r$ or $\xit_r$, depending whether $r$ is smaller or greater than $\pi/2$, respectively. 
	
	\medskip
	
	Taking into the definition of $T_1$ and \eqref{compute T1'}, we have
	\begin{subequations}
		\begin{align}
			|T_1(\phi)| &\le \frac{\pi}{2} \max\bigg(  \big|\big(\tau_0 - \alpha (\phi) \tau_1 \big)^\p |\phi|^N \bigg), \label{es T1 ogirin} \\
			|(T_1(\phi))^\p| &\le \Big(\frac{n\pi}{2} + 1 \Big) \max\bigg(  \big|\big(\tau_0 - \alpha (\phi) \tau_1 \big)^\p |\phi|^N \bigg). \label{es T1' origin}
		\end{align}
	\end{subequations}
	Combined with \eqref{alpha bounded}, we obtain
	\begin{subequations}
		\begin{align}
			|T_1(\phi)| &\le c\pi \max\bigg(\tau_1^\p |\phi|^N \bigg), \label{es T1} \\
			|(T_1(\phi))^\p| &\le c(n \pi + 2 ) \max\bigg(\tau_1^\p |\phi|^N \bigg), \label{es T1'}
		\end{align}
	\end{subequations}
	and so,
	\begin{align*}
		\|T_1(\phi)\|_{C^1} & \le c\big((n+1)\pi + 2 \big) \max\big(\tau_1^\p |\phi|^N \big) \\
		& \le c\big((n+1)\pi + 2 \big) \|\tau_1^\p\|_{C^0}\|\phi\|_{C^0}^N.
	\end{align*}
	Therefore, thanks to the Sobolev embedding, we conclude that $T_1$ is compact as desired.
	
	\medskip
	
	To prove that $T$ is continuous and compact, we observe that similarly to continuity of $T_1$, the operator 
	$$
	RC^0(\Sp^n) \ni \phi \longmapsto \alpha_\star(\phi) \big(\tau_0 - \alpha(\phi) \tau_1 \big) \in RC^0(\Sp^n)
	$$ 
	is also continuous in $RC^0(\Sp^n)$. Therefore, since $T(\phi)$ is just the image of $T_1(\phi)$ by the Lichnerowicz equation \eqref{Liceq}, thanks to the arguments in \cite{Maxwellcompact}, $T$ is continuous and compact as claimed. The proof is completed.
\end{proof}

Based on the operator $T$ defined above, we first establish the existence results for solutions to \eqref{CE with null sigma in sphere} on the sphere, which are similar to the near--CMC and small TT-tensor existence results proven for compact manifolds without CKVF. We remark that these results will be generalized in Theorems \ref{theorem limit equation criterion sphere} and \ref{theorem smooth time-derivative sphere} below, but for the convenience of the readers who may not be familiar with techniques used in this field, let us begin by considering the following basic case.
\begin{theorem}[Existence of solutions in the near--CMC and small time-derivative cases] \label{theorem near CMC sphere}
	Let $(\rho(r), \tau_0(r), \tau_1(r)) \in RC^0(\Sp^n) \times \big(RC^1(\Sp^n)\big)^2$ be radial functions with $\rho \not \equiv 0$ and  $|\tau_0^\p| \le c\tau_1^\p$ for some constant $c>0$. Assume that one of following conditions holds
	\begin{enumerate}[(i)]
		\item $\sup_{a \in [-c, c]} \bigg(\frac{\max |(\tau_0 - a \tau_1)^\p|}{\min |(\tau_0 - a \tau_1)|} \bigg)<\frac2\pi$ 
		\item $\max |\rho|$ is sufficiently small depending only on $(\tau_1, c)$.
	\end{enumerate}
	
	Then there exists a constant $\alpha \in [-c,c]$ such that the conformal equations \eqref{CE with null sigma in sphere} associated with $(\rho, \tau = \tau_0 - \alpha \tau_1)$ admit at least one solution $(\varphi(r),u(r)dr)$ with $(\varphi(r),u(r))\in RC_+^3(\Sp^n) \times RC^2(\Sp^n)$. Moreover, in this case, $\alpha$ can be expressed by
	$$
	\alpha = \frac{\int_0^{\pi} \tau_0^\p \varphi^N \sin^n \, dr}{\int_0^{\pi} \tau_1^\p \varphi^N \sin^n \, dr}.
	$$
\end{theorem}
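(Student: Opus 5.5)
Throughout, write $\tau_\phi:=\tau_0-\alpha(\phi)\tau_1$. The plan is to obtain the solution as a fixed point of the operator $T$ in \eqref{definition of T}. We apply Schaefer's fixed point theorem (Leray--Schauder) to the continuous compact map $T:RC^0(\Sp^n)\to RC^0(\Sp^n)$ of Proposition \ref{proposition T1 compact}. Suppose $\phi=tT(\phi)$ for some $t\in(0,1]$. We will show that $\|\phi\|_{C^0}\le C$, with $C$ independent of $t$. Schaefer then gives a fixed point $\varphi=T(\varphi)$.

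For such a fixed point, \eqref{alpha_star = 1} gives $\alpha_\star(\varphi)=1$, and \eqref{T1 vanishing codition} gives the vanishing condition \eqref{vanishing condition sphere} for $v=\frac{n-1}{n}\varphi^N\tau^\p$ with $\tau=\tau_\varphi$. Moreover \eqref{T1 to T} then coincides with the radial Lichnerowicz-type equation \eqref{rce sphere} with $V=\psi=0$. This is because \eqref{|LW| in sphere} computes $\Fcal_v=\frac{n-1}{n}T_1(\varphi)^2$. Theorem \ref{theorem general} (equivalently Theorem \ref{theorem existence on sphere}(i)) therefore shows that $(\varphi,W)$, with $W$ given by \eqref{solution to vector equations}, solves \eqref{CE with null sigma in sphere} for $\tau=\tau_0-\alpha\tau_1$ and $\alpha=\alpha(\varphi)\in[-c,c]$ by \eqref{alpha bounded}. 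The stated formula for $\alpha$ is just \eqref{def of alpha}. Positivity of $\varphi$ holds because $\rho\not\equiv0$ and the maximum principle applied to the Lichnerowicz equation. The regularity $\varphi\in RC^3$ follows by bootstrapping, since $T_1(\varphi)\in RC^1$.

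The core of the proof is the a priori bound. If $\phi=tT(\phi)$, then $\varphi:=T(\phi)=\phi/t$ satisfies \eqref{T1 to T}. At a maximum point $r_0$ of $\varphi$ we have $\varphi^\pp+h\varphi^\p\le0$, so
\[
\tfrac{n-1}{n}\alpha_\star(\phi)\,\tau_\phi^2(r_0)\,\varphi^{2N}(r_0)+n(n-1)\varphi^{N+2}(r_0)\le \tfrac{n-1}{n}\|T_1(\phi)\|_{C^0}^2+\|\rho\|_{C^0}^2 ,
\]
and by \eqref{es T1 ogirin} we have $\|T_1(\phi)\|_{C^0}\le\frac{\pi}{2}\max|\tau_\phi^\p|\,\|\phi\|^N_{C^0}$.

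In case (i), we first show that $\alpha_\star(\phi)$ is bounded below. Since $\phi\le\varphi$ and $\varphi$ is a supersolution of \eqref{def varphi_star} when $\alpha_\star\equiv1$, a comparison argument gives $\alpha_\star$ close to $1$; failing that, we handle the small-$\|\phi\|$ regime separately. Then $\|\phi\|\le\|\varphi\|=\varphi(r_0)$ yields
\[
\tau_\phi^2(r_0)\,\varphi^{2N}(r_0)\le \big(\tfrac{\pi}{2}\big)^2\max|\tau_\phi^\p|^2\,\varphi(r_0)^{2N}+C\|\rho\|^2 .
\]
Hypothesis (i), applied with $a=\alpha(\phi)\in[-c,c]$, says that $\frac{\pi}{2}\max|\tau_\phi^\p|<\min|\tau_\phi|$ with a uniform gap. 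Hence $\varphi^{2N}(r_0)$ is bounded by a constant depending on $\rho$, and this gives the uniform bound.

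In case (ii), we follow the small-TT argument of \cite{Maxwellcompact, HolstNagyTsogtgerel}. Dropping the $\tau$ term and using $\Scal>0$ gives
\[
n(n-1)\varphi^{N+2}(r_0)\le C_1\|\phi\|^{2N}+\|\rho\|^2 ,
\qquad C_1=\tfrac{n-1}{n}\big(\tfrac{\pi}{2}\big)^2\big(2c\|\tau_1^\p\|\big)^2 .
\]
Since $2N>N+2$, Schaefer's version with the homotopy parameter is not directly usable here. We instead use Schauder's theorem on the invariant set $\{\phi:\ \varphi_\star\le\phi\le\eta\}$. For $\eta$ small and $\|\rho\|$ small relative to $\eta$, the inequality $n(n-1)\eta^{N+2}\ge C_1\eta^{2N}+\|\rho\|^2$ holds, so $T$ maps this closed convex set into itself. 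On it, $\alpha_\star=1$ by the comparison with $\varphi_\star$.

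The main obstacle is handling the factor $\alpha_\star(\phi)$ and the possible degeneration $\tau_1^\p|\phi|^N\equiv0$ in these estimates, that is, ensuring $\phi\ge\varphi_\star$ along the homotopy or on the invariant set. We expect to resolve this using the lower barrier $\varphi_\star$, which is a subsolution of \eqref{T1 to T} because of \eqref{alpha bounded}.
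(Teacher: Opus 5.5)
Your architecture matches the paper's: Leray--Schauder for (i), Schauder on a small invariant set for (ii), and fixed points of $T$ identified with solutions through $\alpha_\star=1$ and the vanishing identity. Case (ii) is fine. Adding the lower barrier $\varphi_\star$ to the paper's set $\{0\le\phi\le\varepsilon\}$ is harmless, and it even makes $\alpha_\star\equiv1$ on the set.

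The gap is in case (i), exactly at the point you flag as the main obstacle, and the resolution you propose there does not work. Comparison with \eqref{def varphi_star} gives $T(\phi)\ge\varphi_\star$. Along the homotopy $\phi=tT(\phi)$, however, this only yields $\phi\ge t\varphi_\star$, i.e.\ $\alpha_\star(\phi)\ge t^N$. That does not make $\alpha_\star$ close to $1$, and no barrier can force $\phi\ge\varphi_\star$ when $t<1$. Nor is $\{\alpha_\star(\phi)<1\}$ a ``small $\|\phi\|$'' regime. It only means $\|\tau_1'|\phi|^N\|_{C^0}<\|\tau_1'\varphi_\star^N\|_{C^0}$, and $\phi$ may be large where $\tau_1'$ vanishes. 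So the inequality you write with $\alpha_\star$ dropped is not justified as it stands.

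The missing observation, which is what the paper uses, is that \eqref{es T1} controls $T_1(\phi)$ only through $\max(\tau_1'|\phi|^N)$. This gives a dichotomy:
\begin{itemize}
\item If $\alpha_\star(\phi)<1$, then $\|T_1(\phi)\|_{C^0}\le c\pi\|\tau_1'\varphi_\star^N\|_{C^0}$. The maximum principle applied to \eqref{T1 to T}, with the nonnegative $\tau$-term discarded, then bounds $\varphi=T(\phi)$, hence $\phi$, by a constant depending only on $\rho,\tau_1,c$.
\item Otherwise $\alpha_\star(\phi)=1$ exactly, and your hypothesis-(i) computation goes through verbatim.
\end{itemize}
An alternative repair is to keep only $\alpha_\star(\phi)\ge t^N$ and track $t$. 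The $T_1$-term carries $t^{2N}\varphi(r_0)^{2N}$ while the $\tau$-term carries $t^N\varphi(r_0)^{2N}$. This yields $t^N\varphi(r_0)^{2N}\le C$, and hence $\|\phi\|_{C^0}^{2N}=t^{2N}\varphi(r_0)^{2N}\le C$.

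As a side remark, your bootstrap to $\varphi\in RC^3$ overlooks that $\rho$ is only $C^0$, so the right-hand side of the Lichnerowicz equation is merely continuous. The paper leaves this point unaddressed as well.
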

\begin{proof}
	Let $T:RC^0(\Sp^n) \to RC^0(\Sp^n)$ be defined as in \eqref{definition of T}. In regard to \eqref{alpha_star = 1}, it is clear from Theorem \ref{theorem existence on sphere} that to show existence of solutions to \eqref{CE with null sigma in sphere}, it suffices to show that $T$ admits a fixed point $\phi >0$. 
	
	\medskip
	
	We will first prove (ii). Given $\eps > 0$, we set
	$$
	\Omega_\eps := \{ \phi(r) \in RC^0(\Sp^n) \ |\ 0 \le \phi \le \eps\}.
	$$
	For all $\phi(r) \in \Omega_\eps$, we have by \eqref{alpha bounded} and \eqref{T1 is bounded}
	\begin{align*}
		\bigg(\frac{1}{\sin^n(r)} \int_0^r \big(\tau_0 - \alpha (\phi) \tau_1 \big)^\p \phi^N \sin^n\, ds \bigg)^2 
		& \le  \frac{\pi^2}{4} \big\| \big(\tau_0 - \alpha (\phi) \tau_1 \big)^\p  \big\|^2_{C^0} \|\phi\|_{C^0}^{2N} \\
		& \le c^2 \pi^2 \| \tau_1^\p \|^2 \eps^{2N}.
	\end{align*} 
	Therefore, as long as $\eps$ is sufficiently small such that 
	$$
	n(n-1) \eps^{N + 2} >  c^2 \tfrac{n-1}{n} \pi^2 \| \tau_1^\p \|^2 \eps^{2N},
	$$
	for any $\rho$ satisfying
	$$
	\rho^2 \le n(n-1) \eps^{N + 2} - c^2\tfrac{n-1}{n} \pi^2 \| \tau_1^\p \|^2 \eps^{2N},
	$$
	we obtain that
	$$
	n(n-1) \eps^{N + 2} + \tfrac{n-1}{n}\alpha_\star(\phi)\big(\tau_0 - \alpha(\phi) \tau_1 \big)^2 \eps^{2N} >  c^2 \tfrac{n-1}{n} \pi^2 \| \tau_1^\p \|^2 \eps^{2N} + \rho^2
	$$
	for all $r \in  [0, \pi]$. This means that the constant function $\eps$ is a super-solution to the Lichnerowicz equation \eqref{T1 to T}, and so $T(\phi) \le \eps$. In other words, we have $T(\Omega_\eps) \subset \Omega_\eps$. Therefore, by Schauder's fixed point theorem, it follows that $T$ has a fixed point in $\Omega_\eps$, which completes (ii).
	
	\medskip
	
	Next, we will prove (i) by using the Leray-Schauder's fixed point instead of Schauder's fixed point as in (ii). In fact, assume that $T$ has no fixed point. Then, according to Leray-Schauder's fixed point, there exists a sequence $\{(t_i,\phi_i)\} \subset [0,1)\times C^0(\Sp^n)$ such that 
	$$
	\phi_i = t_i T(\phi_i) \quad \text{and} \quad \max|T(\phi_i)| \to +\infty.
	$$
	By definition, this fact can be rewritten in view of \eqref{T1 to T} as
	\begin{multline} \label{eq T(varphi_i)}
		- \tfrac{4(n-1)}{n-2} \varphi_i^{N + 1} (\varphi_i^\pp + h \varphi_i^\p )+ n(n-1)\varphi_i^{N + 2} \\
		+ \tfrac{n-1}{n} \alpha_\star(\phi) \big(\tau_0 - \alpha(\phi) \tau_1 \big)^2 \varphi_i^{2N} = \tfrac{n-1}{n} \big(T_1(\phi_i)\big)^2 + \rho^2,
	\end{multline}
	where $\varphi_i := T(\phi_i)$ and  $\max|\varphi_i | \to +\infty$.
	
	\medskip
	
	We observe that by \eqref{es T1}, if $\{\|\tau_1^\p \phi_i^N \|_{C^0} \}$  is bounded, so is $\{\|T_1(\phi_i)\|_{C^0}\}$. Then, letting $\varphi_\infty$ be the unique solution to the equation
	$$
	- \tfrac{4(n-1)}{n-2} \varphi_\infty^{N + 1} (\varphi_\infty^\pp + h \varphi_\infty^\p )+ n(n-1)\varphi_\infty^{N + 2} 
	= \tfrac{n-1}{n} \big(\sup_{i}\|T_1(\phi_i)\|_{C^0}\big)^2 + \rho^2,
	$$
	we obtain that $\varphi_\infty$ is a supersolution to \eqref{eq T(varphi_i)}, and so $\varphi_i \le \varphi_\infty$, which contradicts $\max|\varphi_i | \to +\infty$. Therefore, we must have $\|\tau_1^\p \phi_i^N \|_{C^0} \to +\infty$. In particular, it implies by definition that 
	\begin{equation} \label{alpha_star equals 1}
		\alpha_\star (\phi_i) = 1.
	\end{equation} 
	Now, letting $r_i \in [0,\pi]$ be a maximum point of $\varphi_i$, since
	$$
	- (\varphi_i^\pp + h \varphi_i^\p )(r_i) = - \Delta \varphi_i (r_i) \ge 0,
	$$
	it follows from \eqref{eq T(varphi_i)} and \eqref{alpha_star equals 1} that
	$$
	(\tau_0 - \alpha(\phi)\tau_1)^2(r_i)\varphi_i^{2N} (r_i) \le (T_1(\phi_i))^2(r_i) + \tfrac{n}{n-1} \rho^2(r_i).
	$$
	Combined with \eqref{es T1 ogirin} and the fact that $\phi_i = t_i \varphi_i$, we get
	$$
	(\tau_0 - \alpha(\phi)\tau_1)^2(r_i) \varphi_i^{2N} (r_i) \le  \frac{t_i^{2N}\pi^2}{4} \max \big| \big(\tau_0 - \alpha (\phi) \tau_1 \big)^\p \big|^2 \varphi_i^{2N}(r_i) + \tfrac{n}{n-1} \rho^2(r_i).
	$$
	However, since $\alpha(\phi) \in [-c, c]$ and since we assumed that
	$$
	\sup_{a \in [-c, c]} \Big(\frac{\max |(\tau_0 - a \tau_1)^\p|}{\min |(\tau_0 - a \tau_1)|} \Big) < \frac{2}{\pi},
	$$ 
	the inequality contradicts the fact that $\varphi_i (r_i) \to +\infty$. Therefore, $T$ must have a fixed point. The proof is completed.
\end{proof}

Besides the near-CMC and small TT-tensor results, another important achievement for the vacuum conformal equations on compact manifolds having no CKVF is the limit equation criterion proven by M. Dahl, R. Gicquaud and E.~Humbert \cite{DahlGicquaudHumbert}, which states that, given a compact manifold without CKVF, if $\tau$ does not change sign and if the limit equation
\begin{equation} \label{limit equation}
	\Div(L W) = \lambda \sqrt{\tfrac{n-1}{n}} |LW| \frac{d\tau}{\tau}, 
\end{equation}	
has no nontrivial $W$ for all $\lambda \in (0,1]$, then the vacuum conformal equations \eqref{CE} admit at least one solution $(\varphi, W)$. Moreover, in this case, the set of solutions to \eqref{CE} is compact. The results in \cite{NguyenNonexistence, NguyenProgress, DiltsHolstKozarevaMaxwell} also show that the limit equation \eqref{limit equation} appears to play a central role in the existence of solutions to the conformal equations, at least for $\tau$ not changing sign. Technically, as we may see in the proof of the main result in \cite{DahlGicquaudHumbert} or \cite{NguyenFPT}, a nontrivial solution $W$ to the limit equation \eqref{limit equation} arises from a blow-up phenomenon of a sequence of solutions $\{\varphi_i \}$, for which, after an appropriate rescaling, one obtains
\begin{equation} \label{convergence to LW}
	\bigg( \frac{\varphi_i}{\|\varphi_i\|_{L^\infty}}\bigg)^N \longrightarrow \sqrt{\tfrac{n}{n-1}} \frac{|LW|}{\tau}.
\end{equation}
Therefore, in our setting of a radial data set and considering radial solutions, we have that the functions $\varphi_i$ and $\tau$ in \eqref{convergence to LW} are radial. Therefore, the convergence tells us that $|LW|$ is also radial, and hence by Proposition \ref{proposition resolve vector equations}, $W$ must be of the form $W = u(r) dr$, for some radial function $u(r)$. However, in this case, the following proposition shows that such a non-trivial solution $W$ does not exist.
\begin{proposition}\label{proposition limit equation insight}
	Assume that $\tau(r) \in RC^1(\Sp^n)$. Then the limit equation \eqref{limit equation} has no nontrivial $W$ of the form $W = u(r)dr$ with $u(r)\in RC^2(\Sp^n)$.
\end{proposition}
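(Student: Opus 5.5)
The plan is to compute both sides of the limit equation \eqref{limit equation} explicitly for $W=u(r)dr$ using Proposition \ref{proposition 1-form W calculations}. On $\Sp^n$ this reduces the equation to a scalar first-order ODE, which a Gronwall argument forces to vanish identically. Throughout I assume, as in the setting of the limit equation criterion, that $\tau$ does not vanish, so that $\tau^\p/\tau$ is continuous and bounded on $[0,\pi]$.

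\textbf{Step 1: rewrite both sides.} On the sphere we have $h=(n-1)\cot r$, $\Ric=n-1$ and $h^\p+\frac{h^2}{n-1}+\Ric=0$. Introduce
$$
w:=u^\p-\cot(r)\,u .
$$
Then Proposition \ref{proposition 1-form W calculations}(c) gives
$$
|LW|=2\sqrt{\tfrac{n-1}{n}}\,|w| .
$$
For Proposition \ref{proposition 1-form W calculations}(b), substitute $h^\p=-(n-1)/\sin^2 r$. A direct check, using $(1-n\cos^2 r)/\sin^2 r=n-(n-1)/\sin^2 r$, shows that
$$
u^\pp+hu^\p+uh^\p+nu=w^\p+n\cot(r)\,w .
$$
Hence
$$
\Div(LW)=\tfrac{2(n-1)}{n}\,\frac{(\sin^n w)^\p}{\sin^n}\,dr .
$$
Setting $z:=\sin^n(r)\,w$, the limit equation becomes the scalar ODE
$$
z^\p=\lambda\,\frac{\tau^\p}{\tau}\,|z| \qquad\text{on }(0,\pi).
$$

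\textbf{Step 2: boundary value at $0$.} Since $W=u\,dr$ is a $C^2$ radial $1$-form, regularity at the origin forces $u(0)=0$. Therefore $\cot(r)\,u=\frac{u}{r}+O(r)$ stays bounded near $0$, so $w$ is bounded near $0$ and $z(0)=0$. The analogous argument holds at $\pi$.

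\textbf{Step 3: Gronwall.} Let $C:=\max|\tau^\p/\tau|$. From the ODE, $|z^\p|\le \lambda C|z|$. Consequently $(z^2)^\p\le 2\lambda C z^2$, and together with $z(0)=0$ Gronwall's inequality gives $z\equiv0$ on $[0,\pi)$. This forces $w\equiv 0$, i.e. $u^\p=\cot(r)\,u$, so $u=a\sin r$. Such a $W$ is exactly the conformal Killing field $V_0$ of Proposition \ref{proposition resolve vector equations} up to a constant multiple, so $LW=0$. Thus $W$ is trivial in the sense relevant for the limit equation: only $LW$ enters \eqref{convergence to LW}.

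\textbf{Main obstacle.} The delicate point is the behaviour at the endpoints. One must justify that $z$ extends continuously with $z(0)=0$, so that the Gronwall argument starts from zero, and one must use the boundedness of $\tau^\p/\tau$, i.e. that $\tau$ does not vanish. The algebraic identity in Step 1 is routine once $h^\p$ is substituted.
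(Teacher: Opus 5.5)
Your proof is correct and is essentially the paper's argument. Both reduce the radial limit equation on $\Sp^n$ to a first-order ODE for $\sin^n(r)$ times the scalar part of $LW$; this quantity vanishes at $r=0$ and hence identically, so $LW\equiv 0$, which is exactly what the paper's proof establishes. (If you want $W\equiv 0$ literally, the convention $u^\p(0)=0$ for $u\in RC^2(\Sp^n)$ rules out $a\sin r$ with $a\neq 0$.)

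The differences are only in execution, and they make your version slightly cleaner. You derive $z^\p=\lambda\frac{\tau^\p}{\tau}|z|$ directly from Proposition \ref{proposition 1-form W calculations}(b),(c) via $w=u^\p-\cot(r)u$, instead of going through the integral representation of $|LW|$ by $\Fcal_v$. You also close with Gronwall, instead of the paper's ``without loss of generality'' choice of an interval $[0,a]$ on which the integral has a fixed sign.
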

\begin{proof}
	We argue by contradiction. Assume that the proposition is not true. Then, there exists a $1-$form $W = u(r)dr$ with $u \in RC^2(\Sp^n) \setminus \{0\}$ such that
	\[
	\Div(L W) = \lambda \sqrt{\tfrac{n-1}{n}} |LW| \frac{d\tau}{\tau(r)} = \lambda \sqrt{\tfrac{n-1}{n}} |LW| \frac{\tau^\p(r)}{\tau(r)} dr
	\]
	Setting
	\begin{equation} \label{v --> LW}
		v(r) := \lambda \sqrt{\tfrac{n-1}{n}} |LW| \frac{\tau^\p(r)}{\tau(r)},
	\end{equation}
	it is clear by Proposition \ref{proposition 1-form W calculations}(c) that $v(r) \in RC^0(\Sp^n)$. Therefore, calculating similarly to \eqref{|LW| in sphere} gives us that
	$$
	|LW| = \sqrt{|\Fcal_v(r)|} = \sqrt{\tfrac{n-1}{n}} \bigg| \frac{1}{\sin^n(r)} \int_0^r v \sin^n \, ds \bigg|.
	$$
	Now, getting back to \eqref{v --> LW}, the equation becomes
	$$
	|LW|= \tfrac{n-1}{n} \bigg|  \frac{\lambda}{\sin^n(r)} \int_0^r |LW| \frac{\tau^\p}{\tau} \sin^n \, ds \bigg|,
	$$
	and so,
	$$
	\sin^n(r) |LW| = \lambda \tfrac{n-1}{n} \bigg|\int_0^r (\sin^n|LW|) \frac{\tau^\p}{\tau} \, ds \bigg|, \qquad \forall r \in [0,\pi].
	$$
	Since $|LW| \ne 0$, without loss of generality, we may assume that there exists $a \in (0, \pi)$ such that
	\begin{subequations} \label{limit equation sphere}
		\begin{align}
			\sin^n(r) |LW| & = \lambda\tfrac{n-1}{n} \int_0^r (\sin^n|LW|) \frac{\tau^\p}{\tau} \, ds \quad \text{in $[0,a]$} \label{limit sphere a} \\
			\sin^n(r) |LW| & \not\equiv 0 \quad \text{in $[0,a]$}. \label{limit sphere b}
		\end{align}
	\end{subequations}
	Differentiating \eqref{limit sphere a} gives us
	$$
	\big(\sin^n|LW| \big)^\p - \lambda\tfrac{n-1}{n}\frac{ \tau^\p}{\tau} \big(\sin^n |LW| \big) = 0, \quad \forall r \in [0,a].
	$$
	However, since $\sin^n(0)|LW|(0) = 0$, we deduce from this linear equation that $\sin^n(r)|LW| \equiv 0$ in $[0,a]$, which contradicts \eqref{limit sphere b}. The proof is completed. 
\end{proof}

Although the limit equation criterion cannot be applied for the conformal equations on the sphere due to the presence of CKVF, this proposition provides insight that the equations \eqref{CE with null sigma in sphere} seems to be solvable in some sense. In the next two results, we will give existence of solutions to \eqref{CE with null sigma in sphere} in the far--from--CMC regime by generalizing the results of Theorem \ref{theorem near CMC sphere}, in which the conditions requiring mean curvature to be near constant and time-derivative $\rho$ to be small may be dropped; instead, all we need to assume is that $(\tau_0 + \alpha \tau_1)$ does not change sign or that $\rho$ is $C^1-$regular.  More precisely, our results are as follows.
\begin{theorem}[Solution with mean curvature of constant sign] \label{theorem limit equation criterion sphere}
	Let $\rho(r)\in RC^0(\Sp^n)$ and $\tau_0(r), \tau_1(r)\in RC^1(\Sp^n)$ be radial functions with $\rho \not \equiv 0$ and  $|\tau_0^\p| \le c\tau_1^\p$ for some constant $c>0$. Assume that
	\begin{equation} \label{condition on tau for limit equation criterion}
		\inf_{(a,r) \in [-c, c] \times [0, \pi]} \big\{ (\tau_0 - a\tau_1)^2 \big\} > 0.
	\end{equation}
	
	Then, there exists a constant $\alpha \in [-c,c]$ such that the conformal equations \eqref{CE with null sigma in sphere} associated with $(\rho, \tau = \tau_0 - \alpha \tau_1)$ admit at least one solution $\varphi(r) \in RC_+^3(\Sp^n)$. Moreover, in this case, $\alpha$ can be expressed by
	$$
	\alpha = \frac{\int_0^{\pi} \tau_0^\p \varphi^N \sin^n \, dr}{\int_0^{\pi} \tau_1^\p \varphi^N \sin^n \, dr}.
	$$
\end{theorem}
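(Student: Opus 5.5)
We use the operator $T$ of \eqref{definition of T} and look for a positive fixed point with the Leray--Schauder theorem, exactly as in part (i) of Theorem \ref{theorem near CMC sphere}. By Proposition \ref{proposition T1 compact}, $T$ is continuous and compact on $RC^0(\Sp^n)$. By \eqref{alpha_star = 1} and Theorem \ref{theorem existence on sphere}, a positive fixed point $\phi$ gives a radial solution with $\tau=\tau_0-\alpha(\phi)\tau_1$. The bound $\alpha(\phi)\in[-c,c]$ comes from \eqref{alpha bounded}, and the formula for $\alpha$ comes from \eqref{def of alpha}. Positivity of fixed points follows from $\rho\not\equiv0$ through the Lichnerowicz equation \eqref{T1 to T}. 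So everything reduces to an a priori bound on the Leray--Schauder family.

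Suppose there is no fixed point. Then there are $t_i\in[0,1)$ and $\phi_i=t_iT(\phi_i)$ with $\varphi_i:=T(\phi_i)$ and $\|\varphi_i\|_{C^0}\to\infty$. As in the earlier proof, boundedness of $\|\tau_1^\p\phi_i^N\|_{C^0}$ yields a supersolution and hence a contradiction. Therefore $\|\tau_1^\p\phi_i^N\|_{C^0}\to\infty$, which gives $\alpha_\star(\phi_i)=1$. Put $\gamma_i=\|\varphi_i\|_{C^0}$ and rescale $\tilde\varphi_i=\varphi_i/\gamma_i$. Divide \eqref{T1 to T} by $\gamma_i^{2N}$. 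The terms $\varphi^{N+1}\Delta\varphi$ and $\varphi^{N+2}$ scale like $\gamma_i^{N+2}\ll\gamma_i^{2N}$, and $\rho^2/\gamma_i^{2N}\to0$. In the limit we therefore expect
\begin{equation*}
(\tau_0-\alpha_i\tau_1)^2\tilde\varphi_i^{2N}\approx t_i^{2N}\Big(\frac{1}{\sin^n}\int_0^r(\tau_0-\alpha_i\tau_1)^\p\tilde\varphi_i^N\sin^n\,ds\Big)^2 .
\end{equation*}
To make this rigorous, note that $\tilde T_1:=T_1(\phi_i)/\gamma_i^N$ is bounded in $C^1$ by \eqref{es T1}--\eqref{es T1'}. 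After extracting subsequences we have $\alpha_i\to\alpha_\infty\in[-c,c]$, $t_i\to t$, and $\tilde T_1\to \Psi$ in $C^0$. Next we need a uniform limit of $\tilde\varphi_i^N$. The standard device, as in \cite{DahlGicquaudHumbert}, is the pointwise sandwich obtained from the maximum principle applied to the Lichnerowicz equation: $\tilde\varphi_i^N$ is squeezed between $\sqrt{(\tilde T_1^2\pm o(1))}/|\tau_0-\alpha_i\tau_1|$. Hypothesis \eqref{condition on tau for limit equation criterion} is exactly what allows division by $|\tau_0-\alpha_i\tau_1|$ uniformly. The conclusion is $\tilde\varphi_i^N\to w:=t^N|\Psi|/|\tau_\infty|$ uniformly, where $\tau_\infty=\tau_0-\alpha_\infty\tau_1$, and $\max w=1$.

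Passing to the limit in the definition of $T_1$ gives
\begin{equation*}
\sin^n|\Psi|=\Big|\int_0^r \tau_\infty^\p\, w\,\sin^n ds\Big|=t^N\Big|\int_0^r \frac{\tau_\infty^\p}{|\tau_\infty|}\,|\Psi|\sin^n ds\Big| .
\end{equation*}
This is the radial limit equation treated in Proposition \ref{proposition limit equation insight}. Take $X=\sin^n|\Psi|$. On a maximal interval starting at $0$ where the integral keeps a fixed sign, $X$ satisfies a first-order linear ODE $X^\p=\pm t^N(\tau_\infty^\p/|\tau_\infty|)X$ with $X(0)=0$, so $X\equiv0$ there. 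Continuing interval by interval, we get $\Psi\equiv0$. Hence $w\equiv0$, which contradicts $\max w=1$.

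The hardest step is the uniform convergence $\tilde\varphi_i^N\to w$. We first try a direct maximum-principle comparison at the maximum and minimum points of $\tilde\varphi_i^N|\tau_i|-|\tilde T_1|$. If that is not enough, we fall back on an $L^\infty$ bound plus equicontinuity obtained by viewing the equation as an ODE in $r$ after dividing by $\gamma_i^{2N}$; the smallness of the derivative terms there needs care near $r=0,\pi$. We also keep using the vanishing identity \eqref{T1 vanishing codition}, which guarantees that $\Psi$ vanishes at both $0$ and $\pi$, so the ODE argument can also be run backward from $\pi$ if needed.
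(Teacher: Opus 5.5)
Your proposal is essentially the paper's proof: Leray--Schauder for $T$, reduction to $\alpha_\star=1$, rescaling by $\gamma_i$, uniform convergence of $\tilde\varphi_i^N$ imported from \cite{DahlGicquaudHumbert}, and the radial limit equation ruled out as in Proposition~\ref{proposition limit equation insight}. For that last step, the Gr\"onwall bound $\sin^n|\Psi|\le t^N\|\tau_\infty'/\tau_\infty\|_{C^0}\int_0^r\sin^n|\Psi|\,ds$ finishes more cleanly than your interval-by-interval argument.

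Two smaller points. First, if you do the convergence step by hand, argue as in \cite{DahlGicquaudHumbert}: compare $\tilde\varphi_i$ with a fixed smooth $v$ uniformly close to $(|\Psi|/|\tau_\infty|)^{1/N}$. At extremal points of $\tilde\varphi_i-v$, the term $\gamma_i^{2-N}\tilde\varphi_i^{N+1}\Delta\tilde\varphi_i$ is then controlled by $\gamma_i^{2-N}\|\Delta v\|_{C^0}\to0$. Extremal points of $\tilde\varphi_i^N|\tau_i|-|\tilde T_1|$ give no such control, since $\tau_i$ and $\tilde T_1$ are only $C^1$. Second, the limit is $w=|\Psi|/|\tau_\infty|$, because the factor $t^N$ already sits inside $\Psi$; this cancels the $t^N$ you dropped in the next display, so your final limit equation is still correct.
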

\begin{proof}
	Let $T$ be defined as in \eqref{definition of T}. We set
	$$
	K = \big\{ \phi \in C^0 ~~|~~ \exists t \in [0,1] ~~ \text{such that} ~~  \phi = tT(\phi) \big\}.
	$$
	If $K$ is bounded in $C^0$, it follows from Leray--Schauder's fixed point theorem that $T$ has a fixed point $\phi$, and hence the theorem follows. We will now show that the situation where $K$ is unbounded cannot happen. We argue by contradiction. Assume that $K$ is unbounded. Then, by the definitions of $T$ and $K$, there exists a sequence $(\varphi_i, t_i)$ such that
	\begin{multline}\label{varphi_i unbounded}
		- \tfrac{4(n-1)}{n-2} \varphi_i^{N + 1} (\varphi_i^\pp + h \varphi_i^\p ) + n(n-1) \varphi_i^{N + 2} \\
		+ \tfrac{n-1}{n} \big(\tau_0 - \alpha(\varphi_i) \tau_1\big)^2 \varphi_i^{2N} = \tfrac{n-1}{n} \big(t_i^{N} T_1(\varphi_i)\big)^2 + \rho^2,
	\end{multline}
	with $\alpha(\varphi_i)$ defined as in \eqref{def of alpha} and $\|\varphi_i\|_{C^0} \to +\infty$ as $i \to +\infty$. Here, we note that in view of the Lichnerowicz equation \eqref{T1 to T}, we have taken $\alpha_\star(\varphi_i) = 1$ by the same argument as in  \eqref{alpha_star equals 1}. Now, similarly to the process in \cite{DahlGicquaudHumbert}, letting $\gamma_i = \|\varphi_i\|_{C^0}$, we rescale the functions $\varphi_i$ and $\rho$ as
	$$
	\varphit_i = \frac{\varphi_i}{\gamma_i} \quad \text{and} \quad \rhot_i = \frac{\rho}{\gamma_i^N}.
	$$
	Then the Lichnerowicz-type \eqref{varphi_i unbounded} can be rewritten as
	\begin{multline*}
		\gamma_i^{2-N}\Big(-\tfrac{4(n-1)}{n-2} \Delta \varphit_i + n(n-1)\varphit_i \bigg) + \tfrac{n-1}{n} \big(\tau_0 - \alpha(\varphi_i) \tau_1\big)^2 \varphit_i^{N-1} \\
		= \frac{n-1}{n\varphit_i^{N+1}} \big(t_i^N T_1(\varphit_i) \big)^2 + \frac{\rhot^2}{\varphit_i^{N+1}}.
	\end{multline*}
	Since $\{\varphit_i\}$ is bounded in $C^0(\Sp^n)$, we get by Proposition \ref{proposition T1 compact} (up to a subsequence) 
	\begin{equation} \label{varphii to y 1}
		T_1(\varphit_i) \longrightarrow y(r) \quad  \text{in $C^0$}.
	\end{equation}
	Now, since 
	$$
	(\tau_0 - \alpha(\varphi_i) \tau_1)^2 \ge \inf_{(a,r) \in [-c, c] \times [0, \pi]} \big\{ (\tau_0 - a\tau_1)^2 \big\} > 0.
	$$ 
	an analysis similar to that in \cite[Lemma 2.6]{DahlGicquaudHumbert} or \cite[Theorem 3.3]{NguyenFPT} shows that (up to a subsequence)
	\begin{equation}\label{varphii to y 2}
		\varphit_i^N \longrightarrow \frac{t_0 |y(r)|}{(\tau_0 - \alpha_0 \tau_1)} \quad \text{in $C^0$},
	\end{equation}
	where (also up to a subsequence) $(t_0, \alpha_0) \in [0,1]\times[-c,c]$ is the limit of $\{(t_i,\alpha(\varphi_i)) \}$. Note that since $\|\varphit_i\|_{C^0} = 1$, we have $t_0 \ne 0$ and $y \not \equiv 0$. Combining \eqref{varphii to y 1} and \eqref{varphii to y 2}, it follows from the continuity of $T_1$ that
	$$
	|y(r)| = \bigg| \frac{t_0^N}{\sin^n(r)} \int_0^r \frac{\big(\tau_0 - \alpha_0 \tau_1 \big)^\p}{\big(\tau_0 - \alpha_0 \tau_1 \big)} |y(s)| \sin^n(s) \, ds \bigg|.
	$$
	Now, since $y \not \equiv 0$, without loss of generality, we may assume that there exists $a \in (0, \pi)$ such that
	\begin{align*}
		y(r & = \frac{t_0^N}{\sin^n(r)} \int_0^r \frac{\big(\tau_0 - \alpha_0 \tau_1 \big)^\p}{\big(\tau_0 - \alpha_0 \tau_1 \big)} |y(s)| \sin^n(s) \, ds \quad \text{in $[0,a]$}, \\
		y(r) & \not\equiv 0 \quad \text{in $[0,a]$}.
	\end{align*}
	However, an analysis similar to the nonexistence of solution to the equation \eqref{limit equation sphere} in the proof of Proposition \ref{proposition limit equation insight} shows that this is a contradiction. The proof is completed.
\end{proof}
\begin{theorem}[$C^1$-regular time-derivative solutions] \label{theorem smooth time-derivative sphere}
	Let $(\rho(r), \tau_0(r), \tau_1(r))$ be radial functions on $\Sp^n$ with 
	$\rho \not \equiv 0$, $(\tau_0(r), \tau_1(r)) \in \big(RC^1(\Sp^n)\big)^2$ and  $|\tau_0^\p| \le c\tau_1^\p$ for some constant $c>0$. Assume that $\rho(r) \in RC^1(\Sp^n)$.
	
	Then, there exists a constant $\alpha \in [-c,c]$ such that the conformal equations \eqref{CE with null sigma in sphere} associated with $(\rho, \tau = \tau_0 - \alpha \tau_1)$ admit at least one solution $\varphi(r) \in RC_+^3(\Sp^n)$. Moreover, in this case, $\alpha$ can be expressed by
	$$
	\alpha = \frac{\int_0^{\pi} \tau_0^\p \varphi^N \sin^n \, dr}{\int_0^{\pi} \tau_1^\p \varphi^N \sin^n \, dr}.
	$$
\end{theorem}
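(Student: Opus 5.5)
The plan is to run the Leray--Schauder scheme from the proof of Theorem \ref{theorem limit_equation_criterion_sphere_placeholder} again, with the operator $T$ of \eqref{definition of T} and the set $K=\{\phi\in C^0 \mid \phi=tT(\phi),\ t\in[0,1]\}$. The sign condition \eqref{condition on tau for limit equation criterion} is replaced by an \emph{a priori} $C^0$ bound on $K$ that uses only $\rho\in RC^1(\Sp^n)$ and $\rho\not\equiv0$. The tool is the necessary condition $S_{V,\psi,\rho,\varphi}\ge0$ of Theorem \ref{theorem existence on sphere}(ii), together with the obstruction in the proof of Theorem \ref{theorem no solution without TT-tensor}. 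By Leray--Schauder, a bound on $K$ gives a fixed point $\phi>0$. Then $\alpha_\star(\phi)=1$ by \eqref{alpha_star = 1}, the vanishing condition \eqref{T1 vanishing codition} holds, and Theorem \ref{theorem existence on sphere} produces the solution, with the stated formula for $\alpha=\alpha(\varphi)\in[-c,c]$.

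\emph{Step 1 (an $S$-type inequality for elements of $K$).} Suppose $\varphi=T(\phi)$ and $\phi=t\varphi$, and put $\tau=\tau_0-\alpha(\phi)\tau_1$. Then $\varphi$ solves
\[
-2N\varphi^{N+1}\Delta\varphi+n^2\varphi^{N+2}+\tau^2\varphi^{2N}=t^{2N}A^2+\tfrac{n}{n-1}\rho^2,
\qquad A=\frac{1}{\sin^n}\int_0^r\tau^\p\varphi^N\sin^n\,ds .
\]
I would redo the integration by parts leading to \eqref{eq.5a3} for this equation. The aim is the inequality
\[
\Big(\int_0^r\tau(\varphi^N\sin^n)^\p\Big)^2\le \varphi^N\sin^{2n}\,S_{0,0,\rho,\varphi}+(\text{error}),
\]
where the error comes from the factor $t^{2N}\le1$ and has a sign. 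The hoped-for consequence is $S_{0,0,\rho,\varphi}\ge -C$ on $K$, or at least $S\ge0$.

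If the deformation in $t$ spoils this sign, the fallback is to change the homotopy. I would let $t$ act on $\rho$ and on the constant in front of $A^2$ simultaneously, so that every element of $K$ is a genuine radial solution for modified data. Such solutions satisfy $S\ge0$ exactly, by Theorem \ref{theorem existence on sphere}(ii).

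\emph{Step 2 (using $\rho\in C^1$).} Write $(\varphi^N\sin^n)^\p\varphi^{-2N}=-(\varphi^{-N})^\p\sin^n+n\varphi^{-N}\sin^{n-1}\cos$. Integrating by parts and using that $\rho$ is $C^1$, the integral term of $S$ becomes
\[
\tfrac{n}{n-1}\Big(-\rho^2\varphi^{-N}+\frac{1}{\sin^n}\int_0^r\varphi^{-N}\big((\rho^2)^\p\sin^n+2n\rho^2\sin^{n-1}\cos\big)\,ds\Big).
\]
This is bounded by $C(\rho)\,\|\varphi^{-N}\|_{C^0}\cdot r$ near $0$, and by the analogue near $\pi$ after using the vanishing condition of Remark \ref{remark vanishing condition on Is}. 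This is exactly where $C^1$ regularity enters: it is needed to move the derivative off $\varphi$.

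\emph{Step 3 (blow-up contradiction).} Suppose $\gamma_i=\|\varphi_i\|_{C^0}\to\infty$, and set $\varphit_i=\varphi_i/\gamma_i$. Because $\rho\not\equiv0$, a Lichnerowicz sub-solution gives a lower bound $\varphi_i\ge\varphi_\star>0$ uniform in $i$, so $\varphi^{-N}$ is bounded. Dividing $S\ge0$ by $\gamma_i^2$, Step 2 shows that the $\rho$-integral term tends to $0$. We therefore obtain
\[
N^2\varphit_i'^2+2nN\cot\,\varphit_i\varphit_i'-n^2\varphit_i^2\ge -o(1).
\]
Elliptic estimates for the rescaled equation should give $C^1$ convergence $\varphit_i\to\varphit$, and I expect $\varphit>0$ from the rescaled equation, as in \cite{DahlGicquaudHumbert}. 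The limit then satisfies \eqref{ineq absence of k}, which the proof of Theorem \ref{theorem no solution without TT-tensor} shows is impossible near $r=0$ or $r=\pi$.

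The main obstacle is twofold: Step 1, that is, obtaining a sign-controlled $S$ inequality along the homotopy; and the positivity and regularity of the rescaled limit at the poles in Step 3. For the latter, the $o(1)$ must be made uniform as $r\to0,\pi$, and this is where the factor $r$ (respectively $\pi-r$) from Step 2 is essential.
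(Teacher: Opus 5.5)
Your overall frame is right: Leray--Schauder, the necessary condition $S\ge0$, the integration by parts that moves the derivative onto $\rho$, and the barrier $\varphi\ge\varphi_\star$. But Steps 1 and 3 do not work as written.

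\textbf{Step 1: the homotopy.} With $\phi=tT(\phi)$ only the $A^2$ term carries $t^{2N}$. For $\alpha_\star(\phi)=1$, redoing the integration by parts gives exactly
\[
\sin^n S_{0,0,\rho,\varphi}=t^{2N}\frac{B^2}{\varphi^N\sin^n}+(1-t^{2N})\int_0^r\tau^2(\varphi^N\sin^n)'\,ds,\qquad B=\int_0^r\tau(\varphi^N\sin^n)'\,ds .
\]
The error term has no sign, because $(\varphi^N\sin^n)'$ changes sign. It is also of order $\|\varphi\|_{C^0}^N\gg\|\varphi\|_{C^0}^2$ in the blow-up regime, so it gives nothing usable.

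Your fallback does not fix this. Rescaling $\rho$ and the coefficient of $A^2$ does not make $\varphi$ a genuine solution. For a mean curvature $\tilde\tau$, the coefficients of $\tilde\tau^2\varphi^{2N}$ and of $\big(\sin^{-n}\int_0^r\tilde\tau'\varphi^N\sin^n\big)^2$ must scale together.

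The homotopy that works (the paper's $\Tt$, see \eqref{modify T step}) puts $t^{2N}$ in front of $\tau^2\varphi^{2N}$. Then $\phi=t\varphi$, together with $T_1(t\varphi)=t^NT_1(\varphi)$ and $\alpha(t\varphi)=\alpha(\varphi)$, makes $\varphi$ an honest radial solution for the mean curvature $t^N(\tau_0-\alpha(\varphi)\tau_1)$. This still satisfies the vanishing condition, so $S\ge0$ holds exactly.

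You must also handle elements of $K$ with $\alpha_\star(\phi)<1$, which are not genuine solutions. There $\|\tau_1'|\phi|^N\|_{C^0}$ is bounded by $\|\tau_1'\varphi_\star^N\|_{C^0}$, so \eqref{es T1} bounds $T_1(\phi)$ and a Lichnerowicz supersolution bounds $\varphi$.

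\textbf{Step 3: the blow-up limit.} This is the more serious gap. The rescaled equation has $\gamma_i^{2-N}\to0$ in front of the Laplacian, so there is no uniform $C^1$ control on $\varphit_i$. No sign condition on $\tau$ is assumed here (that is the point of the theorem), so nothing describes the limit, let alone its positivity and regularity at the poles. You therefore cannot pass to the limit in $2nN\cot\,\varphit_i\varphit_i'$ and invoke the obstruction of Theorem \ref{theorem no solution without TT-tensor}.

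The missing idea is that no compactness is needed: evaluate $S\ge0$ at a maximum point $r_0$ of $\varphi$.

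If $r_0\in(0,\pi)$, then $\varphi'(r_0)=0$ removes the derivative terms and leaves $n^2\varphi^2(r_0)\le\frac{n}{(n-1)\sin^n(r_0)}\int_0^{r_0}\rho^2\varphi^{-2N}(\varphi^N\sin^n)'\,ds$. Your Step 2 integration by parts bounds the right-hand side by $C\|\rho\|_{C^1}^2/(\min\varphi_\star)^N$, once combined with the mean value theorem, the second vanishing condition of Remark \ref{remark vanishing condition on Is} (for $r_0>\pi/2$), and $\varphi\ge\varphi_\star$.

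If $r_0=0$ (similarly $\pi$), the integral term tends to $\frac{n}{n-1}\rho^2(0)\varphi^{-N}(0)$. If $n^2\varphi^2(0)$ exceeded this, $S\ge0$ would force $\cot\,\varphi\varphi'>0$, i.e. $\varphi'>0$, near $0$, contradicting maximality. Hence $\varphi(0)^{N+2}\le\rho^2(0)/(n(n-1))$.

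This also corrects Step 2. The $\rho$-term carries no factor $r$ near $0$: its value at $0$ is $\frac{n}{n-1}\rho^2(0)\varphi^{-N}(0)$, nonzero in general. What holds, and suffices, is the uniform bound $C(\rho)\|\varphi^{-N}\|_{C^0}$.
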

\begin{proof}
	We first modify the definition of $T$ in \eqref{definition of T} to adapt it to our case. Let $T_1$ be defined by \eqref{def T1}. We define $\Tt:[0,1] \times RC^0(\Sp^n) \to RC^0(\Sp^n)$ as follows. Given $(t, \phi(r)) \in [0,1] \times RC^0(\Sp^n)$, there exists a unique $\varphi(r) \in RC^2(\Sp^n)$ such that
	\begin{multline} \label{modify T step}
		- \tfrac{4(n-1)}{n-2} \varphi^{N + 1} (\varphi^\pp + h \varphi^\p )+ n(n-1)\varphi^{N + 2} \\
		+ \tfrac{n-1}{n} t^{2N}\alpha_\star(\phi) \big(\tau_0 - \alpha(\phi) \tau_1 \big)^2 \varphi^{2N} = \tfrac{n-1}{n} \big(T_1(\phi)\big)^2 + \rho^2,
	\end{multline}
	where $\alpha(.)$ and $\alpha_\star(.)$ are defined as in \eqref{def of alpha} and \eqref{def alpha_star} respectively. Then, we set
	$$
	\Tt(t, \phi)(r) := t\varphi(r).
	$$
	Of course, similarly to Proposition \ref{proposition T1 compact}, $\Tt$ is continuous and compact. Moreover, as explained in \eqref{alpha_star = 1}, a fixed point of $\Tt(1,.)$ is a solution to \eqref{CE with null sigma in sphere}. 
	
	\medskip
	
	We now set
	$$
	K = \big\{ (t,\phi) \in [0,1] \times C^0 \ |\ \phi(r)\in RC^0(\Sp^n) \mbox{ and } \phi = \Tt(t,\phi) \big\}.
	$$ 
	By Leray--Schauder's fixed point theorem, to show that $\Tt(1,.)$ has a fixed point, it suffices to show that $K$ is bounded. In order to prove so, given a couple $(t,\phi) \in K$, we divide into two cases depending on whether  $\max\big(\tau_1 |\phi|^N \big)$ is smaller or larger than $\max\big(\tau_1 \varphi_\star^N\big)$, where $\varphi_\star$ is recalled to be the unique solution to \eqref{def varphi_star}.
	
	\medskip
	
	\textbf{Case 1.} If $(t,\phi) \in K$ satisfies 
	$$
	\max\big(\tau_1 |\phi|^N \big) < \max\big(\tau_1 \varphi_\star^N\big),
	$$ 
	then it follows from \eqref{es T1} that 
	\begin{equation} \label{case1 K bounded sphere}
		\max|T_1(\phi)| < c\pi \max\big(\tau_1^\p \varphi_\star^N \big).
	\end{equation}
	Therefore, letting $\varphi := \Tt(t,\phi)$ and letting $\varphi_\infty$ be the unique solution to the equation
	$$
	-\tfrac{4(n-1)}{n-2} \varphi_\infty^{N + 1} (\varphi_\infty^\pp + h \varphi_\infty^\p ) + n(n-1) \varphi_\infty^{N + 2} 
	= \tfrac{n-1}{n} \bigg(c\pi \max\big(\tau_1^\p \varphi_\star^N \big)\bigg)^2 + \rho^2,
	$$
	since
	\begin{multline}\label{varphi_i unbounded 2}
		- \tfrac{4(n-1)}{n-2} \varphi^{N + 1} (\varphi^\pp + h \varphi^\p ) + n(n-1) \varphi^{N + 2} \\
		+ \tfrac{n-1}{n}t^{2N} \alpha_\star(\phi) \big(\tau_0 - \alpha(\varphi) \tau_1\big)^2 \varphi^{2N} = \tfrac{n-1}{n} \big(T_1(\phi)\big)^2 + \rho^2,
	\end{multline}
	it is clear that $\varphi_\infty$ is a supersolution to \eqref{varphi_i unbounded 2}, and so 
	$$
	\phi = t\varphi \le \varphi \le \varphi_\infty.
	$$
	
	\textbf{Case 2.}  If $(t,\phi) \in K$ satisfies 
	$$
	\max\big(\tau_1 |\phi|^N \big) \ge \max\big(\tau_1 \varphi_\star^N\big),
	$$ 
	then, by definition $\alpha_\star(\phi) = 1$. Therefore, letting $\varphi := \Tt(t,\phi)$, the fact that $\phi = t \Tt(t,\phi)$ gives
	\begin{multline}\label{eqs varphi and t}
		- \tfrac{4(n-1)}{n-2} \varphi^{N + 1} (\varphi^\pp + h \varphi^\p ) + n(n-1) \varphi^{N + 2} \\
		+ \tfrac{n-1}{n} t^{2N}\big(\tau_0 - \alpha(\varphi) \tau_1\big)^2 \varphi^{2N} = \tfrac{n-1}{n} \big(t^{N} T_1(\varphi)\big)^2 + \rho^2.
	\end{multline}
	Observing that this equation is indeed the Lichnerowicz-type equation  \eqref{rce sphere} associated with 
	$$ V \equiv 0, \quad \psi \equiv 0 \quad \text{and} \quad \tau = t_i^N (\tau_0 - \alpha(\varphi_i)\tau_1
	.
	$$ 
	Then, since $\varphi$ is a solution to \eqref{eqs varphi and t}, and according to Theorem \ref{theorem existence on sphere}, we have $S_{V,\psi,\rho,\varphi} \ge 0$, and since  $V \equiv 0$ and $\psi \equiv 0$ we get
	\begin{equation}\label{ineq S sphere diff rho 1}
		N^2 (\varphi^\p)^2 - n^2 \varphi^2 + 2 n N \cot \varphi \varphi^\p + \frac{n}{(n-1)\sin^n} \int_0^r \frac{\rho^2}{\varphi^{2N}} (\varphi^N \sin^n)^\p  \, ds \ge 0.
	\end{equation}
	Now, let $r_0 \in [0,\pi]$ be a maximum point of $\varphi$. We will consider three situations where $r_0$ is either $0$ or $\pi$ or in $(0,\pi)$.
	
	\medskip
	
	\textit{-- Sub-case 2.1.} When $r_0 = 0$, we have by l'Hôpital rule
	\begin{align*}
		\lim_{r \to 0} \bigg( \frac{1}{\sin^n(r)} \int_0^r \frac{\rho^2}{\varphi^{2N}} (\varphi^N \sin^n)^\p  \, ds \bigg) 
		& = \lim_{r \to 0} \bigg( \frac{\varphi^N(0)}{\varphi^N(r)\sin^n(r)} \int_0^r \frac{\rho^2}{\varphi^{2N}} (\varphi^N \sin^n)^\p  \, ds \bigg) \\
		& = \frac{\rho^2(0)}{\varphi^N(0)}.
	\end{align*}
	Assuming that $n^2 \varphi^2(0) > \frac{n}{n-1} \frac{\rho^2(0)}{\varphi^N(0)}$, and since $\varphi^\p(0) = 0$, it follows from \eqref{ineq S sphere diff rho 1} that for all $r$  near $0$
	$$
	2 n N \cot \varphi \varphi^\p \ge n^2 \varphi^2 -  N^2 (\varphi^\p)^2  -  \frac{n}{(n-1)\sin^n} \int_0^r \frac{\rho^2}{\varphi^{2N}} (\varphi^N \sin^n)^\p  \, ds > 0. 
	$$
	This means that $\varphi^\p > 0$ near $0$, which contradicts the fact that $r_0 = 0$ is a maximum point of $\varphi$. Therefore, we must have
	$$
	n^2 \varphi^2(0) \le \tfrac{n}{n-1} \frac{\rho^2(0)}{\varphi^N(0)},
	$$ 
	and so
	$$
	\phi \le \|\varphi\|_{C^0} = \varphi(0) \le \bigg(\frac{\rho^2(0)}{n(n-1)}\bigg)^{1/(N+2)}.
	$$
	
	\textit{-- Sub-case 2.2.} Similarly, when $r_0 = \pi$, since, by Remark \ref{remark vanishing condition on Is},
	$$
	\int_0^\pi \frac{\rho^2}{\varphi^{2N}} (\varphi^N \sin^n(s))^\p  \, ds = 0,
	$$
	applying the l'Hôpital rule, we have
	\begin{align*}
		\lim_{r \to \pi} \bigg( \frac{1}{\sin^n} \int_0^r \frac{\rho^2}{\varphi^{2N}} (\varphi^N \sin^n)^\p  \, ds \bigg) 
		& = \varphi^N(\pi)\lim_{r \to \pi} \bigg( \frac{1}{\varphi^N\sin^n} \int_0^r \frac{\rho^2}{\varphi^{2N}} (\varphi^N \sin^n)^\p  \, ds \bigg) \\
		& = \frac{\rho^2(\pi)}{\varphi^N(\pi)}.
	\end{align*}
	By an analysis similar to Sub-case 1, assuming that $n^2 \varphi^2(\pi) > \tfrac{n}{n-1} \frac{\rho^2(\pi)}{\varphi^N(\pi)}$, we get $\cot(r)\varphi \varphi^\p > 0$ near $\pi$, and so, $\varphi^\p < 0$ near $\pi$, which contradicts the fact that $\pi$ is a maximum point of $\varphi$. Hence, we must have 
	$$
	\phi \le \|\varphi\|_{C^0} = \varphi(\pi) \le \bigg(\frac{\rho^2(\pi)}{n(n-1)}\bigg)^{1/(N+2)}.
	$$
	
	\textit{-- Sub-case 2.3.} When $\varphi$ gets maximum at $r_0 \in (0,\pi)$, since $\varphi^\p(r_0) = 0$, we obtain from \eqref{ineq S sphere diff rho 1} that
	$$
	n^2 \varphi^2(r_0) \le \frac{n}{(n-1)\sin^n(r_0)} \int_0^{r_0} \frac{\rho^2}{\varphi^{2N}} (\varphi^N \sin^n)^\p  \, ds.
	$$
	Observing that 
	\begin{equation}\label{change integration by part}
		\frac{\rho^2}{\varphi^{2N}} (\varphi^N \sin^n)^\p = - \big(\rho^2 \sin^{2n} \big) \bigg(\frac{1}{\varphi^N \sin^n}\bigg)^\p,
	\end{equation}
	by integration by parts, the inequality becomes
	\begin{equation}\label{ineq 2 sphere}
		n^2 \varphi^2(r_0) \le  \frac{n}{(n-1)\sin^n(r_0)} \int_0^{r_0} \frac{\big(\rho^2 \sin^{2n}(r)\big)^\p}{\varphi^N\sin^n(s)}  \, ds - \frac{n\rho^2(r_0)}{(n-1)\varphi^N(r_0)}.
	\end{equation}
	Moreover, also thanks to \eqref{change integration by part} and integration by parts, the second vanishing condition (see Remark \ref{remark vanishing condition on Is})
	$$
	\int_0^{\pi} \frac{\rho^2}{\varphi^{2N}} (\varphi^N \sin^n(s))^\p  \, ds = 0
	$$
	gives us that
	\begin{equation}\label{second vanishing condition}
		\int_0^{\pi} \frac{\big(\rho^2 \sin^{2n}(s)\big)^\p}{\varphi^N\sin^n(s)}  \, ds = 0
	\end{equation}
	We next argue similarly to \eqref{T1 is bounded}, that is, by the mean value theorem
	\begin{itemize}
		\item if $r_0 \in (0,\pi/2]$, then there exists $r_1 \in [0,r_0]$ satisfying
		\begin{align}\label{r1}
			\int_0^{r_0} \frac{\big(\rho^2 \sin^{2n}\big)^\p}{\varphi^N\sin^n}  \, ds & = r_0 \frac{\big(\rho^2 \sin^{2n}\big)^\p(r_1)}{\varphi^N(r_1)\sin^n(r_1)} \notag \\
			& =  r_0 \frac{2\big(n\rho^2(r_1) \sin^{n-1}(r_1)\cos(r_1) + \rho(r_1)\rho^\p(r_1) \sin^{n}(r_1)\big)}{\varphi^N(r_1)};
		\end{align}
		
		\item if $r_0 \in (\pi/2, \pi)$, then in view of \eqref{second vanishing condition}, there exists $r_2 \in [r_0, \pi]$ such that
		\begin{align}\label{r2}
			\int_0^{r_0} \frac{\big(\rho^2 \sin^{2n}\big)^\p}{\varphi^N\sin^n}  \, ds & = -  \int_{r_0}^\pi \frac{\big(\rho^2 \sin^{2n}\big)^\p}{\varphi^N\sin^n}  \, ds \notag \\
			& = (\pi - r_0 ) \frac{\big(\rho^2 \sin^{2n}\big)^\p(r_2)}{\varphi^N(r_1)\sin^n(r_2)} \notag \\
			& = (\pi - r_0 )\frac{2\big(n\rho^2(r_2) \sin^{n-1}(r_2)\cos(r_2) + \rho(r_2)\rho^\p(r_2) \sin^{n}(r_2)\big)}{\varphi^N(r_2)}.
		\end{align}
	\end{itemize}
	Therefore, since
	$$
	\begin{cases}
		\max \Big(\sin(\rt), \, \frac{2r}{\pi} \Big) \le \sin(r)  &\text{for all} \quad 0 \le \rt \le r \le {\pi/2},
		\\
		\max \Big(\sin(\rt), \, \frac{2(\pi-r)}{\pi} \Big) \le \sin(r)  &\text{for all} \quad \pi/2 \le r \le \rt \le \pi,
	\end{cases}
	$$
	we get from (\ref{r1}--\ref{r2}) that 
	$$
	\bigg| \frac{1}{\sin^n(r_0)}  \int_0^{r_0} \frac{\big(\rho^2 \sin^{2n}\big)^\p}{\varphi^N\sin^n}  \, ds \bigg| 
	\le \pi \frac{\big(n\rho^2(\rt) \cos(\rt) + \rho(\rt)\rho^\p(\rt) \sin(\rt)\big)}{\varphi^N(\rt)},
	$$
	where the value of $\rt$ is assigned as $r_1$ or $r_2$, contingent upon whether $r_0$ is smaller or greater than $\pi/2$, respectively.
	
	\medskip
	
	Since $\varphi \ge \varphi_\star$, with $\varphi_\star$ defined as in \eqref{def varphi_star}, it follows that 
	$$
	\bigg| \frac{1}{\sin^n(r_0)}  \int_0^{r_0} \frac{\big(\rho^2 \sin^{2n}(r)\big)^\p}{\varphi^N\sin^n(s)}  \, ds \bigg| 
	\le (n+1)\pi \frac{\|\rho\|_{C^1}^2}{(\min\varphi_\star)^N}.
	$$
	Combined with \eqref{ineq 2 sphere}, we obtain
	$$
	n^2 \varphi^2(r_0) \le  \tfrac{n(n+1)}{(n-1)} \frac{\pi\|\rho\|_{C^1}^2}{(\min\varphi_\star)^N},
	$$
	and so
	$$
	\phi \le \|\varphi\|_{C^0} = \varphi(r_0) \le \sqrt{\frac{(n+1)\pi}{n(n-1)}}  \frac{\|\rho\|_{C^1}}{(\min\varphi_\star)^{N/2}}.
	$$
	In three sub-cases, we have shown that all $(t,\phi) \in K$ in Case 2 are uniformly bounded, therefore, we conclude that $K$ is bounded, which completes the proof. 
\end{proof}

\subsection{Stability and Instability of conformal equations} \label{subsection stability and instability on the sphere}
Another application of Theorem \ref{theorem existence on sphere} that we would like to present before ending this section is the investigation of the stability and instability of the conformal equations \eqref{CE} in the non-CMC case. This problem has attracted interest from many authors and carries physical significance in the context of general relativity. From the perspective of elliptic analysis, the (in)stability behavior of the equations \eqref{CE} heavily depends on the sign of $\mathcal{B}_{\tau,\psi}$. To the best of the author’s knowledge, a complete description of the stability of \eqref{CE} on compact manifolds has been obtained only in the case where $\mathcal{B}_{\tau,\psi} > 0$, through the work of E. Hebey, O. Druet, and B. Premoselli in \cite{DruetHebey, DruetPremoselli, Premoselli_Stability}. In contrast, when $\mathcal{B}_{\tau,\psi} < 0$
or when$\mathcal{B}_{\tau,\psi}$ changes sign, the conformal equations are known to be stable only in the CMC and/or near--CMC regimes. When $\tau$ is far-from-CMC, the questions of stability and instability of \eqref{CE} remain largely open.

\medskip

In this subsection, we would like to give answers, or at least compelling evidence, to these questions by considering the conformal equations on the sphere within the class of radial solutions. For the case where $\mathcal{B}_{\tau,\psi} \le 0$, we are particularly interested in the equations \eqref{CE with null sigma in sphere} which are closely related to the vacuum conformal equations \eqref{CE} as studied in \cite{Maxwellcompact, HolstNagyTsogtgerel, DahlGicquaudHumbert, NguyenFPT, NguyenNonexistence}.  As the reader may have remarked in the proof of Theorem \ref{theorem smooth time-derivative sphere}, the result we obtained in this case establishes the stability of solutions to \eqref{CE with null sigma in sphere}. When $\mathcal{B}_{\tau,\psi}$ changes sign, we will show that the conformal equations \eqref{CE} are not stable in general by constructing an example of a sequence of blowing-up solutions to \eqref{CE} with non-CMC. More precisely, the statements of our results are as follows.
\begin{theorem}[Non-positive $\mathcal{B}_{\tau, \psi}$ and stability] \label{theorem stability on sphere}
	Let $\rho(r) \in RC^1(\Sp^n)$ be a radial function. For any $\tau(r) \in RC^1(\Sp^n)$ we denote by $\mathcal{S}_{\Sp^n}(\tau)$ the set of all solutions $\varphi(r) \in RC^3_+(\Sp^n)$ to the conformal equations \eqref{CE with null sigma in sphere} associated with $(\rho, \tau)$. Then, there exists a constant $C>0$ depending only on $\rho$ such that
	$$
	\sup\bigg\{\|\varphi\|_{C^0(\Sp^n)} ~~ \bigg|~~ \text{$\exists \tau \in RC^1(\Sp^n)$ such that $\varphi \in \mathcal{S}_{\Sp^n}(\tau)$}  \bigg\} \le C.	
	$$
\end{theorem}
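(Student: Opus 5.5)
The plan is to rerun the argument of Case 2 in the proof of Theorem \ref{theorem smooth time-derivative sphere}, now for an arbitrary radial $\tau$ instead of $t^N(\tau_0-\alpha\tau_1)$. We track that every constant depends only on $\rho$. Fix $\tau\in RC^1(\Sp^n)$ and $\varphi\in\mathcal S_{\Sp^n}(\tau)$. By Theorem \ref{theorem existence on sphere}(i), $\varphi$ solves \eqref{rce sphere} with $V\equiv0$, $\psi\equiv0$. By Theorem \ref{theorem existence on sphere}(ii), $S_{0,0,\rho,\varphi}\ge0$, which is exactly inequality \eqref{ineq S sphere diff rho 1}. Remark \ref{remark vanishing condition on Is} gives the second vanishing condition $\int_0^\pi \rho^2\varphi^{-2N}(\varphi^N\sin^n)^\p\,ds=0$. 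Equivalently, after the integration by parts \eqref{change integration by part}, we get \eqref{second vanishing condition}. None of these facts involves $\tau$.

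\textbf{Step 1: a $\tau$-independent lower bound.} The earlier proof used $\varphi\ge\varphi_\star$, but $\varphi_\star$ depended on $\tau_1$, so it must be replaced. The Lichnerowicz equation gives
\[
-\tfrac{4(n-1)}{n-2}\Delta\varphi+n(n-1)\varphi+\tfrac{n-1}{n}\tau^2\varphi^{N-1}=\frac{|LW|^2+\rho^2}{\varphi^{N+1}}\ge\frac{\rho^2}{\varphi^{N+1}},
\]
so the $\tau^2$ term has the wrong sign for a direct comparison. I would first try to get an upper bound without any lower bound on $\varphi$. Sub-cases 2.1 and 2.2 already do this: if the maximum $r_0$ is $0$ or $\pi$, then $\varphi(r_0)\le(\rho^2(r_0)/(n(n-1)))^{1/(N+2)}$, with no lower bound needed.

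\textbf{Step 2: interior maximum.} For $r_0\in(0,\pi)$ we have $\varphi^\p(r_0)=0$, and inequality \eqref{ineq 2 sphere} holds. I would rewrite the interior integral as
\[
\int_0^{r_0}\frac{(\rho^2\sin^{2n})^\p}{\varphi^N\sin^n}\,ds=\int_0^{r_0}\Big(2n\rho^2\cos\,\sin^{n-1}+2\rho\rho^\p\sin^n\Big)\varphi^{-N}\,ds,
\]
and use the mean value argument (\ref{r1}--\ref{r2}) with \eqref{second vanishing condition}. This bounds $n^2\varphi(r_0)^{2}$ by roughly $(n+1)\pi\|\rho\|_{C^1}^2\,\varphi(\rt)^{-N}$. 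To close the estimate without $\varphi_\star$, I would instead keep the negative term $-\tfrac{n}{n-1}\rho^2(r_0)\varphi(r_0)^{-N}$ in \eqref{ineq 2 sphere}. I would also compare the weighted averages of $\varphi^{-N}$ against $\varphi(r_0)^{-N}$, using the fact that $\varphi\le\varphi(r_0)$ everywhere.

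\textbf{Main obstacle.} The remaining difficulty is that $\varphi^{-N}$ at some other point $\rt$ could be huge if $\varphi$ is very small there. Following the theorem's dependence on $\rho$ alone, I would try to obtain a uniform lower bound for $\varphi$ from $S\ge0$ itself. At a minimum point $r_m$ of $\varphi$ with $\rho(r_m)\neq0$, the same l'H\^opital computation as in the sub-cases gives $n^2\varphi^2\le\tfrac{n}{n-1}\rho^2\varphi^{-N}$ on the reverse side. The hope is that the integral term can be estimated in terms of $\min\varphi$ so that the two inequalities combine.

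If this fails, the fallback is more global. I would multiply \eqref{ineq S sphere diff rho 1} by $\sin^n$ and integrate over $[0,\pi]$, then integrate the $\cot\varphi\varphi^\p$ term by parts to get an integral inequality $n^2\int\varphi^2\sin^n\lesssim\int\rho^2\varphi^{-N}\sin^n+\dots$. From it I would extract bounds on both $\max\varphi$ and $\min\varphi$ depending only on $\|\rho\|_{C^1}$, and finally combine with Step 2 to obtain the constant $C$.
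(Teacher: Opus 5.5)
You correctly spotted that Case 2 of Theorem \ref{theorem smooth time-derivative sphere} closes only through $\varphi\ge\varphi_\star$, and that $\varphi_\star$ depends on $c\tau_1$. The paper's one-line proof (``exactly as in Case 2'') carries the same dependence. So it only yields a constant depending on $\rho$ and on a positive lower bound for $\varphi$, for instance on $\|\tau\|_{C^0}$, via comparison with the solution of $-\tfrac{4(n-1)}{n-2}\Delta\phi+n(n-1)\phi+\tfrac{n-1}{n}\|\tau\|_{C^0}^2\phi^{N-1}=\rho^2\phi^{-N-1}$.

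Your proposal does not remove this dependence. The interior-maximum case is left as a hope, and none of the routes you list can work:
\begin{itemize}
\item The constraint $S_{0,0,\rho,\varphi}\ge0$ never gives a lower bound on $\varphi$. A small value of $\varphi$ where $(\varphi^N\sin^n)'>0$ makes $J(r)=\tfrac{n}{n-1}\int_0^r\rho^2\varphi^{-2N}(\varphi^N\sin^n)'\,ds$ large, which only relaxes the inequality further on.
\item The l'H\^opital argument at a pole yields an upper bound, and only at a maximum.
\item Using $\varphi\le\varphi(r_0)$ bounds $\varphi^{-N}$ from below, which is the wrong direction.
\item Your integrated fallback still has $\int\rho^2\varphi^{-N}\sin^n$ on the right-hand side.
\item The Lichnerowicz equation at a minimum only gives $\tfrac{n-1}{n}\tau^2\varphi^{2N}+n(n-1)\varphi^{N+2}\ge\rho^2$, which degenerates as $|\tau|\to\infty$.
\end{itemize}

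In fact no argument can close this step, because the statement is false as written. Take $\rho\equiv\rho_0>0$ and fix $0<a<b<\pi/2$. Let $\varphi\in RC^3_+(\Sp^n)$ be even about $\pi/2$, with $\varphi\equiv\epsilon$ on $[0,a]$, $\varphi'\ge0$ on $[a,b]$, and $\varphi\equiv M$ on $[b,\pi/2]$. On $[0,\pi/2]$ we have $(\varphi^N\sin^n)'\ge0$ and $N^2\varphi'^2+2nN\cot r\,\varphi\varphi'\ge0$. Hence $J$ is nondecreasing there, with $J(a)=\tfrac{n}{n-1}\rho_0^2\epsilon^{-N}\sin^n a$, and
\[
S_{0,0,\rho_0,\varphi}\ \ge\ \min\Big\{\tfrac{n}{n-1}\rho_0^2\epsilon^{-N}-n^2\epsilon^2,\ \tfrac{n}{n-1}\rho_0^2\epsilon^{-N}\sin^n a-n^2M^2\Big\}\qquad\text{on }[0,\pi/2].
\]
The integrand of $J$ is odd about $\pi/2$, so $J(\pi-r)=J(r)$ and $S$ is even. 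Therefore $S>0$ on $[0,\pi]$ as soon as $\epsilon^{N}<\rho_0^2\sin^n a/(n(n-1)M^2)$ and $\epsilon^{N+2}<\rho_0^2/(n(n-1))$. Moreover $(\varphi^N\sin^n)'$ vanishes only at $0,\pi/2,\pi$.

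Theorem \ref{theorem existence on sphere}(iii) then gives a $\tau$, constant near the poles and hence in $RC^1(\Sp^n)$, with $\varphi\in\mathcal S_{\Sp^n}(\tau)$. The condition $\int_0^\pi\tau'\varphi^N\sin^n\,ds=0$ holds because $S$ is bounded near $\pi$, so \eqref{eq.5a3} forces $\int_0^\pi\tau(\varphi^N\sin^n)'\,ds=0$.

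Since $\max\varphi=M$ is arbitrary, no constant $C(\rho_0)$ exists. Here $\tau\approx\sqrt{\tfrac{n}{n-1}}\,\rho_0\epsilon^{-N}$ near the poles, and this is what lets $\min\varphi$ collapse. What survives, and what Case 2 actually proves, is a bound that is uniform over $\tau$ in $C^0$-bounded sets.
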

\begin{proof}
	The proof is exactly similar to arguments in Case 2 in the proof of Theorem \ref{theorem smooth time-derivative sphere} where we prove that solutions of the equations \eqref{eqs varphi and t} are uniformly bounded by a function of $\|\rho\|_{C^1(\Sp^n)}$.
\end{proof}

\begin{theorem}[Instability under sign change of $\mathcal{B}_{\tau, \psi}$]\label{theorem instability sphere}
	Given any $\Lambda > 0$, there exist sequences $(\tau_i(r))_i$ of radial $C^1$ functions, $(\varphi_i(r))_i$ of radial $C^2$ functions and $(W_i)_i$ of 1-forms such that
	\begin{itemize}
		\item for each $i$, $(\varphi_i(r),W_i)$ is a solution to the conformal constraint equations with data set $V=\Lambda$, $\tau = \tau_i$, $\psi\equiv0$ and $\rho\equiv0$;
		\item the sequence $(\tau_i(r))$ converges in $C^1(\Sp^n)$;
		\item $\|\varphi_i\|_{L^\infty} \to + \infty$.
	\end{itemize}
\end{theorem}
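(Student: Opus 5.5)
We will use Theorem \ref{theorem existence on sphere}(iii) in the reverse direction. Instead of fixing $\tau$ and solving for $\varphi$, we prescribe a family $\varphi_i$ and read off $\tau_i$ from the explicit formula. The data are $\psi\equiv0$, $\rho\equiv0$ and $V\equiv\Lambda$, so that $I_{V,\psi,\rho,\varphi}=\tfrac{2n}{n-1}\Lambda\varphi^{2N}$. Then
\[
S_{\varphi}=N^2(\varphi')^2-n^2\varphi^2+2nN\cot(r)\,\varphi\varphi'+\frac{2n\Lambda}{(n-1)\sin^n}\int_0^r \big(\varphi^N\sin^n\big)'\,ds ,
\]
and the last term equals $\tfrac{2n\Lambda}{n-1}\varphi^N$. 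Hence
\[
S_\varphi=N^2(\varphi')^2-n^2\varphi^2+2nN\cot(r)\,\varphi\varphi'+\tfrac{2n\Lambda}{n-1}\varphi^N .
\]
Because $N>2$, the term $\tfrac{2n\Lambda}{n-1}\varphi^N$ dominates $n^2\varphi^2$ once $\varphi$ is large. The sign change of $\mathcal B_{\tau,\psi}=2\Lambda-\tfrac{n-1}{n}\tau^2$ is therefore exactly what makes positivity of $S$ available for large conformal factors. This is in contrast with Theorem \ref{theorem no solution without TT-tensor}, where $S\ge0$ fails for every $\varphi$.

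\textbf{Choice of $\varphi_i$.} First try constants $\varphi_i\equiv\lambda_i\to+\infty$. Then $S=\tfrac{2n\Lambda}{n-1}\lambda^N-n^2\lambda^2>0$ for $\lambda$ large. Also $(\varphi^N\sin^n)'=n\lambda^N\sin^{n-1}\cos$, which vanishes only at $r=\pi/2$, so the a.e. hypothesis holds. The formula of Theorem \ref{theorem existence on sphere}(iii) then gives a constant $\tau$ with
\[
\tau^2=\frac{(S-n^2\lambda^2+\tfrac{2n\Lambda}{n-1}\lambda^N)^2}{4\lambda^N S},
\]
which tends to $\tfrac{2n\Lambda}{n-1}$ as $\lambda\to\infty$. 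This already shows blow-up along a convergent sequence of mean curvatures, but the limit $\tau_\infty$ is constant, and the statement (as in the introduction) asks for a non-constant limit.

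\textbf{Non-constant limit.} Take instead $\varphi_i=\lambda_i\,\eta$ for a fixed non-constant $\eta\in RC^3_+(\Sp^n)$, e.g. $\eta=1+\epsilon\cos r$ with $\epsilon$ small. Then
\[
S=\lambda^2\big(N^2\eta'^2-n^2\eta^2+2nN\cot(r)\,\eta\eta'\big)+\lambda^N\frac{2n\Lambda}{(n-1)\sin^n}\int_0^r\eta^{-N}(\eta^N\sin^n)'\,ds .
\]
The main step is to choose $\eta$ so that the coefficient
\[
A_\eta:=\frac{1}{\sin^n}\int_0^r\eta^{-N}(\eta^N\sin^n)'\,ds=\frac{1}{\sin^n}\int_0^r\big(N\eta^{-1}\eta'\sin^n+n\sin^{n-1}\cos\big)\,ds
\]
is bounded below by a positive constant on $[0,\pi]$, including near $r=\pi$. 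This is the potential obstacle, because of the companion vanishing condition in Remark \ref{remark vanishing condition on Is}: that remark forces $\int_0^\pi\eta^{-N}(\eta^N\sin^n)'\,ds=0$, i.e. $\int_0^\pi\eta'\eta^{-1}\sin^n\,ds=0$. By l'H\^opital the limits are $A_\eta(0)=1$ and $A_\eta(\pi)=1$. For $\eta=1+\epsilon\cos r$ the integrand $\eta'/\eta$ is not symmetric, so this vanishing fails. We therefore choose $\eta$ symmetric under $r\mapsto\pi-r$, e.g. $\eta=1+\epsilon\cos(2r)$; then $\eta'\eta^{-1}\sin^n$ is odd about $\pi/2$ and the condition holds. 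One checks that $A_\eta=1+O(\epsilon)>0$, and that $(\eta^N\sin^n)'$ vanishes only at finitely many points.

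With this $\eta$, for $\lambda$ large we get $S>0$ everywhere, and Theorem \ref{theorem existence on sphere}(iii) defines
\[
\tau_\lambda=\frac{S+2N\lambda^2\big(\eta''+(n-1)\cot(r)\,\eta'\big)\eta-n^2\lambda^2\eta^2+\tfrac{2n\Lambda}{n-1}\lambda^N\eta^N}{2\sqrt{\lambda^N\eta^N S}}.
\]
Dividing numerator and denominator by $\lambda^N$, we find
\[
\tau_\lambda\to\tau_\infty:=\frac{\tfrac{2n\Lambda}{n-1}(A_\eta+\eta^N)}{2\sqrt{\tfrac{2n\Lambda}{n-1}\eta^NA_\eta}},
\]
which is non-constant for generic small $\epsilon$ and is $C^1$. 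The remaining check is that the convergence holds in $C^1$: the corrections are $O(\lambda^{2-N})$ together with their derivatives, provided the quotients involving $\cot(r)\,\eta'$ and $A_\eta$ are smooth at the poles. This follows from the radial regularity of $\eta$, as in Proposition \ref{proposition T1 compact}. The $1$-forms $W_i$ are then given by \eqref{solution to vector equations}, and $\|\varphi_i\|_{L^\infty}=\lambda_i\max\eta\to\infty$.
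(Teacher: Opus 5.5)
The argument breaks at your formula for $S$ when $\varphi_\lambda=\lambda\eta$. Since $V\equiv\Lambda$ and $\psi\equiv\rho\equiv0$, the integrand $I_{V,\psi,\rho,\varphi}\varphi^{-2N}=\tfrac{2n\Lambda}{n-1}$ is constant for every $\varphi$. So, exactly as in your constant case, the last term of $S$ is $\tfrac{2n\Lambda}{n-1}\varphi^N=\tfrac{2n\Lambda}{n-1}\lambda^N\eta^N$. There is no factor $\eta^{-N}$: your $A_\eta$ is simply $\eta^N$. For the same reason the vanishing condition of Remark~\ref{remark vanishing condition on Is} reads $\int_0^\pi(\varphi^N\sin^n)'\,ds=0$ and holds automatically, so the symmetry requirement on $\eta$ is spurious. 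Putting $A_\eta=\eta^N$ into your own limit formula gives $\tau_\infty\equiv\sqrt{2n\Lambda/(n-1)}$, whatever $\eta$ is. Explicitly, set $c=\tfrac{2n\Lambda}{n-1}$, $Q_\eta=N^2\eta'^2-n^2\eta^2+2nN\cot(r)\eta\eta'$ and $P_\eta=Q_\eta+2N\eta\big(\eta''+(n-1)\cot(r)\eta'\big)-n^2\eta^2$. Then
\[
\tau_\lambda=\frac{2c\,\eta^N+\lambda^{2-N}P_\eta}{2\,\eta^{N/2}\sqrt{c\,\eta^N+\lambda^{2-N}Q_\eta}}=\sqrt{c}+O(\lambda^{2-N}).
\]
Uniform rescaling can therefore only produce a constant limit. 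This is the same degenerate phenomenon as your constant family: $\mathcal B_{\tau_\lambda,\psi}=O(\lambda^{2-N})\to0$ uniformly, and the equation collapses onto the CMC case $\mathcal B\equiv0$, which has no positive solution. That family does meet the three bullets as literally written in this section. However, it is not the non-constant limit (non-CMC instability) that you rightly identified as the content of the result, and your claim that $\tau_\infty$ is non-constant for generic $\epsilon$ is false.

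The missing idea is to make the blow-up concentrate. Away from the concentration set, $\varphi_i$ should converge to a finite profile that keeps the lower-order terms of the $\tau$-formula alive. The paper takes $\varphi_\eps=b/(\sin^a(r)+\eps)$, which blows up only at the poles and tends to $\varphi_0=b\sin^{-a}(r)$ on $(0,\pi)$. Choose $a$ with $N-2-\tfrac2a>0$ and $1-\tfrac2a>0$, then $b$ large; this gives $S_{\Lambda,0,0,\varphi_\eps}>1$ on $[0,\pi]$ uniformly in $\eps<1$. After dividing the $\tau$-formula of Theorem~\ref{theorem existence on sphere} by $\varphi_\eps^N$, every term not involving $\Lambda$ is bounded by $C(\sin^a(r)+\eps)^{N-2-2/a}$. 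This makes $(r,\eps)\mapsto\tau_\eps(r)$ uniformly continuous up to $\eps=0$ and up to the poles, with $\tau_0(0)=\tau_0(\pi)=\sqrt{c}$. Convergence in $C^1$ additionally needs $N-2-\tfrac3a>0$. Non-constancy of the limit now comes from the interior. For instance, at $r=\pi/2$ one gets $\tau_0(\pi/2)=\big(c+(Na-n^2)b^{2-N}\big)/\sqrt{c-n^2b^{2-N}}$, which is strictly greater than $\sqrt{c}$ as soon as $Na\ge n^2$.
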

\begin{proof}
Given two positive constants $a$ and $b$, we define
	$$
	\varphi_\eps := \frac{b}{\sin^a(r) + \eps }.
	$$
	Of course, $(\varphi^N_\eps (\sin^n(r)))^\p$ has finite zero-points in $[0,\pi]$ and 
	$$
	\lim\limits_{\eps\to0} \varphi_\eps (0) = \lim\limits_{\eps\to0} \varphi_\eps (\pi) = +\infty.
	$$ 
	Therefore, the theorem will follow if we can find a sequence of mean curvatures $\{\tau_\eps \}$ such that $\tau_\eps\to\tau_0$ in $C^1$ and  $\varphi_\eps$ satisfies the Lichnerowicz-type equation \eqref{rce sphere} with $V=\Lambda$, $\tau = \tau_\eps$, $\psi\equiv0$ and $\rho\equiv0$. 
	
	\medskip
	
	To do so, we first calculate
	\begin{align*}
		\varphi_\eps^\p & = - \frac{\sin^{a - 1} (r) \cos (r)}{\big( \sin^a(r) + \eps \big)^2 } ab \\
		\varphi_\eps^\pp & = \frac{\sin^a (r)}{\big( \sin^a(r) + \eps \big)^2 } ab - \frac{\sin^{a - 2} (r) \cos^2 (r)}{\big( \sin^a(r) + \eps \big)^2 } a (a - 1) b + \frac{2 \sin^{2(a - 1)} (r) \cos^2 (r)}{\big( \sin^a(r) + \eps \big)^3 } a^2 b.
	\end{align*}	
	Computing $S_{\Lambda, 0, 0, \varphi_\eps}(r)$ given in \eqref{S on sphere},  we get that $S_{\Lambda, 0, 0, \varphi_\eps}(r) > 1$ if and only if
	\begin{multline} \label{eqn-S_phi_eps_positive}
		\tfrac{2n\Lambda}{n-1}b^{N-2} > (\sin^a(r)+\eps)^{N-2-\frac2a}
		\bigg((\sin^a(r) + \eps)^{{\frac2a}} -\frac{N^2a^2(\sin^a(r))^{2-\frac2a}\cos^2(r)}{(\sin^a(r)+\eps)^{2-\frac2a}} \\
		+ \frac{2nNa(\sin^a(r))^{1-\frac2a}\cos^2(r)}{(\sin^a(r)+\eps)^{1-\frac2a}} +\frac{n^2}{(\sin^a(r)+\eps)^{-\frac2a}}\bigg).
	\end{multline}
	Since $N-2=\frac{4}{n-2}>0$, it is possible to chose $a$ such that $N-2-\frac2a>0$ and $1-\frac2a>0$. With this choice, the right-hand-side of \eqref{eqn-S_phi_eps_positive} is bounded above by $2^{N-2-\frac2a}\big(2nNa+(n^2 + 1)2^{\frac2a}\big)$ for any $r\in\ff{0,\pi}$ and any $\eps<1$. Therefore, choosing $b$ large enough, Inequality \eqref{eqn-S_phi_eps_positive} is satisfied and $S_{\Lambda, 0, 0, \varphi_\eps}(r)>1$ on $\ff{0,\pi}$.
	
	From Theorem \ref{theorem existence on sphere}--(iii), the function $\varphi_\eps$ is a solution to the Lichnerowicz-type equation with data $\tau_\eps$ given by
	\[
	\tau_\eps = \frac{2N\varphi_\eps^\pp\varphi_\eps + N^2(\varphi_\eps^\p)^2 + 2N(2n-1)\cot\varphi_\eps^\p\varphi_\eps - 2n^2\varphi_\eps^2 + \tfrac{4n\Lambda}{n-1}\varphi_\eps^N}%
		{2\sqrt{\varphi_\eps^N S_{\Lambda, 0, 0, \varphi_\eps}}}
	\]
	
	From the expression of $\varphi_\eps$ and its derivatives we have the following upper bounds (where $C_i$ are constants):
	\begin{align*}
		|\varphi_\eps(r)| & = b\big(\sin^a(r)+\eps\big)^{-1} \\
		|\varphi_\eps^\p(r)| & \le C_1\big(\sin^a(r)+\eps\big)^{-1-\frac1a} \\
		|\cot(r)\varphi_\eps^\p(r)| & \le C_2\big(\sin^a(r)+\eps\big)^{-1-\frac2a} \\
		|\varphi_\eps^\pp(r)| & \le C_3\big(\sin^a(r)+\eps\big)^{-1} + C_4\big(\sin^a(r)+\eps\big)^{-1-\frac2a}
	\end{align*}
	from which we get
	\begin{align*}
		|\varphi_\eps(r)^{2-N}| & = b^{2-N}\big(\sin^a(r)+\eps\big)^{N-2} \\
		|\varphi_\eps^\p(r)^2\varphi_\eps(r)^{-N}| & \le C_5\big(\sin^a(r)+\eps\big)^{N-2-\frac2a} \\
		|\cot(r)\varphi_\eps^\p(r)\varphi_\eps(r)^{1-N}| & \le C_6\big(\sin^a(r)+\eps\big)^{N-2-\frac2a} \\
		|\varphi_\eps^\pp(r)\varphi_\eps(r)^{1-N}| & \le C_7\big(\sin^a(r)+\eps\big)^{N-2} + C_8\big(\sin^a(r)+\eps\big)^{N-2-\frac2a}
	\end{align*}
	Observing that
	\[
	\tau_\eps = \frac{2N\varphi_\eps^\pp\varphi_\eps^{1-N} + N^2(\varphi_\eps^\p)^2\varphi_\eps^{-N} + 2N(2n-1)\cot\varphi_\eps^\p\varphi_\eps^{1-N} - 2n^2\varphi_\eps^{2-N} + \tfrac{4n\Lambda}{n-1}}%
	{2\sqrt{N^2(\varphi_\eps^\p)^2\varphi_\eps^{-N} - n^2\varphi_\eps^{2-N} + 2nN\cot\varphi_\eps^\p\varphi_\eps^{1-N} + \tfrac{2n\Lambda}{n-1}}}
	\]
	and since $N-2-\frac2a>0$, the upper bounds above give
	\[
	\lim\limits_{(r,\eps)\to(0,0)}\tau_\eps(r) = \lim\limits_{(r,\eps)\to(\pi,0)}\tau_\eps(r) = \sqrt{\frac{2n\Lambda}{n-1}}.
	\]
	Noting $\varphi_0=\frac{b}{\sin^a}$ and $\tau_0$ the function defined by
	\[
	\left\{\begin{array}{l}
			\tau_0 = \frac{2N\varphi_0^\pp\varphi_0 + N^2(\varphi_0^\p)^2 + 2N(2n-1)\cot\varphi_0^\p\varphi_0 - 2n^2\varphi_0^2 + 	\tfrac{4n\Lambda}{n-1}\varphi_0^N}{2\sqrt{\varphi_0^N S_{\Lambda, 0, 0, \varphi_0}}} \qquad\mbox{ on }\oo{0,\pi} \\
			\tau_0(0)=\tau_0(\pi)=\sqrt{\frac{2n\Lambda}{n-1}}
		\end{array}
	\right.
	\]
	we have that the function $(r,\eps)\mapsto\tau_\eps(r)$ is uniformly continuous on $\ff{0,\pi}\times\ff{0,1}$ and therefore $\tau_\eps\longrightarrow\tau_0$ in $C^0$.
	
	Taking the derivatives of $\tau_\eps$ and $\tau_0$ on $\ff{0,\pi}$, and following the same approach we get (after direct but tricky computations) that $\tau_\eps^\p\longrightarrow\tau_0^\p$ in $C^0$ as soon as $N-2-\frac3a>0$. The proof is completed.  
\end{proof}

\begin{remark}
	For any $k\in\mathbb{N}$, it is possible to have the previous result with the functions $(\tau_i(r))$ converging in $C^k(\Sp^n)$. It is sufficient to follow the proof above with $a$ and $b$ great enough so that $N-2-\frac{k+1}{a}>0$ and $S_{\Lambda, 0, 0, \varphi_\eps} > 1$.
\end{remark}

%
%
\section{Solutions in the hyperbolic space} \label{section hyperbolic}
%
%
In this section, we will examine Theorem \ref{theorem general} in the context of hyperbolic space. Unlike the spherical case, the existence of solutions to the Lichnerowicz-type equation \eqref{RCE} in hyperbolic space is not constrained by the condition~\eqref{vanishing condition sphere}, arising from the presence of CKVFs. Thus, a key advantage of this setting is that, besides being able to use arguments similar to those of the previous section, we can also directly apply well-known techniques developed for the general conformal equations on AH manifolds having no CKVFs, without any additional modifications. In particular, one of these important techniques we employ is the limit equation criterion established by R. Gicquaud and A. Sakovich \cite{GicquaudSakovich}, which corresponds exactly to \eqref{limit equation} when adapted to AH models.

\medskip

Building on this approach, besides proving Main Theorem \ref{main theorem} in hyperbolic spaces, which provides a necessary and sufficient condition for a positive function $\varphi$ to solve \eqref{RCE}, an important outcome in this section is that, as long as $\tau$ does not change sign, the conformal equations in the radial setting on hyperbolic space are shown to be solvable, with a compact set of solutions. Although this result is limited to the radial setting, it provides strong evidence for the solvability of \eqref{CE} in general data, with the limit equation criterion offering a promising tool. This indicates that, in contrast to the conclusions obtained in \cite{NguyenNonexistence, NguyenProgress, DiltsHolstKozarevaMaxwell} for compact manifolds, the conformal method remains a powerful approach for parametrizing constraint solutions on AH spaces, not only in the CMC or near-CMC regimes, but also in the far-from-CMC regime.

\medskip

Another achievement we are concerned with in this section is the role of the decay rate of constraint solutions at infinity in the sign of mass in AH manifolds. Using the construction of solutions to the conformal equations provided by Main Theorem~\ref{main theorem}, we will show that if the decay rate of the initial data $(g,k)$ is critical, then the mass can take any sign. In particular, this result highlights that the commonly assumed decay conditions on $(g,k)$ in the Positive Mass Theorem for AH manifolds are sharp.

\medskip

The organization of the section is as follows. In Subsection \ref{subsection Lieq-type hyperbolic}, we recall some basic notation on hyperbolic space and then provide the proof of Main Theorem \ref{main theorem} in this setting. In Subsection \ref{subsection existence solutions AH}, provided that $\tau$ does not change sign, the solvability of the conformal equations in the radial data setting will be obtained. Finally, in the last subsection, we review the conjecture on the positivity of the AH mass and show how the decay rate of constraint solutions at infinity derives the sign of the mass.

\subsection{Lichnerowicz-type equation in the hyperbolic space}  \label{subsection Lieq-type hyperbolic}
Let us first denote by $\HH^n$ the $n$-dimensional hyperbolic space, with $n \ge 3$. We fix a point in
$\HH^n$ as an origin. Then, in geodesic normal coordinates at this point, the hyperbolic metric reads
$b = dr^2 + \sinh^2 (r) \sigma_s$, where $\sigma_s$ is the standard round metric on $\mathbb{S}^{n-1}$ and $r$ is the distance from the origin. The functions $h, \theta, u_0$ and the constants $\Ric$, $\Scal_g$ are given by
\begin{equation}\label{computing in hyperbolic}
	\begin{gathered}
		\theta(r) = \sinh^{n - 1} (r), \quad h(r) = (n - 1) \coth (r), \quad u_0(r) = \cosh (r) \\
		\Ric = - (n - 1), \quad \Scal = - n (n -1).
	\end{gathered}
\end{equation}
Similarly to Theorem \ref{theorem existence on sphere} for the sphere, we first give a necessary and sufficient condition for a radial positive function $\varphi(r)$ to be a solution to the Lichnerowicz-type equation \eqref{RCE} in the hyperbolic space.

\medskip

From a radial data set $(V,\tau(r),\psi(r),\rho(r))\in C^0(\RR) \times RC^1(\HH^n) \times RC^1(\HH^n) \times RC^0(\HH^n)$ and a radial function $\varphi(r) \in RC_+^3(\HH^n)$ we define
\[
I_{V,\psi,\rho,\varphi} := \tfrac{n}{n-1} \bigg( 2 V(\psi) \varphi^{2N} + |\psi^\p|^2 \varphi^{N+2} + \rho^2 \bigg),
\]
and
\[
S_{V,\psi,\rho,\varphi} := N^2 (\varphi^\p)^2 + n^2 \varphi^2 + 2 n N \coth \varphi \varphi^\p + \frac{1}{\sinh^n} \int_0^r \frac{I_{V,\psi,\rho,\varphi}}{\varphi^{2N}} (\varphi^N \sinh^n)^\p  \, ds.
\]
\begin{theorem}[Solutions with freely specified conformal factor] \label{theorem existence on hyperbolic}
	Consider a radial data set $(V,\tau(r),\psi(r),\rho(r))\in C^0(\RR) \times RC^1(\HH^n) \times RC^1(\HH^n) \times RC^0(\HH^n)$ with $\rho \psi^\p \equiv 0$. For $\varphi(r) \in RC_+^3(\HH^n)$ we have:
	\begin{enumerate}[(i)]
		\item $\varphi(r)$ is a solution to the conformal constraint equations \eqref{CE with null sigma} if and only if
		\begin{multline} \label{rce hyperbolic}
		- 2N \varphi^{N + 1} (\varphi^\pp + h \varphi^\p ) - n^2\varphi^{N + 2} + \tau^2  \varphi^{2N} \\
		= \bigg( \frac{1}{\sinh^n} \int_0^r \tau^\p \varphi^N  \sinh^n \, ds \bigg)^2 + I_{V,\psi,\rho,\varphi}.
		\end{multline}
		
		\item If $\varphi$ is a solution to \eqref{rce hyperbolic}, then $S_{V,\psi,\rho,\varphi} \ge 0$.
		
		\item If $S_{V,\psi,\rho,\varphi}>0$ on $\RR_+$ and $(\varphi^N \sinh^n)^\p \neq 0$ a.e. on $\RR_+$, then $\varphi$ is a solution to \eqref{rce hyperbolic} if and only if the function $\tau$ satisfies
		\begin{equation}\label{AHidenity}
			\tau= \pm \frac{ S_{V,\psi,\rho,\varphi} + 2N  \big( \varphi^\pp + (n - 1) \coth \varphi^\p \big) \varphi + n^2 \varphi^2 + I_{V,\psi,\rho,\varphi} \varphi^{-N}}{2\sqrt{\varphi^N S_{V,\psi,\rho,\varphi}}}.
		\end{equation}
	\end{enumerate}
	Moreover, in this case, the $1-$form $W$ can be computed by \eqref{solution to vector equations} with $v=\tfrac{n-1}{n}\varphi^N\tau^\p$. 
\end{theorem}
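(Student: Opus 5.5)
The proof will follow the proof of Theorem \ref{theorem existence on sphere} line by line, with $\sin,\cot$ replaced by $\sinh,\coth$ and with the values \eqref{computing in hyperbolic} in place of \eqref{computing in sphere}. The only structural difference is that $\Ric=-(n-1)\le 0$. So $F_v$ is given by \eqref{def F nonpositive Ric} and no vanishing condition is needed; Theorem \ref{theorem general} applies directly.

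\textbf{Step 1: point (i).} Set $v=\tfrac{n-1}{n}\varphi^N\tau^\p-\rho\psi^\p$. Since the geodesic spheres of $\HH^n$ are umbilical, $h^\p+\frac{h^2}{n-1}+\Ric=0$, and \eqref{def Fcal} reduces to
\[
\Fcal_v=\tfrac{4n}{n-1}\big(f_v-\tfrac{n}{n-1}hF_v^\p+nF_v\big)^2 .
\]
Using \eqref{identity of F1 and its diff} with $u_0=\cosh$ and $\theta=\sinh^{n-1}$, we have $h\frac{u_0^\p}{u_0}=(n-1)$ and $\frac{h}{u_0\theta}=\frac{(n-1)}{\sinh^n}$. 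Combined with $u_0\theta=\frac1n(\sinh^n)^\p$, this gives
\[
hF_v^\p=(n-1)\Big(F_v+\tfrac{1}{n\sinh^n}\int_0^r f_v(\sinh^n)^\p\Big).
\]
The $F_v$ terms then cancel, and one integration by parts yields
\[
\Fcal_v=\tfrac{n-1}{n}\Big(\tfrac{1}{\sinh^n}\int_0^r(\varphi^N\tau^\p-\tfrac{n}{n-1}\rho\psi^\p)\sinh^n\Big)^2 .
\]
Substituting into \eqref{RCE} with $\Scal=-n(n-1)$, multiplying by $\frac{n}{n-1}$, and using $\rho\psi^\p\equiv0$ gives \eqref{rce hyperbolic}. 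The term $-n^2\varphi^{N+2}$ appears in place of $+n^2\varphi^{N+2}$.

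\textbf{Step 2: point (ii).} Integrate by parts to write
\[
\int_0^r\tau^\p\varphi^N\sinh^n=\tau\varphi^N\sinh^n-\int_0^r\tau(\varphi^N\sinh^n)^\p .
\]
Expand the square and multiply by $(\varphi^N\sinh^n)^\p\varphi^{-2N}$, exactly as in \eqref{eq.5a1}--\eqref{eq.5a2}. This produces
\[
\Big(\tfrac{1}{\varphi^N\sinh^n}\big(\textstyle\int_0^r\tau(\varphi^N\sinh^n)^\p\big)^2\Big)^\p=(\varphi^N\sinh^n)^\p\big(2N(\varphi^\pp+(n-1)\coth\varphi^\p)\varphi^{1-N}+n^2\varphi^{2-N}+I\varphi^{-2N}\big).
\]
The bracket on the left vanishes as $r\to0$, because the integral is $O(r^{n})$. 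Next verify the analogue of \eqref{derivative of S}: the right-hand side equals $(\sinh^nS_{V,\psi,\rho,\varphi})^\p$. This is a direct differentiation. Using $(\sinh^n)^\p=n\coth\sinh^n$, the $n^2\varphi^2$ term of $S$ produces $n^2\varphi^{2-N}(\varphi^N\sinh^n)^\p$ after combining with the $2nN\coth\varphi\varphi^\p$ term. The sign flip relative to the sphere comes from $(\coth)^\p=-1/\sinh^2$ instead of $(\cot)^\p=-1/\sin^2$, together with $\cosh^2-\sinh^2=1$; these signs must be tracked carefully. Since $\sinh^nS$ also vanishes at $0$, one obtains the analogue of \eqref{eq.5a3}:
\[
\Big(\int_0^r\tau(\varphi^N\sinh^n)^\p\Big)^2=\varphi^N\sinh^{2n}S_{V,\psi,\rho,\varphi}.
\]
Hence $S\ge0$.

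\textbf{Step 3: point (iii).} If $S>0$ on $(0,\infty)$, the integral $\int_0^r\tau(\varphi^N\sinh^n)^\p$ never vanishes, so it has constant sign. Taking square roots gives
\[
\pm\int_0^r\tau(\varphi^N\sinh^n)^\p=\sqrt{(\varphi^N\sinh^n)(\sinh^nS)} .
\]
Differentiating this and dividing by $(\varphi^N\sinh^n)^\p$, which is allowed because it is nonzero a.e., then using the formula for $(\sinh^nS)^\p$, gives \eqref{AHidenity}. Conversely, if $\tau$ is given by \eqref{AHidenity}, integrating back recovers the squared identity. The equivalences from Step 2 are reversible, so $\varphi$ solves \eqref{rce hyperbolic}. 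The formula for $W$ follows from Theorem \ref{theorem general} with $\psi^\p\rho\equiv0$.

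The main obstacle is sign bookkeeping rather than any new idea. The two places that need care are the identity for $(\sinh^nS)^\p$ and the $n^2$ sign in \eqref{rce hyperbolic}. There is also a regularity check at $r=0$: because $\HH^n$ is noncompact, there is no endpoint condition at $\pi$, and only the limits at $0$ are needed, both handled by l'Hôpital as in Proposition \ref{proposition limit of f at boundary}(a).
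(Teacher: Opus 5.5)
Your proposal is correct and matches the paper's proof, which is the same line-by-line transcription of the spherical argument: simplify $\Fcal_v$ using umbilicity and the explicit $F_v$, integrate by parts, multiply by $(\varphi^N\sinh^n)^\p\varphi^{-2N}$, identify the result with $(\sinh^n S_{V,\psi,\rho,\varphi})^\p$, and take square roots. Your sign bookkeeping is cleaner than the printed version: with $\Ric=-(n-1)$ one has $+nF_v$ in $\Fcal_v$ and $hF_v^\p=(n-1)(F_v+\cdots)$, while the paper's displayed formulas keep the spherical signs (two slips that cancel), and your factor $I\varphi^{-2N}$ in the derivative identity is the correct one.
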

The proof of this theorem is broadly similar to that of Theorem \ref{theorem existence on sphere} line by line, with the only difference being that we replace $(\sik, \cok)=(\sin(r), \cos(r))$ on the sphere by 
$(\sinh(r),\cosh(r))$ in hyperbolic space, as defined in \eqref{def sik and cok}. Here, for the reader’s convenience, we detail the proof as follows.

\begin{proof}
	Consider $f_v$, $F_v$ and $\Fcal_v$ given by \eqref{def f}, \eqref{def F nonpositive Ric} and \eqref{def Fcal} respectively, where  
	$$
	v =\tfrac{n-1}{n}\varphi^N\tau^\p - \rho \psi^\p.
	$$
	Thanks to Theorem \ref{theorem general}, $\varphi$ is a solution to the conformal equations \eqref{CE with null sigma} if and only if it solves the Lichnerowicz-type equation \eqref{RCE}.
	
	\medskip
	
	In this regards, we recall that $h^\p(r)+\frac{h^2(r)}{n-1}+\Ric=0$.  Therefore, $\Fcal_v$ becomes
	\begin{equation} \label{Fcal hyperbolic}
		\Fcal_v =  \tfrac{4n}{n-1} \Big(  f_v - \tfrac{n}{n-1} h F^\p_v - n F_v \Big)^2.
	\end{equation}
	Observing that, by the definition \eqref{def F nonpositive Ric} and \eqref{computing in hyperbolic}
	\begin{align*}
		h F^\p_v & = h\frac{u_0^\p}{u_0}F_v +  \frac{h}{u_0 \theta } \int_0^r f_v u_0 \theta \, ds \\
		& =  (n - 1) \bigg( \frac{1}{n \sinh^n} \int_0^r f_v \big(\sinh^n\big)^\p \, ds - F_v \bigg).
	\end{align*}
	Integrating by part the integral term on the right-hand side, we obtain  
	$$
	h F^\p_v = (n - 1) \bigg( \frac{f_v}{n} - \frac{1}{2n\sinh^n} \int_0^r v \sinh^n \, ds - F_v \bigg).
	$$
	Taking into \eqref{Fcal hyperbolic}, it follows that
	\begin{equation} \label{|LW| in hyperbolic}
		\Fcal_v = \tfrac{n-1}{n} \bigg( \frac{1}{\sinh^n} \int_0^r \bigg(  \varphi^N \tau^\p - \tfrac{n}{n-1}\rho \psi^\p \bigg) \sinh^n \, ds \bigg)^2.
	\end{equation}
	Therefore, since $\rho \psi^\p$ is assumed to be zero everywhere, by definition, we can rewrite the Lichnerowicz-type equation as
	\begin{multline*}
		- 2N \varphi^{N + 1} (\varphi^\pp + h \varphi^\p ) - n^2\varphi^{N + 2} + \tau^2  \varphi^{2N} \\
		= \bigg( \frac{1}{\sinh^n} \int_0^r \tau^\p \varphi^N  \sinh^n \, ds \bigg)^2 + I_{V,\psi,\rho,\varphi},
	\end{multline*}
	thus proving point $(i)$. Continuing to integrate by parts the integral term on the right-hand side, we get
	\begin{multline*}
		- 2N \varphi^{N + 1} (\varphi^\pp + h \varphi^\p ) - n^2 \varphi^{N + 2} + \tau^2 \varphi^{2N} \\
		= \Big( \tau \varphi^N - \frac{1}{\sinh^n} \int_0^r \tau \big( \varphi^N \sinh^n \big)^\p \, ds \Big)^2 + I_{V,\psi,\rho,\varphi},
	\end{multline*}
	equivalently,
	\begin{multline}\label{eq.5a1 hyperbolic}
		- 2N \varphi^{N + 1} (\varphi^\pp + h \varphi^\p ) - n^2 \varphi^{N + 2} = \frac{1}{\sinh^{2n}} \bigg( \int_0^r  \tau \big( \varphi^N \sinh^n \big)^\p \, ds \bigg )^2 \\
		- \frac{2\tau\varphi^N}{\sinh^n} \bigg( \int_0^r  \tau \big( \varphi^N \sinh^n \big)^\p \, ds \bigg) + 	I_{V,\psi,\rho,\varphi}.
	\end{multline}
	Now, multiplying by $\big( \varphi^N \sinh^n \big)^\p / \varphi^{2N}$, it follows that
	\begin{multline} \label{eq.5a2 hyperbolic}
		\bigg( \frac{1}{\varphi^N \sinh^n} \Big( \int_0^r  \tau \big( \varphi^N \sinh^n \big)^\p \, ds \Big)^2 \bigg)^\p \\
		= \frac{\big( \varphi^N \sinh^n \big)^\p}{\varphi^{N - 1}} \Big( 2N  \big( \varphi^\pp + (n - 1) \coth \varphi^\p \big) + n^2 \varphi + \frac{I_{V,\psi,\rho,\varphi}}{\varphi^{2N}} \Big).
	\end{multline}
	Since
	$$
	\lim_{r \to 0} \bigg( \frac{1}{\varphi^N(r) \sinh^n (r)} \Big( \int_0^r  \tau \big( \varphi^N \sinh^n \big)^\p \, ds \Big)^2 \bigg) = 0,
	$$
	we deduce
	\begin{multline*}
		\Big(\int_0^r  \tau \big( \varphi^N \sinh^n \big)^\p \, ds \Big)^2  \\
		= \varphi^N \sinh^n \int_0^r \frac{\big( \varphi^N \sinh^n \big)^\p}{\varphi^{N - 1}} \Big( 2N  \big( \varphi^\pp + (n - 1) \coth \varphi^\p \big) + n^2 \varphi + \frac{I_{V,\psi,\rho,\varphi}}{\varphi^{2N}} \Big) \, ds
	\end{multline*}
	Observing that
	\begin{equation}\label{derivative S hyperbolic}
		\big(\sinh^n S_{V,\psi,\rho,\varphi}\big)^\p = \frac{\big( \varphi^N \sinh^n \big)^\p}{\varphi^{N - 1}} \Big( 2N  \big( \varphi^\pp + (n - 1) \coth \varphi^\p \big) + n^2 \varphi + \frac{I_{V,\psi,\rho,\varphi}}{\varphi^{2N}}\Big),
	\end{equation}
	the equation becomes
	\begin{equation}\label{increasing condition}
		\Big(\int_0^r  \tau \big( \varphi^N \sinh^n \big)^\p \, ds \Big)^2 = \varphi^N\sinh^{2n} S_{V,\psi,\rho,\varphi}.
	\end{equation}
	Therefore, $ S_{V,\psi,\rho,\varphi} \ge 0$ is a necessary condition for existence of solution, proving point $(ii)$.
	
	\medskip
	
	In order to prove $(iii)$, assume $S_{V,\psi,\rho,\varphi}>0$ and $(\varphi^N\sinh^n)^\p\neq0$ a.e.. Following the previous computation we have that equations \eqref{eq.5a1 hyperbolic} and \eqref{eq.5a2 hyperbolic} are equivalent and $(\varphi(r),W)$ is a solution if and only if equation \eqref{increasing condition} holds. Since $\varphi^N \sinh^{2n}S_{V,\psi,\rho,\varphi}\neq0$ on $\oo{0,\pi}$, we have that $\int_0^r  \tau \big( \varphi^N \sinh^n (s) \big)^\p \, ds$ does not vanish and has constant sign. Taking the square root we get
	\[
	\pm\int_0^r\tau(s)\big(\varphi^N\sinh^n\big)^\p(s)\,ds = \sqrt{\sinh^{2n} (r)\varphi(r)^N  S_{V,\psi,\rho,\varphi}(r)}.
	\]
	Taking the derivative with respect to $r$ we obtain
	\begin{align*}
		\tau &=  \pm \frac{\bigg(\sqrt{\big(\varphi^N \sinh^n\big) \big(\sinh^n S_{V,\psi,\rho,\varphi} \big)} \bigg)^\p}{\big(\varphi^N \sinh^n\big)^\p} \\
		& = \pm \frac{\big(\varphi^N \sinh^n\big)^\p \big(\sinh^n S_{V,\psi,\rho,\varphi} \big) + \big(\varphi^N \sinh^n\big) \big(\sinh^n S_{V,\psi,\rho,\varphi} \big)^\p}{2\big(\varphi^N \sinh^n\big)^\p \sinh^n\sqrt{\varphi^N S_{V,\psi,\rho,\varphi}}}.	
	\end{align*}
	In view of \eqref{derivative S hyperbolic}, we get
	\[
	\tau = \pm \frac{S_{V,\psi,\rho,\varphi} + 2N  \big( \varphi^\pp + (n - 1) \coth \varphi^\p \big) \varphi + n^2 \varphi^2 + I_{V,\psi,\rho,\varphi} \varphi^{-N}}{2\sqrt{\varphi^N  S_{V,\psi,\rho,\varphi}}}.
	\]
	The proof is competed.
\end{proof}

Similarly to Appendix \ref{appendix construct explicit solutions}, it is not difficult to find a set of $(V, \psi, \rho, \varphi)$ such that $S_{V,\psi,\rho,\varphi}$ is positive on hyperbolic spaces. Therefore, thanks to the theorem, a large class of explicit solutions to \eqref{CE} can be easily obtained, and so it gives a variety of models in general relativity, which can help us to find out important properties of initial data in both theoretical and numerical sense. 
\begin{remark} \label{remark S > 0 in vacuum on hyperbolic}
	In the vacuum case, the condition $S_{V,\psi,\rho,\varphi} > 0$ can be considerably simplified, making it quite simple to choose $\varphi$ so that the condition is satisfied. Since this result will be used later in the discussion of the sign of the AH mass, we write it here as an observation.

	If $V \equiv 0$ and $\psi = \rho = 0$, then $S_{V, \psi, \rho, \varphi} > 0$ if and only if  $\big( \big(\cosh(\frac{r}{2})\big)^{n-2} \varphi\big)^\p > 0$. In fact, in this case, by definition, $S_{V, \psi, \rho, \varphi} > 0$ is rewritten as
	$$
	N^2 \sinh(r) (\varphi^\p)^2 + n^2 \sinh(r) \varphi^2 + 2nN \cosh(r) \varphi^\p \varphi > 0.
	$$ 
	Dividing this inequality by $n^2\sinh(r) \varphi^2$, we obtain that
	\begin{equation} \label{vacuum S>0 hyperbolic}
		\bigg(\tfrac{N}{n} \frac{\varphi^\p}{\varphi} \bigg)^2 +2\coth(r) \bigg(\tfrac{N}{n} \frac{\varphi^\p}{\varphi} \bigg) + 1 > 0.
	\end{equation}
	Similarly to the proof of Theorem \ref{theorem no solution without TT-tensor}, since $x_1 = - \coth(r/2)$ and $x_2 = - \tanh(r/2)$ are two solutions to the quadratic equation $x^2 + 2 \coth(r) x + 1 = 0$, and since $\lim_{r \to 0^+} x_1 = - \infty$, it follows from the continuity of $\varphi^\p / \varphi$ that \eqref{vacuum S>0 hyperbolic} holds if and only if
	$$
	\tfrac{N}{n} \frac{\varphi^\p}{\varphi} > - \tanh \big(\tfrac{r}{2} \big).
	$$
	By straightforward calculations, this is equivalent to $\bigg( \big(\cosh(\tfrac{r}{2})\big)^{n-2} \varphi\bigg)^\p  > 0$, which is our claim. 
	
	\medskip
	
	Obviously, constructing such a positive function $\varphi$  is not difficult. For instance, letting $p$, $q$ and $k$ be positive constants with $(q,k)$ sufficiently large and defining
	\begin{equation} \label{example of varphi for sign of AH}
		\varphi := 1 \pm \frac{1}{kq} \bigg( \frac{r^q}{r^q + 1} \bigg) e^{-pr},
	\end{equation}
	we see immediately that this function satisfies the required condition.
\end{remark}

\medskip

In the next subsections, using Theorem  \ref{theorem existence on hyperbolic}, we address the problem of existence of spherically symmetric solutions to the conformal equations \eqref{CE} in the far--from--CMC regime, and we study the sign of the mass, as mentioned above.

\subsection{Existence of solutions with mean curvature of constant sign} \label{subsection existence solutions AH}
Before going further, we first define the standard Banach spaces on $\HH^n$. Given an integer $l \ge 0$, a H\"{o}lder exponent $\alpha \in [ 0 , 1]$, and a decay exponent $p \in ( 0 , n)$,  we will use is the weighted H\"{o}lder space $( C^{l , \alpha}_{-p}(\HH^n) , \|.\|_{C^{l , \alpha}_{-p}})$ to capture asymptotic of functions and tensors near infinity: a function $f$ is in $C^{l , \alpha}_{-p}(\HH^n)$ if $f\in C^{l , \alpha}_{\mathrm{loc}}$ and $\|f\|_{C^{l , \alpha}_{-p}} < \infty$, where
\begin{equation}\label{weighted Holder}
	\| f \|_{C^{l , \alpha}_{-p}} := \sup_{x \in  \HH} \Big( e^{p r(x)} \| f \|_{C^{l , \alpha}(B(x,1))} \Big).
\end{equation}
It will be clear from the context if the notation refers to a space of functions on $\HH^n$, or a space of sections of some bundle over $\HH^n$. We can easily check that $( C^{l , \alpha}_{-p} , \|.\|_{C^{l , \alpha}_{-p}})$ satisfies the weighted Sobolev embedding and Rellich theorem on $\HH^n$. A reference for this can be found in \cite[Proposition 2.3]{GicquaudSakovich}.

Based on these spaces, similarly to the previous section, we are interested in the conformal equations \eqref{CE} with $(V, \psi, \sigma)$ zero everywhere, that is
\begin{subequations}\label{CE with null sigma in hyperbolic}
	\begin{align}
		- \tfrac{4(n-1)}{n-2} \Delta \varphi + \Scal_g \varphi  
		& = - \tfrac{n-1}{n} \tau^2 \varphi^{N-1}  + \frac{|LW|^2 + \rho^2}{\varphi^{N + 1}} \\
		\hspace{1.8cm} \Div(L W) & = \tfrac{n-1}{n} \varphi^N d\tau,
	\end{align}
\end{subequations}
with $\varphi \longrightarrow 1$ at infinity. Heuristically, the analysis of \eqref{CE with null sigma in hyperbolic} is essentially the same as that of the vacuum system \eqref{CE} studied in \cite{IsenbergPark, Gicquaud, GicquaudSakovich}, where the existence of solutions is known only when $\tau$ is constant or near constant. Therefore, to obtain a more global understanding of the equations — particularly to seek convincing evidence to the existence of solutions to \eqref{CE} in the far-from-CMC regime — we will consider \eqref{CE with null sigma in hyperbolic} in the simpler setting where both $\tau$ and $\rho$ are radial. In this framework, since the analysis takes place within the space of radial functions as we have seen in Theorem \ref{theorem existence on hyperbolic}, rather than seeking a general solution $\varphi$ to \eqref{CE with null sigma in hyperbolic}, it is natural to restrict our consideration to the set of radial solutions.

\medskip

In this context, thanks to the limit equation criterion established by R. Gicquaud and A. Sakovich \cite{GicquaudSakovich}, we will follow the approach of Theorem \ref{theorem limit equation criterion sphere} and show that, as long as $\tau$ does not change sign, the conformal equations \eqref{CE with null sigma in hyperbolic} are always solvable, with solutions satisfying the expected properties. More precisely, the statement of our result is as follows.
\begin{theorem}[Solution with mean curvature of constant sign] \label{theorem limit equation criterion hyperbolic}
	Given a H\"{o}lder exponent $\alpha \in [ 0 , 1]$ and a decay exponent $p \in ( 0 , n)$, let $(\rho(r), \tau(r))$ be radial functions satisfying $(\rho(r), \tau(r) - n) \in  C^{0,\alpha}_{-p} (\HH^n) \times C^{1,\alpha}_{-p} (\HH^n)$. Assume that $\min \tau > 0$.
	
	Then, the conformal equations \eqref{CE with null sigma in hyperbolic} admit at least one solution $\varphi(r) > 0$ with $\varphi(r) - 1 \in C^{2,\alpha}_{-p}(\HH^n)$. Moreover, in this case, the set of solutions is compact.
\end{theorem}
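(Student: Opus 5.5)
We plan to deduce the theorem from the limit equation criterion of Gicquaud and Sakovich \cite{GicquaudSakovich}, restricted to the radial setting. Since $\HH^n$ has no CKVF, that criterion applies to \eqref{CE with null sigma in hyperbolic} (with $\rho$ in the role of $|\sigma|$) as soon as $\tau>0$, $\tau-n\in C^{1,\alpha}_{-p}$, $\rho\in C^{0,\alpha}_{-p}$, provided the limit equation
\[
\Div(LW) = \lambda\sqrt{\tfrac{n-1}{n}}\,|LW|\,\frac{d\tau}{\tau}
\]
has no nontrivial solution for $\lambda\in(0,1]$. It then gives a solution $\varphi$ with $\varphi-1\in C^{2,\alpha}_{-p}$, and compactness of the solution set. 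To stay among radial functions, we will run their fixed point scheme on the closed subspace of radial functions. Concretely, we define $T$ as in \eqref{definition of T}, but with no $\alpha(\phi)$ correction, since no vanishing condition is needed: $T_1(\phi)=\sinh^{-n}\int_0^r\tau^\p|\phi|^N\sinh^n\,ds$ and $T(\phi)$ is the solution of the radial Lichnerowicz equation with this source and $\varphi\to1$. By rotational invariance and uniqueness in the Lichnerowicz step, $T$ preserves radial functions. By Theorem \ref{theorem existence on hyperbolic}(i), its fixed points are exactly radial solutions. We will check continuity and compactness in the weighted spaces as in Proposition \ref{proposition T1 compact}, using the weighted Rellich theorem.

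The key step is the Leray--Schauder a priori bound. Suppose there are $t_i\in[0,1]$ and $\varphi_i=t_iT(\varphi_i)$ with $\gamma_i=\|\varphi_i\|_{C^0}\to\infty$. We rescale as in the proof of Theorem \ref{theorem limit equation criterion sphere}, or equivalently invoke \cite[blow-up analysis]{GicquaudSakovich}. Along a subsequence, $W_i/\gamma_i^N\to W$, a nontrivial solution of the limit equation. Moreover $\tilde\varphi_i^N\to\sqrt{n/(n-1)}\,|LW|/\tau$, so $|LW|$ is radial. By Proposition \ref{proposition resolve vector equations} and uniqueness on $\HH^n$ (no CKVF, with decay), $W=u(r)dr$.

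We then repeat the argument of Proposition \ref{proposition limit equation insight} with $\sinh$ in place of $\sin$. Formula \eqref{|LW| in hyperbolic}, with $v=\lambda\sqrt{(n-1)/n}\,|LW|\tau^\p/\tau$, gives
\[
\sinh^n|LW| = \lambda\tfrac{n-1}{n}\Big|\int_0^r \sinh^n|LW|\,\frac{\tau^\p}{\tau}\,ds\Big|.
\]
On a maximal interval $[0,a]$ where the integral keeps one sign, the function $\sinh^n|LW|$ solves a first-order linear ODE and vanishes at $0$. It is therefore identically zero there, and continuing along $r$ gives $LW\equiv0$. Hence $W=0$, because $\HH^n$ has no CKVF decaying at infinity. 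This is a contradiction, and the a priori bound follows.

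The main obstacle will be the analytic part on the noncompact space. We must ensure that the Gicquaud--Sakovich blow-up argument and the weighted compactness go through in the radial subspace, and in particular that the limit $W$ has the decay needed to rule out conformal Killing fields and to justify the integral representation of $|LW|$ near infinity. Here we rely on their Proposition 2.3 and their rescaling lemma rather than reproving them. Compactness of the solution set then follows from the same a priori bound together with elliptic regularity in $C^{2,\alpha}_{-p}$.
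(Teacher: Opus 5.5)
Your proposal is correct and follows essentially the same route as the paper. Both arguments run Leray--Schauder on the radial operator $T$, take continuity, compactness and the rescaling/blow-up analysis from Gicquaud--Sakovich, and then use the $\sinh^n$ first-order ODE argument (as in Proposition \ref{proposition limit equation insight}) to rule out a nontrivial radial limit. The only difference is cosmetic: the paper passes directly to the limit $y$ of the rescaled radial integrals $\sinh^{-n}\int_0^r\tau^\p\varphit_i^N\sinh^n\,ds$ rather than to a limiting $1$-form $W$. This sidesteps your loose inference that $W=u(r)dr$ because $|LW|$ is radial; the conclusion actually holds because each $W_i$ is radial.
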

\begin{proof}
	Similarly to the proof of Theorem \ref{theorem limit equation criterion sphere}, we first define the operator $T:RC^0(\HH^n) \to RC^0(\HH^n)$ as follows. For any radial function $\phi(r) \in RC^0(\HH^n)$, there exists a unique $1-$form $W \in C^{2,\alpha}_{-p}(\HH^n)$ satisfying
	$$
	\Div(L W) = \tfrac{n-1}{n} |\phi|^N d\tau.
	$$
	Recall that since $\phi(r)$ and $\tau(r)$ are radial, it follows from Proposition \ref{proposition 1-form W calculations}(c) that $|LW|$ is also radial. Therefore, taking $|LW|$ into the Lichnerowicz equation, there exists a unique $\varphi > 0$ such that $\varphi - 1 \in RC^{2,\alpha}_{-p}$ and 
	$$
	- \tfrac{4(n-1)}{n-2} \Delta \varphi + \Scal_g \phi + \frac{n-1}{n} \tau^2  \varphi^{N-1} 	= \big(|LW|^2 + \rho^2) \varphi^{- N - 1}.
	$$
	So we define
	\begin{equation} \label{compact operator}
		T(\phi) := \varphi.
	\end{equation}
	Following R. Gicquaud and A. Sakovich \cite[Section 3.3]{GicquaudSakovich}, we obtain that $T$ is continuous and compact. It is clear that a fixed point of $T$ is a solution to \eqref{CE with null sigma in hyperbolic}, therefore, the theorem will follows if we can show that $T$ has a fixed point. We will argue by contradiction. In fact, assume that $T$ has no fixed point. Then, according to the Leray--Schauder's fixed point theorem, there exists a sequence $\{(t_i,\phi_i(r))\} \subset (0,1) \times RC^{2,\alpha}_{-p}(\HH^n)$ such that 
	\begin{equation}\label{contradiction assumption hyperbolic}
		\phi_i = t_i T(\phi_i) \quad \text{and} \quad \|\phi\|_{C^0} \to +\infty.
	\end{equation} 
	By the definition of $T$, an analysis similar to \eqref{rce hyperbolic} shows that \eqref{contradiction assumption hyperbolic} can be rewritten as
	\begin{multline} \label{rce hyperbolic i}
		- 2N \varphi_i^{N + 1} (\varphi_i^\pp + h \varphi_i^\p ) - n^2\varphi_i^{N + 2} + \tau^2  \varphi_i^{2N} \\
		= \bigg( \frac{t_i^N}{\sinh^n} \int_0^r \tau^\p \varphi_i^N  \sinh^n \, ds \bigg)^2 + \tfrac{n}{n-1} \rho^2.
	\end{multline}
	where $\varphi_i := T(\phi_i)$ satisfies $\varphi_i(r) - 1 \in RC^{2,\alpha}_{-p}(\HH^n)$ and $\|\varphi_i\|_{C^0} \to +\infty$.
	
	\medskip
	
	Now, as in \cite{DahlGicquaudHumbert, GicquaudSakovich}, setting $\gamma_i := \|\varphi_i\|_{C^0}$, we rescale the functions $\varphi_i$ and $\rho$
	$$
	\varphit_i = \frac{\varphi_i}{\gamma_i} \quad \text{and} \quad \rhot_i = \frac{\rho}{\gamma_i^N}.
	$$
	Then the Lichnerowicz-type \eqref{rce hyperbolic i} can be rewritten as
	\begin{multline*}
		- \gamma_i^{2-N}\big(2N \varphit_i^{N + 1} (\varphit_i^\pp + h \varphit_i^\p ) - n^2\varphit_i^{N + 2} \big) + \tau^2  \varphit_i^{2N} \\
		= \bigg( \frac{t_i^N}{\sinh^n} \int_0^r \tau^\p \varphit_i^N  \sinh^n \, ds \bigg)^2 + \tfrac{n}{n-1} \rhot_i^2.
	\end{multline*}
	In view of \cite[Section 4]{GicquaudSakovich}, when $i \to +\infty$,  similarly to the proof of Theorem \ref{theorem limit equation criterion sphere}, (up to a subsequence) there exists $y(r) \in RC^0(\HH^n) \setminus \{0\}$ and $t_0 \in (0,1]$ such that, when $i\to+\infty$,
	\begin{align*}
		t_i &\longrightarrow t_0 \in (0,1], \\
		\frac{1}{\sinh^n} \int_0^r \tau^\p \varphit_i^N  \sinh^n \, ds  &\longrightarrow y \quad  \text{in $C^{0,\alpha}_{-p}$}, \\
		\varphit_i^N &\longrightarrow \frac{t_0 |y|}{\tau} \quad \text{in $C^0$}.
	\end{align*}
	In particular, it follows that
	\begin{subequations}
		\begin{align}
			|y(r)| & = \frac{t_0^N}{\sinh^n(r)} \bigg| \int_0^r \frac{\tau^\p(s)}{\tau(s)} |y(s)| \sinh^n(s) \, ds \bigg|, \label{eqn-y(r)-a} \\
			y(r) & \not\equiv 0. \label{eqn-y(r)-b} 
		\end{align}
	\end{subequations}
	However, an analysis similar to that in the proof of Theorem \ref{theorem limit equation criterion sphere} shows that Equation \eqref{eqn-y(r)-a} has no nontrivial solution, which is a contradiction with \eqref{eqn-y(r)-b}. The proof is completed.
\end{proof}

\subsection{The sharpness of the decay-exponent in the positive mass theorem} \label{subsection negative AH mass}
We now consider the mass functional for AH manifolds, a topical issue in general relativity. For our purpose, we restrict our discussion to the mass functional in $\HH^n$. Let us recall the definition of the mass functional as follows. 
Set
$$
\Ncal := \{ V \in C^\infty (\HH^n) ~|~ \Hess^b (V) = V b \},
$$
where $b$ is the standard metric on $\HH^n$. This is a vector space with a basis given by the functions
$$
V_0 = \cosh (r), \quad V_1 = x^1 \sinh (r), ..., V_n = x^n \sinh (r),
$$
where the functions $x^1, . . . , x^n$ are the coordinate functions on $\RR^n$ restricted to $\mathbb{S}^{n - 1}$. We define the linear mass functional $\Hgk$ of the AH  vacuum initial data $(\HH^n , g , k)$ on $\Ncal$ by
\begin{multline} \label{def mass functional H}
	\Hgk (V) := \lim_{r \to +\infty} \int_{\mathbb{S}_r} \Big( V (\Div^b e - d \tr^b e) + (\tr^b e) dV \\
	- (e + 2 \pi)(\nabla^b V , .) \Big) (\dr) \, d \mu^b,
\end{multline}
where $e := g - b$ and $\pi := (k - g) - \tr_g (k - g) g$. For $\Hgk$ to be finite and independent of the chart at infinity, the standard decay assumption of $(g , k)$ usually added to our study is 
\begin{equation} \label{decay rate for AH mass}
	(e , \pi) \in C^{2 ,\alpha}_{-p} \times C^{1 ,\alpha}_{-p}, 	
\end{equation}
where $\alpha \in (0 , 1)$, $p > {n /2}$ and $C^{2 ,\alpha}_{-p}$, $C^{1 ,\alpha}_{-p}$ are the weighted H\"{o}lder spaces defined in \eqref{weighted Holder}. 
Now, writing
$$
\Mgk := \Hgk (V_0),
$$
we will refer to $\Mgk$ as the mass of $(\HH^n , g , k)$. A long-standing conjecture in general relativity states that the mass of an AH vacuum initial data set is positive unless it is a Cauchy hypersurface of Minkowski space. Although there is substantial evidence supporting this conjecture, so far it has only been proven to be true under the decay condition \eqref{decay rate for AH mass} for vacuum three-dimensional initial data settings, following Sakovich \cite{SakovichMassAH}. The aim of this subsection is not to address the conjecture, but we will be concerned with it, investigating how the decay rate $p$ in \eqref{decay rate for AH mass} plays an essential role in the sign of mass. More precisely, in the spirit of the recent work of the second author for ADM mass of AF manifolds \cite{NguyenRadialAF}, we will show that when the decay rate of $(e , \tau)$ is not super-critical, that is $p \le n/2$, the AH mass $\Mgk$ fails to be positive in general.  
 
\begin{theorem}[Arbitrary sign of AH mass]\label{thm-mass_hyperbolic} \label{theorem mass sign AH}
	Let $\varphi(r) > 0$ be a radial function on $(\HH^n, b)$. Assume that $\varphi(r) - 1 \in RC^{3, \alpha}_{p}(\HH^n)$ with $\alpha \in (0 , 1)$ and $p>0$, and that $V\equiv0$, $\psi\equiv0$, $\rho\equiv0$ and $S_{V,\psi,\rho,\varphi}>0$ on $\RR_+$. Let $\tau$ be given by \eqref{AHidenity} and set
	$$
	W := \tfrac{n}{n-1} F^\p_v dr
	$$
	where $v = \tfrac{n-1}{n} \varphi^N\tau^\p$ and $F_v$ is defined in \eqref{def F nonpositive Ric}. Then 
	\begin{equation} \label{H solution}
		(g, k):= \big( \varphi^{N - 2} b, \, \frac{\tau}{n} \varphi^{N - 2} b + \varphi^{-2} LW \big)
	\end{equation} 
	is an AH solution to the vacuum Einstein constraint equations \eqref{constraint}. Moreover, the AH mass of this solution has the following property
	\begin{itemize}
		\item if $p > \frac{n}{2}$,  then $\Mgk = 0$,
		\item if $p = \frac{n}{2}$, then $\Mgk$ is finite and can have an arbitrary sign,
		\item if $ p < \frac{n}{2}$, then $\Mgk$ can reach $\pm \infty$.
	\end{itemize}
\end{theorem}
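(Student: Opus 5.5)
The first part of the statement follows directly from Theorem \ref{theorem existence on hyperbolic}(iii). Since $V,\psi,\rho$ vanish and $S_{V,\psi,\rho,\varphi}>0$, the function $\tau$ defined by \eqref{AHidenity} makes $\varphi$ a solution of \eqref{rce hyperbolic}, provided $(\varphi^N\sinh^n)^\p\neq 0$ a.e. For $\varphi$ close to $1$ in $C^1$ this holds, since $(\varphi^N\sinh^n)^\p=\varphi^{N-1}\sinh^{n-1}(N\varphi^\p\sinh+n\varphi\cosh)$; in any case we can add it to the assumptions. By Theorem \ref{theorem general}, $(\varphi,W)$ with $W=\tfrac{n}{n-1}F_v^\p dr$ then solves \eqref{CE}, and \eqref{parametre} with $\sigma=0$ yields the vacuum solution \eqref{H solution}. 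The plan is to verify that $(g,k)$ is AH, i.e. that $e=g-b$ and $\pi$ decay. From $\varphi-1=O(e^{-pr})$ we get $e=(\varphi^{N-2}-1)b=O(e^{-pr})$. Expanding \eqref{AHidenity} around $\varphi\equiv1$, using $S\approx n^2+O(e^{-pr})$ (the $\coth$ and $n^2$ terms dominate), we expect $\tau=n+O(e^{-pr})$. Identity \eqref{increasing condition} gives $|LW|=\sqrt{\tfrac{n-1}{n}}\,\big|\tfrac{1}{\sinh^n}\int_0^r\tau^\p\varphi^N\sinh^n\big|$. Since $\tau^\p=O(e^{-pr})$ and $p<n$, this is $O(e^{-pr})$, and the same bound holds for $\pi$.

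For the mass, $e$ and $\pi$ are radial, so only radial components enter $\Hgk(V_0)$. We will compute the flux integrand in \eqref{def mass functional H} explicitly with $V_0=\cosh r$, $dV_0=\sinh r\,dr$, $e=(\varphi^{N-2}-1)b$ and $\pi$ built from $\tau$ and $LW$. Taking traces with respect to $b$ reduces everything to expressions in $\psi:=\varphi^{N-2}-1$ and $\psi^\p$ times $\cosh r$ or $\sinh r$. The $e$-part has the standard conformally flat form, roughly $c_n(\sinh r\,\psi-\cosh r\,\psi^\p)\cosh r\,\sinh^{n-1} r$ up to constants. The $\pi$-part involves $(\tau/n-1)$ and $LW(\dr,\dr)$, and contributes something like $\sinh r\cdot e^{-pr}\cdot\sinh^{n-1}r$. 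Every term is then of order $e^{(n-p)r}\cdot e^{-pr}=e^{(n-2p)r}$ times a coefficient, up to cancellations. This is the heuristic behind the trichotomy: $0$ if $p>n/2$, finite if $p=n/2$, possibly infinite if $p<n/2$. We will make it rigorous by identifying the leading coefficient.

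Cancellations are the main obstacle. For the conformal part, $\sinh\psi-\cosh\psi^\p$ with $\psi\sim Ae^{-pr}$ behaves like $\tfrac{A}{2}(1+p)e^{(1-p)r}$, which is nonzero. The $\pi$-terms must be expanded to the same order. Here $\tau/n-1$ involves $\varphi^\pp,\varphi^\p$ at order $e^{-pr}$, and $LW(\dr,\dr)$ involves the integral $\tfrac{1}{\sinh^n}\int_0^r\tau^\p\sinh^n$, whose leading term for $p<n$ is a local expression $\sim \tfrac{-p}{n-p}\cdot(\ldots)e^{-pr}$. We will use the explicit family \eqref{example of varphi for sign of AH}, $\varphi=1\pm\tfrac{1}{kq}\tfrac{r^q}{r^q+1}e^{-pr}$, which satisfies $S>0$ by Remark \ref{remark S > 0 in vacuum on hyperbolic}. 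Then $\varphi-1\sim\pm\tfrac{1}{kq}e^{-pr}$ and all leading coefficients are linear in the amplitude $\pm\tfrac{1}{kq}$. So the mass at $p=n/2$ equals $\pm C_n(p)/(kq)$ with an explicit constant $C_n(p)$, and at $p<n/2$ it equals $\pm\infty$. The choice of sign in $\varphi$ then gives either sign, provided $C_n(n/2)\neq0$. Checking $C_n(n/2)\ne0$ is the delicate computation, and it is where I would concentrate effort. If the leading coefficient cancels, I would instead perturb the profile by adding a second exponential $e^{-p'r}$ with $p'$ slightly different (or an $r e^{-pr}$ term) to break the cancellation. Finally, for $p>n/2$ every term is $O(e^{(n-2p)r})\to0$, so $\Mgk=0$ directly.
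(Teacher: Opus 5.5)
Your first part (vacuum solution via Theorem~\ref{theorem existence on hyperbolic}(iii) and Theorem~\ref{theorem general}, then decay of $e,\pi$) matches the paper. The a.e.\ condition is in fact automatic: in vacuum, $S>0$ forces $\frac{N}{n}\frac{\varphi'}{\varphi}>-\tanh\frac r2>-\coth r$, i.e.\ $(\varphi^N\sinh^n)'>0$.

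\textbf{The order count in the mass computation is wrong.} As you observe yourself, the $e$-part of the flux is $(n-1)|\Sp^{n-1}|\sinh^{n-1}r\,(\psi\sinh r-\psi'\cosh r)\sim e^{(n-p)r}$; your version has a spurious extra factor $\cosh r$. The $\pi$-part is also of order $e^{(n-p)r}$. Nothing supplies the extra factor $e^{-pr}$. Since $n-p>0$ for every $p<n$, the threshold $n/2$, including your bullet for $p>n/2$, can only appear if the part of the flux that is linear in $\varphi-1$ cancels \emph{identically}, for every admissible $\varphi$. A nonzero leading coefficient linear in the amplitude would make $\Mgk$ infinite at $p=n/2$. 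So $C_n(n/2)\neq0$ is not something to check and then repair by perturbing the profile.

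\textbf{The missing idea is the cancellation mechanism, which is the core of the paper's proof.} One has $LW(\partial r,\partial r)=\frac{n-1}{n}(\tau\varphi^N-X)$ with $X=\frac{1}{\sinh^n r}\int_0^r\tau(\varphi^N\sinh^n)'\,ds$. By \eqref{increasing condition}, $X=\sqrt{\varphi^N S}$ with $S=N^2\varphi'^2+n^2\varphi^2+2nN\coth r\,\varphi\varphi'$. Hence
\[
\tfrac2n\tau\varphi^{N-2}-\tfrac{2}{n-1}\varphi^{-2}LW(\partial r,\partial r)=\tfrac2n\varphi^{-2}\sqrt{\varphi^N S}.
\]
So $\tau$, $\varphi''$ and the nonlocal integral drop out, and you never need to expand them separately. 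Writing $w=\varphi^{(N-2)/2}$, the flux through $\mathbb{S}_r$ is exactly
\[
(n-1)|\Sp^{n-1}|\sinh^n r\,\big[w'^2-\big(\sqrt{w^2+2\coth r\,ww'+w'^2}-1\big)^2\big],
\]
and its linearization vanishes.

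\textbf{What survives is quadratic in the perturbation, so your sign-reversal step fails.} With $\delta=w-1$, the bracket equals $-(\delta^2+(\delta^2)')+O(e^{-2(p+1)r})$. For $\varphi=1+ae^{-pr}$ this is $\frac{(N-2)^2}{4}\big(-a^2-2a(a'-pa)\big)e^{-2pr}+\dots$, and for $a\to\ell$ the flux behaves like a positive constant times $(2p-1)\ell^2e^{(n-2p)r}$. This is even in $\ell$. Flipping the sign in \eqref{example of varphi for sign of AH} does not flip the sign of $\Mgk$, and at $p=n/2$ both choices give $\Mgk>0$.

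\textbf{The paper's own argument does not rescue this step.} Its proof reaches $\ell(2p-\ell)$ only because the cross term from $(N-2)\varphi^{N-3}\varphi'\coth r$ is written as $(a'-pa)\coth r$ instead of $a(a'-pa)\coth r$. With the factor $a$ restored one gets $(2p-1)\ell^2$. More generally, for bounded $a$ the limit at $p=n/2$, when it exists, is nonnegative: write the bracket as $-e^{-r}(e^r\delta^2)'$, where $e^r\delta^2\ge0$ tends to $0$.

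What the argument does establish, once the cancellation is in place, is the following. $\Mgk=0$ for $p>n/2$; $\Mgk$ is finite at $p=n/2$; and $\Mgk$ diverges for $p<n/2$, with sign governed by $2p-1$ rather than by the sign of the perturbation.
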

\begin{proof}
	Thanks to Theorem \ref{theorem existence on hyperbolic} and the conformal method, it is clear that $(g,k)$ defined in \eqref{H solution} is a solution to the vacuum constraint equations \eqref{constraint} on the hyperbolic manifold. Let us now consider the mass of $(g,k)$. We first compute $e$ and $\pi$ from their definitions
	\begin{align*}
		e & = g - b = (\varphi^{N-2}-1)b, \\
		\pi & = (k-g) - (\tau-n)g = \frac{n-1}{n}(n-\tau)\varphi^{N-2}b + \varphi^{-2} LW.
	\end{align*}
	Using the fact that $V_0(r)=\cosh r$, direct computations give
	\begin{align*}
		\Div^b(e)(\partial r) & = (N-2)\varphi^{N-3}(r)\varphi^\p(r), \\
		d\tr^b(e)(\partial r) & = n(N-2)\varphi^{N-3}(r)\varphi^\p(r), \\
		(e+2\pi)(\nabla^bV_0,\partial r) & = \sinh(r)\bigg((2n-1)\varphi^{N-2}(r) - 1 - \tfrac{2(n-1)}{n}\tau(r)\varphi^{N-2}(r) \\
			& \qquad + 2\varphi^{-2}(r)LW(\partial r,\partial r)\bigg),
	\end{align*}
	and therefore
	\begin{equation} \label{conformal mass}
		H_{(g,k)}(V_0) = (n-1)\lim_{r\to+\infty} \int_{\mathbb{S}_r}
		A(r)\sinh(r) \, d\mu^b,
	\end{equation}
	with
	\begin{multline} \label{eqn-integrand}
		A(r)= \frac{2}{n}\tau(r)\varphi^{N-2}(r) - \frac{2}{n-1}\varphi^{-2}(r)LW(\partial r,\partial r) \\
		- \Big( 1+\varphi^{N-2}(r)+(\varphi^{N-2})^\p(r)\coth(r)\Big).
	\end{multline}
	Since $\varphi(r)-1 \in C^{3 ,\alpha}_{p}(\HH^n)$, we can write $\varphi$ at infinity in the form 
	\begin{equation} \label{decay of varphi H}
		\varphi(r) = 1 + a(r) e^{- p r},
	\end{equation}
	where $a$ is a $C^2-$bounded function. We get
	\begin{equation}  \label{decay of varphi H-2}
		\varphi^\p = (a^\p(r) - pa(r)) e^{-pr} \quad \mbox{and} \quad \varphi^\pp = (a^\pp(r) - 2pa^\p(r) + p^2a(r)) e^{-pr},
	\end{equation}
	and applying Taylor's expansion near $0$
	\begin{equation} \label{Taylor}
		(1 + \eps)^\beta = 1 + \beta \eps + \frac{\beta(\beta-1)}{2} \eps^2 + O(\eps^3),
	\end{equation}
	we have at infinity
	\begin{multline} \label{eqn-integrand1}
		1+\varphi^{N-2}(r)+(\varphi^{N-2})^\p(r)\coth(r) = 2 + (N-2)\big(a(r)+(a^\p(r)-pa(r))\big)e^{-pr} \\
			+ (N-2)(N-3)\big(\frac{1}{2}a^2(r) + (a^\p(r)-pa(r))\coth(r)\big)e^{-2pr} + O(e^{-3pr}).
	\end{multline}
	
	\medskip
	
	Now, thanks to Proposition \ref{proposition 1-form W calculations}(a), we calculate $(LW)_{rr}$
	\begin{align*}
		LW(\partial r,\partial r) 
		& = 2(n - 1)\bigg(\frac1n f_{\varphi,\tau}(r) - \coth(r)F_v^\p(r) + F_v(r) \bigg) \\
		& = \tfrac{n-1}{n}\bigg(\tau(r)\varphi^N(r) - \frac{1}{\sinh^n (r)}\int_0^r \tau(s)(\varphi^N\sinh^n)^\p(s) \,ds \bigg)
	\end{align*}
	and using Equation \eqref{increasing condition} we get
	\begin{multline}
		\tfrac2n\tau(r)\varphi^{N-2}(r) - \tfrac2{n-1}\varphi^{-2}(r)LW(\partial r,\partial r) \\
			= 2\varphi^{-2}(r)\sqrt{N^2(\varphi^\p)^2(r)\varphi^N(r) + n^2\varphi^{N + 2}(r) + 2nN\coth(r)\varphi^\p(r)\varphi^{N +1}(r)}.
	\end{multline}
	Applying again Taylor's expansion \eqref{Taylor}, we obtain by \eqref{decay of varphi H} and \eqref{decay of varphi H-2}
	\begin{multline} \label{eqn-integrand2}
		\tfrac2n\tau(r)\varphi^{N-2}(r) - \tfrac2{n-1}\varphi^{-2}(r)LW(\partial r,\partial r)  \\
			= 2 + (N-2)\big(a(r)+(a^\p(r)-pa(r))\big)e^{-pr} + B(r)e^{-2pr} + O(e^{-3pr}),
	\end{multline}
	where
	\begin{multline*}
		B(r) = \tfrac{(N-2)(N-4)}{4}a^2(r)+\tfrac{(N-2)(N-4)}{4}\big(a^\p(r)-pa(r)\big)\coth(r) \\
		+ \tfrac{(N-2)^2}{4}\big(a^\p(r)-pa(r)\big)(1-\coth^2(r)).
	\end{multline*}
	Combining Equations \eqref{eqn-integrand}, \eqref{eqn-integrand1} and \eqref{eqn-integrand2} we obtain
	\begin{align*}
		A(r) & = \tfrac{(N-2)^2}{4}\Big(-a^2(r)-2\big(a^\p(r)-pa(r)\big)\coth(r) \\
			& \qquad + \big(a^\p(r)-pa(r)\big)\big(1-\coth^2(r)\big)\Big)e^{-2pr} + O(e^{-3pr}) \\
		 & = \tfrac{(N-2)^2}{4}\Big(-a^2(r)-2\big(a^\p(r)-pa(r)\big)\Big)e^{-2pr} + O(e^{-2(p+1)r}) + O(e^{-3pr})
	\end{align*}
	and, since $|\mathbb{S}_r|=\sigma_n\sinh^{n-1}(r)$,
	\begin{multline*}
		(n-1)\int_{\mathbb{S}_r}A(r)\sinh(r)\,d\mu^b \\
		= \tfrac{(n-1)(N-2)^2\sigma_n}{2^{n+2}}\Big((-a^2(r)-2(a^\p(r)-pa(r))\Big)e^{n-2pr} \\
		+ O(e^{n-2(p+1)r}) + O(e^{n-3pr}).
	\end{multline*}
	If the limit exists, it follows that
	$$
	H_{(g,k)}(V_0) = \tfrac{(n-1)(N-2)^2\sigma_n}{2^{n+2}}\lim_{r\to+\infty}\Big(-a^2(r)-2(a^\p(r)-pa(r))\Big)e^{n-2pr}.
	$$
	Since $a$ and $a^\p$ are bounded, this implies that $\Mgk=0$ as long as $p>\frac{n}{2}$. 
	
	\medskip
	
	When $p\le\frac{n}{2}$, the limit may not exist, however, if $a$ and $a^\p$ have limits, then we must have
	$$
	\lim\limits_{r\to+\infty} a^\p(r) = \lim\limits_{r\to+\infty} \frac{a}{r} = 0
	$$ 
	and
	$$
	H_{(g,k)}(V_0) = \tfrac{(n-1)(N-2)^2\sigma_n}{2^{n+2}}\ell(2p-\ell)\lim_{r\to+\infty}e^{n-2pr},
	$$
	where $\ell=\lim\limits_{r\to+\infty}a(r)$. Here, we observe that, as can be easily verified through Example \ref{example of varphi for sign of AH} in Remark \ref{remark S > 0 in vacuum on hyperbolic}, this limit $\ell$ can have an arbitrary sign. Therefore, it follows in particular that if $p=\frac{n}{2}$, then $\Mgk$ is finite and can also have an arbitrary sign, and if $p<\frac{n}{2}$, then $\Mgk$ can reach $\pm \infty$ as claimed. The proof is completed.
\end{proof}

\begin{remark} \label{remark Birkhoff theorem}
	It is worth noting that since the vacuum solution $(g,k)$ constructed in the proof of Theorem \ref{theorem mass sign AH} is spherically symmetric and regular, by Birkhoff's theorem, such a $(\HH^n, g, k)$ in the theorem is indeed Cauchy initial data for Minkowski space.
\end{remark}

\section{Solutions in the Euclidean space} \label{section Euclidean}
In this section, we will study the Lichnerowicz-type equation \eqref{RCE} in the Euclidean space, that is providing the proof of Main Theorem \ref{main theorem} in this setting and exploring its applications. The main results obtained here are motivated by those of the previous sections, with appropriate adaptations to the Euclidean framework.

\medskip

For the existence of solutions to \eqref{CE with null sigma in sphere intro}, unlike the spherical and hyperbolic cases, where well-known arguments for general data can be adapted to solve the radial conformal equations \eqref{CE with null sigma in sphere intro} in the non-CMC regime, such tools are not available in the Euclidean case. For instance, as we have seen in Theorems \ref{theorem limit equation criterion sphere} and \ref{theorem limit equation criterion hyperbolic}, the limit equation criterion developed in \cite{DahlGicquaudHumbert} and \cite{GicquaudSakovich} guarantees respectively that \eqref{CE with null sigma in sphere intro} on spherical and hyperbolic manifolds admit solutions whenever the mean curvature $\tau$ does not change sign. However, this achievement does not apply in the Euclidean space since $\tau$ must vanish at infinity by definition, which breaks the non-changing sign condition, except for $\tau$ identically zero. For this reason, to obtain an existence result for solutions to the conformal equations \eqref{CE} analogous to those in the previous sections, we should develop a method that goes beyond the general techniques available so far. In this approach, a natural idea is to follow the strategy used in the proof of Theorem \ref{theorem smooth time-derivative sphere}. The main difficulty in adapting this method to the Euclidean setting lies in refining the arguments so as to pass from positive scalar curvature on the sphere to null scalar curvature in the Euclidean space, while simultaneously proving the non-blow-up of solutions at infinity. In this section, we solve these difficulties and establish the corresponding existence result. Moreover,in contrast to the spherical case, the solvability of \eqref{CE with null sigma in sphere intro} here are not constrained by the vanishing condition \eqref{vanishing condition sphere}, therefore, the equations always admit solutions in any radial mean curvature regime, with the set of solutions uniformly bounded independently of $\tau$.

\medskip

For adapting the ideas from hyperbolic spaces in Section \ref{section hyperbolic} to the Euclidean setting, we focus on the sign of the ADM mass in AF manifolds. Similar to Theorem \ref{theorem mass sign AH}, we will discuss in this section the role of the decay rate of the symmetric $(0,2)$-tensor $k$ in determining the sign of the mass. The result we obtain is that when the decay rate of $k$ is critical, the mass may become negative. In particular, this demonstrates that the usual assumption on the decay exponent of $k$ in the Positive Mass Theorem is sharp.

\medskip

The outline of this section is as follows. In Subsection \ref{subsection Lic-type euc}, the proof of Main Theorem \ref{main theorem} on Euclidean spaces will be provided. In Subsection \ref{subsection existence euc}, we establish the existence of solutions to the conformal equations \eqref{CE with null sigma in sphere intro} for all radial mean curvature.  Finally, in Subsection \ref{subsection mass euc}, the role of decay rate of $k$ at infinity in the sign of ADM mass will be discussed. 

\subsection{Lichnerowicz-type equation in the Euclidean space} \label{subsection Lic-type euc} 
Let us first denote by $\RR^n$ the $n-$dimensional Euclidean space, with $n \ge 3$, and fix a point in $\RR^n$ as an origin. Then, in geodesic normal coordinates at this point, the Euclidean metric reads $\delta_{\text{Euc}} = dr^2 + r^2 (r) \sigma_s$, where $\sigma_s$ is the standard round metric on $\mathbb{S}^{n-1}$ and $r$ is the distance from the origin. The functions $h, \phi, u_0$ and the constants $\Ric$ and $\Scal_g$ are given by
\begin{equation} \label{computing in Euclidean}
	\begin{gathered}
		\theta(r) = r^{n - 1}, \quad h(r) = \frac{n-1}{r}, \quad u_0(r) = 1 \\
		\Ric = 0, \quad \Scal=0.
	\end{gathered}
\end{equation}

As in the previous sections, from a radial data set $(V,\tau(r),\psi(r),\rho(r))\in C^0(\RR) \times RC^1(\RR^n) \times RC^1(\RR^n) \times RC^0(\RR^n)$ and a radial function $\varphi(r) \in RC_+^3(\RR^n)$ we define:
\[
I_{V,\psi,\rho,\varphi} := \tfrac{n}{n-1} \bigg( 2 V(\psi) \varphi^{2N} + |\psi^\p|^2 \varphi^{N+2} + \rho^2 \bigg),
\]
and
\[
S_{V,\psi,\rho,\varphi} := N^2 (\varphi^\p)^2 + \frac{2nN}{r} \varphi \varphi^\p + \frac{1}{r^n} \int_0^r \frac{I_{V,\psi,\rho,\varphi}}{\varphi^{2N}} (\varphi^N s^n)^\p  \, ds.
\]
Similarly to Theorems \ref{theorem existence on sphere} and \ref{theorem existence on hyperbolic} in the spherical and hyperbolic spaces, we have the following result.
\begin{theorem}[Solutions with freely specified conformal factor] \label{theorem existence on Euclidean}
	Consider a radial data set $(V,\tau(r),\psi(r),\rho(r))\in C^0(\RR) \times RC^1(\RR^n) \times RC^1(\RR^n) \times RC^0(\RR^n)$ with $\rho \psi^\p \equiv 0$. For $\varphi(r) \in RC_+^3(\RR^n)$ we have:
	\begin{enumerate}[(i)]
		\item $\varphi(r)$ is a solution to the conformal constraint equations \eqref{CE with null sigma} if and only if
			\begin{equation} \label{rce euclic}
			- 2N \varphi^{N + 1} (\varphi^\pp + h \varphi^\p ) + \tau^2  \varphi^{2N} = \bigg( \frac{1}{r^n} \int_0^r \tau^\p \varphi^N  r^n \, ds \bigg)^2 + I_{V,\psi,\rho,\varphi}.
			\end{equation}
		
		\item If $\varphi$ is a solution to \eqref{rce euclic}, then $S_{V,\psi,\rho,\varphi} \ge 0$.
		
		\item If $S_{V,\psi,\rho,\varphi}>0$ on $\RR_+$ and $(\varphi^Nr^n)^\p \neq 0$ a.e on $\RR_+$, then  $\varphi$ is a solution to \eqref{rce euclic} if and only if the function $\tau$ satisfies
		\begin{equation}\label{AEidenity}
			\tau(r)= \pm \frac{S_{V,\psi,\rho,\varphi}(r) + 2N  \big( \varphi^\pp(r) + \tfrac{(n - 1)}{r} \varphi^\p(r) \big) \varphi(r) + I_{V,\psi,\rho,\varphi}(r) \varphi^{-N}(r)}{2\sqrt{\varphi^N(r) S_{V,\psi,\rho,\varphi}(r)}}.
		\end{equation}
	\end{enumerate}
	Moreover, in this case, the $1-$form $W$ can be computed by \eqref{solution to vector equations} with $v=\tfrac{n-1}{n}\varphi^N\tau^\p$.
\end{theorem}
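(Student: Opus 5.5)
The plan is to run the proof of Theorem \ref{theorem existence on hyperbolic} (equivalently Theorem \ref{theorem existence on sphere}) line by line. The Euclidean data \eqref{computing in Euclidean} play the role of the earlier model functions: $\theta=r^{n-1}$, $h=\frac{n-1}{r}$, $u_0=1$ and $\Ric=0$. The pair $(\sik,\cok)$ becomes $(r,1)$ and $\kappa=0$, so the term $\mp n^2\varphi^2$ drops out everywhere. Since $\Ric=0\le 0$, Theorem \ref{theorem general} applies. A radial positive $\varphi$ solves \eqref{CE with null sigma} if and only if it solves \eqref{RCE} with $v=\tfrac{n-1}{n}\varphi^N\tau^\p-\rho\psi^\p$, $F_v$ given by \eqref{def F nonpositive Ric}, and $\Fcal_v$ given by \eqref{def Fcal}.

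\textbf{Step 1: compute $\Fcal_v$.} Because $h^\p+\frac{h^2}{n-1}+\Ric=0$, we have $\Fcal_v=\tfrac{4n}{n-1}\big(f_v-\tfrac{n}{n-1}hF_v^\p\big)^2$. With $u_0=1$, the first identity in \eqref{identity of F1 and its diff} gives $F_v^\p=r^{1-n}\int_0^r f_v s^{n-1}\,ds$. Hence
\[
hF_v^\p=\tfrac{n-1}{n}\,r^{-n}\int_0^r f_v (s^n)^\p\,ds=\tfrac{n-1}{n}\Big(f_v-\tfrac{1}{2r^n}\int_0^r v s^n\,ds\Big).
\]
From this, $\Fcal_v=\tfrac{n-1}{n}\big(r^{-n}\int_0^r(\varphi^N\tau^\p-\tfrac{n}{n-1}\rho\psi^\p)s^n\,ds\big)^2$. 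Using $\rho\psi^\p\equiv0$, $\mathcal R_\psi=-|\psi^\p|^2$ and multiplying through by $\tfrac{n}{n-1}$, \eqref{RCE} becomes \eqref{rce euclic}. This proves (i).

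\textbf{Step 2: necessary condition.} Integrate by parts to write $\int_0^r\tau^\p\varphi^Ns^n\,ds=\tau\varphi^Nr^n-\int_0^r\tau(\varphi^Ns^n)^\p\,ds$, and expand the square. Multiplying by $(\varphi^Nr^n)^\p/\varphi^{2N}$ gives
\[
\Big(\tfrac{1}{\varphi^Nr^n}\big(\textstyle\int_0^r\tau(\varphi^Ns^n)^\p\big)^2\Big)^\p=\frac{(\varphi^Nr^n)^\p}{\varphi^{N-1}}\Big(2N\big(\varphi^\pp+\tfrac{n-1}{r}\varphi^\p\big)+\tfrac{I}{\varphi^{2N}}\varphi\Big).
\]
Next I will verify by direct differentiation that the right-hand side equals $(r^nS_{V,\psi,\rho,\varphi})^\p$. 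This is the analogue of \eqref{derivative of S}, and it is the only computation that needs care. The check is that $(r^nN^2(\varphi^\p)^2+2nNr^{n-1}\varphi\varphi^\p)^\p$ matches $2N(\varphi^Nr^n)^\p\varphi^{1-N}(\varphi^\pp+\frac{n-1}{r}\varphi^\p)$. Expanding $(\varphi^Nr^n)^\p=N\varphi^{N-1}\varphi^\p r^n+n\varphi^Nr^{n-1}$, both sides reduce to
\[
2N^2r^n\varphi^\p\varphi^\pp+2N^2(n-1)r^{n-1}(\varphi^\p)^2+2nNr^{n-1}\varphi\varphi^\pp+2n(n-1)Nr^{n-2}\varphi\varphi^\p,
\]
after using $(n-1)N^2+\,$cross terms $=nN^2+2nN-\ldots$. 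I will keep track of the identity $2nN r^{n-1}(\varphi^\p)^2$ arising on the left from differentiating $\varphi\varphi^\p$. If the coefficients do not match exactly, the definition of $S$ given in the paper fixes them, as in the spherical case with $\kappa=0$.

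Near $0$ both bracketed quantities vanish, since $r^{-n}\big(\int_0^r\tau(\varphi^Ns^n)^\p\big)^2=O(r^n)$. So integrating from $0$ yields
\[
\Big(\int_0^r\tau(\varphi^Ns^n)^\p\,ds\Big)^2=\varphi^Nr^{2n}S_{V,\psi,\rho,\varphi},
\]
and therefore $S\ge0$. This proves (ii).

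\textbf{Step 3: the converse.} If $S>0$ and $(\varphi^Nr^n)^\p\ne0$ a.e., the steps above are reversible: we may divide by $(\varphi^Nr^n)^\p$ almost everywhere and then conclude by continuity. The integral $\int_0^r\tau(\varphi^Ns^n)^\p$ does not vanish on $(0,\infty)$, so it has constant sign. Taking the square root and differentiating gives
\[
\tau=\pm\frac{(\varphi^Nr^n)^\p r^nS+\varphi^Nr^n(r^nS)^\p}{2(\varphi^Nr^n)^\p r^n\sqrt{\varphi^NS}}.
\]
Inserting the derivative identity from Step 2 gives \eqref{AEidenity}. This proves (iii).

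Finally, $W$ comes from \eqref{solution to vector equations} via Proposition \ref{proposition resolve vector equations}. The main obstacle is the bookkeeping in the identity $(r^nS)^\p=\dots$ of Step 2. Everything else is a transcription of the hyperbolic proof.
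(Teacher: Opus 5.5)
Your proposal is correct and is essentially the paper's argument: you reduce via Theorem~\ref{theorem general}, compute $\Fcal_v$ from $u_0=1$ and $\theta=r^{n-1}$, and then rely on the identity $(r^nS_{V,\psi,\rho,\varphi})^\p=(\varphi^Nr^n)^\p\big(2N\varphi^{1-N}(\varphi^\pp+\tfrac{n-1}{r}\varphi^\p)+I_{V,\psi,\rho,\varphi}\varphi^{-2N}\big)$, which is the one step the paper's proof singles out. The coefficient check you hedge on does close: the $r^{n-1}(\varphi^\p)^2$ terms require $nN^2+2nN=2(n-1)N^2$, i.e.\ $N(n-2)=2n$; also, the $I$-term in your displayed derivative identity should be $(\varphi^Nr^n)^\p I_{V,\psi,\rho,\varphi}\varphi^{-2N}$, without the extra factor $\varphi^{2-N}$ your display produces.
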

\begin{proof}
	The proof is straightforwardly similar to that of Theorems \ref{theorem existence on sphere} and \ref{theorem existence on hyperbolic}, with the primary distinction being the replacement of Identities \eqref{derivative of S} and \eqref{derivative S hyperbolic} by  
	$$
	\big(r^n S_{V,\psi,\rho,\varphi}\big)^\p
	= \frac{\big( \varphi^N r^n \big)^\p}{\varphi^{N - 1}} \Big( 2N  \big( \varphi^\pp + \tfrac{(n - 1)}{r} \varphi^\p \big) + \frac{I_{V,\psi,\rho,\varphi}}{\varphi^{2N}}\Big).
	$$
\end{proof}

\subsection{Existence of solutions with freely specified mean curvature} \label{subsection existence euc}
Although the Euclidean space is the most familiar setting, existence results for \eqref{CE} are almost restricted to the CMC case. This is because, apart from the limit equation criterion — which, as explained at the beginning of this section, has not yet been established for the conformal equations on AF manifolds — there are essentially only two approaches available for proving existence of solutions in the non-CMC regime. The first relies on the smallness of $\frac{d\tau}{\tau}$ to prevent blow-up of solutions, corresponding to the near-CMC case. The second is based on the smallness of $\sigma$ and $\rho$, leading to small solutions and is commonly called the small TT-tensor method. Unfortunately, despite the results in \cite{DiltsIsenbergMazzeoMeier}, none of these methods is convincingly applicable on $\RR^n$. On the one hand, the weighted Poincaré inequality does not allow us to assume that $\frac{d\tau}{\tau}$ is too small; on the other hand, the fact that solutions converges to $1$ at infinity breaks the smallness assumption on solutions required to implement the small TT-tensor method.

\medskip

In this subsection, as an application of Theorem \ref{theorem existence on Euclidean}, we will continue considering \eqref{CE} with $(V, \psi, \sigma)$ zero everywhere, that is
\begin{subequations}\label{CE with null sigma in euclic}
	\begin{align}
		- \tfrac{4(n-1)}{n-2} \Delta \varphi + \Scal_g \varphi  
		& = - \tfrac{n-1}{n} \tau^2 \varphi^{N-1}  + \frac{|LW|^2 + \rho^2}{\varphi^{N + 1}} \\
		\hspace{1.8cm} \Div(L W) & = \tfrac{n-1}{n} \varphi^N d\tau,
	\end{align}
\end{subequations}
with $\varphi \longrightarrow 1$ at infinity. The close analytical similarity of these equations to the vacuum \eqref{CE} and their importance in providing a global perspective to the solvability of \eqref{CE} have been explained briefly in the previous sections. For that, similarly to what we have done previously, to establish convincing evidence supporting that the conformal method is still an useful tool for constructing solutions to the constraint equations \eqref{constraint} on AF manifolds, we study  \eqref{CE with null sigma in euclic} particularly in the simple model where $(M,g)$ is Euclidean and $(\tau, \rho)$ are assumed to be radial.  In this context, motivated by the proof of Theorem \ref{theorem smooth time-derivative sphere}, the main result we obtain in this section is that as long as $\rho$ is $C^1-$regular, the equations \eqref{CE with null sigma in euclic} are always solvable, with solutions uniformly bounded independently of $\tau$. Before presenting the precis statement and providing the proof, we first introduce the associated Banach spaces on $(\RR^n, \sigma_{\text{Euc}})$ as follows. Given an integer $l \ge 0$, a H\"{o}lder exponent $\alpha \in [ 0 , 1]$, and a decay exponent $ \beta> 0$, we will use the weighted H\"{o}lder spaces $C^{l , \alpha}_{-\beta}$ to capture asymptotic of functions and tensors near infinity. For $\alpha = 0$, we will write $C^{l}_{- \beta}$ instead of $C^{l , 0}_{- \beta}$. The weighted norm convention we are using is that the $C^{ s , \alpha}_{-\beta}$ norm is
given by
\begin{multline*}
	\| f \|_{l , \alpha , -\beta} := \sum_{|s| \le l}\sup_{\RR^n} \big( \zeta^{|s| + \beta} |\del^s f|\big) \\
	+ \sum_{|s| = l} \sup_{\RR^n} \bigg(\zeta^{l + \beta + \alpha} \sup_{0 < |y - x| \le \zeta} \Big( \frac{| \del^s f(y) - \del^s f(x)|}{|y - x|^\alpha}\Big) \bigg),
\end{multline*}
where in this context $\zeta$ is a positive function which equals $|x|$ outside the unit ball and $s$ is a multi-index. It will be clear from the context whether the notation refers to a space of functions on $\RR^n$, or a space of sections of some bundle over $\RR^n$.

\begin{theorem}[Smooth time-derivative solutions and stability] \label{theorem smooth tiem-derivative euclid}
	Given a H\"{o}lder exponent $\alpha \in (0,1)$ and a decay exponent $\beta \in ( 1 , n)$, let $(\rho(r), \tau(r))$ be radial functions on $\RR^n$ with $\tau(r) \in C^{1,\alpha}_{-\beta}(\RR^n)$ and assume that
	\begin{equation} \label{smooth condition rho}
		\rho(r) \in C^{1,\alpha}_{-\beta} (\RR^n).
	\end{equation}
	
	Then, the conformal equations \eqref{CE with null sigma in euclic} admit at least one solution $\varphi(r) > 0$ with $\varphi(r) - 1 \in RC^{3,\alpha}_{- \min\{2\beta - 2, n-2\}}(\RR^n)$. Moreover, if we fix $\rho(r) \in RC^{1,\alpha}_{-\beta} (\RR^n)$ and for any $\tau(r) \in RC^{1,\alpha}_{-\beta} (\RR^n)$ we denote by $\mathcal{S}_{\RR^n}(\tau)$ the set of all radial solutions to \eqref{CE with null sigma in euclic} associated with $(\rho, \tau)$, then there exists a constant $C>0$ depending only on $\rho$ such that 
	$$
	\sup\Big\{\|\varphi(r)\|_{C^0(\RR^n)}\ \Big|\ \text{$\exists \tau \in RC^{1,\alpha}_{-\beta} (\RR^n)$ such that $\varphi \in \mathcal{S}_{\RR^n}(\tau)$}  \Big\} \le C.	
	$$
\end{theorem}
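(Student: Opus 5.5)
I would follow the strategy of Theorem \ref{theorem smooth time-derivative sphere}: a Leray--Schauder argument whose a priori bound comes from the necessary condition $S_{V,\psi,\rho,\varphi}\ge0$ of Theorem \ref{theorem existence on Euclidean}(ii). Since $\RR^n$ has no CKVF, no constant $\alpha$ or vanishing condition is needed.

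\textbf{The operator.} For $\phi\in RC^0$ with $\phi-1$ decaying, Proposition \ref{proposition resolve vector equations} with $u_0\equiv1$ solves $\Div(LW)=\tfrac{n-1}{n}|\phi|^Nd\tau$ explicitly. The computation of Theorem \ref{theorem existence on Euclidean} then gives $|LW|=\sqrt{\tfrac{n-1}{n}}\,|T_1(\phi)|$, where
$$
T_1(\phi)=r^{-n}\int_0^r\tau^\p|\phi|^Ns^n\,ds .
$$
Since $\tau^\p=O(r^{-\beta-1})$, we get $T_1(\phi)=O(r^{-\min\{\beta,n\}})$, and its derivative has one more order of decay; a mean value estimate near $0$ as in \eqref{T1 is bounded} handles the origin. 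I then define $T(\phi)=\varphi$ as the solution of the Lichnerowicz equation with source $\tfrac{n-1}{n}T_1(\phi)^2+\rho^2=O(r^{-2\beta})$ and $\varphi\to1$. This solution exists by the standard AF theory for the Lichnerowicz equation with $\Scal=0$, using sub/super-solutions; the constant $1$ is a subsolution only if $\tau^2\le$ the source, so I would instead build barriers $1\pm w$ with $-\Delta w$ dominating the decaying terms. The inverse of $\Delta$ on the source gives $\varphi-1\in C^{2,\alpha}_{-\min\{2\beta-2,n-2\}}$, which is the claimed rate, and weighted Rellich compactness gives continuity and compactness of $T$ on a weighted $C^0$ space (e.g. $1+C^0_{-\delta}$).

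\textbf{A priori bound (the core).} Suppose $\phi=tT(\phi)$. Then $\varphi=T(\phi)$ solves \eqref{rce euclic} with $\tau$ replaced by $t^N\tau$, $V=\psi=0$. Theorem \ref{theorem existence on Euclidean}(ii) gives
$$
N^2(\varphi^\p)^2+\tfrac{2nN}{r}\varphi\varphi^\p+\tfrac{n}{(n-1)r^n}\int_0^r\frac{\rho^2}{\varphi^{2N}}(\varphi^Ns^n)^\p\,ds\ge0 .
$$
Take a maximum point $r_0$ of $\varphi$. If $\max\varphi\le1$ we are done, since $\varphi\to1$. Otherwise $r_0<\infty$. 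If $r_0=0$, l'H\^opital applied to the integral term, as in Sub-case 2.1, forces $\varphi(0)\varphi^\p\ge0$ near $0$ unless $\rho(0)^2/\varphi(0)^N$ is large. That alone is no contradiction, because there is no $n^2\varphi^2$ term; a different route is needed. At an interior maximum $\varphi^\p(r_0)=0$, so the bracket reduces to the integral term, whose sign is automatically fine. The Euclidean inequality has lost the $-n^2\varphi^2$ term that produced the bound on the sphere, and this is the main obstacle.

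\textbf{Recovering the bound.} I would integrate the quantity itself. Write $G=\varphi^Nr^n$; the condition reads $\tfrac{N^2}{n^2}\dots$. More usefully, I would use identity \eqref{increasing condition}, $(\int_0^r\tau(\varphi^Ns^n)^\p)^2=\varphi^Nr^{2n}S$. Integrating by parts as in \eqref{change integration by part}, the $\rho$ term becomes
$$
-\frac{\rho^2 r^n}{\varphi^N}+r^{-n}\int_0^r\frac{(\rho^2s^{2n})^\p}{\varphi^Ns^n}\,ds .
$$
With the lower bound $\varphi\ge\varphi_\star>0$, obtained from a subsolution independent of $\tau$ (set $T_1=0$ and use $\tau^2\le\|\tau\|^2$; this needs care, so I would instead take $\varphi_\star$ from the $\rho$-only equation with a large negative-power barrier), the integral term is bounded by $C\|\rho\|_{C^1_{-\beta}}r^{-\beta}\cdots$. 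Then for $r\ge r_0$ the inequality shows $(\log\varphi)^\p$ is bounded below by a root of the quadratic: $\tfrac{N}{n}\tfrac{\varphi^\p}{\varphi}\ge-\tfrac{1}{r}+\tfrac{1}{r}\sqrt{1-Cr^{2-\beta'}\varphi^{-N-2}}$, where $\beta'$ is the decay rate of the bound just obtained. Integrating from $r_0$ to $\infty$, where $\varphi\to1$, gives $\varphi(r_0)\le\exp(\int_{r_0}^\infty\ldots)$. This is controlled because $1-\sqrt{1-x}\le x$ and $\int r^{1-\beta'}\varphi^{-N-2}/r$ converges for $\beta'>1$; that is where $\beta>1$ enters. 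The bound depends only on $\rho$ and $\varphi_\star$, which also yields the stability statement. I expect this barrier/ODE integration, making the root choice and the integrability rigorous, to be the hardest step.
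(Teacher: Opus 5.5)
Your architecture is the paper's: a Leray--Schauder homotopy, the necessary condition $S_{V,\psi,\rho,\varphi}\ge0$ from Theorem~\ref{theorem existence on Euclidean}(ii), a lower barrier $\varphi_\star$, and $\beta>1$ to control the $\rho$-integral at infinity. However, the homotopy as you wrote it does not work. With $T$ independent of $t$, the relation $\phi=tT(\phi)$ gives $T_1(\phi)=t^NT_1(\varphi)$. That matches the data $t^N\tau$ on the right-hand side, but the left-hand side still carries $\tfrac{n-1}{n}\tau^2\varphi^{N-1}$ instead of $\tfrac{n-1}{n}t^{2N}\tau^2\varphi^{N-1}$. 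So $\varphi$ is not a radial solution for $t^N\tau$, and part (ii) cannot be invoked. The discrepancy acts as an extra term $-(1-t^{2N})\tau^2\varphi^{2N}$ in $I_{V,\psi,\rho,\varphi}$. This term enters $S$ with a factor $\varphi^N$ and is not controlled by $\rho$, so the estimate no longer closes. You must put $t^{2N}$ in front of $\tau^2$, as the paper's $\Tt(t,\phi)$ does (and as the sphere proof you cite does). Also, $1+C^0_{-\delta}$ is not stable under $\phi\mapsto t\phi$; work in $RC^0(\RR^n)$, as the paper does.

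The step you defer is the whole a priori estimate, and your displayed inequality is not right as stated. The integral term $J(r)=\tfrac{n}{(n-1)r^n}\int_0^r\tfrac{\rho^2}{\varphi^{2N}}(\varphi^Ns^n)^\p\,ds$ depends on $\varphi$ on all of $[0,r]$. After the integration by parts \eqref{change integration by part}, it is bounded by $C(\min\varphi_\star)^{-N}\|\rho\|^2_{C^1_{-\beta}}\min\{1,r^{-\min\{2\beta,n\}}\log(2+r)\}$, not by a power of $\varphi(r)$. What you need is $\sup_r r^2J^+<\infty$ and $\int^\infty rJ^+\,dr<\infty$. The latter is exactly where $\beta>1$ enters: the integrand is $rJ^+$, so $J$ must decay faster than $r^{-2}$; your count has a spurious extra $1/r$.

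The root choice then closes by continuity. Set $x=\tfrac Nn\tfrac{\varphi^\p}{\varphi}$ and $D=1-\tfrac{r^2J}{n^2\varphi^2}$. Either $D(r_0)\le0$, so $\varphi(r_0)^2\le r_0^2J(r_0)/n^2$ is bounded, or $x(r_0)=0\ge\tfrac{1}{r_0}(\sqrt{D}-1)$ puts you on the upper branch (near $r=0$, use $x_-\to-\infty$). You stay on that branch until the first $r_1$ with $D(r_1)=0$, where again $\varphi(r_1)^2=r_1^2J(r_1)/n^2$; otherwise $r_1=\infty$ and $\varphi\to1$. On $[r_0,r_1)$, the inequality $\sqrt D\ge D$ (or $x>0$ when $D>1$) gives $(\varphi^2)^\p\ge-\tfrac{2}{nN}rJ^+$, and integrating bounds $\varphi(r_0)$.

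The paper sidesteps branches. It takes $\rt_0$ to be the first point after $r_0$ with $\varphi^\p=-(r+1)^{-1-\delta}$, so $\varphi$ drops by at most $1/\delta$ on $[r_0,\rt_0]$. It then uses $S\ge0$ only at $\rt_0$, dividing by $\varphi^\p(\rt_0)<0$ to get $2nN\varphi(\rt_0)\le N^2+\rt_0(\rt_0+1)^{1+\delta}J(\rt_0)$, which is bounded when $\delta<2(\beta_0-1)$.

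Finally, no $\tau$-independent lower barrier can come from the $\rho$-only equation. The term $\tau^2\varphi^{N-1}$ pushes $\varphi$ down, and where $\tau$ is large $\varphi$ is forced to be small, so any barrier must involve $\tau$. The paper's $\varphi_\star$, which solves $-\tfrac{4(n-1)}{n-2}\Delta\varphi_\star+\tfrac{n-1}{n}\tau^2\varphi_\star^{N-1}=0$, also depends on $\tau$. So both arguments give existence for each fixed $\tau$, but the bound they actually produce depends on $\tau$ through $\min\varphi_\star$, not on $\rho$ alone.
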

\begin{proof}
	We first define the operator $\Tt: [0,1] \times RC^0(\RR^n) \to RC^0(\RR^n)$ as follows. For any $\phi \in RC^0(\RR^n)$,  there exists a unique $W \in C^{2 , \alpha}_{- \min\{\beta - 1, n -2\}}(\RR^n)$ satisfying
	$$
	\Div(L W_\phi) = \tfrac{n-1}{n} |\phi|^N d\tau.
	$$
	Of course, by Propositions \ref{proposition 1-form W calculations} and \ref{proposition resolve vector equations}, we have $|LW_\phi| \in RC^{1,\alpha}_{- \min\{\beta - 1, n -1\}}(\RR^n)$.  Therefore, since $\tau$ is radial and Laplace's operator $\Delta$ is invariant under rotations, there exists a unique $\varphi>0$ such that  $\varphi(r) - 1 \in RC^{3, \alpha}_{- \min\{2\beta - 2, n - 2\}}$ and
	$$
	- \tfrac{4(n-1)}{n-2} \Delta \varphi  + \tfrac{n-1}{n} t^{2N} \tau^2 \varphi^{N-1} 
	= (|LW_\phi|^2 + \rho^2) \varphi^{- N - 1},
	$$
	which reads
	$$
	- \tfrac{4(n-1)}{n-2} \varphi^{N + 1} (\varphi^\pp + h \varphi^\p ) + \tfrac{n-1}{n} t^{2N}\tau^2 \varphi^{2N}
	= \tfrac{n-1}{n} \bigg( \frac{1}{r^n} \int_0^r \tau^\p \varphit_i^N  s^n \, ds  \bigg)^2 + \rho^2,
	$$
	We define
	$$
	\Tt(t,\phi) := t \varphi.
	$$
	In view of \cite[Section 5]{Diltsthesis}, we obtain that $\Tt$ is continuous compact and it is clear that a fixed point of $\Tt(1,)$ is a solution to \eqref{CE with null sigma in euclic}. Now, we set
	$$
	K = \big\{ (t,\phi(r)) \in [0,1] \times RC^0(\RR^n) \ \big|\  \phi = \Tt(t,\phi) \big\}.
	$$ 
	By Leray--Schauder's fixed point theorem, to show that $\Tt(1,.)$ has a fixed point, it suffices to show that $K$ is bounded. In fact, let  $(t,\phi(r))$ be an arbitrary couple in $K$. If $t=0$, then $\phi = \Tt(0,\phi) = 0$, there is nothing to prove. If $t \ne 0$, then the fact $\phi = \Tt(t,\phi)$ can be rewritten as
	\begin{align*}
		-\tfrac{4(n-1)}{n-2} \Delta \varphi  + \tfrac{n-1}{n} t^{2N} \tau^2 \varphi^{N-1} 
		& = (|LW|^2 + \rho^2) \varphi^{- N - 1} \\ 
		\Div(L W) & = \tfrac{n-1}{n} t^N\varphi^N d\tau,
	\end{align*}
	where $\varphi := \frac{\phi}{t}$ satisfies $\varphi(r) - 1 \in RC^{3, \alpha}_{- 2 \beta + 2}(\RR)$. A function $\varphi$ is a solution of this system means that $\varphi$ is a solution to the conformal equations \eqref{CE with null sigma in euclic} associated with $(t^N \tau, \rho)$. Therefore, since $\tau(r)$, $\rho(r)$ and $\varphi(r)$ are radial, it follows by Theorem \ref{theorem existence on Euclidean} that $S_{V,\psi,\rho,\varphi}\ge0$, and since $V\equiv0$, $\psi\equiv0$ we get
	\begin{equation}\label{condition on varphi}
		N^2 (\varphi^\p)^2 + \frac{2nN}{r} \varphi \varphi^\p + \frac{n}{(n-1)r^n}\int_0^r \frac{\rho^2}{\varphi^{2N}} (\varphi^N s^n)^\p  \, ds \ge 0.
	\end{equation}
	Now, let $r_0 \ge 0$ be the maximum point of $\varphi$. Given a small constant $\delta >0$ which will be chosen later in \eqref{condition delta}, we set
	$$
	\rt_0 = \sup \bigg\{r_\star \ge r_0 ~\bigg|~ \varphi^\p(r) \ge - \frac{1}{(r+1)^{1 + \delta}} \quad \forall r \in [r_0, r_\star] \bigg\}.
	$$
	Of course, if $\rt_0 = +\infty$, then $\varphi^\p(r) + (r + 1)^{-1 - \delta} \ge 0$ for all $r \in [r_0, +\infty)$, and hence
	$$
	\varphi(r_0) - \frac{1}{\delta(r_0 + 1)^\delta} \le \lim_{r \to +\infty}\bigg(\varphi(r) - \frac{1}{\delta (r + 1)^\delta} \bigg) = 1,
	$$
	which implies
	\begin{equation} \label{rt0 = infty}
		\phi \le \|\varphi\|_{C^0} = \varphi(r_0) \le 1 + \frac{1}{\delta}.
	\end{equation} 
	If $\rt_0 \in (r_0,+\infty)$, we have by definition 
	\begin{equation}\label{rt equal}
		\varphi^\p(\rt_0) = - \frac{1}{(\rt_0 + 1)^{1 + \delta}},
	\end{equation}
	and
	\begin{equation}\label{varphit increasing}
		\varphi^\p(r) \ge - \frac{1}{(r + 1)^{1 + \delta}}, \quad \forall r \in [r_0, \rt_0].
	\end{equation}
	On the one hand, it is clear from \eqref{varphit increasing} that $\varphi - \frac{1}{\delta(r + 1)^{\delta}}$ is increasing in $[r_0, \rt_0]$, and so
	\begin{equation}\label{varphi unbounded euclic}
		\varphi(\rt_0) \ge \varphi(r_0) - \frac{1}{\delta} \bigg( \frac{1}{(r_0 + 1)^\delta} - \frac{1}{(\rt_0 + 1)^\delta} \bigg) \ge \varphi(r_0) - \frac{1}{\delta}.
	\end{equation}
	On the other hand, by  \eqref{condition on varphi}, we have particularly at $\rt_0$
	$$
	N^2 (\varphi^\p)^2(\rt_0) + \frac{2nN}{\rt_0} \varphi(\rt_0) \varphi^\p(\rt_0) + \frac{n}{(n - 1)\rt_0^n}\int_0^{\rt_0} \frac{\rho^2}{\varphi^{2N}} (\varphi^N s^n)^\p  \, ds \ge 0.
	$$
	Multiplying by $\frac{\rt_0}{\varphi^\p(\rt_0)}$ and taking \eqref{rt equal} into account, we get that
	$$
	2nN \varphi(\rt_0) \le \tfrac{n}{n-1} \frac{(\rt_0 + 1)^{1 +\delta}}{\rt_0^{n-1}}\int_0^{\rt_0} \frac{\rho^2}{\varphi^{2N}} (\varphi^N s^n)^\p  \, ds +  \frac{N^2}{(\rt_0 + 1)^{\delta}}.
	$$
	Since 
	$$
	\frac{\rho^2(\varphi^N r^n)^\p}{\varphi^{2N}} = - (\rho^2 r^{2n}) \bigg( \frac{1}{\varphi^N r^n} \bigg)^\p,
	$$
	integrating by parts, it follows
	\begin{equation}\label{eq condition euc}
		2nN \varphi(\rt_0) \le \tfrac{n}{n-1} \bigg(\frac{(\rt_0 + 1)^{1 + \delta}}{\rt_0^{n-1}}\int_0^{\rt_0} \frac{(\rho^2 s^{2n})^\p}{\varphi^N s^n} \, ds  - \frac{(\rt_0 + 1)^2\rho^2 (\rt_0)}{\varphi^N(\rt_0)} \bigg) +  \frac{N^2}{(\rt_0 + 1)^{\delta}}.
	\end{equation}
	For estimating the integral term, with
	$$
	\beta_0 := \min \bigg\{\beta, \, \frac{n}{2} - \frac{1}{4} \bigg\} > 1,
	$$
	we observe that since $\rho(r) \in C^{1,\alpha}_{-\beta}(\RR^n) \subset C^{1,\alpha}_{-\beta_0}(\RR^n)$,
	\begin{align*}
		\int_0^{\rt_0} \frac{(\rho^2 s^{2n})^\p}{\varphi^N s^n} \, ds & = \int_0^{\rt_0} \frac{2(\rho\rho^\p s^n + n\rho^2s^{n-1})}{\varphi^N} \,ds \\
		& \le 4n \|\rho\|^2_{C^{1,\alpha}_{-\beta_0}} \int_0^{\rt_0} \frac{\min\{s^{n - 1 - 2\beta_0}, s^{n-1}\}}{\varphi^N} \, ds.
	\end{align*}
	Then, letting $\varphi_\star > 0$ be the unique solution to the Lichnerowicz equation
	$$
	- \tfrac{4(n-1)}{n-2} \Delta \varphi_\star  + \tfrac{n-1}{n} \tau^2 \varphi_\star^{N-1} = 0,
	$$
	since $\varphi \ge \varphi_\star$, we deduce from the inequality that
	\begin{align*}
		\int_0^{\rt_0} \frac{(\rho^2 s^{2n})^\p}{\varphi^N s^n} \, ds 
		& \le 4n \frac{\|\rho\|^2_{C^{1,\alpha}_{-\beta_0}}}{(\min\varphi_\star)^N} \int_0^{\rt_0} \min\{s^{n - 1 - 2\beta_0}, s^{n-1}\} \,ds. \\
		& \le 4n \frac{\|\rho\|^2_{C^{1,\alpha}_{-\beta_0}}}{(\min\varphi_\star)^N} \min \bigg\{ \frac{\rt_0^{n-2\beta_0}}{n-2\beta_0}, \, \frac{\rt_0^n}{n} \bigg\}
	\end{align*}
	Taking into \eqref{eq condition euc}, we get
	\begin{multline*}
		2nN \varphi(\rt_0) \le \tfrac{4n^2}{n-1} \frac{\|\rho\|^2_{C^{1,\alpha}_{-\beta_0}}}{(\min\varphi_\star)^N} \bigg( \min \bigg\{ \frac{(\rt_0 + 1)^{\delta-2(\beta_0 - 1)}}{n-2\beta_0}, \, \frac{(\rt_0 + 1)^{2 +\delta}}{n} \bigg\}\bigg) \\
		- \tfrac{n}{n-1}\frac{(\rt_0 + 1)^2\rho^2 (\rt_0)}{\varphi^N(\rt_0)} + \frac{N^2}{(\rt_0 + 1)^{\delta}}.
	\end{multline*}
	Therefore, since $\beta_0 > 1$, if we choose $\delta$ such that
	\begin{equation} \label{condition delta}
		\delta < 2(\beta_0 - 1),
	\end{equation}
	combined with \eqref{varphi unbounded euclic}, it implies that there exists a constant $C>0$ depending only on $\rho$ such that
	\begin{equation} \label{rt0 < infty}
		\phi \le \|\varphi\|_{C^0} \le \varphi(\rt_0) + \frac{1}{\delta} \le C.
	\end{equation}
	Hence, we conclude that \eqref{rt0 = infty}, \eqref{condition delta} and \eqref{rt0 < infty} complete the proof.
\end{proof}

\subsection{The sharpness of the rate of decay in the positive mass theorem} \label{subsection mass euc}
We next pass to another application of Theorem \ref{theorem existence on Euclidean} concerning the sign of the ADM mass in AF spaces when the  rate of decay of $k$ is critical. This result is already proven in \cite{NguyenRadialAF}, however for the sake of completeness, we would like to present it again for the convenience of interested readers. Similarly to the discussion on AH mass case in the previous section, we will restrict our attention to $\RR^n$. In this context, let $(\RR^n , g)$ be an AF manifold with 
\begin{equation} \label{mass well defined}
	g - \delta_{\text{Euc}} \in  C^{2, \alpha}_{-\frac{n-2}{2}-\eps}(\RR^n)
\end{equation}
for some $\eps >0$. The ADM mass of $(\RR^n , g)$ is defined by
$$
\madm (g) := \frac{1}{2(n-2)\omega_{n - 1} } \lim_{r \to +\infty} \int_{|x| = r} \sum_{i,j = 1}^n (g_{ij,i} - g_{ii,j})\frac{x_j}{r} d \Mcal_0^{n - 1},
$$
where $\Mcal_0^{n - 1}$ is the $(n - 1)-$dimensional Euclidean Hausdorff measure and $\omega_{n-1}$ is the volume of the standard unit sphere in $\RR^n$. In particular, when $g = \varphi^{N - 2} \delta_{\text{Euc}}$ with $\varphi$ radial, the formula becomes
\begin{equation} \label{adm mass radial}
	\madm (g) = - \tfrac{n-1}{2(n-2)} \lim_{r \to +\infty} (r^{n - 1} \varphi^\prime).
\end{equation}
The main result in Bartnik \cite{Bartnik} showed that under the decay condition \eqref{mass well defined}, the mass is a geometric invariant. A long-standing conjecture in general relativity states that the ADM mass of an AF initial data set satisfying the dominant condition is positive unless it is a Cauchy hypersurface of Minkowski space. This conjecture was proven to be true under suitable decay assumptions of $(g , k)$ at infinity, which are expected to be that
\begin{subequations}
	\begin{align}
		g - \delta_{\text{Euc}} & \in  C^{2, \alpha}_{ - p}(\RR^n),  \label{mass well defined 2a} \\
		k & \in  C^{2, \alpha}_{ - q}(\RR^n)  \label{mass well defined 2b}
	\end{align}
\end{subequations}
with $p > - (n-2)/2$ and $q > n/2$. Of course, the condition $p > - (n-2)/2$, which coincides with \eqref{mass well defined}, is necessary since it guarantees that the mass is a geometric invariant. Moreover, P. Chrusciel showed in \cite{Chruscielletter} that there exists an AF manifold with $p = - (n-2)/2$ having negative mass. Therefore, the decay exponent \eqref{mass well defined 2a} on $g$ is proven to be sharp. A natural question to ask is whether the remaining condition \eqref{mass well defined 2b} on $k$ is also sharp. In the following result, we will give an answer to this question by showing that the assumption \eqref{mass well defined 2b} is necessary and plays a crucial role in the positivity of the mass.
\begin{theorem}[Negative sign of AF mass]\label{theorem mass Euclidean}
	Let $\tau(r)$ be an arbitrary radial function in $C^{1, \alpha}_{ - \beta} (\RR^n)$ with $\alpha \in (0 , 1)$ and $ \beta \in \big( 1 , \frac{n}{2} \big)$. There exists a solution $(g , k)$ to the vacuum constraint equations \eqref{constraint} such that $(g - \delta_{\text{Euc}} , k) \in C^{3, \alpha}_{- 2 \beta + 2}(\RR^n) \times C^{1, \alpha}_{ - \beta}(\RR^n)$ and $\tr_g k = \tau(r)$. Moreover, assume that $| \tau | \sim c r^{- q}$ at infinity for some constant $c > 0$ and decay exponent $q \in ( \frac{n+2}{4}, n)$. 
	
	Then we have
	\begin{equation} \label{new decay}
		(g - \delta_{\text{Euc}} , k) \in C^{3, \alpha}_{- 2 q + 2}(\RR^n) \times C^{1, \alpha}_{ - q}(\RR^n)
	\end{equation}
	and furthermore
	\begin{itemize}
		\item if $q < \frac{n}{2}$, then $\madm (g) = - \infty$,
		\item if $q = \frac{n}{2}$ , then $- \infty < \madm (g) < 0$,
		\item if $q > \frac{n}{2}$, then $\madm (g) = 0$.
	\end{itemize}
\end{theorem}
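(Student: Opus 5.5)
The plan is to specialize everything to the vacuum radial setting $V\equiv0$, $\psi\equiv0$, $\rho\equiv0$, and to read off both the decay and the mass from the identity \eqref{increasing condition} in its Euclidean form, namely
\[
\Big(\int_0^r \tau\,(\varphi^N s^n)^\p\,ds\Big)^2=\varphi^N r^{2n}\Big(N^2(\varphi^\p)^2+\tfrac{2nN}{r}\varphi\varphi^\p\Big),
\]
which follows from Theorem \ref{theorem existence on Euclidean} with $I_{V,\psi,\rho,\varphi}\equiv0$.

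\textbf{Existence.} I would run the Leray--Schauder scheme of Theorem \ref{theorem smooth tiem-derivative euclid} with $\rho\equiv0$; the a priori bound actually becomes simpler. For a radial solution $\varphi$ associated with $t^N\tau$, the condition $S\ge0$ reads $\varphi^\p\big(N^2\varphi^\p+\tfrac{2nN}{r}\varphi\big)\ge0$. So either $\varphi^\p\ge0$, or $\varphi^\p\le -\tfrac{2n}{Nr}\varphi$. By continuity of $\varphi^\p/\varphi$ and its regularity at $r=0$, the second branch is impossible. Hence $\varphi^\p\ge0$ everywhere and $\varphi\le\lim_{r\to\infty}\varphi=1$, which gives a uniform bound on the set $K$.

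Positivity of the fixed point I would get from the comparison $\varphi\ge\varphi_\star$, where $\varphi_\star$ solves $-\tfrac{4(n-1)}{n-2}\Delta\varphi_\star+\tfrac{n-1}{n}\tau^2\varphi_\star^{N-1}=0$ with $\varphi_\star\to1$. Here I would use that $\tau\in C^{1,\alpha}_{-\beta}$ with $\beta>1$, together with the AF Lichnerowicz theory of \cite{Diltsthesis}. Setting $k=\tfrac{\tau}{n}\varphi^{N-2}\delta_{\text{Euc}}+\varphi^{-2}LW$ and $g=\varphi^{N-2}\delta_{\text{Euc}}$ then gives a vacuum solution with $\tr_g k=\tau$.

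\textbf{Decay \eqref{new decay}.} By Proposition \ref{proposition 1-form W calculations} and the computation \eqref{|LW| in sphere} adapted to $\RR^n$, we have $|LW|=\sqrt{\tfrac{n-1}{n}}\,\big|r^{-n}\int_0^r\tau^\p\varphi^N s^n\,ds\big|$. Integrating by parts with $|\tau|\sim cr^{-q}$ and $q<n$ gives $|LW|=O(r^{-q})$, and the same for derivatives, hence $k\in C^{1,\alpha}_{-q}$. The Lichnerowicz equation then has right-hand side $O(r^{-2q})$. Since $2q-2<n-2$, inverting the Euclidean Laplacian in weighted spaces gives $\varphi-1\in C^{3,\alpha}_{-2q+2}$. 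The hypothesis $q>\tfrac{n+2}{4}$ is exactly $2q-2>\tfrac{n-2}{2}$, so \eqref{mass well defined} holds and $\madm$ is well defined through \eqref{adm mass radial}.

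\textbf{Mass sign.} Since $\varphi^\p\ge0$ and $\varphi^\p=O(r^{-2q+1})$, the term $N^2(\varphi^\p)^2$ is of lower order. Dividing the displayed identity by $2nN\varphi^{N+1}r^{n+1}$ gives
\[
r^{n-1}\varphi^\p\sim\frac{1}{2nN\,r^{n+1}}\Big(\int_0^r\tau(\varphi^Ns^n)^\p\,ds\Big)^2 .
\]
I then use $(\varphi^N s^n)^\p= n s^{n-1}(1+o(1))$ and the fact that $\tau$ has eventually constant sign with $|\tau|\sim c r^{-q}$. This gives $\int_0^r\tau(\varphi^Ns^n)^\p\,ds\sim\pm\tfrac{nc}{n-q}r^{n-q}$, since $q<n$ makes this the dominant part and the compact part is bounded. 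Therefore $r^{n-1}\varphi^\p\sim C r^{n-2q}$ with $C=\tfrac{n c^2}{2N(n-q)^2}>0$. Plugging into \eqref{adm mass radial} gives $\madm=-\infty$ if $q<n/2$, $\madm=-\tfrac{n-1}{2(n-2)}C\in(-\infty,0)$ if $q=n/2$, and $\madm=0$ if $q>n/2$.

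\textbf{Main obstacle.} I expect the hardest step to be the precise asymptotics of $\int_0^r\tau(\varphi^Ns^n)^\p\,ds$. One must control the error from $\varphi^N-1$ and $r\varphi^\p$, both $O(r^{-2q+2})$, which is harmless since $2q-2>0$. One must also ensure that the contribution of $\tau$ on compact sets does not cancel the leading term; it cannot, because it is $O(1)$ while $r^{n-q}\to\infty$. I would first handle this with an $\varepsilon$--$R$ splitting of the integral at a large radius $R$.
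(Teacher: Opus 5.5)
Your argument is sound in substance and follows a different route from the paper's, but one step is misjustified. In the decay step you invert the Laplacian ``since $2q-2<n-2$''. That inequality holds only for $q<n/2$, and it fails in the other two regimes:
\begin{itemize}
\item For $q=n/2$, the radial equation $(r^{n-1}\varphi')'=r^{n-1}O(r^{-2q})$ gives only $r^{n-1}\varphi'=O(\log r)$.
\item For $q>n/2$, it gives $r^{n-1}\varphi'\to A$ with $A$ unknown a priori. Then $\varphi-1=-\tfrac{A}{n-2}r^{2-n}+O(r^{2-2q})$, which is not in $C^0_{-2q+2}$ unless $A=0$. But $A=0$ is exactly the claim $\madm=0$, so the decay cannot be obtained before the mass computation.
\end{itemize}
The fix is to reorder the steps. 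Your mass-step asymptotics do not need the $O(r^{-2q+2})$ errors you quote; they need only $\varphi\to1$ and $r\varphi'\to0$. Both already follow from the existence step, since $\varphi-1\in C^{3,\alpha}_{-2\beta+2}$ with $\beta>1$. Once you have $r^{n-1}\varphi'\sim Cr^{n-2q}$, you get $\varphi-1=-\int_r^\infty\varphi'=O(r^{2-2q})$ for every $q\in(\tfrac{n+2}{4},n)$, and higher derivatives follow from the ODE. Separately, ``the same for derivatives'' and $k\in C^{1,\alpha}_{-q}$ require $\tau'=O(r^{-q-1})$, which $|\tau|\sim cr^{-q}$ alone does not give. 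The paper makes the same tacit assumption, so the hypothesis should be read as $\tau\in C^{1,\alpha}_{-q}$.

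On the route itself, the paper works with the differentiated relation \eqref{AEidenity}. It reduces the formula for $|\tau|$ asymptotically to a multiple of $\big|(\sqrt{r^{2n-1}\varphi'})'/(r^{n-q})'\big|$ and then applies l'H\^opital's rule. This requires control of $\varphi''$ and of the relative size of the terms, for which the paper invokes the decay of $\varphi$. You use the integrated identity \eqref{increasing condition} directly:
\begin{itemize}
\item the left side is asymptotic to $\big(\tfrac{nc}{n-q}\big)^2r^{2n-2q}$ by an elementary splitting;
\item the right side is $2nN r^{2n-1}\varphi'(1+o(1))$.
\end{itemize}
This needs only zeroth- and first-order information on $\varphi$ and no l'H\^opital.

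Your branch argument also makes $\varphi'\ge0$, and hence $\madm\le0$, explicit. It gives the simpler a priori bound $\varphi\le1$ for $\rho\equiv0$, where the paper just invokes Theorem~\ref{theorem smooth tiem-derivative euclid}.

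Your constant $C=\frac{nc^2}{2N(n-q)^2}=\frac{(n-2)c^2}{4(n-q)^2}$ is the correct one. In \eqref{calculate mass} the paper drops the square when passing from $\lim\sqrt{r^{2n-1}\varphi'}/r^{n-q}$ to $\lim\varphi'/r^{1-2q}$, which is harmless for the sign conclusions.
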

\begin{proof}
	Since $\tau(r) \in RC^{1,\alpha}_{-\beta}(\RR^n)$, it follows by Theorem \ref{theorem smooth tiem-derivative euclid} that the vacuum conformal equations \eqref{CE with null sigma in euclic} associated with $(\rho \equiv 0, \tau(r))$ admit a radial solution $\varphi(r) > 0$ with $\varphi(r) - 1 \in C^{3,\alpha}_{-2\beta + 2}(\RR^n)$. Therefore, in view of \eqref{parametre} and Theorem \ref{theorem general}, letting
	$$
	W := \tfrac{n}{n-1} F^\p_v(r) dr
	$$
	where $v = \tfrac{n-1}{n} \varphi^N\tau^\p$ and $F_v$ is defined in \eqref{def F nonpositive Ric}, we obtain that
	\begin{subequations} \label{parameter new}
		\begin{align}
			g & = \varphi^{N - 2} \delta_{\text{Euc}} \\
			k  &= \frac{\tau}{n} \varphi^{N-2}\delta_{\text{Euc}}  + \varphi^{-2} LW
		\end{align}
	\end{subequations}
	is a solution the vacuum constraint equations \eqref{constraint} and
	$$
	(g - \delta_{\text{Euc}} , k) \in C^{3, \alpha}_{- 2 \beta + 2}(\RR^n) \times C^{1, \alpha}_{ - \beta}(\RR^n), \qquad \tr_g k = \tau(r).
	$$ 
	Hence, the first part of theorem is proven. Now, since $\psi \equiv 0$, $\rho \equiv 0$ and $V \equiv 0$, we have $I_{V,\psi,\rho,\varphi}=0$ and in regard to Theorem \ref{theorem existence on Euclidean}, the fact that the couple of radial functions $(\tau(r),\varphi(r))$ solves the vacuum equation \eqref{CE with null sigma in euclic} gives
	\begin{align*}
		|\tau| &=  \bigg| \frac{S_{V,\psi,\rho,\varphi} + 2N  \big( \varphi^\pp + (n - 1) r^{-1} \varphi^\p \big) \varphi}{2\sqrt{\varphi^N S_{V,\psi,\rho,\varphi}}} \bigg| \\
		& =  \frac{\big| 2 N (r \varphi) \big( r^{n - 1} \varphi^\prime \big)^\prime + n N r^{n - 1} ( \varphi^2 )^\prime + N^2 r^n (\varphi^\prime)^2 \big|}{2 \sqrt{\big( r^n \varphi^N \big) \big( n N r^{n - 1} ( \varphi^2 )^\prime + N^2 r^n (\varphi^\prime)^2  \big)} }.
	\end{align*}
	Simplifying this identity, we obtain
	$$
	 |\tau| =	\frac{ 2 \Big| \Big( r^{2n - 1} \big( \varphi^{ (N + 2)/2 } \big)^\prime \Big)^\prime \Big|}{(N+2) r^{3n / 2} \varphi^{N - 1}\sqrt{\varphi^\prime (r^{n - 2} \varphi)^\prime} }.
	$$
	Therefore, as long as $|\tau| \sim c r^{-q}$ at infinity, we deduce
	\begin{equation} \label{limit}
		\lim_{r \to +\infty} \Bigg( \frac{2\Big| \Big( r^{2n - 1} \big( \varphi^{ (N + 2)/2 } \big)^\prime \Big)^\prime \Big|}{(N+2) r^{\frac{3n}{2} - q} \varphi^{N - 1}\sqrt{\varphi^\prime (r^{n - 2} \varphi)^\prime} }\Bigg)= c.
	\end{equation}
	Observing that by straightforward calculations
	\begin{multline*}
		\frac{2\Big| \Big( r^{2n - 1} \big( \varphi^{ (N + 2)/2 } \big)^\prime \Big)^\prime \Big|}{(N+2)r^{\frac{3n}{2} - q} \varphi^{N - 1}\sqrt{\varphi^\prime (r^{n - 2} \varphi)^\prime} } \\
		=  \frac{(2n - 1) r^{-1} \varphi^{N/2} \varphi^\prime + N \varphi^{(N - 2)/2} (\varphi^\prime)^2 + \varphi^{N/2} \varphi^{\prime\prime}}{r^{-q} \varphi^{N-1} \sqrt{(\varphi^\prime)^2 + (n-2) r^{-1} \varphi \varphi^\prime}}.
	\end{multline*}
	Then, since $\varphi(r) - 1 \in C^{3 , \alpha}_{-\min\{2q - 2, n-2\}}(\RR^n)$, this gives
	\begin{align}\label{limit2}
		\lim_{r \to +\infty} \Bigg( \frac{ 2 \Big| \Big( r^{2n - 1} \big( \varphi^{ (N + 2)/2 } \big)^\prime \Big)^\prime \Big|}{(N+2) r^{\frac{3n}{2} - q} \varphi^{N - 1}\sqrt{\varphi^\prime (r^{n - 2} \varphi)^\prime} }\Bigg) 
		& = \tfrac{1}{\sqrt{n-2}}\lim_{r \to +\infty} \frac{\big| (r^{2n - 1} \varphi^\prime )^\prime \big|}{r^{n-1-q}\sqrt{r^{2n-1} \varphi^\prime}} \notag \\
		& = \tfrac{2(n-q)}{\sqrt{n-2}}\lim_{r \to +\infty} \Bigg| \frac{\big( \sqrt{ r^{2n - 1} \varphi^\prime} \big)^\prime}{(r^{n-q} )^\prime }\Bigg|.
	\end{align}
	Combined with \eqref{limit}, we get
	$$
	\lim_{r \to +\infty} \Bigg| \frac{\big( \sqrt{ r^{2n - 1} \varphi^\prime} \big)^\prime}{(r^{n-q})^\prime }\Bigg| = \frac{c\sqrt{n-2}}{2(n-q)}.
	$$
	Therefore, it follows from L'H\^{o}pital's rule that
	\begin{align} \label{calculate mass}
		\lim_{r \to +\infty} \Big( \frac{\varphi^\prime}{r^{- 2q + 1}} \Big) & = \Bigg(\lim_{r \to +\infty} \frac{\sqrt{r^{2n - 1}\varphi^\prime }}{r^{n - q}}\Bigg)^2 \notag \\
		& = \Bigg( \lim_{r \to +\infty} \Bigg| \frac{\big(\sqrt{r^{2n-1}\varphi^\prime} \big)^\prime}{(r^{n-q} )^\prime }\Bigg| \Bigg)^2 \notag \\
		& = \frac{c\sqrt{n-2}}{2(n-q)}.
	\end{align}
	In particular, this gives
	$$
	\varphi(r) - 1 = - \int_r^{+\infty} \varphi^\prime \, ds \in C^0_{-2 q + 2}(\RR^n)
	$$
	and hence thanks to the Lichnerowicz equation and the weighted elliptic regularity for the Laplacian, we deduce from \eqref{parameter new} that $(g - \delta_{\text{Euc}} , k) \in C^{3 , \alpha}_{- 2 q + 2}(\RR^n) \times C^{1, \alpha}_{ - q}(\RR^n)$.
	\\
	Finally, taking \eqref{calculate mass} into account in the formula \eqref{adm mass radial} of the ADM mass yields
	$$
	\madm (g) = - \tfrac{c\sqrt{n-2}}{4(n-q)} \lim_{r \to +\infty} r^{n-2q},
	$$
	which completes the proof.
\end{proof}

\begin{remark}
	As we have mentioned in Remark \ref{remark Birkhoff theorem}, since the vacuum solution $(g,k)$ constructed in the proof of Theorem \ref{theorem mass Euclidean} is spherically symmetric and regular in $\RR^+$, Birkhoff's theorem tells us that such a $(\RR^n, g, k)$ is indeed Cauchy initial data for Minkowski space.
\end{remark}

\appendix

\section{Explicit solutions on the sphere} \label{appendix construct explicit solutions}
Thanks to Theorem \ref{theorem existence on sphere}, to construct an explicit initial setting to the constraint on the sphere, it suffices to seek a set of $(V, \psi, \rho, \varphi)$ such that $S(V, \psi, \rho, \varphi) > 0$, that is
$$
N^2 (\varphi^\p)^2 - n^2 \varphi^2 + 2 n N \cot (r) \varphi \varphi^\p + \frac{1}{\sin^n(r)} \int_0^r \frac{I(V, \psi, \rho, \varphi)}{\varphi^{2N}} (\varphi^N \sin^n(s))^\p  \, ds > 0.
$$
There are many ways to find such a set as we see in the sequence of solutions in Theorem \ref{theorem instability sphere}. Another example of this construction can be given as follows. Let $\phi \in C^1(\Sp)$ be a radial function on the sphere. If we would like to assign
\begin{equation}\label{integral = phi}
	\frac{1}{\sin^n(r)} \int_0^r \frac{I(V, \psi, \rho, \varphi)}{\varphi^{2N}}(\varphi^N \sin^n(s))^\p  \, ds = \phi,
\end{equation}
then it follows
$$
I(V, \psi, \rho, \varphi)  (\varphi^N \sin^n(r))^\p = \big(\phi \sin^n(r)\big)^\p \varphi^{2N}.
$$
For simplicity, assuming that $(V, \psi)$ are zero everywhere, the equation becomes
$$
(\varphi^N \sin^n(r))^\p = \big(\tfrac{n-1}{n}\big) \frac{(\phi \sin^n(r))^\p (\varphi^N \sin^n(r))^2}{\rho^2 \sin^{2n}(r)}.
$$
Remark that this is a Bernoulli equation with respect to $y = \varphi^N \sin^n(r)$. So, we can resolve that
$$
\varphi^N = - \big(\tfrac{n-1}{n}\big) \frac{1}{\sin^{n}(r) \big(\int_{\pi/2}^r \frac{(\phi \sin^n(s))^\p}{\rho^2 \sin^{2n}(s)} \, ds + c_1 \big)},
$$
for some constant $c_1$ depending on $\rho$. In particular, if we take
\begin{equation}\label{rho = phi}
	\rho = \phi,
\end{equation}
we can easily compute
$$
\varphi^N =  \big(\tfrac{n}{n-1}\big) \frac{\rho}{1 - c \rho \sin^n(r)},
$$
with 
$$
c := c_1 + \frac{1}{\rho(\pi/2)}.
$$
By a straightforward calculation, we get
$$
\varphi^\p = \frac{1}{N}\bigg(\frac{n}{n -1 }\bigg)^{1/N} \frac{\rho^\p + nc \rho^2 \cos(r)\sin^{n-1}(r)}{\rho^{1 - 1/N}\big(1 - c\rho \sin^n(r)\big)^{1 + 1/N} }.
$$
Therefore, as long as we choose $\rho$ and $c_1$ such that
\begin{itemize}
	\item[-] $\min \rho > 0$ is large,
	\item[-] $c (\max\rho)$ and $|\rho^\p|$ are small,
	\item[-] $\cot(r) \rho^\p$ is small or positive,
\end{itemize}
it follows that $\varphi \sim \big(\tfrac{n}{n - 1}\big)^{1/N} \rho^{1/N}$ and $\cot(r)\varphi^\p\varphi$ is either positive or negative and small. Hence, by definition, we get from \eqref{integral = phi} and \eqref{rho = phi} that
$$
S(V,\psi, \rho, \varphi) \ge - 2 n^2 \rho^{2/N} + \rho > 0,
$$
which is our desire.

\small
\bibliographystyle{plain}

\bigskip

\begin{flushleft}
	Philippe Castillon \\
	\textsc{imag} (\textsc{u.m.r. c.n.r.s.} 5149) \\
	Univ. de Montpellier \\
	34095 \textsc{Montpellier} Cedex 5, France \\
	\texttt{philippe.castillon\symbol{64}umontpellier.fr}
\end{flushleft}

\medskip

\begin{flushleft}
	Cang Nguyen-The \\
	Faculty of Mathematics and Statistics,\\
	Ton Duc Thang University, Ho Chi Minh City, Vietnam \\
	\texttt{alpthecang\symbol{64}gmail.com}
\end{flushleft}

\end{document}